\DeclareMathOperator{\Var}{Var}
\newcommand\independent{\protect\mathpalette{\protect\independenT}{\perp}}
\def\independenT#1#2{\mathrel{\rlap{$#1#2$}\mkern2mu{#1#2}}}
\newtheorem{theorem}{Theorem}
\newtheorem{Lemma}[theorem]{Lemma}
\theoremstyle{remark}
\newtheorem{assumption}{Assumption}
\newtheorem{remark}{Remark}
\pgfplotsset{compat=1.17}
\theoremstyle{definition}
 \let\oldparagraph\paragraph
 \renewcommand{\paragraph}{
 \@ifstar
 \xxxParagraphStar
 \xxxParagraphNoStar
 }
 \newcommand{\xxxParagraphStar}[1]{\oldparagraph*{#1}\mbox{}}
 \newcommand{\xxxParagraphNoStar}[1]{\oldparagraph{#1}\mbox{}}
 \let\oldsubparagraph\subparagraph
 \renewcommand{\subparagraph}{
 \@ifstar
 \xxxSubParagraphStar
 \xxxSubParagraphNoStar
 }
 \newcommand{\xxxSubParagraphStar}[1]{\oldsubparagraph*{#1}\mbox{}}
 \newcommand{\xxxSubParagraphNoStar}[1]{\oldsubparagraph{#1}\mbox{}}
\patchcmd\longtable{\par}{\if@noskipsec\mbox{}\fi\par}{}{}
\def\maxwidth{\ifdim\Gin@nat@width>\linewidth\linewidth\else\Gin@nat@width\fi}
\def\maxheight{\ifdim\Gin@nat@height>\textheight\textheight\else\Gin@nat@height\fi}
\def\fps@figure{htbp}
 \renewcommand*\contentsname{Table of contents}
 \newcommand\contentsname{Table of contents}
 \renewcommand*\listfigurename{List of Figures}
 \newcommand\listfigurename{List of Figures}
 \renewcommand*\listtablename{List of Tables}
 \newcommand\listtablename{List of Tables}
 \renewcommand*\figurename{Figure}
 \newcommand\figurename{Figure}
 \renewcommand*\tablename{Table}
 \newcommand\tablename{Table}
\newcommand{\anon}{1}
\begin{document}

\def\spacingset#1{\renewcommand{\baselinestretch}%
{#1}\small\normalsize} \spacingset{1}

\if1\anon
{
 \title{\bf Semi-supervised inference for treatment heterogeneity}
 \author{%
 Yilizhati Anniwaer \\ School of Mathematics, Renmin University of China
 \and
 Yuqian Zhang\\
 Institute of Statistics and Big Data, Renmin University of China
 }
 \maketitle
} \fi

\if0\anon
{
 \bigskip
 \bigskip
 \bigskip
 \begin{center}
 {\LARGE\bf Title}
\end{center}
 \medskip
} \fi

\bigskip
\begin{abstract}
In causal inference, measuring treatment heterogeneity is crucial as it provides scientific insights into how treatments influence outcomes and guides personalized decision-making. In this work, we study semi-supervised settings where a labeled dataset is accompanied by a large unlabeled dataset, and develop semi-supervised estimators for two measures of treatment heterogeneity: the total treatment heterogeneity (TTH) and the explained treatment heterogeneity (ETH) of a simplified working model. We propose semi-supervised estimators for both quantities and demonstrate their improved robustness and efficiency compared with supervised methods. For ETH estimation, we show that direct semi-supervised approaches may result in efficiency loss relative to supervised counterparts. To address this, we introduce a re-weighting strategy that assigns data-dependent weights to labeled and unlabeled samples to optimize efficiency. The proposed approach guarantees an asymptotic variance no larger than that of the supervised method, ensuring its safe use. We evaluate the performance of the proposed estimators through simulation studies and a real-data application based on an AIDS clinical trial.
\end{abstract}

\noindent%
{\it Keywords:} Semi-supervised learning, Causal inference, Treatment heterogeneity, Conditional average treatment effect, High-dimensional statistics

\vfill

\newpage
\spacingset{1.8} 

\section{Introduction}\label{sec-intro}

Heterogeneity in treatment effects is a central topic in causal inference. When treatment effects vary across individuals, assigning all individuals to the same treatment based on the population-level average may be suboptimal. Instead, tailoring treatment decisions to individual characteristics can lead to improved outcomes. Let \( A \in \{0,1\} \) denote a binary treatment variable, \( \bm{X} \in \mathbb{R}^d \) a vector of covariates, and \( Y = Y(A) \in \mathbb{R} \) the observed outcome. We adopt the potential outcomes framework and assume the existence of potential outcomes \( Y(a) \in \mathbb{R} \), representing the outcome an individual would receive if assigned treatment \( a \in \{0,1\} \). In this setting, accurate estimation of the conditional average treatment effect (CATE), defined as \( \tau(\bm{x}) := E[Y(1) - Y(0) \mid \bm{X} = \bm{x}] \), is critical for developing individualized treatment strategies and advancing precision medicine. Improved CATE estimation leads to more effective and targeted decision-making. Common methods for estimating the CATE include the S-learner, T-learner, X-learner, and DR-learner \citep{foster2023,Künzel2019,Kennedy2020,abrevaya2015,nie2021}.

In this work, we aim to understand treatment heterogeneity by estimating two key parameters of interest. The first parameter is the total treatment heterogeneity (TTH), defined as the variance of the CATE, $\theta_\mathrm{TTH}:=\Var[\tau(\bm{X})]$; see also \cite{Levy2021,hines2022variable}. The magnitude of the TTH indicates whether individualized treatment rules are necessary or if a simpler, one-size-fits-all approach is sufficient. While the TTH captures the total heterogeneity based on the true CATE, in practice, making decisions based on the true but potentially complex CATE can pose practical challenges. Instead, personalized treatment rules using simpler models, such as linear or decision tree models, with readily available or low-cost covariates, may improve both interpretability and feasibility. Let $\bm{W}\in\mathbb{R}^p$ be a sub-vector of $\bm{X}$. For any simplified working model $\tau^*(\bm{W})$, we also aim to evaluate the explained treatment heterogeneity (ETH), $\theta_\mathrm{ETH}:=\Var[\tau^*(\bm{W})]$, to determine whether the simplified model captures sufficient heterogeneity. Estimating the ETH (or the ratio between the ETH and the TTH) is crucial for striking a better balance between practicality and decision-making accuracy in follow-up treatments.

The accuracy of treatment heterogeneity estimation impacts the subsequent decision-making. However, in many applications, collecting outcome variables after treatment assignment is both costly and time consuming, especially when long-term outcomes are involved. These practical limitations reduce the number of labeled samples available, potentially leading to insufficient estimation accuracy. In contrast, baseline covariates are typically easy and inexpensive to collect in large volumes. Typical examples include biomedical studies using electronic health records and genome-wide association studies \citep{2,chakrabortty2018efficient}. In these settings, it is beneficial to adopt semi-supervised learning methods that leverage additional unlabeled data to improve estimation accuracy. Semi-supervised methods are also valuable for studying treatment heterogeneity in randomized trials with multiple treatment arms. When interest lies in comparing only two treatment groups, relying solely on samples from those groups may result in limited sample sizes. Rather than discarding all samples assigned to non-target treatment groups, one can improve efficiency by incorporating baseline covariate information from those additional groups using semi-supervised approaches; see our real-data analysis in Section \ref{sec:real}.

Semi-supervised learning has received considerable attention in recent years. For example, \cite{zhang2019semi,1} propose semi-supervised estimators for the mean response, \cite{Chen2023,chakrabortty2018efficient,azriel2022semi,deng2024optimal} study linear regression problems, and \cite{song2023general,angelopoulos2023prediction,zrnic2024cross,cai2025semi} investigate more general M-estimation frameworks. Moreover, \cite{2} develop semi-supervised estimators for the explained variance, but in a non-causal setting where the outcome is fully observed and the analysis is limited to correct linear models. In the context of causal inference, \cite{1,cheng2021robust,chakrabortty2022general,hou2025efficient,kallus2025role} introduce semi-supervised estimators for the average treatment effect, \cite{chakrabortty2022general} consider the estimation of quantile treatment effects, and \cite{sonabend2023semi} explore semi-supervised off-policy reinforcement learning. However, to our knowledge, no semi-supervised methods are currently available for estimating treatment effect heterogeneity, despite its practical importance in many applications.

\paragraph*{Main contribution} In this work, we propose semi-supervised estimators for both the total treatment heterogeneity (TTH) and the explained treatment heterogeneity (ETH). For TTH, we develop a semi-supervised estimator where all nuisance components can be flexibly estimated using non-parametric or machine learning methods. Our approach builds on the DR-learner \citep{foster2023,Kennedy2020} for CATE estimation and incorporates an additional debiasing step to estimate the quadratic-form parameter \( \Var[\tau(\bm{X})] \), leveraging both labeled and unlabeled samples. For ETH, we first demonstrate that direct semi-supervised estimation can suffer from efficiency loss compared to supervised methods. A similar issue arises in variance estimation under degenerate non-causal settings \citep{1,2,kim2024semi}, where the efficiency gain of semi-supervised methods depends on correct model specification. To overcome this limitation, we propose an optimally weighted semi-supervised estimator that assigns data-dependent weights to labeled and unlabeled samples, thereby ensuring safe and efficient use of the additional unlabeled data.

By incorporating additional unlabeled data, the proposed estimators improve upon existing supervised methods \citep{Levy2021,hines2022variable} in the following respects: (a) our methods achieve asymptotic normality under weaker sparsity conditions and remain valid even if the outcome regression models $\mu_a(\bm{x}):=E[Y(a)\mid \bm{X}=\bm{x}]$ (with $a\in\{0,1\}$) are misspecified, a scenario where supervised methods fail; (b) by exploiting additional unlabeled covariates, the asymptotic variance can be further reduced once asymptotic normality holds. Moreover, while existing supervised approaches focus only on estimating the TTH, our framework also extends to ETH estimation, allowing the working model to differ from the true CATE. The proposed optimally weighted method guarantees an asymptotic variance no larger than that of the supervised counterpart, due to its optimal weighting strategy.

\paragraph*{Organization}
The remainder of the paper is organized as follows. Section~\ref{sec2} introduces a semi-supervised framework for estimating the TTH and establishes the corresponding theoretical properties. Section~\ref{sec3} extends the analysis to ETH estimation, outlines the limitations of direct semi-supervised approaches, develops optimally weighted semi-supervised estimators, and establishes their theoretical guarantees. Section~\ref{sec4} presents simulation studies to evaluate the finite-sample performance under various data-generating scenarios, followed by an application to real-world clinical trial data. Additional discussions are provided in Section~\ref{sec5}. The Supplementary Material contains all proofs of the main results.

\paragraph*{Notation}
For any vector $\bm{v}\in\mathbb R^p$, we define $\|\bm{v}\|_{2}=\sqrt{\sum_{j=1}^{p}b_j^2}$, $\|\bm{v}\|_{\infty}=\max_{j}|v_j|$, and $\|\bm{v}\|_0=|\{j:v_j\neq0\}|$. 
For any $\alpha>0$, let $\psi_{\alpha}(x)=\exp(x^{\alpha})-1$ for any $x>0$. The $\psi_{\alpha}$-Orlicz norm of a random variable $X$ is defined as $\|X\|_{\psi_{\alpha}}=\inf[c>0:E[\psi_{\alpha}(|X|/c)]\le1]$. For any measurable function \(f\), denote $\mathbb{E}_{\bm{X}}[f(X)]=\int f(x)d P_{\bm{X}}(x)$ and $\Var_{\bm{X}}[f(\bm{X})]=\mathbb{E}_{\bm{X}}[f^2(\bm{X})]-\{\mathbb{E}_{\bm{X}}[f(\bm{X})]\}^2$, where $P_{\bm{X}}$ is the marginal distribution of a random vector $\bm{X}$. 

\section{The total treatment heterogeneity (TTH)}\label{sec2}

\subsection{A semi-supervised TTH estimator}

Consider a semi-supervised setting where we have access to a labeled dataset \( (Z_i)_{i=1}^n = (\bm{X}_i, A_i, Y_i)_{i=1}^n \), along with a large unlabeled dataset \( (\bm{X}_i, A_i)_{i=n+1}^N \). Here, \( n \) denotes the labeled sample size, \( N \) is the total sample size, and \( m := N - n \) is the number of unlabeled samples. Semi-supervised learning is typically studied in the regime where \( N \gg n \), although our framework also accommodates scenarios where \( N \asymp n \). Following the standard missing completely at random (MCAR) assumption commonly adopted in the semi-supervised literature \citep{zhang2019semi,1,2,cheng2021robust,chakrabortty2018efficient,azriel2022semi,deng2024optimal,song2023general}, we assume that \( (\bm{X}_i, A_i) \) have the same joint distribution across the labeled and unlabeled groups. Moreover, we denote $(Z,\bm{X},A,Y,Y(1),Y(0))$ as an independent copy of $(Z_i,\bm{X}_i,A_i,Y_i,Y_i(1),Y_i(0))$.

Our target parameter is the total treatment heterogeneity (TTH), defined as the variance of the conditional average treatment effect (CATE), \( \theta_\mathrm{TTH} = \Var[\tau(\bm{X})] \). To identify the parameter of interest, we impose the following standard assumptions commonly used in the causal inference literature \citep{rosenbaum1983central,imbens2015causal}.

\begin{assumption}\label{a1}
 (a) (No unmeasured confounding) $\{Y(1),Y(0)\} \independent A \mid \bm{X}$. 
 (b) (Consistency) $Y = Y(A)$. 
 (c) (Overlap) There exists a constant $c_0\in(0,1/2)$ such that $P(c_0 \le \pi(\bm{X})\le 1-c_0)=1$, where $\pi(\bm{x}):=P(A=1\mid \bm{X}=\bm{x})$ is the propensity score function.
\end{assumption}

In what follows, we present a two-step semi-supervised estimation procedure.

\paragraph*{Step 1: Conditional average treatment effect (CATE) estimation}
Under the potential outcome framework, only one of the potential outcomes \( \{Y(1), Y(0)\} \) is observed for each individual, even within the labeled sample. As a result, the difference \( Y(1) - Y(0) \) is never directly observed, making it impossible to apply standard regression techniques to estimate \( \tau(\bm{x}) = E[Y(1) - Y(0) \mid \bm{X} = \bm{x}] \) directly. 

Fortunately, under the identification conditions in Assumption~\ref{a1}, several approaches are available for estimating the CATE function, including the S-learner, T-learner, X-learner, and DR-learner \citep{foster2023,Künzel2019,Kennedy2020,abrevaya2015,nie2021}. In this work, we adopt the doubly robust (DR) approach and apply the DR-learner \citep{foster2023,Kennedy2020}, which offers improved robustness to model misspecification and estimation error.

Specifically, we consider the doubly robust pseudo-outcome
\begin{equation}\label{171}
\varphi(Z) = \frac{A - \pi(\bm{X})}{\pi(\bm{X})[1 - \pi(\bm{X})]} \left[ Y - \mu_A(\bm{X}) \right] + \mu_1(\bm{X}) - \mu_0(\bm{X}),
\end{equation}
which depends only on observed variables and satisfies \( E[\varphi(Z) \mid \bm{X}] = \tau(\bm{X}) \). Under Assumption~\ref{a1}, the outcome regression functions can be expressed as \( \mu_a(\bm{x}) = E[Y(a) \mid \bm{X} = \bm{x}] = E[Y \mid \bm{X} = \bm{x}, A = a] \) for \( a \in \{0,1\} \), and thus can be estimated directly using the labeled data. Furthermore, when both \( (\bm{X}_i, A_i) \) are available in the large unlabeled dataset, the propensity score function \( \pi(\bm{x}) = P(A = 1 \mid \bm{X} = \bm{x}) \) can be estimated with high accuracy by leveraging the large total sample size. Let \( \hat\mu_a(\cdot) \) and \( \hat\pi(\cdot) \) denote generic estimators of \( \mu_a(\cdot) \) and \( \pi(\cdot) \), respectively. We then define the estimated pseudo-outcome as
\begin{align}\label{def:phihat}
\hat\varphi(Z) := \frac{A - \hat\pi(\bm{X})}{\hat\pi(\bm{X})[1 - \hat\pi(\bm{X})]} \left[ Y - \hat\mu_A(\bm{X}) \right] + \hat\mu_1(\bm{X}) - \hat\mu_0(\bm{X}).
\end{align}
A CATE estimator \( \hat\tau(\cdot) \) can be obtained by regressing \( \hat\varphi(Z_i) \) on \( \bm{X}_i \) using an arbitrary regression method. The inclusion of additional unlabeled samples contributes to a more accurate CATE estimate by improving the quality of the estimated propensity score $\hat\pi(\cdot)$.

\paragraph*{Step 2: Total treatment heterogeneity (TTH) estimation} The problem of semi-supervised variance estimation has been studied by \cite{1,2}, who focus on estimating the explained variance \( \Var[\mathbb{E}(Y \mid \bm{X})] \) in a non-causal setting. In this work, we extend their debiasing technique to the causal setting for estimating the total treatment heterogeneity (TTH), \( \theta_\mathrm{TTH} = \Var[\tau(\bm{X})] \), using the estimated CATE function \( \hat\tau(\cdot) \) obtained in Step 1. Specifically, we begin with the plug-in estimator
\[
\hat\theta_\mathrm{PI} := \frac{1}{N} \sum_{i=1}^N \hat h^2(\bm{X}_i), \quad \text{where} \quad \hat h(\bm{X}) := \hat\tau(\bm{X}) - \frac{1}{N} \sum_{i=1}^N \hat\tau(\bm{X}_i).
\]
This estimator admits the following decomposition:
\begin{align*}
&\hat\theta_\mathrm{PI} - \theta_{\mathrm{TTH}}\\
&\quad= \underbrace{\frac{1}{N} \sum_{i=1}^N \left[\tau(\bm{X}_i) - \tau \right]^2 - \theta_{\mathrm{TTH}}}_{=: \Delta_1}
- \underbrace{\frac{1}{N} \sum_{i=1}^N \left[ \hat h(\bm{X}_i) - h(\bm{X}_i) \right]^2}_{=: \Delta_2} 
+ \underbrace{\frac{2}{N} \sum_{i=1}^N \hat h(\bm{X}_i) \left[ \hat h(\bm{X}_i) - h(\bm{X}_i) \right]}_{=: \Delta_3},
\end{align*}
where \( h(\bm{x}) := \tau(\bm{x}) - \tau \), and \( \tau := \mathbb{E}[\tau(\bm{X})] = \mathbb{E}[Y(1) - Y(0)] \) denotes the average treatment effect (ATE). Under regularity conditions, \( \Delta_1 = O_p(N^{-1/2}) \) and is asymptotically normal; \( \Delta_2 \) depends quadratically on the CATE estimation error and is expected to be small; the dominant bias arises from the term \( \Delta_3 \).

\begin{algorithm}[!t]
 \centering
 \caption{The semi-supervised TTH estimator}
 \label{alg:Qhat_SL}

 \renewcommand{\baselinestretch}{1.1}\selectfont
 \footnotesize
 \begin{algorithmic}[1]
 \Require Labeled data \(\mathcal{L}=(\bm{X}_i,A_i,Y_i)_{i=1}^n\), unlabeled data \(\mathcal{U}=(\bm{X}_i,A_i)_{i=n+1}^N\), number of folds \(K \ge 3\).
 \State Partition $\mathcal{L}$ and $\mathcal{U}$ into $K$ disjoint folds of equal size, indexed by $\{I_1,\dots,I_K\}$ and $\{J_1,\dots,J_K\}$.
 \For{\(k \in \{1,\dots,K\}\)}
 \For{\(k' \in \{1,\dots,K\} \setminus \{k\}\)}
 \State Let $I_{-k,-k'}=\{1,\dots,n\}\setminus (I_k\cup I_{k'})$ and $J_{-k,-k'}=\{n+1,\dots,N\}\setminus (J_k\cup J_{k'})$.
 \State Obtain \(\hat\mu_a^{(-k,-k')}(\cdot)\) using samples in $I_{-k,-k'}$ for each $a\in\{0,1\}$.
 \State Obtain \(\hat\pi^{(-k,-k')}(\cdot)\) using samples in $G_{-k,-k'}:=I_{-k,-k'}\cup J_{-k,-k'}$.
 \State Compute imputed outcomes \(\hat\varphi^{(-k,-k')}(\cdot)\) as in \eqref{def:phihat}, using \(\hat\mu_a^{(-k,-k')}(\cdot)\) and \(\hat\pi^{(-k,-k')}(\cdot)\).
 \EndFor
 \State Obtain the CATE estimate $\hat\tau^{(-k)}(\cdot)$ regressing $\hat\varphi^{(-k,-k')}(Z_i)$ on \(\bm{X}_i\) using samples $i\in \cup_{k'\neq k}I_{k'}$.
 \State Let $\hat\varphi^{(-k)}(\cdot)=(K-1)^{-1}\sum_{k'\neq k}\hat\varphi^{(-k,-k')}(\cdot)$.
 \EndFor
 \State Compute
 $$\hat\tau=\frac{1}{N}\sum_{k=1}^K\sum_{i\in G_k}\hat\tau^{(-k)}(\bm{X}_i)+\frac{1}{n}\sum_{k=1}^K\sum_{i\in I_k}\Bigl[\hat\varphi^{(-k)}(Z_i)-\hat\tau^{(-k)}(\bm{X}_i)\Bigr].$$
 \State \Return The semi-supervised TTH estimator:
 $$\hat\theta_{\mathrm{TTH}}=\frac{1}{N}\sum_{k=1}^{K}\sum_{i\in G_k}\left[\hat h^{(-k)}(\bm{X}_i)\right]^2+\frac{2}{n}\sum_{k=1}^{K}\sum_{i\in I_k}\hat h^{(-k)}(\bm{X}_i)\left[\hat\varphi^{(-k)}(Z_i)-\hat\tau-\hat h^{(-k)}(\bm{X}_i)\right],$$
 where $G_k=I_k\cup J_k$ and $\hat h^{(-k)}(\cdot)=\hat\tau^{(-k)}(\cdot)-|G_k|^{-1}\sum_{j\in G_k}\hat\tau^{(-k)}(\bm{X}_j)$. 
 \end{algorithmic}
\end{algorithm}

To reduce this bias, we consider the representation \( h(\bm{X}_i) = \mathbb{E}[\varphi(Z) \mid \bm{X} = \bm{X}_i] - \tau \). We first introduce a semi-supervised estimator for \( \tau \), drawing on the approach of \cite{1}, who propose a semi-supervised estimator for the population mean \( \mathbb{E}(Y) \) of the form \( N^{-1}\sum_{i=1}^N \hat m(\bm{X}_i) + n^{-1}\sum_{i=1}^n [Y_i - \hat m(\bm{X}_i)] \), where \( \hat m(\bm{X}_i) \) is an estimate of the conditional mean function \( m(\bm{X}_i) = \mathbb{E}(Y_i \mid \bm{X}_i) \), possibly obtained using cross-fitting. Since \( \tau = \mathbb{E}[\varphi(Z)] \) is the expectation of the doubly robust pseudo-outcome \( \varphi(Z) \), we adapt this procedure by replacing \( Y_i \) and \( \hat m(\bm{X}_i) \) with \( \hat\varphi(Z_i) \) and \( \hat\tau(\bm{X}_i) \), respectively:
\[
\hat\tau = \frac{1}{N} \sum_{i=1}^N \hat\tau(\bm{X}_i) + \frac{1}{n} \sum_{i=1}^n \left[ \hat\varphi(Z_i) - \hat\tau(\bm{X}_i) \right].
\]
We then approximate \( h(\bm{X}_i) \) by the proxy \( \hat\varphi(Z_i) - \hat\tau \) and introduce the debiasing term $-2n^{-1}\sum_{i=1}^n[ \hat h(\bm{X}_i) - \hat\varphi(Z_i) + \hat\tau],$
constructed from the labeled samples, to mitigate the bias introduced by \( \Delta_3 \). This leads to the following semi-supervised TTH estimator:
\[
\hat\theta_\mathrm{TTH}=\hat\theta_\mathrm{PI}-\frac{2}{n}\sum_{i=1}^n\hat h(\bm{X}_i)\left[ \hat h(\bm{X}_i) - \hat\varphi(Z_i) + \hat\tau\right].
\]

To further mitigate the bias arising from the use of non-parametric or machine learning estimates, we incorporate an additional cross-fitting procedure. The detailed construction is provided in Algorithm \ref{alg:Qhat_SL}. Figure \ref{fig:cf_final_aligned} provides an illustration of the index sets generated by the cross-fitting procedure described in Algorithm \ref{alg:Qhat_SL}.

\begin{figure}[b!]
\centering
\begin{tikzpicture}[scale=1, every node/.style={scale=0.92}]

\def\block{2}
\def\total{5}
\def\width{\block*\total}
\def\centerY{-0.7}

\draw[thick] (0,0) rectangle (\width,1);
\foreach \i in {1,...,4} {
 \draw[thick] ({\block*\i},0) -- ({\block*\i},1);
}
\node[left,xshift=-0.5cm] at (0,0.5) {\textbf{Labeled Data}};

\draw[thick] (0,-2.4) rectangle (\width,-1.4);
\foreach \i in {1,...,4} {
 \draw[thick] ({\block*\i},-2.4) -- ({\block*\i},-1.4);
}
\node[left,xshift=-0.5cm] at (0,-1.9) {\textbf{Unlabeled Data}};

\fill[blue!20] (0,0) rectangle (2*\block,1);
\fill[blue!20] (0,-2.4) rectangle (2*\block,-1.4);

\node at ({0.5*\block},1.35) {$I_k$ ($k=1$)};
\node at ({1.5*\block},1.35) {$I_{k'}$ ($k'=2$)};
\node at ({2.5*\block},1.35) {$I_3$};
\node at ({3.5*\block},1.35) {$I_4$};
\node at ({4.1*\block},1.35) {$\cdots$};
\node at ({4.5*\block},1.35) {$I_K$};

\node at ({0.5*\block},-1.05) {$J_k$};
\node at ({1.5*\block},-1.05) {$J_{k'}$};
\node at ({2.5*\block},-1.05) {$J_3$};
\node at ({3.5*\block},-1.05) {$J_4$};
\node at ({4.1*\block},-1.05) {$\cdots$};
\node at ({4.5*\block},-1.05) {$J_K$};

\draw[dashed, thick, rounded corners=3pt] (-0.2,1.1) rectangle (\block + 0.2,-2.6);
\draw[dashed, thick, rounded corners=3pt] (2*\block - 0.1,1.1) rectangle (\width + 0.1,-2.6);

\draw[decorate,decoration={brace,amplitude=6pt,mirror},thick]
 (2*\block,-0.2) -- (\width,-0.2)
 node[midway,yshift=-0.5cm] {\small $I_{-k,-k'}$};

\draw[decorate,decoration={brace,amplitude=10pt,mirror},thick]
 (2*\block,-2.4) -- (\width,-2.4);

\node at ({\block*0.5}, -3.2) {\small $G_k = I_k \cup J_k$};
\node at ({\block*3.5}, -3.2) {\small $J_{-k,-k'}$};

\node at (\width + 1.2, \centerY) {\small $G_{-k,-k'}$};
\draw[<-, thick] (\width + 0.6, \centerY) -- (\width + 0.15, \centerY);
\end{tikzpicture}
\caption{Illustration of index sets generated by the cross-fitting procedure of Algorithm \ref{alg:Qhat_SL}}
\label{fig:cf_final_aligned}
\end{figure}
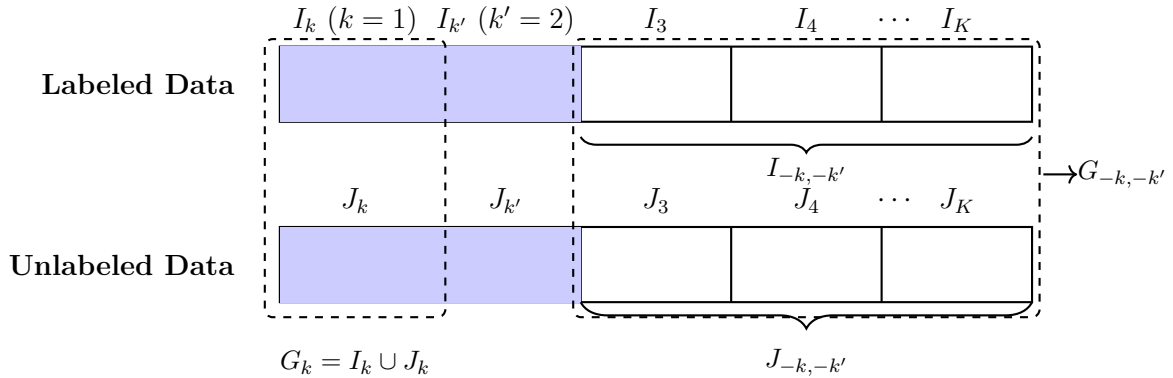

\subsection{Theoretical properties}

We assume the following conditions hold for all $a\in\{0,1\}$, $k, k' \in \{1, \dots, K\}$ with $k \ne k'$.

\begin{assumption}\label{a2}
For constants $C>0$ and $c_0\in(0,1/2)$, $P_{\bm{X}}[|\hat\mu_a^{(-k,-k')}(\bm{X})| < C]=1$ and $P_{\bm{X}}[\hat\pi^{(-k,-k')}(\bm{X}) \in (c_0,1 - c_0)]=1$ with probability approaching one. Moreover, $P[|Y(a)|<C]=1$. 
\end{assumption}

\begin{assumption}\label{a3}
 The nuisance estimation errors satisfy: with some target function $\mu_a^*(\cdot)$,
\begin{align*}
&\mathbb{E}_{\bm{X}}|\hat\pi^{(-k,-k')}(\bm{X}) - \pi(\bm{X})|^2 + \mathbb{E}_{\bm{X}}|\hat\mu_a^{(-k,-k')}(\bm{X}) - \mu_a^*(\bm{X})|^2= o_p(1).
\end{align*}
Moreover, either of the following conditions hold:
\begin{align*}
&(a)\;\;\mu_a^*(\cdot)=\mu_a(\cdot)\;\;\text{and}\;\;\mathbb{E}_{\bm{X}}|\hat\pi^{(-k,-k')}(\bm{X}) - \pi(\bm{X})|^2 \mathbb{E}_{\bm{X}}|\hat\mu_a^{(-k,-k')}({\bm{X}}) - \mu_a({\bm{X}})|^2 = o_p(n^{-1});\\
&(b)\;\;\mathbb{E}_{\bm{X}}|\hat{\pi}^{(-k,-k')}(\bm{X})-\pi(\bm{X})|^2=o_p(n^{-1}).
\end{align*}

\end{assumption}

\begin{assumption}\label{a4}
 The CATE estimator satisfies $P_{\bm{X}}[|\hat {\tau}^{(-k)}(\bm{X})|<C]=1$ and
\begin{equation*}
 \Var_{\bm{X}}[\hat {\tau}^{(-k)}(\bm{X})-\tau(\bm{X})]=o_p(n^{-1/2}).
\end{equation*}
\end{assumption}

The boundedness conditions in Assumption~\ref{a2} are standard when non-parametric nuisance estimates are considered, as seen in \cite{Chernozhukov2018}. Alternatively, these conditions can be relaxed to bounded moment conditions, provided that higher-order moment assumptions on the nuisance estimation errors are imposed. Assumption~\ref{a3} requires a correctly specified propensity score model but allows for a potentially misspecified outcome model, with target $\mu_a^*(\cdot)\neq\mu_a(\cdot)$, provided the propensity score is estimated accurately enough so that $\mathbb{E}_{\bm{X}}|\hat{\pi}^{(-k,-k')}(\bm{X})-\pi(\bm{X})|^2=o_p(n^{-1})$. Notably, since the error in propensity score estimation depends on the total sample size $N$ rather than the labeled sample size $n$, this condition can be satisfied even with non-parametric estimators as long as $N$ is sufficiently large. On the other hand, when the outcome model is correctly specified with $\mu_a^*(\cdot)=\mu_a(\cdot)$, a standard product-rate condition on the nuisance estimation errors is required, as in \cite{Chernozhukov2018,Levy2021,hines2022variable,1}. Assumption~\ref{a4} requires a consistent estimate for the true CATE function. In Section~\ref{sec3}, we discuss cases where the CATE model is misspecified -- in such settings, the target parameter becomes the explained treatment heterogeneity (ETH) of the working CATE model. Illustrations of the sparsity conditions under which the required convergence rates in Assumptions~\ref{a3}-\ref{a4} hold for high-dimensional parametric models are provided in Theorem~\ref{t1}.

Denote $\xi := \varphi^*(Z) - \tau(\bm{X})$ and 
\begin{align}
\varphi^*(Z) := \frac{A - \pi^*(\bm{X})}{\pi^*(\bm{X})\left[1 - \pi^*(\bm{X})\right]}\left[ Y - \mu_A^*(\bm{X}) \right] + \mu_1^*(\bm{X}) - \mu_0^*(\bm{X}), \label{def:phistar}
\end{align}
where $\pi^*(\cdot)$ and $\mu_a^*(\cdot)$ are target functions that serve as population-level approximations to the true nuisance functions. The following theorem shows the asymptotic normality of the proposed semi-supervised TTH estimator in the case where $\pi^*(\cdot)=\pi(\cdot)$, while $\mu_a^*(\cdot)$ may differ from $\mu_a(\cdot)$ for $a \in \{0,1\}$.

\begin{theorem}\label{t4}
Let Assumptions~\ref{a1}-\ref{a4} hold. 
Then, as $N,n\to\infty$,
\[
 \frac{\sqrt{n}(\hat\theta_\mathrm{TTH}-\theta_\mathrm{TTH})}{\sigma_\mathrm{TTH}}
 \ \xrightarrow{d}\ N(0,1),
\]
provided that $\sigma_\mathrm{TTH}^2 > c$ with some constant $c> 0$, where 
\[
\sigma_\mathrm{TTH}^2 := \Var[2\xi h(\bm{X})] + \frac{n}{N}\Var[h^2(\bm{X})].
\]
\end{theorem}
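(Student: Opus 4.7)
The plan is to linearize $\hat\theta_{\mathrm{TTH}}$ around the oracle statistic
\[
\tilde\theta := \frac{1}{N}\sum_{i=1}^N h^2(\bm{X}_i) + \frac{2}{n}\sum_{i=1}^n h(\bm{X}_i)\xi_i,
\]
where $h(\bm{x})=\tau(\bm{x})-\tau$ and $\xi_i=\varphi^*(Z_i)-\tau(\bm{X}_i)$. A short direct computation under Assumption~\ref{a1}(a)-(b) and $\pi^*=\pi$ shows that $E[\varphi^*(Z)\mid\bm{X}]=\tau(\bm{X})$ irrespective of whether $\mu_a^*=\mu_a$, so $E[\xi\mid\bm{X}]=0$. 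This orthogonality implies that the two sums defining $\tilde\theta$ are uncorrelated: $\Cov(h^2(\bm{X}_i),h(\bm{X}_j)\xi_j)=0$ for every $i,j$ (for $i\neq j$ by independence, and for $i=j$ because $E[\xi\mid\bm{X}]=0$). Hence, by a bivariate Lindeberg CLT justified by the boundedness in Assumption~\ref{a2}, $\sqrt n(\tilde\theta-\theta_{\mathrm{TTH}})\Rightarrow N(0,\sigma_{\mathrm{TTH}}^2)$ with exactly the stated variance.

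The main work is to prove $\sqrt n(\hat\theta_{\mathrm{TTH}}-\tilde\theta)=o_p(1)$. I would proceed fold-by-fold, writing $\hat h^{(-k)}=h+\delta^{(-k)}$ with $\delta^{(-k)}(\bm{x})=(\hat\tau^{(-k)}(\bm{x})-\tau(\bm{x}))-(\bar{\hat\tau}^{(-k)}-\tau)$, and using the identity
\[
\hat\varphi^{(-k)}(Z)-\hat\tau-\hat h^{(-k)}(\bm{X})=\xi+\bigl[\hat\varphi^{(-k)}(Z)-\varphi^*(Z)\bigr]-\delta^{(-k)}(\bm{X})+(\tau-\hat\tau).
\]
Expanding both pieces of the estimator produces several remainders. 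The leading linear-in-$\delta^{(-k)}$ terms $2N^{-1}\sum_{G_k}h\delta^{(-k)}$ (from the plug-in) and $-2n^{-1}\sum_{I_k}h\delta^{(-k)}$ (from the debiasing) are both unbiased, conditional on the training fold $-k$, for the same population quantity $2E[h(\bm{X})\delta^{(-k)}(\bm{X})\mid -k]$; their difference is $O_p(N^{-1/2})\cdot o_p(n^{-1/4})=o_p(n^{-1/2})$ by Assumption~\ref{a4}. The quadratic term $N^{-1}\sum_{G_k}[\delta^{(-k)}]^2$ is $o_p(n^{-1/2})$ directly by Assumption~\ref{a4}. The doubly-robust cross term $n^{-1}\sum_{I_k}\hat h^{(-k)}[\hat\varphi^{(-k)}-\varphi^*]$ is handled by the standard Neyman-orthogonality/product-rate argument of \cite{Chernozhukov2018} under either branch of Assumption~\ref{a3}. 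The residual $(\tau-\hat\tau)\cdot 2n^{-1}\sum_{I_k}\hat h^{(-k)}$ is $O_p(n^{-1/2})\cdot O_p(n^{-1/2})=o_p(n^{-1/2})$, since $\hat\tau$ is a semi-supervised DR mean estimator of the form in \cite{1}, hence $\sqrt n$-consistent under Assumption~\ref{a3}. Cross terms $n^{-1}\sum_{I_k}\delta^{(-k)}\xi$ and $n^{-1}\sum_{I_k}\delta^{(-k)}[\hat\varphi^{(-k)}-\varphi^*]$ are lower order by combining the conditional-mean-zero property $E[\xi\mid\bm{X}]=0$ with Assumptions~\ref{a3}-\ref{a4}. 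Slutsky's theorem then combines the oracle CLT with this $o_p(n^{-1/2})$ approximation to deliver the result.

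\textbf{The main obstacle} is the bookkeeping around the fold-dependent centering $\bar{\hat\tau}^{(-k)}=|G_k|^{-1}\sum_{j\in G_k}\hat\tau^{(-k)}(\bm{X}_j)$ that enters $\hat h^{(-k)}$. This centering is \emph{not} independent of $\{\bm{X}_i:i\in G_k\}$, so one cannot naively treat $\hat h^{(-k)}$ as a fixed function conditional on the evaluation fold. I would handle this by decomposing $\bar{\hat\tau}^{(-k)}-\tau$ into (i) a centered empirical mean of order $O_p(N^{-1/2})$ conditional on the training fold $-k$, and (ii) the population bias $E[\hat\tau^{(-k)}(\bm{X})\mid -k]-\tau$ controlled by $\sqrt{\Var_{\bm{X}}[\hat\tau^{(-k)}-\tau]}=o_p(n^{-1/4})$; tracking these two rates while preserving the $n/N$ weighting is what ultimately yields the $n/N$ factor in $\sigma_{\mathrm{TTH}}^2$. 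A secondary subtlety is that $\xi$ is defined through the possibly-misspecified target $\mu_a^*$, so the proof must rely only on $E[\xi\mid\bm{X}]=0$--which holds whenever $\pi^*=\pi$ regardless of $\mu_a^*$--and never use $\mu_a^*=\mu_a$.
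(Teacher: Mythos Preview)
Your overall architecture---linearize around the oracle $\tilde\theta$, apply a Lindeberg CLT using $E[\xi\mid\bm{X}]=0$ to kill the cross-covariance, then show $\sqrt{n}(\hat\theta_{\mathrm{TTH}}-\tilde\theta)=o_p(1)$ by expanding in $\delta^{(-k)}$ and $\hat\varphi^{(-k)}-\varphi^*$---is exactly the paper's route. The paper proves your ``DR cross term'' and the $\sqrt n$-consistency of $\hat\tau$ as a separate Lemma~\ref{l11}, just as you sketch, and then carries out a three-step linearization $\hat\theta^{(k)}_{\mathrm{TTH}}\to\tilde\theta^{(k)}_{\mathrm{TTH}}\to\check\theta^{(k)}_{\mathrm{TTH}}$ that mirrors your expansion.

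There is, however, a real gap in your treatment of the ``main obstacle.'' You assert that the population bias $\nu^{(-k)}-\tau:=E_{\bm{X}}[\hat\tau^{(-k)}(\bm{X})]-\tau$ is controlled by $\sqrt{\Var_{\bm{X}}[\hat\tau^{(-k)}-\tau]}$. This is false: Assumption~\ref{a4} bounds only the \emph{variance} $\Var_{\bm{X}}[\hat\tau^{(-k)}(\bm{X})-\tau(\bm{X})]$, not the raw second moment, so $\nu^{(-k)}-\tau$ is completely unconstrained (and indeed the estimator is designed to be invariant to constant shifts in $\hat\tau^{(-k)}$, so no such control can be needed). The fix is that the bias cancels inside your $\delta^{(-k)}$: write
\[
\delta^{(-k)}(\bm{X}_i)=\bigl[\hat\tau^{(-k)}(\bm{X}_i)-\nu^{(-k)}-h(\bm{X}_i)\bigr]-\bigl[\bar{\hat\tau}^{(-k)}-\nu^{(-k)}\bigr]=:\delta_i-\Delta_1.
\]
Conditional on the training fold, $\delta_i$ is a fixed function of $\bm{X}_i$ with $E_{\bm{X}}[\delta]=0$ and $E_{\bm{X}}[\delta^2]=\Var_{\bm{X}}[\hat\tau^{(-k)}-\tau]=o_p(n^{-1/2})$ directly from Assumption~\ref{a4}, while $\Delta_1=O_p(N^{-1/2})$ by Chebyshev. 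The paper executes precisely this: its Step~1 swaps the sample centering $\bar{\hat\tau}^{(-k)}$ for the population centering $\nu^{(-k)}$ (cost $O_p(N^{-1})$), and its Step~2 then compares to $h$ using only $E_{\bm{X}}[\delta^2]=o_p(n^{-1/2})$---the bias $\nu^{(-k)}-\tau$ never appears. Relatedly, the $n/N$ factor in $\sigma_{\mathrm{TTH}}^2$ comes entirely from the oracle CLT (the $N^{-1}\sum h^2$ term scaled by $\sqrt n$), not from ``tracking these two rates'' in the remainder; the remainder analysis only needs to land at $o_p(n^{-1/2})$.
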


\begin{remark}[Enhanced robustness and efficiency]\label{remark1}
In the following, we discuss the benefits of incorporating the additional unlabeled samples compared with fully supervised methods. 

\begin{itemize}
\item \textbf{Model robustness.} To establish asymptotic normality, supervised methods \citep{Levy2021,hines2022variable} require both the propensity score model and the outcome regression model to be consistently estimated. In contrast, as stated in Assumption~\ref{a3}, we allow the outcome regression model to be misspecified, $\mu_a^*(\cdot)\neq\mu_a(\cdot)$, provided that the total sample size $N$ is sufficiently large relative to the labeled size $n$.
\item \textbf{Rate robustness.} When both nuisance models are correctly specified, we require a product-rate condition on the nuisance estimation errors in Assumption~\ref{a3}(a), as is standard in the supervised literature. However, by leveraging the additional unlabeled samples, we achieve a much faster convergence rate for the propensity score estimation. As a result, the same product-rate condition translates into considerably weaker model complexity requirements for the nuisance models.
\item \textbf{Enhanced efficiency.} Once asymptotic normality is established, our method yields an asymptotic variance $\sigma_\mathrm{TTH}^2 = \Var[2\xi h(\bm{X})] + (n/N)\Var[h^2(\bm{X})]$. In contrast, supervised methods yield an asymptotic variance $\Var[2\xi h(\bm{X})] + \Var[h^2(\bm{X})]$. Thus, the use of additional unlabeled samples reduces the variance by $(1-n/N)\Var[h^2(\bm{X})]$. This reduction is asymptotically non-negligible unless $n/N \to 1$, meaning the unlabeled size is negligible compared to the labeled size, or $\Var[h^2(\bm{X})]=\Var\{[\tau(\bm{X})-\tau]^2\}\to0$, meaning the CATE function is nearly constant.
\end{itemize}

To sum up, incorporating additional unlabeled samples relaxes the conditions on model correctness and model complexity required for asymptotic normality. Moreover, even when asymptotic normality holds, it yields a smaller asymptotic variance compared with supervised methods.
\end{remark}

\begin{remark}[Situations where unlabeled samples contain only covariates]\label{remark:onlyX}
The methods and results developed above assume that the unlabeled samples contain both covariates and treatment variables, \( (\bm{X}_i, A_i)_{i=n+1}^N \). In some applications, however, treatment variables may be difficult to collect, leaving only covariates \( (\bm{X}_i)_{i=n+1}^N \) available in the unlabeled samples. In such cases, Algorithm~\ref{alg:Qhat_SL} can still be applied, except that the propensity score in Step~6 must be estimated solely from the labeled samples. Consequently, the additional unlabeled samples no longer provide the robustness improvement discussed in Remark~\ref{remark1}, since there are no extra treatment variables to fit the propensity score model. Nevertheless, the efficiency gain remains, even when the unlabeled samples contain only covariates.
\end{remark}

\section{The explained treatment heterogeneity (ETH)}\label{sec3}
\label{sec:meth}

In this section, we consider settings where directly estimating the true CATE function \(\tau(\cdot)\) may be challenging due to model complexity, or not desirable because of interpretation or implementation concerns. In such cases, it is natural to consider simplified working models \(\tau^*(\cdot)\) of the true CATE. Moreover, for practical, ethical, or economic reasons, followed-up individualized treatment strategies may sometimes be developed based only on a subset of the covariates, denoted as \(\bm{W}\in\mathbb{R}^p\). For simplicity, we assume that the first coordinates of \(\bm{X}\) and \(\bm{W}\) are both 1, corresponding to intercept terms. We focus on linear working models \(\tau^*(\bm{w})=\bm{w}^\top\bm{\beta}^*\) for interpretability, where the population-level best linear slope is defined as
\begin{align}\label{def:beta}
\bm{\beta}^* := \arg\min_{\bm{\beta}\in\mathbb{R}^p} E\left[\tau(\bm{X})-\bm{W}^\top\bm{\beta}\right]^2.
\end{align}

A key quantity in this setting is the explained treatment heterogeneity (ETH) of the working model, \(\theta_\mathrm{ETH}=\Var[\tau^*(\bm{W})]\), which quantifies the heterogeneity captured by the chosen covariates and linear model. An ETH close to the total treatment heterogeneity (TTH) indicates that the simplified working model already captures most of the heterogeneity in the true CATE.

\subsection{A direct semi-supervised ETH estimation}\label{sec3.1}

We begin by presenting the results in high-dimensional parametric settings, where we consider linear models for both the outcome regression and the CATE function, together with a logistic model for the propensity score. We introduce Algorithm \ref{alg:theta_es}, adapted from Algorithm \ref{alg:Qhat_SL}, with all nuisance estimates obtained using \(\ell_1\)-regularized methods. Since a simplified working model is specified for the CATE function, the semi-supervised estimator in Algorithm \ref{alg:theta_es} targets the ETH of the working model, \(\theta_\mathrm{ETH}=\Var[\tau^*(\bm{W})]\), which is generally no larger than \(\theta_\mathrm{TTH}=\Var[\tau(\bm{X})]\).

\begin{algorithm}[t!]
 \centering
 \caption{The direct semi-supervised ETH estimator}
 \label{alg:theta_es}
 \renewcommand{\baselinestretch}{1.1}\selectfont
 \footnotesize
 \begin{algorithmic}[1]
 \Require Labeled data \(\mathcal{L}=(\bm{X}_i,A_i,Y_i)_{i=1}^n\), unlabeled data \(\mathcal{U}=(\bm{X}_i,A_i)_{i=n+1}^N\), number of folds \(K \ge 3\).
 \State Partition $\mathcal{L}$ and $\mathcal{U}$ into $K$ disjoint folds of equal size, indexed by $\{I_1,\dots,I_K\}$ and $\{J_1,\dots,J_K\}$.
 \For{\(k \in \{1,\dots,K\}\)}
 \For{\(k' \in \{1,\dots,K\} \setminus \{k\}\)}
 \State Let $I_{-k,-k'}=\{1,\dots,n\}\setminus (I_k\cup I_{k'})$ and $J_{-k,-k'}=\{n+1,\dots,N\}\setminus (J_k\cup J_{k'})$.
 \State For each $a\in\{0,1\}$, obtain $\hat\mu_a^{(-k,-k')}(\bm{x})=\bm{x}^\top\hat{\bm{\alpha}}_{a}^{(-k,-k')}$ through solving
\begin{align}\label{def:alphahat}
\hat{\bm{\alpha}}_a^{(-k,-k')} = \arg\min_{\bm{\alpha}\in\mathbb{R}^{d}} \Bigl[\left|I_{-k,-k'}\right|^{-1} \sum_{i\in I_{-k,-k'}}\mathbbm{1}_{A_i=a}\Bigl(Y_i - \bm{X}_i^\top\bm{\alpha}\Bigr)^2 + \lambda_{\alpha}\|\bm{\alpha}\|_1 \Bigr].
\end{align}
 \State Denote $G_{-k,-k'}:=I_{-k,-k'}\cup J_{-k,-k'}$ and obtain $\hat\pi^{(-k,-k')}(\bm{x})=\phi(\bm{x}^\top\hat{\bm{\gamma}}^{(-k,-k')})$ through solving
\begin{align}\label{def:gamma}
\hat{\bm{\gamma}}^{(-k,-k')} = \arg\min_{\bm{\gamma}\in\mathbb{R}^{d}} \Bigl(|G_{-k,-k'}|^{-1} \sum_{i\in G_{-k,-k'}} \Bigl\{\log\Bigl[1+\exp\Bigl(\bm{X}_i^\top\bm{\gamma}\Bigr)\Bigr]-A_i\,\bm{X}_i^\top\bm{\gamma}\Bigr\} + \lambda_\gamma\|\bm{\gamma}\|_1 \Bigr).
\end{align}
 \State Compute imputed outcomes \(\hat\varphi^{(-k,-k')}(\cdot)\) as in \eqref{def:phihat}, using \(\hat\mu_a^{(-k,-k')}(\cdot)\) and \(\hat\pi^{(-k,-k')}(\cdot)\).
 \EndFor
 \State Obtain
\begin{align}\label{def:betahat}
\hat{\bm{\beta}}^{(-k)} = \arg\min_{\bm{\beta}\in\mathbb{R}^{p}} \Bigl\{|\cup_{k'\neq k}I_{k'}|^{-1}\sum_{k'\neq k}\sum_{i\in I_{k'}} \Bigl[\hat\varphi^{(-k,-k')}(Z_i)-\bm{W}_i^\top\bm{\beta}\Bigr]^2 + \lambda_\beta\|\bm{\beta}\|_1 \Bigr\}.
\end{align}
 \State Let $\hat\varphi^{(-k)}(\cdot)=(K-1)^{-1}\sum_{k'\neq k}\hat\varphi^{(-k,-k')}(\cdot)$, $G_k=I_k\cup J_k$, and $\hat {\bm{D}}_i^{(k)}=\bm{W}_i-|G_k|^{-1}\sum_{j\in G_k}\bm{W}_j$.
 \EndFor
 \State Compute 
\begin{align}\label{def:tau_para}
\hat\tau_\mathrm{para}=\frac{1}{N}\sum_{k=1}^K\sum_{i\in G_k}\bm{W}_i^\top\hat{\bm{\beta}}^{(-k)}+\frac{1}{n}\sum_{k=1}^K\sum_{i\in I_k}\Bigl[\hat\varphi^{(-k)}(Z_i)-\bm{W}_i^\top\hat{\bm{\beta}}^{(-k)}\Bigr].
\end{align}
\State For each \(k \in \{1,\dots,K\}\), denote $\hat{\epsilon}_i^{(k)}=\hat\varphi^{(-k)}(Z_i)-\hat\tau_\mathrm{para}-\hat {\bm{D}}_i^{(k)^\top}\hat{\bm{\beta}}^{(-k)}$.
 \State \Return The semi-supervised ETH estimator:
\begin{align}
\hat\theta_{\mathrm{ETH}}=\frac{1}{N}\sum_{k=1}^{K}\sum_{i\in G_k}\Bigl(\hat {\bm{D}}_i^{(k)^\top}\hat{\bm{\beta}}^{(-k)}\Bigr)^2+\frac{2}{n}\sum_{k=1}^{K}\sum_{i\in I_k}\hat {\bm{D}}_i^{(k)^\top}\hat{\bm{\beta}}^{(-k)}\hat{\epsilon}_i^{(k)}.\label{def:ETH-para}
\end{align}
 \end{algorithmic}
\end{algorithm}

We define the population-level best linear and logistic slopes for the outcome regression and propensity score models as 
\begin{align*}
\bm{\alpha}_a^*&:=\arg\min_{\alpha\in\mathbb R^d}E\left[\mathbbm{1}_{A=a}\left(Y - \bm{X}^\top\bm{\alpha}\right)^2\right]\;\;\text{for each}\;\;a\in\{0,1\},\\
\bm{\gamma}^*&:=\arg\min_{\bm{\gamma}\in\mathbb R^d}E\left\{\log\left[1+\exp\left(\bm{X}^\top\bm{\gamma}\right)\right]-A\bm{X}^\top\bm{\gamma}\right\}.
\end{align*}
For simplicity, we define the sparsity levels as \(s_\alpha=\max(\|\bm{\alpha}_1\|_0,\|\bm{\alpha}_0\|_0,1)\), \(s_\beta=\max(\|\bm{\beta}^*\|_0,1)\), and \(s_\gamma=\max(\|\bm{\gamma}^*\|_0,1)\), so as to avoid degenerate cases with zero sparsity, where $\bm{\beta}^*$ is given in \eqref{def:beta}. Consider working models $\mu_a^*(\bm{x})=\bm{x}^\top\bm{\alpha}_a^*$ for each $a\in\{0,1\}$ and $\pi^*(\bm{x})=\phi(\bm{x}^\top\bm{\gamma}^*)$, where $\phi(u)=\exp(u)/[1+\exp(u)]$ denotes the logistic function. The following regularity assumptions are imposed to establish the asymptotic properties of the proposed estimator.

\begin{assumption}\label{a5}
 Define $\zeta_a=\mathbbm{1}_{A=a}\left[Y-\mu_a^*(\bm{X})\right]$ for each $a\in\{0,1\}$. Suppose that there exist constants $\sigma_{\zeta},C>0$ such that $\zeta_a$ is sub‐Gaussian with 
 $\|\zeta_a\|_{\psi_2}\le\sigma_{\zeta}$ and $\mathbb{E}[Y(a)]^2<C$. 
\end{assumption}

\begin{assumption}\label{a6}
 Let $\bm{X}$ be a sub‐Gaussian random vector satisfying $\|\bm{X}^\top \bm{v}\|_{\psi_2}\le\sigma_x\|\bm{v}\|_2$ for all $\bm{v}\in\mathbbm{R}^{d}$, $\|\bm{X}\|_\infty = O(1)$, and $\lambda_{\min}[\mathbb{E}(\bm{X}\bm{X}^\top)]\ge\kappa_l$ with some constants $\sigma_x,\kappa_l>0.$
\end{assumption}

\begin{assumption}\label{a7}
Let $P(c_0 \le \pi^*(\bm{X})\le 1-c_0)=1$ with some constant $c_0\in(0,1/2).$
\end{assumption}

Choose some tuning parameters $\lambda_{\beta}\asymp\sqrt{\log(p)/n}$ and $\lambda_{\alpha}\asymp\lambda_{\gamma}\asymp\sqrt{\log(d)/n}$. Then, the following theorem characterizes the CATE estimation error.

\begin{theorem}\label{t1}
 Let Assumptions~\ref{a1}, \ref{a5}, \ref{a6}, and \ref{a7} hold. Suppose that either (a) $\mu_a^*(\cdot)= \mu_a(\cdot)$ for each $a\in\{0,1\}$ or (b) $\pi^*(\cdot)=\pi(\cdot)$. Moreover, let $s_{\alpha}\log(d)+s_{\beta}\log(p)=o(n)$ and $s_{\gamma}\log(d) = o(N)$. Then, as $N,d,p\to\infty$,
 \begin{align*}
 &\|\hat{\bm{\beta}}^{(-k)}-\bm{\beta}^*\|_{2}= O_p\left(\sqrt{\frac{s_{\beta}\log(p)}{n}}+R_n\right),\;\;\text{where}\\
 &R_n:= \sqrt{\frac{s_{\alpha}s_{\gamma}\log^2(d)}{nN}}
 + \sqrt{\frac{s_{\gamma}\log(d)}{N}}\left(\mathbbm{1}_{\mu_1^*(\cdot)\neq \mu_1(\cdot)}+\mathbbm{1}_{\mu_0^*(\cdot)\neq \mu_0(\cdot)}\right)
 + \sqrt{\frac{s_{\alpha}\log(d)}{n}}\mathbbm{1}_{\pi^*(\cdot)\neq \pi(\cdot)}.
 \end{align*}
\end{theorem}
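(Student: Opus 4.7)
The plan is to view $\hat{\bm{\beta}}^{(-k)}$ as a standard Lasso regression with response $\hat\varphi^{(-k,-k')}(Z_i)$ and design $\bm{W}_i$ from (\ref{def:betahat}), and to control the nuisance-induced perturbation via the double-robustness of $\varphi^*$ together with cross-fit independence. A key preliminary is $E[\varphi^*(Z)\mid\bm{X}]=\tau(\bm{X})$, which holds under either hypothesis (a) or (b) by the DR property of (\ref{def:phistar}); combined with the optimality condition for (\ref{def:beta}), this yields the population orthogonality $E[\bm{W}(\varphi^*(Z)-\bm{W}^\top\bm{\beta}^*)]=0$, aligning the regression target with $\bm{\beta}^*$. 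I then invoke off-the-shelf high-dimensional Lasso rates for the nuisance estimates (\ref{def:alphahat}) and (\ref{def:gamma}) under Assumptions~\ref{a5}-\ref{a7} (restricted eigenvalues follow from $\lambda_{\min}[E(\bm{X}\bm{X}^\top)]\ge\kappa_l$ and sub-Gaussianity of $\bm{X}$):
\[
\|\hat\mu_a^{(-k,-k')}-\mu_a^*\|_{L^2}=O_p\bigl(\sqrt{s_\alpha\log(d)/n}\bigr),\qquad \|\hat\pi^{(-k,-k')}-\pi^*\|_{L^2}=O_p\bigl(\sqrt{s_\gamma\log(d)/N}\bigr),
\]
where the propensity rate benefits from the full $N$ samples used in (\ref{def:gamma}).

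The central step is bounding the KKT score of (\ref{def:betahat}) at $\bm{\beta}^*$, which I decompose as $T_1+T_2$, where $T_1$ replaces $\hat\varphi^{(-k,-k')}$ with $\varphi^*$ (clean and mean-zero against $\bm{W}$) and $T_2$ collects the nuisance plug-in error $\hat\varphi^{(-k,-k')}-\varphi^*$. By sub-Gaussian concentration of $\bm{W}(\varphi^*(Z)-\bm{W}^\top\bm{\beta}^*)$, $\|T_1\|_\infty=O_p(\sqrt{\log(p)/n})$. For $T_2$, cross-fitting renders the nuisance functions independent of the evaluation fold, so conditioning on them I apply the doubly-robust second-order identity
\[
E[\hat\varphi^{(-k,-k')}(Z)-\varphi^*(Z)\mid\bm{X},\mathrm{nuisance}] = \frac{(\mu_1-\hat\mu_1)(\pi-\hat\pi)}{\hat\pi}(\bm{X}) + \frac{(\mu_0-\hat\mu_0)(\pi-\hat\pi)}{1-\hat\pi}(\bm{X}),
\]
then use Cauchy-Schwarz with $\|\bm{W}\|_\infty=O(1)$ (inherited from Assumption~\ref{a6}) together with the splits $\hat\mu_a-\mu_a=(\hat\mu_a-\mu_a^*)+(\mu_a^*-\mu_a)$ and $\hat\pi-\pi=(\hat\pi-\pi^*)+(\pi^*-\pi)$. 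These splits yield exactly the three components of $R_n$: the cross-product $\sqrt{s_\alpha s_\gamma\log^2(d)/(nN)}$ from the estimation errors of both nuisances, and the two indicator-weighted terms from the misspecification residuals $(\mu_a-\mu_a^*)$ and $(\pi-\pi^*)$. The conditional fluctuation of $T_2$ about its mean is treated by a sub-Gaussian/Bernstein maximal inequality and is of strictly lower order.

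With the score bound $O_p(\sqrt{\log(p)/n}+R_n)$ in hand, I would feed it into a standard restricted-eigenvalue Lasso oracle inequality: the $\sqrt{\log(p)/n}$ piece translates into the usual $\sqrt{s_\beta\log(p)/n}$ rate, while $R_n$ enters additively without $\sqrt{s_\beta}$ amplification because it acts as a low-dimensional $\ell_2$-shift of the target rather than entrywise noise, bounded via $\|[E(\bm{W}\bm{W}^\top)]^{-1}E[\bm{W}\cdot\mathrm{bias}]\|_2\lesssim R_n$ using Cauchy-Schwarz and $\lambda_{\min}[E(\bm{W}\bm{W}^\top)]\ge\kappa_l$. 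The main obstacle I anticipate is executing this last step cleanly: separating $T_2$ into its conditional mean (a non-sparse but $\ell_2$-small perturbation of $\bm{\beta}^*$) and its conditional fluctuation (absorbable into $\lambda_\beta$) to obtain the additive $R_n$ rather than the coarser $\sqrt{s_\beta}R_n$ that would result from naively inflating the tuning parameter, while simultaneously verifying the restricted eigenvalue property of the sample Gram matrix at sparsity level $s_\beta$ under Assumption~\ref{a6}.
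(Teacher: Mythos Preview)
Your overall architecture matches the paper: both view $\hat{\bm{\beta}}^{(-k)}$ as a Lasso problem, exploit the population orthogonality $E[\bm{W}(\varphi^*(Z)-\bm{W}^\top\bm{\beta}^*)]=0$ (valid under either (a) or (b)), decompose $\hat\varphi^{(-k,-k')}-\varphi^*$ into a mean-zero fluctuation piece absorbable into $\lambda_\beta\asymp\sqrt{\log(p)/n}$ and a bias piece that produces $R_n$, and invoke the Lasso rates for $\hat{\bm{\alpha}}_a$ and $\hat{\bm{\gamma}}$ as inputs. The paper's six-term split $\Delta_{1i},\dots,\Delta_{6i}$ is exactly your $T_1$/$T_2$ partition refined by case indicators, and the three components of $R_n$ arise the same way.

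The genuine gap is in your mechanism for making $R_n$ enter \emph{additively}. You propose to interpret the conditional bias as an $\ell_2$-shift of the target via $\tilde{\bm{\beta}}^*=\bm{\beta}^*+[E(\bm{W}\bm{W}^\top)]^{-1}E[\bm{W}\cdot b(\bm{X})]$, and you correctly observe that $\|\tilde{\bm{\beta}}^*-\bm{\beta}^*\|_2\lesssim R_n$. But the Lasso oracle inequality must still be run with the basic inequality anchored at the \emph{sparse} vector $\bm{\beta}^*$; anchoring at the non-sparse $\tilde{\bm{\beta}}^*$ destroys the $\ell_1$ cone argument. If instead you stay anchored at $\bm{\beta}^*$, the bias term sits in the cross-product $\tfrac{2}{n}\sum_i b_i\,\bm{W}_i^\top\bm{\Delta}^\beta$ of the basic inequality, and bounding it through $[E(\bm{W}\bm{W}^\top)]^{-1}E[\bm{W}b]$ does not directly control this empirical sum. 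Your proposal does not say how to close this loop, and you flag it yourself as the main obstacle.

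The paper sidesteps the target-shift idea entirely. It applies Young's inequality directly to the empirical cross-term,
\[
\frac{2}{n}\sum_i\Bigl(\textstyle\sum_{l=4}^6\Delta_{li}\Bigr)\bm{W}_i^\top\bm{\Delta}^\beta
\;\le\; \frac{2}{n}\sum_i\Bigl(\textstyle\sum_{l=4}^6\Delta_{li}\Bigr)^2 + \frac{1}{2n}\sum_i(\bm{W}_i^\top\bm{\Delta}^\beta)^2,
\]
absorbing the second piece into the left-hand side of the basic inequality. This yields a modified inequality with an additive ``approximation error'' $\tfrac{1}{n}\sum_i(\sum_{l=4}^6\Delta_{li})^2=O_p(R_n^2)$. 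The paper then runs a two-case argument: if $\|\bm{\Delta}^\beta_S\|_1$ is small relative to $\lambda_\beta^{-1}R_n^2$, one gets $\|\bm{\Delta}^\beta\|_1\lesssim\lambda_\beta^{-1}R_n^2$ and, combined with the restricted-eigenvalue lower bound, $\|\bm{\Delta}^\beta\|_2\lesssim R_n$; otherwise the standard cone condition $\|\bm{\Delta}^\beta_{S^c}\|_1\le 4\|\bm{\Delta}^\beta_S\|_1$ holds and the usual $\sqrt{s_\beta}\lambda_\beta$ rate follows. This is the standard ``Lasso with approximation error'' technique and is what delivers the additive rather than multiplicative $R_n$. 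Replacing your target-shift heuristic with this Young-plus-two-case argument would complete your proof along the same lines as the paper.
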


As shown in Theorem \ref{t1}, the linear slope estimator for the CATE consistently estimates the population slope \eqref{def:beta} when either the outcome regression model or the propensity score model is correctly specified. Under this condition, $\bm{\beta}^*$ coincides with the best linear slope obtained by replacing $\tau(\bm{x})$ in \eqref{def:beta} with $\varphi^*(Z)$ from \eqref{def:phistar}, thereby ensuring the validity of using the plug-in estimate of pseudo-outcomes.

The CATE estimation error consists of two terms. The first term, $\sqrt{s_{\beta}\log(p)/n}$, reflects the complexity of the CATE model itself, and can be regarded as the ``oracle'' estimation error if the true pseudo-outcomes \eqref{171} were available, which requires knowledge of the true nuisance functions $(\mu_1,\mu_0,\pi)$. The second term, $R_n$, captures the imputation error introduced by the initial nuisance estimation errors, and therefore depends on the complexity and correctness of the outcome regression and propensity score models. 

When all nuisance models are correctly specified, $R_n$ depends on the product of the nuisance estimation errors, which can be negligible compared to the oracle error when $N$ is sufficiently large, with $N\gg s_\alpha s_\gamma\log^2(d)/\{s_\beta\log(p)\}$. When $\pi^*(\cdot)\neq \pi(\cdot)$, $R_n$ includes an additional error term $\sqrt{s_{\alpha}\log(d)/n}$, corresponding to the estimation error of the outcome regression. Conversely, when $\mu_1^*(\cdot)\neq \mu_1(\cdot)$ or $\mu_0^*(\cdot)\neq \mu_0(\cdot)$, $R_n$ includes an additional error term $\sqrt{s_{\gamma}\log(d)/N}$, corresponding to the estimation error of the propensity score. Thanks to the availability of additional unlabeled samples, this additional error can also be negligible when $N$ is large enough, with $N\gg ns_\gamma\log(d)/\{s_\beta\log(p)\}$.

\begin{theorem}\label{t2}
Let the assumptions of Theorem~\ref{t1} hold. Then, as $N,d,p\to\infty$,
\[
 \hat \theta_{\mathrm{ETH}} = \theta_{\mathrm{ETH}} + O_p\Bigl(R_n + \frac{s_{\beta}\,\log(p)}{n} + \frac{1}{\sqrt n}\Bigr).
\]
Moreover, let $s_{\alpha}s_{\gamma}\log^2(d) =o(N)$, $s_{\beta}\log(p)= o(\sqrt{n})$, and $\pi^*(\cdot)=\pi(\cdot)$. Assume either (a) $\mu_a^*(\cdot)=\mu_a(\cdot)$ for each $a\in\{0,1\}$, or (b) $ns_\gamma\log(d)=o(N)$. Denote $\bm{D}:=\bm{W}-\mathbb{E}(\bm{W})$ and $\epsilon:=\varphi^*(Z)-\bm{W}^\top\bm{\beta}^*$. Suppose that $\sigma_{\mathrm{para}}^2 := A + (n/N)B + 2(n/N)C>c$ with some constant $c>0$, where $A:=\Var(2\epsilon \bm{D}^\top\bm{\beta}^*)$, $B:=\Var[(\bm{D}^\top\bm{\beta}^*)^2]$, and $C:=\mathrm{Cov}[2\epsilon \bm{D}^\top \bm{\beta}^*,(\bm{D}^\top\bm{\beta}^*)^2].$ Then, as $N,d,p\to\infty$,
\[
 \frac{\sqrt{n}\,(\hat \theta_{\mathrm{ETH}} - \theta_{\mathrm{ETH}})}{\sigma_{\mathrm{para}}}
 \;\xrightarrow{d}\; N(0,1).
\]
\end{theorem}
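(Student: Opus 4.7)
The plan is to mirror the structure used in the proof of Theorem~\ref{t4}, adapted to the linear projection setting. First I would introduce the oracle debiased statistic
\begin{equation*}
\Psi_n \;:=\; \frac{1}{N}\sum_{i=1}^N(\bm{D}_i^\top\bm{\beta}^*)^2 + \frac{2}{n}\sum_{i=1}^n \bm{D}_i^\top\bm{\beta}^*\,\epsilon_i,
\end{equation*}
where $\bm{D}_i = \bm{W}_i - \mathbb{E}(\bm{W})$ and $\epsilon_i = \varphi^*(Z_i) - \bm{W}_i^\top\bm{\beta}^*$. The first-order optimality condition defining $\bm{\beta}^*$ yields $\mathbb{E}[\bm{W}(\tau(\bm{X}) - \bm{W}^\top\bm{\beta}^*)] = 0$, and the identity $\mathbb{E}[\varphi^*(Z)\mid\bm{X}] = \tau(\bm{X})$ (valid since Theorem~\ref{t2} assumes $\pi^*=\pi$) then gives $\mathbb{E}[\bm{D}\epsilon]=0$ and $\mathbb{E}[\epsilon]=0$. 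A direct computation shows $\mathbb{E}[\Psi_n]=\theta_\mathrm{ETH}$ and $n\Var(\Psi_n)=A+(n/N)B+(2n/N)C=\sigma_\mathrm{para}^2$, with the covariance term arising from the overlap between labeled indices across the two sums.

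Next I would bound the discrepancy $\hat\theta_\mathrm{ETH} - \Psi_n$. Writing $\hat{\bm{\beta}}^{(-k)} = \bm{\beta}^* + \bm{\delta}^{(-k)}$, $\hat{\bm{D}}_i^{(k)} = \bm{D}_i + \bm{\eta}^{(k)}$ with $\bm{\eta}^{(k)} = \mathbb{E}(\bm{W}) - |G_k|^{-1}\sum_{j\in G_k}\bm{W}_j$, and $\hat\varphi^{(-k)} = \varphi^* + \xi^{(-k)}$, I would expand $(\hat{\bm{D}}_i^{(k)\top}\hat{\bm{\beta}}^{(-k)})^2$ and $\hat{\bm{D}}_i^{(k)\top}\hat{\bm{\beta}}^{(-k)}\hat\epsilon_i^{(k)}$. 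The remainder splits into: (i) terms quadratic in $\bm{\delta}^{(-k)}$, bounded by $\|\bm{\delta}^{(-k)}\|_2^2 = O_p(s_\beta\log(p)/n + R_n^2)$ via Theorem~\ref{t1}, together with sample-centering contributions of order $\|\bm{\eta}^{(k)}\|_2^2 = O_p(N^{-1})$; (ii) first-order cross terms which, by construction of the debiasing, cancel at the population level -- the combined conditional expectation collapses to $\theta_\mathrm{ETH} - \bm{\delta}^{(-k)\top}\Sigma\bm{\delta}^{(-k)}$ with $\Sigma = \mathbb{E}[\bm{D}\bm{D}^\top]$, while the sample-level mismatch is controlled by sub-Gaussian concentration and sample-split independence to order $O_p(\|\bm{\delta}^{(-k)}\|_2 \cdot n^{-1/2})$; and (iii) nuisance-driven pieces from $\xi^{(-k)}$ and from $\hat\tau_\mathrm{para} - \mathbb{E}[\varphi^*(Z)]$, which I would control via the DR product-rate expansion already used in the proof of Theorem~\ref{t1}, yielding a bound of order $R_n$. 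Summing gives $\hat\theta_\mathrm{ETH} - \Psi_n = O_p(R_n + s_\beta\log(p)/n)$, which combined with $\Psi_n - \theta_\mathrm{ETH} = O_p(n^{-1/2})$ produces the first display.

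For the asymptotic normality, the strengthened rate conditions $s_\alpha s_\gamma\log^2(d) = o(N)$, $s_\beta\log(p) = o(\sqrt n)$, and either case (a) or (b) upgrade all remainder terms to $o_p(n^{-1/2})$: under (a), $R_n=\sqrt{s_\alpha s_\gamma\log^2(d)/(nN)}=o(n^{-1/2})$; under (b), the extra $\sqrt{s_\gamma\log(d)/N}$ piece becomes $o(n^{-1/2})$ via $n s_\gamma\log(d)=o(N)$. Hence $\sqrt n(\hat\theta_\mathrm{ETH}-\theta_\mathrm{ETH}) = \sqrt n(\Psi_n-\theta_\mathrm{ETH}) + o_p(1)$, and Lindeberg--Feller's CLT applied to the IID summands of $\Psi_n$ (whose required moment bounds follow from Assumptions~\ref{a5}--\ref{a6} and the boundedness of $\pi^*$ in Assumption~\ref{a7}) yields the stated $N(0,1)$ limit after dividing by $\sigma_\mathrm{para}$.

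The main obstacle will be item (ii): verifying the first-order cancellation up to a residue that is truly $o_p(n^{-1/2})$. Algebraically the quadratic expansion of the plug-in produces the first-order term $2(\bm{\beta}^*)^\top\Sigma\bm{\delta}^{(-k)}$, and the debiasing step contributes exactly $-2(\bm{\beta}^*)^\top\Sigma\bm{\delta}^{(-k)}$; but at the sample level the plug-in uses $\hat\Sigma_{G_k}$ while the debiasing uses labeled-fold empirical averages together with $\hat\tau_\mathrm{para}$ and the sample mean $\bar{\bm{W}}_{G_k}$, so each mismatch produces a cross term requiring an individual bound. Sample splitting -- i.e., independence between training-fold quantities such as $\bm{\delta}^{(-k)}$ and evaluation-fold observations -- will be essential for avoiding empirical-process arguments and keeping each such residual at order $\|\bm{\delta}^{(-k)}\|_2/\sqrt n$. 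Once this bookkeeping is carried out, the DR-type nuisance bias in (iii) is handled essentially as in the proof of Theorem~\ref{t4}.
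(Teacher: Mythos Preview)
Your proposal is correct and follows essentially the same route as the paper: the oracle statistic $\Psi_n$ is exactly the linear expansion the paper derives in its equation leading to the Lindeberg--Feller step, the three-part remainder decomposition (quadratic in $\bm{\delta}^{(-k)}$, first-order cancellation, and DR nuisance pieces) matches the paper's sequence of bounds, and your identification of sample splitting as the device that keeps cross terms at order $\|\bm{\delta}^{(-k)}\|_2/\sqrt n$ is precisely how the paper avoids empirical-process arguments. One minor imprecision: for the first (rate) display you invoked $\pi^*=\pi$ to obtain $\mathbb{E}[\varphi^*(Z)\mid\bm{X}]=\tau(\bm{X})$, but the rate result only assumes the conditions of Theorem~\ref{t1}, which allow either $\pi^*=\pi$ or $\mu_a^*=\mu_a$; the needed orthogonality $\mathbb{E}[\bm{W}\epsilon]=0$ still holds under either alternative.
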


As shown in Theorem \ref{t2}, when the total sample size $N$ is sufficiently large, it is enough to have a correctly specified propensity score model, even if the outcome regression model is misspecified, provided that the labeled size satisfies $n\gg s_\alpha\log(d)+s_\beta^2\log^2(p)$. In contrast, without additional unlabeled samples, the supervised version requires both the outcome regression and propensity score models to be correctly specified, along with a stronger condition on the labeled size, $n\gg s_{\alpha}s_{\gamma}\log^2(d)+s_\beta^2\log^2(p)$.

\begin{remark}[Efficiency of the direct approach]\label{remark:eff-direct}
For asymptotic efficiency, when the sub-CATE model is correctly specified with $\tau^*(\bm{w})=E[\tau(\bm{X})\mid \bm{W}=\bm{w}]$, we have $\mathbb{E}(\epsilon\mid \bm{W})=E[\tau(\bm{X})\mid \bm{W}]-\tau^*(\bm{W})=0$ provided that either the outcome regression model or the propensity score model is correctly specified. In this case, $C=0$, and the asymptotic variance reduces to $\sigma_{\mathrm{para}}^2=A+(n/N)B$. In comparison, the supervised estimator has asymptotic variance $A+B$. Thus, the use of additional unlabeled samples provides an efficiency gain of $(1-n/N)B\geq0$, ensuring that the semi-supervised estimator is at least as efficient as the supervised one. 

However, in more general cases where the true sub-CATE model deviates from the linear working model, the semi-supervised asymptotic variance $\sigma_{\mathrm{para}}^2=A+(n/N)B+2(n/N)C$ may be either larger or smaller than the supervised variance $\sigma_{\mathrm{sup}}^2:=A+B+2C$, since the term $C$ can take either positive or negative values. A similar phenomenon arises in semi-supervised variance estimation in the simpler non-causal settings; see Section 3.2 of the Supplementary Material in \cite{1} and Section 3 of \cite{kim2024semi} for related discussions. The approaches in these works, as well as the direct estimation method in Algorithm \ref{alg:theta_es}, do not guarantee safe use of the semi-supervised method, since its efficiency may fall below that of supervised methods. In what follows, we address this limitation by introducing a re-weighting strategy that ensures the semi-supervised estimator attains efficiency no worse than its supervised counterpart.
\end{remark}

\subsection{Optimally weighted semi-supervised estimation}\label{sec3.2}

We first revisit the proposed semi-supervised ETH estimator in Algorithm \ref{alg:theta_es}. In the construction of \eqref{def:ETH-para}, the ETH estimator has two components. The first part, $N^{-1}\sum_{k=1}^{K}\sum_{i\in G_k}(\hat {\bm{D}}_i^{(k)^\top}\hat{\bm{\beta}}^{(-k)})^2$, is a plug-in estimate of the ETH, while the second part, $2n^{-1}\sum_{k=1}^{K}\sum_{i\in I_k}\hat {\bm{D}}_i^{(k)^\top}\hat{\bm{\beta}}^{(-k)}\hat{\epsilon}_i^{(k)}$, is a debiasing term that reduces the bias from the CATE estimation error. Since the debiasing term only involves labeled samples once the nuisance estimates are obtained, additional unlabeled samples do not improve its efficiency. In contrast, the plug-in component becomes more efficient with additional unlabeled samples. However, this does not guarantee an overall efficiency gain, because the debiasing term may introduce negative correlations with the plug-in term, so a more accurate plug-in estimate does not necessarily yield a more efficient overall estimator, as observed in Remark \ref{remark:eff-direct}.

In the above construction, the plug-in estimate $N^{-1}\sum_{k=1}^{K}\sum_{i\in G_k}(\hat {\bm{D}}_i^{(k)^\top}\hat{\bm{\beta}}^{(-k)})^2$ assigns equal weights $1/N$ to all labeled and unlabeled samples. In comparison, a supervised plug-in estimate assigns weights $1/n$ to labeled samples and zero weights to unlabeled samples. To achieve an efficient semi-supervised estimator, we first introduce a more general re-weighting construction, assigning weights $\omega_L$ and $\omega_U$ to labeled and unlabeled samples, respectively. Specifically, we consider the following re-weighted estimator:
\begin{align*}
\hat\theta_{\mathrm{RW}}=\omega_L\sum_{k=1}^{K}\sum_{i\in I_k}\Bigl(\hat {\bm{D}}_i^{(k)^\top}\hat{\bm{\beta}}^{(-k)}\Bigr)^2+\omega_U\sum_{k=1}^{K}\sum_{i\in J_k}\Bigl(\hat {\bm{D}}_i^{(k)^\top}\hat{\bm{\beta}}^{(-k)}\Bigr)^2+\frac{2}{n}\sum_{k=1}^{K}\sum_{i\in I_k}\hat {\bm{D}}_i^{(k)^\top}\hat{\bm{\beta}}^{(-k)}\hat{\epsilon}_i^{(k)}.
\end{align*}

To guarantee that the re-weighting procedure does not introduce additional bias, the weights must satisfy the constraint $n\omega_L + m\omega_U = 1$. This ensures that the term $\omega_L\sum_{k=1}^{K}\sum_{i\in I_k}(\hat {\bm{D}}_i^{(k)^\top}\hat{\bm{\beta}}^{(-k)})^2+\omega_U\sum_{k=1}^{K}\sum_{i\in J_k}(\hat {\bm{D}}_i^{(k)^\top}\hat{\bm{\beta}}^{(-k)})^2$ remains a valid plug-in estimator of the ETH. Notably, one of the weights may even take a negative value without affecting the bias, provided that the constraint is satisfied. For any given pair $(\omega_L, \omega_U)$, and under the assumptions of Theorem \ref{t2}, the asymptotic variance of the re-weighted estimator $\hat\theta_{\mathrm{RW}}$ is 
\begin{align}
\sigma_{\mathrm{RW}}^2(\omega_L,\omega_U):=&A+n(n\omega_L^2+m\omega_U^2)B+2n\omega_LC\nonumber\\
=&n^2B(1+n/m)\omega_L^2+2n(C-nB/m)\omega_L+A+nB/m,\label{def:sigma_RW}
\end{align}
where we have substituted $\omega_U = (1-n\omega_L)/m$ from the required constraint. We then determine the population-level optimal weights $(\omega_L^*, \omega_U^*)$, which ensure efficiency no worse than both the supervised estimator and the direct semi-supervised estimator in Algorithm \ref{alg:theta_es}, as these two estimators can be regarded as special cases within the broader class of re-weighted estimators. Minimizing the quadratic form in \eqref{def:sigma_RW} yields the optimal weights:
\begin{align}
\omega_L^*=\frac{nB-mC}{nNB}\;\;\text{and}\;\;\omega_U^*=\frac{B+C}{NB}.\label{def:omega}
\end{align}
The formulas in \eqref{def:omega} reveal that the optimal weighting strategy is determined by the sign of $C=\mathrm{Cov}[2\epsilon D^\top\bm{\beta}^*,(D^\top\bm{\beta}^*)^2]$, which quantifies the covariance between the debiasing component and the plug-in component. When $C=0$ (for example, when the sub-CATE model is correctly specified), the optimal strategy is to assign equal weights of $1/N$ to both labeled and unlabeled samples, coinciding with the direct approach in Algorithm \ref{alg:theta_es}. When $C<0$, we have $\omega_L^*>1/N$ and $\omega_U^*<1/N$, indicating that more weight should be placed on the labeled samples, while the unlabeled samples should be down-weighted. Conversely, when $C>0$, the unlabeled samples should receive higher weights.

\begin{algorithm}[t!]
 \centering
 \caption{The optimally weighted semi-supervised ETH estimator}
 \label{alg:OW}
 \renewcommand{\baselinestretch}{1.1}\selectfont
 \footnotesize
 \begin{algorithmic}[1]
 \Require Labeled data \(\mathcal{L}=(\bm{X}_i,A_i,Y_i)_{i=1}^n\), unlabeled data \(\mathcal{U}=(\bm{X}_i,A_i)_{i=n+1}^N\), number of folds \(K \ge 3\).
 \State Partition $\mathcal{L}$ and $\mathcal{U}$ into $K$ disjoint folds of equal size, indexed by $\{I_1,\dots,I_K\}$ and $\{J_1,\dots,J_K\}$.
 \For{\(k \in \{1,\dots,K\}\)}
 \For{\(k' \in \{1,\dots,K\} \setminus \{k\}\)}
 \State Let $I_{-k,-k'}=\{1,\dots,n\}\setminus (I_k\cup I_{k'})$ and $J_{-k,-k'}=\{n+1,\dots,N\}\setminus (J_k\cup J_{k'})$.
 \State For each $a\in\{0,1\}$, obtain $\hat\mu_a^{(-k,-k')}(\bm{x})=\bm{x}^\top\hat\alpha_a^{(-k,-k')}$ through solving \eqref{def:alphahat}.
 \State Obtain $\hat\pi^{(-k,-k')}(\bm{x})=\phi(\bm{x}^\top\hat{\bm{\gamma}}^{(-k,-k')})$ through solving \eqref{def:gamma}.
 \State Compute imputed outcomes \(\hat\varphi^{(-k,-k')}(\cdot)\) as in \eqref{def:phihat}, using \(\hat\mu_a^{(-k,-k')}(\cdot)\) and \(\hat\pi^{(-k,-k')}(\cdot)\).
 \EndFor
 \State Obtain $\hat{\bm{\beta}}^{(-k)}$ through solving \eqref{def:betahat}.
 \State Let $\hat\varphi^{(-k)}(\cdot)=(K-1)^{-1}\sum_{k'\neq k}\hat\varphi^{(-k,-k')}(\cdot)$, $G_k=I_k\cup J_k$, and $\hat{\bm{D}}_i^{(k)}=\bm{W}_i-|G_k|^{-1}\sum_{j\in G_k}\bm{W}_j$.
 \EndFor
 \State Compute $\hat\tau_\mathrm{para}$ as in \eqref{def:tau_para}.
 \State For each $k\in\{1,\dots,K\}$, let $\hat{\epsilon}_i^{(k)}=\hat\varphi^{(-k)}(Z_i)-\hat\tau_\mathrm{para}-\hat {\bm{D}}_i^{(k)^\top}\hat{\bm{\beta}}^{(-k)}$ and compute $(\hat\omega_L^{(k)},\hat\omega_U^{(k)})$ as in \eqref{def:omegahat}.
 \State \Return The optimally weighted semi-supervised ETH estimator \eqref{def:OW}.
 \end{algorithmic}
\end{algorithm}

Finally, we propose data-dependent procedures to estimate the population-level optimal weights and to construct the corresponding optimally weighted semi-supervised estimator for the ETH. Specifically, we build cross-fitted plug-in estimators for the quantities $(A,B,C)$. For each $k \leq K$, define $\hat Q^{(k)} := |G_k|^{-1}\sum_{i\in G_k}(\hat {\bm{D}}_i^{(k)^\top}\hat{\bm{\beta}}^{(-k)})^2$ and construct
\begin{align*}
\hat{A}^{(k)} &:= |I_k|^{-1}\sum_{i\in I_k}\Bigl(2\hat{\epsilon}_i^{(k)}\hat{\bm{\beta}}^{(-k)^\top }\hat{\bm{D}}_i^{(k)}\Bigr)^2,\\
\hat{B}^{(k)} &:= |G_k|^{-1}\sum_{i\in G_k}\Bigl(\hat{\bm{\beta}}^{(-k)^\top }\hat{\bm{D}}_i^{(k)}\Bigr)^4
 -\Bigl(\hat Q^{(k)}\Bigr)^2,\\
\hat{C}^{(k)} &:= |I_k|^{-1}\sum_{i\in I_k}\Bigl(2\hat{\epsilon}_i^{(k)}\hat{\bm{\beta}}^{(-k)^\top }\hat{\bm{D}}_i^{(k)}\Bigr)
 \Bigl[\Bigl(\hat{\bm{\beta}}^{(-k)^\top }\hat{\bm{D}}_i^{(k)}\Bigr)^2 - \hat Q^{(k)}\Bigr].
\end{align*}
Based on these, we define the weights
\begin{align}\label{def:omegahat}
\hat\omega_L^{(k)}:=\frac{n\hat{B}^{(k)}-m\hat{C}^{(k)}}{nN\hat{B}^{(k)}}\;\;\text{and}\;\;\hat\omega_U^{(k)}=\frac{\hat{B}^{(k)}+\hat{C}^{(k)}}{N\hat{B}^{(k)}}.
\end{align}
The optimally weighted semi-supervised ETH estimator and the corresponding asymptotic variance estimator are then given by
\begin{align}
\hat\theta_{\mathrm{OW}}&=\hat\omega_L^{(k)}\sum_{k=1}^{K}\sum_{i\in I_k}\Bigl(\hat {\bm{D}}_i^{(k)^\top}\hat{\bm{\beta}}^{(-k)}\Bigr)^2+\hat\omega_U^{(k)}\sum_{k=1}^{K}\sum_{i\in J_k}\Bigl(\hat {\bm{D}}_i^{(k)^\top}\hat{\bm{\beta}}^{(-k)}\Bigr)^2+\frac{2}{n}\sum_{k=1}^{K}\sum_{i\in I_k}\hat {\bm{D}}_i^{(k)^\top}\hat{\bm{\beta}}^{(-k)}\hat{\epsilon}_i^{(k)},\label{def:OW}\\
\hat\sigma_{\mathrm{OW}}^2&=K^{-1}\sum_{k=1}^K\left[\hat{A}^{(k)} 
 + \frac{n\hat{B}^{(k)}}{N} +\frac{2n\hat{C}^{(k)} }{N} - \frac{m(\hat{C}^{(k)})^2}{N\hat{B}^{(k)}}\right].\nonumber
\end{align}

The complete procedure is outlined in Algorithm \ref{alg:OW}. The theorem below establishes the asymptotic properties of the optimally weighted estimator.

\begin{theorem}\label{t3}
Let Assumptions~\ref{a1}, \ref{a5}, \ref{a6}, and \ref{a7} hold. Moreover, let $s_{\alpha}\log(d)=o(n)$, $s_{\alpha}s_{\gamma}\log^2(d) =o(N)$, $s_{\beta}\log(p)= o(\sqrt{n})$, and $\pi^*(\cdot)=\pi(\cdot)$. Assume either (a) $\mu_a^*(\cdot)=\mu_a(\cdot)$ for each $a\in\{0,1\}$, or (b) $ns_\gamma\log(d)=o(N)$. Define the quantities $(A,B,C)$ as in Theorem \ref{t2} and suppose that $B>c$ and
$$\sigma_{\mathrm{OW}}^2:=\sigma_{\mathrm{RW}}^2(\omega_L^*,\omega_U^*)=A+\frac{nB}{N}+\frac{2nC}{N}-\frac{mC^2}{NB}>c,$$
with some constant $c>0$. Then, as $N,d,p\to\infty$,
\[
 \frac{\sqrt{n}\,(\hat \theta_{\mathrm{OW}} - \theta_{\mathrm{ETH}})}{\sigma_{\mathrm{OW}}}
 \;\xrightarrow{d}\; N(0,1)\quad\text{and}\quad\hat\sigma_{\mathrm{OW}}^2=\sigma_{\mathrm{OW}}^2+o_p(1).
\]
\end{theorem}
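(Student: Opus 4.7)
The strategy is to compare $\hat\theta_{\mathrm{OW}}$ with an oracle re-weighted estimator $\hat\theta^{\mathrm{orc}}$ obtained from \eqref{def:OW} by replacing $(\hat\omega_L^{(k)},\hat\omega_U^{(k)})$ with the population optimal weights $(\omega_L^*,\omega_U^*)$ from \eqref{def:omega}. The proof decomposes into three main steps: (i) consistency of the auxiliary quantities $\hat{A}^{(k)},\hat{B}^{(k)},\hat{C}^{(k)}$, which yields convergence rates for the estimated weights and, via continuous mapping, the variance-consistency claim $\hat\sigma_{\mathrm{OW}}^2=\sigma_{\mathrm{OW}}^2+o_p(1)$; (ii) showing $\sqrt n(\hat\theta_{\mathrm{OW}}-\hat\theta^{\mathrm{orc}})=o_p(1)$; and (iii) a CLT for $\hat\theta^{\mathrm{orc}}$, obtained by a direct extension of the proof of Theorem~\ref{t2} to general weights satisfying $n\omega_L+m\omega_U=1$.

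\textbf{Step (i).} Each of $\hat{A}^{(k)},\hat{B}^{(k)},\hat{C}^{(k)}$ is an empirical average of polynomials of degree at most four in $(\hat{\bm{D}}_i^{(k)^\top}\hat{\bm{\beta}}^{(-k)},\hat{\epsilon}_i^{(k)})$. Combining the $\ell_2$-consistency of $\hat{\bm{\beta}}^{(-k)}$ from Theorem~\ref{t1}, the fold-wise approximation $\hat{\epsilon}_i^{(k)}=\epsilon_i+o_p(1)$ developed in the proof of Theorem~\ref{t2}, and the sub-Gaussianity in Assumptions~\ref{a5}--\ref{a6}, a Bernstein-plus-Taylor argument delivers $\hat{A}^{(k)}=A+O_p(n^{-1/2})$, $\hat{C}^{(k)}=C+O_p(n^{-1/2})$, and $\hat{B}^{(k)}=B+O_p(N^{-1/2})$. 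Using $B>c$, the delta method then yields $\hat\omega_L^{(k)}-\omega_L^*=O_p(m/(nN\sqrt{n}))$ and similarly controls $\hat\omega_U^{(k)}-\omega_U^*$, while also giving $\hat\sigma_{\mathrm{OW}}^2-\sigma_{\mathrm{OW}}^2=o_p(1)$ directly.

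\textbf{Step (ii), the main hurdle.} Using $n\hat\omega_L^{(k)}+m\hat\omega_U^{(k)}=1=n\omega_L^*+m\omega_U^*$ to eliminate $\hat\omega_U^{(k)}-\omega_U^*$, I obtain
\begin{align*}
\hat\theta_{\mathrm{OW}}-\hat\theta^{\mathrm{orc}}=\sum_{k=1}^K(\hat\omega_L^{(k)}-\omega_L^*)\Bigl[\sum_{i\in I_k}\bigl(\hat{\bm{D}}_i^{(k)^\top}\hat{\bm{\beta}}^{(-k)}\bigr)^2-\frac{n}{m}\sum_{i\in J_k}\bigl(\hat{\bm{D}}_i^{(k)^\top}\hat{\bm{\beta}}^{(-k)}\bigr)^2\Bigr].
\end{align*}
Conditional on the out-of-fold nuisance estimates, the bracketed expression is a sum of independent mean-zero contributions (the identity $n/K-(n/m)(m/K)=0$ is exact), with conditional variance of order $nN/m$, hence of order $O_p(\sqrt{nN/m})$. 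Combined with the weight perturbation rate from Step (i), the overall contribution is $O_p(1/n)=o_p(n^{-1/2})$. This step is delicate because the weights $\hat\omega_L^{(k)}$ are computed from the same fold-$k$ statistics that they re-weight; the decisive observation is that the linear constraint $n\omega_L+m\omega_U=1$ forces the bracketed term to have zero conditional mean, so the weight estimation error enters only through a product of two asymptotically independent mean-zero quantities.

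\textbf{Step (iii).} For the oracle estimator, running the proof of Theorem~\ref{t2} with $(1/N,1/N)$ replaced by $(\omega_L^*,\omega_U^*)$ gives, up to $o_p(n^{-1/2})$,
\begin{align*}
\hat\theta^{\mathrm{orc}}-\theta_{\mathrm{ETH}}=n\omega_L^*(\bar U_n-Q)+m\omega_U^*(\bar V_m-Q)+\frac{2}{n}\sum_{i=1}^n\bm{D}_i^\top\bm{\beta}^*\epsilon_i,
\end{align*}
where $Q=E[(\bm{D}^\top\bm{\beta}^*)^2]$ and $\bar U_n,\bar V_m$ are the labeled and unlabeled averages of $(\bm{D}^\top\bm{\beta}^*)^2$; the three leading terms are built from mutually independent i.i.d.\ summands. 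The Lindeberg CLT then yields a Gaussian limit with variance $A+n(n\omega_L^{*2}+m\omega_U^{*2})B+2n\omega_L^*C$; substituting \eqref{def:omega} and simplifying produces $\sigma_{\mathrm{OW}}^2=A+nB/N+2nC/N-mC^2/(NB)$, matching the minimum of \eqref{def:sigma_RW} and completing the proof.
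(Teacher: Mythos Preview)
Your proposal follows the same three-part architecture as the paper: consistency of $(\hat A^{(k)},\hat B^{(k)},\hat C^{(k)})$ and hence of the weights (Lemma~\ref{l9}), negligibility of the weight-perturbation contribution (equations~\eqref{63}--\eqref{66} of the proof), and a Lindeberg CLT for the oracle-weighted expansion; the variance consistency is the paper's Lemma~\ref{l10}. Your Step~(ii) rearrangement (apply the constraint first to obtain a single mean-zero bracket, then multiply by the weight error) is algebraically equivalent to the paper's order (expand each weighted sum about its conditional mean, then invoke the constraint to cancel the means).

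One point to correct in Step~(i): the rates $\hat A^{(k)}=A+O_p(n^{-1/2})$ and $\hat C^{(k)}=C+O_p(n^{-1/2})$ are not available under the stated assumptions, because $\hat\epsilon_i^{(k)}-\epsilon_i$ carries the nuisance imputation error $\hat\varphi^{(-k)}(Z_i)-\varphi^*(Z_i)$, which is only $o_p(1)$ in $L^4$ (Lemma~\ref{l6}); a Bernstein step cannot upgrade this to a root-$n$ rate. The paper accordingly proves only $\hat A^{(k)},\hat B^{(k)},\hat C^{(k)}\to A,B,C$ in probability, yielding $\hat\omega_L^{(k)}-\omega_L^*=o_p(m/(nN))$ rather than your $O_p(m/(nN\sqrt n))$. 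This weaker rate is still enough for Step~(ii): since $O_p\!\times o_p$ products obey the product rate irrespective of dependence, $(\hat\omega_L^{(k)}-\omega_L^*)\cdot[\text{bracket}]=o_p(m/(nN))\cdot O_p(\sqrt{nN/m})=o_p(\sqrt{m/N}\,n^{-1/2})=o_p(n^{-1/2})$, and you do not need the ``asymptotic independence'' you invoke. With that adjustment your argument is complete and matches the paper's.
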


\begin{remark}[The efficiency improvement]
The usage of additional unlabeled samples in variance estimation has been also discussed by \cite{1,2,kim2024semi} for the simpler non-causal settings. \cite{2} only consider cases with correct model specification, where semi-supervised methods are guaranteed to attain an efficiency gain over supervised methods. On the other hand, both \cite{1,kim2024semi} further study scenarios where model misspecification occurs. Under such cases, semi-supervised methods may be less efficient than the supervised methods.

The existing semi-supervised approaches assign equal weights of $1/N$ to all labeled and unlabeled samples. In contrast, the newly proposed method in Algorithm \ref{alg:OW} assigns different weights to the two groups of samples, thereby improving efficiency and ensuring safe use of the unlabeled data. Theorem \ref{t3} shows that the optimally weighted semi-supervised ETH estimator based on the estimated data-dependent weights $(\hat\omega_L,\hat\omega_U)$ achieves the same asymptotic variance as the re-weighted estimator $\hat\theta_{\mathrm{RW}}$ with population-level optimal weights $(\omega_L^*,\omega_U^*)$. Thus, estimation errors in the weights do not affect the asymptotic behavior of the resulting estimator.

Comparing with the asymptotic variance of the direct semi-supervised approach, $\sigma_\mathrm{para}^2=A + (n/N)B + 2(n/N)C$, we obtain $\sigma_{\mathrm{OW}}^2=\sigma_\mathrm{para}^2-mC^2/(NB)\leq\sigma_\mathrm{para}^2$. Thus, the optimal weights achieve strictly improved efficiency over the direct semi-supervised approach unless (a) $C=0$, or (b) $m=o(N)$, which corresponds to the case where the number of unlabeled samples is negligible compared with the number of labeled samples. Similarly, comparing with the asymptotic variance of the supervised approach, $\sigma_\mathrm{sup}^2=A + B + 2C$, we have $\sigma_{\mathrm{OW}}^2=\sigma_\mathrm{sup}^2-m(B+C)^2/(NB)\leq\sigma_\mathrm{sup}^2$. Hence, the optimal weights also yield strictly improved efficiency over the supervised approach unless (a) $B+C=0$, or (b) $m=o(N)$.

To summarize, the proposed optimally weighted approach achieves efficiency no worse than both the direct semi-supervised method and the supervised method, thereby ensuring safe use of the unlabeled samples. Our approach also differs from the safe semi-supervised estimator of \cite{deng2024optimal}, which targets the linear slope in high-dimensional linear regression. Their method relies on a refitting step to guarantee a \emph{convergence rate} no slower than that of supervised methods. In contrast, our method only requires an additional weight estimation step and attains an \emph{asymptotic variance} no worse than that of supervised methods.
\end{remark}

\subsection{A semi-parametric extension}

\begin{algorithm}[b!]
 \centering
 \caption{The semi-parametric optimally weighted ETH estimator}
 \label{alg:SPOW}
 \renewcommand{\baselinestretch}{1.1}\selectfont
 \footnotesize
 \begin{algorithmic}[1]
 \Require Labeled data \(\mathcal{L}=(\bm{X}_i,A_i,Y_i)_{i=1}^n\), unlabeled data \(\mathcal{U}=(\bm{X}_i,A_i)_{i=n+1}^N\), number of folds \(K \ge 3\).
 \State Partition $\mathcal{L}$ and $\mathcal{U}$ into $K$ disjoint folds of equal size, indexed by $\{I_1,\dots,I_K\}$ and $\{J_1,\dots,J_K\}$.
 \For{\(k \in \{1,\dots,K\}\)}
 \For{\(k' \in \{1,\dots,K\} \setminus \{k\}\)}
 \State Let $I_{-k,-k'}=\{1,\dots,n\}\setminus (I_k\cup I_{k'})$ and $J_{-k,-k'}=\{n+1,\dots,N\}\setminus (J_k\cup J_{k'})$.
 \State Obtain \(\hat\mu_a^{(-k,-k')}(\cdot)\) using samples in $I_{-k,-k'}$ for each $a\in\{0,1\}$.
 \State Obtain \(\hat\pi^{(-k,-k')}(\cdot)\) using samples in $G_{-k,-k'}:=I_{-k,-k'}\cup J_{-k,-k'}$.
 \State Compute imputed outcomes \(\hat\varphi^{(-k,-k')}(\cdot)\) as in \eqref{def:phihat}, using \(\hat\mu_a^{(-k,-k')}(\cdot)\) and \(\hat\pi^{(-k,-k')}(\cdot)\).
 \EndFor
 \State Obtain $\hat{\bm{\beta}}^{(-k)}$ through solving \eqref{def:betahat}.
 \State Let $\hat\varphi^{(-k)}(\cdot)=(K-1)^{-1}\sum_{k'\neq k}\hat\varphi^{(-k,-k')}(\cdot)$, $G_k=I_k\cup J_k$, and $\hat{\bm{D}}_i^{(k)}=\bm{W}_i-|G_k|^{-1}\sum_{j\in G_k}\bm{W}_j$.
 \EndFor
 \State Compute 
\begin{align*}
\hat\tau_\mathrm{SP}=\frac{1}{N}\sum_{k=1}^K\sum_{i\in G_k}\bm{W}_i^\top\hat{\bm{\beta}}^{(-k)}+\frac{1}{n}\sum_{k=1}^K\sum_{i\in I_k}\Bigl[\hat\varphi^{(-k)}(Z_i)-\bm{W}_i^\top\hat{\bm{\beta}}^{(-k)}\Bigr].
\end{align*}
 \State For each $k\in\{1,\dots,K\}$, let $\hat{\epsilon}_i^{(k)}=\hat\varphi^{(-k)}(Z_i)-\hat\tau_\mathrm{SP}-\hat {\bm{D}}_i^{(k)^\top}\hat{\bm{\beta}}^{(-k)}$ and compute $(\hat\omega_L^{(k)},\hat\omega_U^{(k)})$ as in \eqref{def:omegahat}.
 \State \Return The semi-parametric optimally weighted ETH estimator
 \begin{align*}
\hat\theta_{\mathrm{SPOW}}=\hat\omega_L^{(k)}\sum_{k=1}^{K}\sum_{i\in I_k}\Bigl(\hat {\bm{D}}_i^{(k)^\top}\hat{\bm{\beta}}^{(-k)}\Bigr)^2+\hat\omega_U^{(k)}\sum_{k=1}^{K}\sum_{i\in J_k}\Bigl(\hat {\bm{D}}_i^{(k)^\top}\hat{\bm{\beta}}^{(-k)}\Bigr)^2+\frac{2}{n}\sum_{k=1}^{K}\sum_{i\in I_k}\hat {\bm{D}}_i^{(k)^\top}\hat{\bm{\beta}}^{(-k)}\hat{\epsilon}_i^{(k)}.
\end{align*}
 \end{algorithmic}
\end{algorithm}

In the following, we develop a semi-parametric extension that continues to focus on simplified linear CATE models, with the goal of enhancing interpretability and facilitating implementation in follow-up studies, such as the deployment of individualized treatment regimes based on CATE estimates. In contrast to the approaches in Sections \ref{sec3.1}-\ref{sec3.2}, we employ more flexible non-parametric or machine learning estimators for the nuisance models, thereby improving estimation accuracy and reducing bias due to model misspecification. This strategy is particularly appealing because the CATE function, defined as the difference between two outcome regression functions, often admits a simpler structure (or needs to be modeled in a simpler form due to practical constraints), whereas the outcome regression functions themselves and the propensity score model may exhibit complex nonlinear behavior \citep{fan2022estimation,Kennedy2020}. The detailed construction of the semi-parametric optimally weighted estimator of the ETH for linear CATE models is given in Algorithm \ref{alg:SPOW}.

The following theorem characterizes the asymptotic behavior of the semi-parametric optimally weighted ETH estimator.

\begin{theorem}\label{t5}
Let Assumptions~\ref{a1}, \ref{a2}, \ref{a3}, \ref{a6}, and \ref{a7} hold. Moreover, let $s_{\beta}\log(p)=o(\sqrt{n})$ and $\sigma_{\mathrm{OW}}^2>c$ with some constant $c>0$. Moreover, let $\mathbb{E}_{\bm{X}}|\hat\pi^{(-k,-k')}(\bm{X}) - \pi(\bm{X})|^2|\hat\mu_a^{(-k,-k')}({\bm{X}}) - \mu_a({\bm{X}})|^2 = o_p(n^{-1})$ for each $a\in\{0,1\}$. Then, as $N,d,p\to\infty$,
\[
\frac{\sqrt{n}\,(\hat {\theta}_{\mathrm{SPOW}}-\theta_{\mathrm{ETH}})}{\sigma_{\mathrm{OW}}} \xrightarrow{d} N(0,1).
\]
\end{theorem}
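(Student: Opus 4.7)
The plan is to follow the template of Theorem~\ref{t3}, replacing the Lasso-specific nuisance analysis with an abstract one that uses only the rate conditions in Assumptions~\ref{a2}-\ref{a3}. The non-parametric nuisance estimates enter $\hat\theta_{\mathrm{SPOW}}$ only through the imputed pseudo-outcomes $\hat\varphi^{(-k,-k')}(Z_i)$ (which drive both $\hat{\bm{\beta}}^{(-k)}$ via~\eqref{def:betahat} and the debiasing residual $\hat\epsilon_i^{(k)}$); the centered covariates $\hat{\bm{D}}_i^{(k)}$ depend only on $\bm{W}$. So the argument reduces to (i) a CATE-slope rate analogous to Theorem~\ref{t1} but with non-parametric nuisances, and (ii) the quadratic-form asymptotic analysis of Theorem~\ref{t3} plugged in with that rate.

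First I would establish
\[
\|\hat{\bm{\beta}}^{(-k)}-\bm{\beta}^*\|_2 = O_p\Bigl(\sqrt{s_\beta\log(p)/n}+\tilde R_n\Bigr),\qquad \|\hat{\bm{\beta}}^{(-k)}-\bm{\beta}^*\|_1 = O_p\Bigl(s_\beta\sqrt{\log(p)/n}+s_\beta\tilde R_n\Bigr),
\]
where $\tilde R_n$ bounds the imputation error in the pseudo-outcomes. Writing $\hat\varphi^{(-k,-k')}(Z)-\varphi^*(Z)$ in its doubly-robust decomposition and using Assumption~\ref{a3}(a) or~\ref{a3}(b) with the product-rate strengthening $\mathbb{E}_{\bm X}|\hat\pi-\pi|^2|\hat\mu_a-\mu_a|^2=o_p(n^{-1})$, the contribution of the imputation error to the Lasso KKT conditions is $o_p(\sqrt{\log(p)/n})$. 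Standard restricted-eigenvalue arguments (obtained from $\kappa_l>0$ and the sub-Gaussianity in Assumption~\ref{a6}) then give the rates above, and $s_\beta\log(p)=o(\sqrt n)$ forces $\|\hat{\bm{\beta}}^{(-k)}-\bm{\beta}^*\|_1=o_p(n^{-1/4})$, which will be needed later.

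Next, with this rate in hand, I would split
\[
\hat\theta_{\mathrm{SPOW}}-\theta_{\mathrm{ETH}}=\bigl(\hat\theta_{\mathrm{SPOW}}-\hat\theta_{\mathrm{SPOW}}^{\star}\bigr)+\bigl(\hat\theta_{\mathrm{SPOW}}^{\star}-\theta_{\mathrm{ETH}}\bigr),
\]
where $\hat\theta_{\mathrm{SPOW}}^{\star}$ replaces the data-dependent $(\hat\omega_L^{(k)},\hat\omega_U^{(k)})$ with the population weights $(\omega_L^*,\omega_U^*)$ from~\eqref{def:omega}. For the second piece, an oracle linearization around $(\bm{\beta}^*,\mathbb{E}\bm{W})$ produces
\[
\hat\theta_{\mathrm{SPOW}}^{\star}-\theta_{\mathrm{ETH}}=\frac{1}{n}\sum_{i\in\mathcal{L}}\psi_i^{L}+\frac{1}{m}\sum_{i\in\mathcal{U}}\psi_i^{U}+o_p(n^{-1/2}),
\]
with mean-zero influence summands whose variances sum, by exactly the calculation in~\eqref{def:sigma_RW} evaluated at $(\omega_L^*,\omega_U^*)$, to $\sigma_{\mathrm{OW}}^2/n$. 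A CLT applied across the independent labeled and unlabeled folds gives the stated normal limit for this piece. For the first piece, a law-of-large-numbers argument on the plug-in quantities $\hat A^{(k)}$, $\hat B^{(k)}$, $\hat C^{(k)}$, using Step~1 together with the moment bounds from Assumptions~\ref{a2} and~\ref{a6}, shows $\hat A^{(k)}\xrightarrow{p}A$, $\hat B^{(k)}\xrightarrow{p}B$, $\hat C^{(k)}\xrightarrow{p}C$; since $B>c$, the continuous mapping gives $(\hat\omega_L^{(k)},\hat\omega_U^{(k)})\xrightarrow{p}(\omega_L^*,\omega_U^*)$, and combined with the $O_p(n^{-1/2})$ fluctuations of the plug-in sums this piece is $o_p(n^{-1/2})$.

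The main obstacle will be propagating the non-parametric nuisance errors through the debiasing term $2n^{-1}\sum_{k,\,i\in I_k}\hat{\bm{D}}_i^{(k)\top}\hat{\bm{\beta}}^{(-k)}\hat\epsilon_i^{(k)}$ when carrying out the oracle linearization. Expanding both $\hat\varphi^{(-k,-k')}-\varphi^*$ (in its doubly-robust form) and $\hat{\bm{\beta}}^{(-k)}-\bm{\beta}^*$ generates triple-product remainders of the form $(\hat\pi-\pi)(\hat\mu_a-\mu_a)\cdot\hat{\bm{D}}^{\top}(\hat{\bm{\beta}}^{(-k)}-\bm{\beta}^*)$, which must be shown to be $o_p(n^{-1/2})$. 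This is precisely where the strengthened product-rate hypothesis bites: combining $\mathbb{E}_{\bm X}|\hat\pi-\pi|^2|\hat\mu_a-\mu_a|^2=o_p(n^{-1})$ with an $\ell_1$-$\ell_\infty$ Hölder split and the $\ell_1$-bound $\|\hat{\bm{\beta}}^{(-k)}-\bm{\beta}^*\|_1=o_p(n^{-1/4})$ from Step~1 yields the required $o_p(n^{-1/2})$ control. Once this cross-term is tamed, the remainder of the proof parallels Theorem~\ref{t3} essentially verbatim.
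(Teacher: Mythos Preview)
Your overall architecture matches the paper's: (i) rerun the Theorem~\ref{t1} Lasso analysis with the non-parametric nuisance rates from Assumptions~\ref{a2}--\ref{a3} to get $\|\hat{\bm{\beta}}^{(-k)}-\bm{\beta}^*\|_2=o_p(n^{-1/4})$, then (ii) replay the Theorem~\ref{t3} linearization and Lindeberg CLT. The paper's proof is literally that two-liner plus bookkeeping. However, two of your concrete claims are off.

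First, the $\ell_1$ statement is false under the stated assumptions. From $s_\beta\log(p)=o(\sqrt n)$ you only get $\|\hat{\bm{\beta}}^{(-k)}-\bm{\beta}^*\|_2^2=O_p(s_\beta\log(p)/n)=o_p(n^{-1/2})$, i.e.\ the $\ell_2$ bound $o_p(n^{-1/4})$; the $\ell_1$ norm picks up an extra $\sqrt{s_\beta}$ and need \emph{not} be $o_p(n^{-1/4})$ (take $s_\beta=n^{1/4}$, $\log p=n^{1/8}$). The paper never uses the $\ell_1$ rate in the asymptotic analysis---only~\eqref{39}, the $\ell_2$ bound, feeds into \eqref{52}, \eqref{52'}, \eqref{52''}, and the rest of the Theorem~\ref{t3} machinery.

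Second, you have misidentified the ``main obstacle.'' The term that actually needs the strengthened product-rate hypothesis $\mathbb{E}_{\bm X}|\hat\pi-\pi|^2|\hat\mu_a-\mu_a|^2=o_p(n^{-1})$ is not a triple product against $\hat{\bm D}^\top(\hat{\bm\beta}^{(-k)}-\bm\beta^*)$; it is the second-order piece $(\hat\pi-\pi)(\hat\mu_a-\mu_a^*)$ multiplied by the \emph{bounded} quantity $\hat{\bm\beta}^{(-k)\top}\bm D$. This shows up (a) as $\Delta_4$ in the CATE-rate basic inequality and (b) as the analog of $B_3$ in Lemma~\ref{l7} when you expand the debiasing term $\tilde n^{-1}\sum_{i\in I_k}\hat{\bm\beta}^{(-k)\top}\bm D_i[\hat\varphi^{(-k)}(Z_i)-\varphi^*(Z_i)]$. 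Both are handled by a direct Cauchy--Schwarz / Markov bound using the product-rate, not by an $\ell_1$--$\ell_\infty$ split. The triple-product term you wrote down is a fourth-order remainder and is automatically smaller. Once you redirect the product-rate to these two places and drop the $\ell_1$ detour, your sketch coincides with the paper's proof.
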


The estimator $\hat{\theta}_{\mathrm{SPOW}}$ combines the advantages of the optimal weighting strategy with fully non-parametric nuisance estimation. This guarantees safe use of the unlabeled samples and ensures efficiency no worse than the supervised approach. At the same time, the semi-parametric framework provides more robust inference by relying on non-parametric outcome regression and propensity score estimates, thereby avoiding assumptions on parametric forms of the nuisance models.

\section{Numerical studies}\label{sec4}

\subsection{Simulation results}\label{sec:sim}

We evaluate the performance of the proposed methods under three data-generating scenarios. Model 1 assumes that all nuisance functions and the CATE function are correctly specified by the (generalized) linear models. Model 2 considers misspecified nuisance functions while the CATE function remains linear. Model 3 assumes that both the nuisance functions and the CATE function are misspecified by the parametric models. In all models, we generate \(n\) labeled samples \((\bm{X}_i, A_i, Y_i)_{i=1}^n\) and \(m = N-n\) unlabeled samples \((\bm{X}_i, A_i)_{i=n+1}^{N}\), where \(\bm{X}_i \overset{\text{iid}}{\sim} \mathcal{N}_d(0, I_d)\). The remaining details of the data construction are provided below.

\textbf{Model 1.} Set \(d = 200\). Generate $A_i\mid \bm{X}_i\sim\mathrm{Bernoulli}[\phi(0.3X_{i1} + 0.5X_{i4})]$ and $Y_i=\sum_{j=1}^{20} X_{ij}/\sqrt{20}+A_i(X_{i1}+X_{i2}+X_{i3})+\mathcal{N}(0,0.1^2)$, where $\phi(\cdot)$ is the logistic function.

\textbf{Model 2.} Set \(d = 200\). Generate $A_i\mid \bm{X}_i\sim\mathrm{Bernoulli}[\phi(0.2X_{i3}^2)]$ and $Y_i=0.5X_{i3}^2+A_i(X_{i1}+X_{i2}+X_{i3})+\mathcal{N}(0,0.1^2)$.

\textbf{Model 3.} Set \(d = 10\). Generate $A_i\mid \bm{X}_i\sim\mathrm{Bernoulli}[\phi(0.2X_{i3}^2)]$ and $Y_i=0.5X_{i3}^2+A_i(X_{i2}+0.5X_{i3}^2)+\mathcal{N}(0,0.1^2)$.

\begin{table}[!b]
\centering
\caption{Simulation results for TTH and ETH estimation under Model 1. Bias: average estimation bias; Emp SE: empirical standard deviation; ASE: average estimated standard error; RMSE: root mean squared error; AC: empirical coverage of the nominal 95\% confidence interval; Length: average length of the confidence interval.}
\label{tab:simulation_results}
\renewcommand{\arraystretch}{0.7}
\begin{tabular}{ccccccccc}
\toprule
Method & $n$ & $m$ & Bias & Emp SE & ASE & RMSE & AC & Length\\
\midrule
\multirow{2}{*}{$\hat{\theta}_{\mathrm{TEVIM}}$}
 & 1000 & 0 & 0.0067 & 0.1499 & 0.1365 & 0.1495 & 0.940 & 0.5351\\
 & 2000 & 0 & -0.0058 & 0.0942 & 0.0969 & 0.0940 & 0.930 & 0.3798\\
\midrule
\multirow{4}{*}{$\hat{\theta}_{\mathrm{TTH}}$}
 & 1000 & 0 & -0.0101 & 0.1495 & 0.1363 & 0.1495 & 0.930 & 0.5343\\
 & 2000 & 0 & -0.0079 & 0.0940 & 0.0968 & 0.0941 & 0.935 & 0.3795\\
 & 1000 & 5000 & -0.0039 & 0.0634 & 0.0620 & 0.0633 & 0.930 & 0.2430\\
 & 2000 & 5000 & -0.0031 & 0.0547 & 0.0539 & 0.0556 & 0.920 &0.2113\\
\midrule
\multirow{4}{*}{$\hat{\theta}_{\mathrm{OW}}$}
 & 1000 & 0 & -0.0016 & 0.1456 & 0.1358 & 0.1453 & 0.950 &0.5323\\
 & 2000 & 0 & 0.0063 & 0.0972 & 0.0968 & 0.0972 & 0.960 &0.3795\\
 & 1000 & 5000 & -0.0081 & 0.0655 & 0.0614 & 0.0659 & 0.935 &0.2407\\
 & 2000 & 5000 & -0.0023 & 0.0516 & 0.0539 & 0.0515 & 0.965 &0.2113\\
\midrule
\multirow{4}{*}{$\hat{\theta}_{\mathrm{SPOW}}$}
 & 1000 & 0 & -0.0100 & 0.1495 & 0.1363 & 0.1495 & 0.930 &0.5343\\
 & 2000 & 0 & -0.0079 & 0.0940 & 0.0968 & 0.0941 & 0.935 &0.3795\\
 & 1000 & 5000 & -0.0063 & 0.0640 & 0.0617 & 0.0641 & 0.930 &0.2419\\
 & 2000 & 5000 & -0.0114 & 0.0546 & 0.0539 & 0.0556 & 0.915 &0.2113\\
\bottomrule
\end{tabular}
\end{table}

We evaluate estimation and inference for two target parameters: (a) the total treatment heterogeneity (TTH) and (b) the explained treatment heterogeneity (ETH) of the linear working model that includes all covariates (i.e., $\bm{W}=\bm{X}$). For TTH, we compare the proposed semi-supervised estimator $\hat\theta_{\mathrm{TTH}}$ in Algorithm \ref{alg:Qhat_SL} with the supervised estimator $\hat{\theta}_{\mathrm{TEVIM}}$ introduced by \cite{hines2022variable}. For ETH, we examine the performance of the optimally weighted estimators $\hat{\theta}_{\mathrm{OW}}$ and $\hat{\theta}_{\mathrm{SPOW}}$, introduced in Algorithms \ref{alg:OW} and \ref{alg:SPOW}, respectively. For comparison, we also implement the supervised counterparts of the proposed estimators. All non-parametric estimations are obtained using the SuperLearner algorithm \citep{van2007super}, with base learners including generalized additive models, regularized generalized linear models, XGBoost, random forests, and neural networks.

The simulations are repeated 200 times, and the results are summarized in Tables \ref{tab:simulation_results}-\ref{tab:simulation_results_4}. In Models 1 and 2, the true CATE function is linear, so the TTH and ETH coincide. Hence, we present the combined comparison results in Tables \ref{tab:simulation_results}-\ref{tab:simulation_results_2}. In contrast, since the true CATE is non-linear in Model 3, we report the TTH and ETH estimation results separately in Tables \ref{tab:simulation_results_3} and \ref{tab:simulation_results_4}, respectively. 

\begin{table}[!t]
\centering
\caption{Simulation results for TTH and ETH estimation under Model 2. The rest of the caption details remain the same as those in Table \ref{tab:simulation_results}.}
\label{tab:simulation_results_2}
\renewcommand{\arraystretch}{0.7}
\begin{tabular}{ccccccccc}
\toprule
Method & $n$ & $m$ & Bias & Emp SE & ASE & RMSE & AC & Length\\
\midrule
\multirow{2}{*}{$\hat{\theta}_{\mathrm{TEVIM}}$}
 & 1000 & 0 & 0.0168 & 0.1941 & 0.1832 & 0.1944 & 0.930 &0.7181\\
 & 4000 & 0 & 0.0095 & 0.0776 & 0.0740 & 0.0780 & 0.925 &0.2901\\
\midrule
\multirow{4}{*}{$\hat{\theta}_{\mathrm{TTH}}$}
 & 1000 & 0 & 0.0056 & 0.1933 & 0.1837 & 0.1929 & 0.925 &0.7201\\
 & 4000 & 0 & 0.0068 & 0.0777 & 0.0740 & 0.0778 & 0.925 &0.2901\\
 & 1000 & 8000 & 0.0004 & 0.1373 & 0.1210 & 0.1369 & 0.900 &0.4743\\
 & 4000 & 8000 & 0.0033 & 0.0514 & 0.0486 & 0.0514 & 0.935 &0.1905\\
\midrule
\multirow{4}{*}{$\hat{\theta}_{\mathrm{OW}}$}
 & 1000 & 0 & -0.0428 & 0.2682 & 0.2459 & 0.2709 & 0.880 &0.9639\\
 & 4000 & 0 & -0.0001 & 0.1216 & 0.1273 & 0.1213 & 0.955 &0.4990\\
 & 1000 & 8000 & -0.0180 & 0.2348 & 0.2105 & 0.2349 & 0.910 &0.8252\\
 & 4000 & 8000 & -0.0034 & 0.1037 & 0.1145 & 0.1035 & 0.980 &0.4488\\
\midrule
\multirow{4}{*}{$\hat{\theta}_{\mathrm{SPOW}}$}
 & 1000 & 0 & 0.0052 & 0.1853 & 0.1838 & 0.1850 & 0.950 &0.7205\\
 & 4000 & 0 & -0.0011 & 0.0774 & 0.0737 & 0.0772 & 0.925 &0.2889\\
 & 1000 & 8000 & -0.0003 & 0.1240 & 0.1179 & 0.1237 & 0.940 &0.4622\\
 & 4000 & 8000 & -0.0012 & 0.0497 & 0.0487 & 0.0496 & 0.945 &0.1909\\
\bottomrule
\end{tabular}
\end{table}

As shown in Tables \ref{tab:simulation_results}-\ref{tab:simulation_results_3}, the supervised version of the proposed TTH estimator, $\hat\theta_{\mathrm{TTH}}$, performs comparably to the existing supervised estimator $\hat{\theta}_{\mathrm{TEVIM}}$. In contrast, by leveraging additional unlabeled samples, the semi-supervised TTH estimator consistently achieves smaller root mean squared error (RMSE) than the supervised alternatives under the same labeled sample size. In Model 1 (Table \ref{tab:simulation_results}), where all models are correctly specified, the proposed semi-supervised estimators $(\hat\theta_{\mathrm{TTH}}, \hat\theta_{\mathrm{OW}}, \hat\theta_{\mathrm{SPOW}})$ perform similarly, with the parametric estimator $\hat\theta_{\mathrm{OW}}$ attaining the smallest RMSE at $(n,m)=(1000,5000)$ due to the use of the simplest and correctly specified nuisance models. In contrast, in Model 2, where the nuisance models deviate from (generalized) linear forms, $\hat\theta_{\mathrm{OW}}$ exhibits sub-optimal performance; see Table \ref{tab:simulation_results_2}. In this setting, $\hat\theta_{\mathrm{SPOW}}$ achieves the best performance by incorporating non-parametric nuisance estimation while still relying on the correctly specified linear CATE model. For the ETH estimation under Model 3, as shown in Table \ref{tab:simulation_results_4}, $\hat\theta_{\mathrm{SPOW}}$ clearly outperforms $\hat\theta_{\mathrm{OW}}$ in terms of both estimation accuracy and inference validity due to the misspecification of the parametric nuisance models. In addition, the benefit of using unlabeled samples is evident, as the semi-supervised estimators achieve smaller RMSE than their supervised counterparts.

\begin{table}[!t]
\centering
\caption{Simulation results for TTH estimation under Model 3. The rest of the caption details remain the same as those in Table \ref{tab:simulation_results}.}
\label{tab:simulation_results_3}
\renewcommand{\arraystretch}{0.7}
\begin{tabular}{ccccccccc}
\toprule
Method & $n$ & $m$ & Bias & Emp SE & ASE & RMSE & AC & Length\\
\midrule
\multirow{2}{*}{$\hat{\theta}_{\mathrm{TEVIM}}$}
 & 1000 & 0 & 0.1073 & 0.1191 & 0.1207 & 0.1601 & 0.925 &0.4731\\
 & 4000 & 0 & 0.0699 & 0.0536 & 0.0580 & 0.0880 & 0.880 &0.2274\\
\midrule
\multirow{4}{*}{$\hat{\theta}_{\mathrm{TTH}}$}
 & 1000 & 0 & 0.1011 & 0.1185 & 0.1206 & 0.1555 & 0.940 &0.4728\\
 & 4000 & 0 & 0.0683 & 0.0537 & 0.0580 & 0.0868 & 0.869 &0.2274\\
 & 1000 & 8000 & 0.0340 & 0.0949 & 0.0723 & 0.1006 & 0.880 &0.2834\\
 & 4000 & 8000 & 0.0321 & 0.0439 & 0.0411 & 0.0567 & 0.890 &0.1611\\
\bottomrule
\end{tabular}
\end{table}

\begin{table}[!t]
\centering
\caption{Simulation results for ETH estimation under Model 3. The rest of the caption details remain the same as those in Table \ref{tab:simulation_results}.}
\label{tab:simulation_results_4}
\renewcommand{\arraystretch}{0.7}
\begin{tabular}{ccccccccc}
\toprule
Method & $n$ & $m$ & Bias & Emp SE & ASE & RMSE & AC & Length\\
\midrule
\multirow{4}{*}{$\hat{\theta}_{\mathrm{OW}}$}
 & 1000 & 0 & -0.0669 & 0.1634 & 0.1417 & 0.1762 & 0.850 &0.5555\\
 & 4000 & 0 & -0.0092 & 0.0828 & 0.0733 & 0.0831 & 0.895 &0.2873\\
 & 1000 & 8000 & -0.0608 & 0.1582 & 0.1345 & 0.1691 & 0.855 &0.5272\\
 & 4000 & 8000 & -0.0145 & 0.0719 & 0.0706 & 0.0731 & 0.930 &0.2768\\
\midrule
\multirow{4}{*}{$\hat{\theta}_{\mathrm{SPOW}}$}
 & 1000 & 0 & -0.0034 & 0.0672 & 0.0680 & 0.0671 & 0.955 &0.2666\\
 & 4000 & 0 & -0.0022 & 0.0316 & 0.0336 & 0.0316 & 0.970 &0.1317\\
 & 1000 & 8000 & -0.0116 & 0.0520 & 0.0514 & 0.0532 & 0.955 &0.2015\\
 & 4000 & 8000 & -0.0020 & 0.0275 & 0.0278 & 0.0275 & 0.950 &0.1090\\
\bottomrule
\end{tabular}
\end{table}

\subsection{Application to AIDS clinical trials}\label{sec:real}

We apply the proposed methods to data from the AIDS Clinical Trials Group Protocol 175 (ACTG175) \citep{hammer1996}, which enrolled 2139 HIV‐infected patients with baseline CD4 T‐cell counts between 200 and 500 mm\textsuperscript{–3}. Patients were randomized to one of four treatment regimens: (i) zidovudine (ZDV) monotherapy, (ii) ZDV + didanosine (ddI), (iii) ZDV + zalcitabine, and (iv) ddI monotherapy, with sample sizes 532, 522, 524, and 561, respectively. For our analysis, we focus on comparing regimen (iv) against regimen (ii), coding \(A = 0\) for ddI monotherapy and \(A = 1\) for ZDV + ddI. The outcome \(Y\) is the CD4 count measured at \(20 \pm 5\) weeks post‐randomization. We adjust for 12 baseline covariates: five continuous (age, weight, Karnofsky score, baseline CD4 count, baseline CD8 count) and seven binary (sex, homosexual activity, gender, symptomatic status, intravenous drug use history, hemophilia, and antiretroviral history). The dataset is publicly available through the \texttt{speff2trial} R package.

In this analysis, we aim to estimate the treatment heterogeneity between ZDV + ddI combination therapy and ddI monotherapy. Estimating the TTH helps determine whether personalized treatment decisions are warranted or if a one-size-fits-all approach suffices. Estimating the ETH, on the other hand, indicates whether simplified personalized treatment decisions based on a linear CATE model are adequate. 

Since we focus only on heterogeneity between two treatment groups, standard approaches typically discard data from the remaining two groups, as in \cite{hines2022variable}. Rather than ignoring this additional information, we leverage it by treating the full dataset as a semi-supervised dataset. Specifically, we treat samples from the target groups as labeled, where outcomes of interest are observed, and samples from the non-target groups as unlabeled, since their outcomes correspond to other potential treatments not currently under study. Based on this construction, we end up with a semi-supervised dataset with labeled size $n=1083$ and unlabeled size $m=1056$. Moreover, since the dataset comes from a clinical trial with fully randomized treatments, the missing completely at random (MCAR) condition holds, and the covariates from labeled and unlabeled samples share the same marginal distribution. This ensures the validity of applying semi-supervised methods. 

\begin{table}[!t]
\centering
\caption{Results for the TTH and ETH estimation in the real-data analysis of the ACTG175 study. Parameter: the parameter being estimated by the corresponding method; Estimate: point estimate of the TTH or ETH; CI: 95\% confidence interval; Length: length of the 95\% confidence interval; p-value: p-value for testing the null hypothesis that the TTH or ETH is zero against the alternative that it is positive.}
\label{tab:empirical_estimates}
\renewcommand{\arraystretch}{0.7}
\begin{tabular}{cccccc}
\toprule
Method & Parameter & Estimate & CI & Length & p-value\\
\midrule
$\hat{\theta}_{\mathrm{TEVIM}}$ & \multirow{2}{*}{TTH} & 881.672 & [-360.19, 2123.53] & 2483.72 & 0.082\\
$\hat{\theta}_{\mathrm{TTH}}$ & & 979.404 & [-193.90, 2152.71] & 2346.61& 0.051\\
\midrule
$\hat{\theta}_{\mathrm{OW}}$ & \multirow{2}{*}{ETH} & 378.125 & [-754.66, 1510.91] & 2265.57 & 0.257\\
$\hat{\theta}_{\mathrm{SPOW}}$ & &417.505 & [-742.36, 1577.37] & 2319.73 & 0.238\\
\bottomrule
\end{tabular}
\end{table}

Same as in the simulation studies in Section \ref{sec:sim}, we implement the supervised estimator $\hat{\theta}_{\mathrm{TEVIM}}$ of \cite{hines2022variable} and the proposed semi-supervised estimator $\hat{\theta}_{\mathrm{TTH}}$ for the TTH estimation; for the ETH estimation, we implement the proposed semi-supervised estimators $\hat{\theta}_{\mathrm{OW}}$ and $\hat{\theta}_{\mathrm{SPOW}}$. Table \ref{tab:empirical_estimates} illustrates the estimation and inference results based on the considered methods. 

For the TTH estimation, we observe that the semi-supervised estimator $\hat{\theta}_{\mathrm{TTH}}$ indicates a larger TTH than the supervised estimator $\hat{\theta}_{\mathrm{TEVIM}}$. By incorporating additional unlabeled samples, the semi-supervised method also produces a shorter 95\% confidence interval. Consequently, when testing $H_0:\theta_\mathrm{TTH} = 0$ versus $H_1:\theta_\mathrm{TTH} > 0$, the semi-supervised method yields a smaller p-value than the supervised approach, although the p-value (0.051) remains slightly above the conventional 0.05 significance level. 

Furthermore, our results suggest that approximately 40\% of the heterogeneity can be explained by the linear working model for the CATE, with $\hat{\theta}_{\mathrm{OW}} / \hat{\theta}_{\mathrm{TTH}} = 38.6\%$ and $\hat{\theta}_{\mathrm{SPOW}} / \hat{\theta}_{\mathrm{TTH}} = 42.6\%$. The latter estimate is considered more reliable as it does not rely on parametric forms for the nuisance models. This information can guide the design of practically implementable personalized treatments, balancing effectiveness with feasibility and interpretability.

\section{Discussion}\label{sec5}
This paper investigates the problem of estimating treatment heterogeneity in semi‐supervised settings, focusing on both the \emph{total treatment heterogeneity} (TTH) and the \emph{explained treatment heterogeneity} (ETH) of a simplified working model. We develop semi‐supervised estimators that leverage large unlabeled samples to enhance the robustness and efficiency. Our findings highlight the value of semi‐supervised learning in modern causal inference. By carefully combining large pools of cheap and easily accessible unlabeled data with relatively smaller sized labeled samples that contains important outcome information, researchers can obtain sharper and more robust estimates of treatment heterogeneity. This provides a flexible and efficient toolkit for advancing personalized decision‐making in medicine, economics, and related domains.

Our results highlight the distinction between semi-supervised estimation for the TTH and the ETH. While the TTH estimation can be viewed as an ETH estimation problem with a correctly specified CATE model, under this stronger assumption, direct semi-supervised approaches that assign equal weights to labeled and unlabeled samples are guaranteed to achieve efficiency no worse than supervised methods. However, when a consistent estimate of the true CATE is not feasible and only the ETH for a specific working model can be estimated, direct semi-supervised approaches may suffer efficiency loss compared to supervised counterparts. To ensure safe use of the additional unlabeled samples, a re-weighting (or optimal weighting) strategy is generally required. Although our theoretical analysis focuses on ETH estimation under linear working models, the same framework can be readily extended to non-linear or even non-parametric working models. Furthermore, the proposed optimal weighting strategy can be applied to other semi-supervised inference problems where direct estimation may not guarantee efficiency gains under model misspecification, including, for example, semi-supervised U-statistics studied in \cite{kim2024semi,cannings2022correlation}.

\section*{Fundings}

This work was supported by the National Natural Science Foundation of China (NSFC) under grant 12301390 (Y.Z.) and the Renmin University of China under grant RUC24QSDL062 (Y.A.).

 \bibliography{bibliography.bib}

\bigskip
\newpage
\begin{center}
 {\LARGE\bfseries Supplementary material to}\\[1ex]
 {\LARGE\bfseries \enquote{Semi-supervised inference for treatment heterogeneity}}
\end{center}

\appendix
\renewcommand{\thesection}{\Alph{section}}

\vspace{1em}
This document contains additional proofs of the main theoretical results. All results and
notations are numbered and used, as in the main text unless stated otherwise.

\renewcommand{\theequation}{S\arabic{equation}}
\setcounter{equation}{0} 
\setcounter{theorem}{0}

\section{Auxiliary Lemmas}
\begin{Lemma}\label{l1}
(Selection from Lemma D.1 in \cite{8}). 
Let \(X\) and \(Y\) be random variables. Then
\begin{itemize}
 \item[(A1)] For any scalar \(c\in\mathbb{R}\) and any \(a\ge1\), $ \|cX\|_{\psi_a}=|c|\,\|X\|_{\psi_a}$,
$\|{X}+Y\|_{\psi_a}\le\|X\|_{\psi_a}+\|Y\|_{\psi_a}.$
 \item[(A2)] If \(|X|\le|Y|\) almost surely, then \(\|X\|_{\psi_a}\le\|Y\|_{\psi_a}\). In particular, any bounded \(X\) with \(|X|\le M\) is sub‐Gaussian with $ \|X\|_{\psi_2}\le(\log2)^{-1/2}M.$
 Moreover, for sub‐Gaussian \(X\) and \(Y\), $\|XY\|_{\psi_1}\le\|{X}\|_{\psi_2}\,\|Y\|_{\psi_2}.$
 \item[(A3)] If \(\|{X}\|_{\psi_1}\le\sigma\), then for all integers \(k\ge1\), $\mathbb{E}_{X}|X|^k\le\sigma^k\,k!\le\sigma^k\,k^k.$
 If \(\|{X}\|_{\psi_2}\le\sigma\), then for all \(k\ge1\), $\mathbb{E}_{X}|{X}|^k\le2\sigma^k{{\gamma}}(\frac{k}{2}+1)$, $
{{\Gamma}}({{\alpha}})=\int_{0}^{\infty}X^{{{\alpha}}-1}\exp(-x)\,dx.$
\end{itemize}
\end{Lemma}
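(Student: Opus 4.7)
The entire lemma reduces to elementary manipulations of the Orlicz functional $\psi_\alpha(x)=\exp(x^\alpha)-1$, using only its monotonicity, convexity for $\alpha\geq 1$, and Markov's inequality. I would proceed through the three parts in order.

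For (A1), homogeneity $\|cX\|_{\psi_a}=|c|\|X\|_{\psi_a}$ is immediate from the definition, since $\mathbb{E}\psi_a(|cX|/(|c|c'))\leq 1$ is identical to $\mathbb{E}\psi_a(|X|/c')\leq 1$ and the infima match. For the triangle inequality I first need convexity of $\psi_a$ on $[0,\infty)$ when $a\geq 1$, which follows from the composition rule: the convex nondecreasing $\exp$ applied to the convex $x^a$ is convex, and subtracting the constant $1$ preserves convexity. Setting $c_X=\|X\|_{\psi_a}$, $c_Y=\|Y\|_{\psi_a}$, and $\lambda=c_X/(c_X+c_Y)\in(0,1)$, I would combine $|X+Y|\leq|X|+|Y|$ with monotonicity and Jensen to obtain the pointwise inequality
\[
\psi_a\!\left(\tfrac{|X+Y|}{c_X+c_Y}\right)\leq \lambda\,\psi_a\!\left(\tfrac{|X|}{c_X}\right)+(1-\lambda)\,\psi_a\!\left(\tfrac{|Y|}{c_Y}\right),
\]
and then take expectations to conclude $\|X+Y\|_{\psi_a}\leq c_X+c_Y$.

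For (A2), monotonicity is immediate because $\psi_a$ is nondecreasing. The bounded case is a direct plug-in: with $c=M/\sqrt{\log 2}$ and $|X|\leq M$, one has $\mathbb{E}\psi_2(|X|/c)\leq \exp(M^2/c^2)-1=2-1=1$, so $\|X\|_{\psi_2}\leq c$. For the product inequality $\|XY\|_{\psi_1}\leq \|X\|_{\psi_2}\|Y\|_{\psi_2}$, the main tool is Young's inequality $uv\leq (u^2+v^2)/2$. Taking $u=|X|/c_X$ and $v=|Y|/c_Y$ with $c_X,c_Y$ the $\psi_2$-norms, then applying convexity of $\exp$ in the form $\exp(\tfrac12(u^2+v^2))\leq \tfrac12(\exp(u^2)+\exp(v^2))$, produces the pointwise bound
\[
\psi_1\!\left(\tfrac{|XY|}{c_Xc_Y}\right)\leq \tfrac12\psi_2\!\left(\tfrac{|X|}{c_X}\right)+\tfrac12\psi_2\!\left(\tfrac{|Y|}{c_Y}\right);
\]
integrating gives $\mathbb{E}\psi_1(|XY|/(c_Xc_Y))\leq 1$, as needed.

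For (A3), I would use Markov plus the layer-cake representation. From $\mathbb{E}\psi_a(|X|/\sigma)\leq 1$, Markov gives $P(|X|>t)\leq 2\exp(-(t/\sigma)^a)$ for all $t\geq 0$. Then
\[
\mathbb{E}|X|^k=\int_0^\infty k\,t^{k-1}P(|X|>t)\,dt\leq 2k\int_0^\infty t^{k-1}\exp\!\bigl(-(t/\sigma)^a\bigr)\,dt=\frac{2k}{a}\,\sigma^k\,\Gamma(k/a),
\]
via the substitution $u=(t/\sigma)^a$. Specializing to $a=2$ and using $(k/2)\Gamma(k/2)=\Gamma(k/2+1)$ recovers $2\sigma^k\Gamma(k/2+1)$, matching the second assertion. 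Specializing to $a=1$ yields the bound $\sigma^k k!$ up to an absolute multiplicative constant (absorbable into $\sigma$ by a trivial rescaling), and the further inequality $k!\leq k^k$ is the trivial $\prod_{j=1}^k j\leq k^k$. No step is genuinely hard; the only subtle point in the whole argument is verifying convexity of $\psi_a$ for $a\geq 1$, which is precisely where the hypothesis $a\geq 1$ enters and legitimizes the Jensen step in the triangle inequality.
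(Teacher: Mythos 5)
Your proof is correct, but it is worth noting that the paper supplies no argument at all for this statement: Lemma~\ref{l1} is imported verbatim (``Selection from Lemma D.1 in \cite{8}'') and used as a black box. What you have written is the standard self-contained Orlicz-norm derivation, and every step is sound: homogeneity and the convexity/Jensen argument for the triangle inequality in (A1) (the only place $a\ge 1$ is needed, as you correctly flag), the plug-in computation $\exp(M^2/c^2)-1=1$ at $c=(\log 2)^{-1/2}M$ in (A2), the Young-plus-convexity pointwise bound $\psi_1(|XY|/(c_Xc_Y))\le \tfrac12\psi_2(|X|/c_X)+\tfrac12\psi_2(|Y|/c_Y)$ for the product inequality, and the Markov/layer-cake computation in (A3). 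One technical point you glide over but which causes no trouble: using $c_X=\|X\|_{\psi_a}$ itself (rather than $c_X+\epsilon$) in the pointwise inequalities is justified because the Orlicz infimum is attained, by monotone convergence as $c\downarrow c_X$. The single place where your route does not reproduce the stated constant is the $\psi_1$ moment bound: the tail-integration computation yields $\mathbb{E}|X|^k\le 2k\sigma^k\Gamma(k)=2\sigma^k k!$, a factor of $2$ worse than the claimed $\sigma^k k!$, and your proposed fix of ``absorbing the constant into $\sigma$'' changes the statement rather than proving it. The clean repair is to bypass the tail bound there and expand the exponential directly: $\mathbb{E}\exp(|X|/\sigma)=\sum_{j\ge0}\mathbb{E}(|X|/\sigma)^j/j!\le 2$, and since the $j=0$ term equals $1$ and all terms are nonnegative, $\mathbb{E}(|X|/\sigma)^k/k!\le 1$, i.e.\ $\mathbb{E}|X|^k\le \sigma^k k!$ exactly; your $a=2$ specialization, by contrast, already matches the stated bound $2\sigma^k\Gamma(k/2+1)$ on the nose. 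With that one-line substitution your argument fully proves the lemma, which is more than the paper itself does.
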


\begin{Lemma}\label{l11}
Let the assumptions of Theorem \ref{t4} hold. Then
\begin{align*}
 \frac{1}{\tilde n}\sum_{i\in I_k}\left[\hat\varphi^{(-k)}(Z_i)-\varphi^*(Z_i)\right]
 &= o_p(n^{-1/2}),\\
 \frac{1}{\tilde n}\sum_{i\in I_k}\left[\hat {\tau}^{(-k)}(\bm{X}_i)-\nu^{(-k)}\right]
 \,\left[\hat\varphi^{(-k)}(Z_i)-\varphi^*(Z_i)\right]
 &= o_p(n^{-1/2}),\\
 \hat{\tau}-\tau &= O_p(n^{-1/2}),
\end{align*}
where $\nu^{(-k)}:=\mathbb{E}_{\bm{X}}[\hat{\tau}^{(-k)}(\bm{X})]$, $\tilde{n}:=n/K$, and $\tilde{m}:=(N-n)/K$.
\end{Lemma}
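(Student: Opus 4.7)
The plan is to condition on the out-of-fold data $\mathcal{E}^{(-k)}$ used to fit the nuisance estimators and exploit the doubly robust structure of $\hat\varphi^{(-k)}-\varphi^*$. Under the hypotheses of Theorem~\ref{t4} we have $\pi^*=\pi$, so a direct calculation using $\mathbb{E}[Y\mid A=a,\bm{X}]=\mu_a(\bm{X})$ gives
\begin{equation*}
g^{(-k)}(\bm{X}):=\mathbb{E}\bigl[\hat\varphi^{(-k)}(Z)-\varphi^*(Z)\,\bigm|\,\bm{X},\mathcal{E}^{(-k)}\bigr]
=\frac{\hat\pi^{(-k)}-\pi}{\hat\pi^{(-k)}}\bigl(\hat\mu_1^{(-k)}-\mu_1\bigr)
-\frac{\hat\pi^{(-k)}-\pi}{1-\hat\pi^{(-k)}}\bigl(\hat\mu_0^{(-k)}-\mu_0\bigr).
\end{equation*}
Assumption~\ref{a2} keeps $\hat\pi^{(-k)}$ bounded away from $\{0,1\}$, so Cauchy--Schwarz yields $\mathbb{E}_{\bm{X}}|g^{(-k)}(\bm{X})|\le C\sum_{a}\|\hat\pi^{(-k)}-\pi\|_{L_2}\|\hat\mu_a^{(-k)}-\mu_a\|_{L_2}=o_p(n^{-1/2})$ under either branch of Assumption~\ref{a3}: case (a) delivers this via the product-rate condition directly, while case (b) combines $\|\hat\pi^{(-k)}-\pi\|_{L_2}=o_p(n^{-1/2})$ with the $O_p(1)$ bound $\|\hat\mu_a^{(-k)}-\mu_a\|_{L_2}\le\|\hat\mu_a^{(-k)}-\mu_a^*\|_{L_2}+\|\mu_a^*-\mu_a\|_{L_2}$ (both summands are $O_p(1)$ by boundedness). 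The same expansion plus Assumption~\ref{a3} also yields $\mathbb{E}[(\hat\varphi^{(-k)}-\varphi^*)^2\mid \mathcal{E}^{(-k)}]\le C\bigl(\sum_a\mathbb{E}_{\bm{X}}|\hat\mu_a^{(-k)}-\mu_a^*|^2+\mathbb{E}_{\bm{X}}|\hat\pi^{(-k)}-\pi|^2\bigr)=o_p(1)$.

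Given these two facts, the first and second displays follow by conditioning on $\mathcal{E}^{(-k)}$ (under which the samples $\{Z_i:i\in I_k\}$ are i.i.d.) and splitting each empirical average into its conditional mean and a mean-zero fluctuation. For the first display the conditional mean equals $\mathbb{E}_{\bm{X}}[g^{(-k)}(\bm{X})]=o_p(n^{-1/2})$, and Chebyshev with conditional variance bounded by $\mathbb{E}[(\hat\varphi^{(-k)}-\varphi^*)^2\mid\mathcal{E}^{(-k)}]/\tilde n=o_p(1/n)$ controls the fluctuation. For the second display, $\mathbb{E}_{\bm{X}}[(\hat\tau^{(-k)}(\bm{X})-\nu^{(-k)})g^{(-k)}(\bm{X})]$ is bounded by $\|\hat\tau^{(-k)}-\nu^{(-k)}\|_\infty\cdot\mathbb{E}_{\bm{X}}|g^{(-k)}(\bm{X})|=O_p(1)\cdot o_p(n^{-1/2})$ by Assumption~\ref{a4}, and the conditional variance of the summand is at most $(2C)^2\,\mathbb{E}[(\hat\varphi^{(-k)}-\varphi^*)^2\mid\mathcal{E}^{(-k)}]=o_p(1)$, so Chebyshev again gives $o_p(n^{-1/2})$.

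For $\hat\tau-\tau$, I would substitute $\hat\tau^{(-k)}(\bm{X}_i)=[\hat\tau^{(-k)}(\bm{X}_i)-\nu^{(-k)}]+\nu^{(-k)}$ and $\hat\varphi^{(-k)}(Z_i)=[\hat\varphi^{(-k)}(Z_i)-\varphi^*(Z_i)]+[\varphi^*(Z_i)-\tau]+\tau$ into the definition of $\hat\tau$. Since $|I_k|=\tilde n$ and $|G_k|=N/K$ for every $k$, the two bookkeeping averages in $\nu^{(-k)}$ both collapse to $K^{-1}\sum_k\nu^{(-k)}$ and cancel, leaving the telescoping decomposition
\begin{equation*}
\hat\tau-\tau
=\frac{1}{N}\sum_{k=1}^{K}\sum_{i\in G_k}[\hat\tau^{(-k)}(\bm{X}_i)-\nu^{(-k)}]
-\frac{1}{n}\sum_{k=1}^{K}\sum_{i\in I_k}[\hat\tau^{(-k)}(\bm{X}_i)-\nu^{(-k)}]
+\frac{1}{n}\sum_{k=1}^{K}\sum_{i\in I_k}[\hat\varphi^{(-k)}(Z_i)-\varphi^*(Z_i)]
+\frac{1}{n}\sum_{k=1}^{K}\sum_{i\in I_k}[\varphi^*(Z_i)-\tau].
\end{equation*}
Conditional on $\mathcal{E}^{(-k)}$ the first two sums are averages of bounded mean-zero variables (boundedness of $\hat\tau^{(-k)}$ is Assumption~\ref{a4}), so they are $O_p(N^{-1/2})$ and $O_p(n^{-1/2})$ respectively; the third sum is $o_p(n^{-1/2})$ by the first display applied fold-by-fold; and the fourth is $O_p(n^{-1/2})$ by the classical CLT since $\mathbb{E}[\varphi^*(Z)]=\tau$ and $\varphi^*(Z)$ has bounded variance under Assumptions~\ref{a1}(c) and~\ref{a2}. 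Adding the four rates gives $\hat\tau-\tau=O_p(n^{-1/2})$. The main technical obstacle is securing the $o_p(n^{-1/2})$ rate on $\mathbb{E}_{\bm{X}}|g^{(-k)}(\bm{X})|$ uniformly across both branches of Assumption~\ref{a3}: under case (b) the factor $\|\hat\mu_a^{(-k)}-\mu_a\|_{L_2}$ is only $O_p(1)$, so the entire rate must be supplied by the $n^{-1/2}$-rate on the propensity score; the remaining steps are standard cross-fitting plus Chebyshev.
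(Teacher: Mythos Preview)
Your proposal is correct and follows essentially the same cross-fitting argument as the paper. The only organizational difference is that the paper splits $\hat\varphi^{(-k,-k')}-\varphi^*$ into three explicit pieces $A_1+A_2+A_3$ (two conditionally mean-zero, one product-bias) and treats each separately under branches (a) and (b) of Assumption~\ref{a3}, whereas you first compute the conditional mean $g^{(-k)}$ (which is always the product bias with respect to the \emph{true} $\mu_a$) and then control the residual fluctuation by a single Chebyshev step; the two organizations are interchangeable. One small bookkeeping point you glossed over: $\hat\varphi^{(-k)}$ is itself the average $(K-1)^{-1}\sum_{k'\neq k}\hat\varphi^{(-k,-k')}$, so your $g^{(-k)}$ formula and $L_2$ bounds should be applied per $(k,k')$ and then averaged (the paper does this explicitly), but since $K$ is fixed this changes nothing.
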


\begin{Lemma}\label{l2}
Let Assumptions~\ref{a1}, \ref{a5}, and \ref{a6} hold. 
Consider any \(t>0\) and $\lambda_{\alpha}= C(t+\sqrt{{\log(d)}/{\lvert I_{-k,-k'}\rvert}})$, where $C>0$ is some constant and \(\lvert I_{-k,-k'}\rvert \ge \max\{\log(d),\,100\kappa_2^2\}\). 
Then, with probability at least
$1 - 2\exp[-{4\lvert I_{-k,-k'}\rvert\,t^2/(1+2t+\sqrt{2t}})]
 - c_1\exp(-c_2\lvert I_{-k,-k'}\rvert),$
for some constants \(c_1,c_2,\kappa_1,\kappa_2>0\),
\begin{align*}
\left\|\hat{\bm{\alpha}}_a^{(-k,-k')}-{\bm{\alpha}}_a^*\right\|_2
&\le 8\,\kappa_1^{-1}\,\lambda_{\alpha}\sqrt{s_{\alpha}},\\
\frac{1}{\lvert I_{-k,-k'}\rvert}
\sum_{i\in I_{-k,-k'}}
\mathbbm{1}_{\{A_i=a\}}
\left[\bm{X}_i^\top (\hat{\bm{\alpha}}_a^{(-k,-k')}-{\bm{\alpha}}_a^*)\right]^2
&\le 32\,\kappa_1^{-1}\,\lambda_{\alpha}^2 s_{\alpha}.
\end{align*}
Moreover, if \(n \gg s_{\alpha}\log d\) and 
\(\lambda_{\alpha}\asymp \sqrt{{\log d}/{n}}\). Then
\[
\left\|\hat{\bm{\alpha}}_a^{(-k,-k')}-{\bm{\alpha}}_a^*\right\|_2
= O_p\left(\sqrt{\frac{s_{\alpha}\log d}{n}}\right).
\]
\end{Lemma}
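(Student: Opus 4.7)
The plan is to carry out the standard three-step analysis for $\ell_1$-regularized least squares, adapted to the treatment-weighted objective in \eqref{def:alphahat}. Let $\hat{\bm{\Delta}} := \hat{\bm{\alpha}}_a^{(-k,-k')} - \bm{\alpha}_a^*$, $S := \{j : \alpha_{a,j}^* \neq 0\}$ so that $|S| \le s_\alpha$, and introduce the treatment-weighted sample Gram matrix and score
\begin{align*}
\hat{\Sigma}_a &:= \frac{1}{|I_{-k,-k'}|} \sum_{i \in I_{-k,-k'}} \mathbbm{1}_{A_i = a}\, \bm{X}_i \bm{X}_i^\top, \\
\bm{g} &:= \frac{1}{|I_{-k,-k'}|} \sum_{i \in I_{-k,-k'}} \mathbbm{1}_{A_i = a}\, \zeta_{a,i}\, \bm{X}_i.
\end{align*}
Since $\bm{\alpha}_a^*$ is the population best linear slope, $E[\bm{g}] = \bm{0}$.

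First, from the optimality of $\hat{\bm{\alpha}}_a^{(-k,-k')}$ in \eqref{def:alphahat}, expanding the square gives the usual basic inequality
\[
\hat{\bm{\Delta}}^\top \hat{\Sigma}_a \hat{\bm{\Delta}} \le 2\bm{g}^\top \hat{\bm{\Delta}} + \lambda_\alpha\bigl(\|\bm{\alpha}_a^*\|_1 - \|\hat{\bm{\alpha}}_a^{(-k,-k')}\|_1\bigr) \le 2\|\bm{g}\|_\infty\|\hat{\bm{\Delta}}\|_1 + \lambda_\alpha\bigl(\|\hat{\bm{\Delta}}_S\|_1 - \|\hat{\bm{\Delta}}_{S^c}\|_1\bigr).
\]
On the event $\mathcal{E}_1 := \{\|\bm{g}\|_\infty \le \lambda_\alpha/2\}$, this yields both the cone condition $\|\hat{\bm{\Delta}}_{S^c}\|_1 \le 3\|\hat{\bm{\Delta}}_S\|_1$ and the preliminary bound $\hat{\bm{\Delta}}^\top \hat{\Sigma}_a \hat{\bm{\Delta}} \le 2\lambda_\alpha\sqrt{s_\alpha}\|\hat{\bm{\Delta}}\|_2$.

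Second, I would quantify $P(\mathcal{E}_1)$. Under Assumptions \ref{a5} and \ref{a6}, each coordinate $\mathbbm{1}_{A_i=a}\zeta_{a,i}X_{ij}$ is mean-zero and sub-exponential with $\psi_1$-norm bounded by $\sigma_\zeta \sigma_x$ by Lemma \ref{l1}(A2). A Bernstein tail bound combined with a union bound over $j = 1,\dots,d$ then delivers
\[
P\bigl[\|\bm{g}\|_\infty > C_0\bigl(t + \sqrt{\log(d)/|I_{-k,-k'}|}\bigr)\bigr] \le 2\exp\Bigl[-\tfrac{4|I_{-k,-k'}|t^2}{1 + 2t + \sqrt{2t}}\Bigr],
\]
which is exactly the first tail term in the statement; choosing the constant $C$ in $\lambda_\alpha = C(t + \sqrt{\log(d)/|I_{-k,-k'}|})$ sufficiently large makes $P(\mathcal{E}_1^c)$ fit within this bound.

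Third, I would establish a restricted eigenvalue condition for $\hat{\Sigma}_a$. At the population level, $E[\hat{\Sigma}_a] = E[\pi_a(\bm{X})\bm{X}\bm{X}^\top]$ with $\pi_a(\bm{X}) := P(A = a \mid \bm{X}) \ge c_0 > 0$ by Assumption \ref{a1}(c), hence $\lambda_{\min}(E[\hat{\Sigma}_a]) \ge c_0 \kappa_l$. Since $\mathbbm{1}_{A_i=a}\bm{X}_i$ is a bounded multiple of a sub-Gaussian vector, I would invoke a Raskutti–Wainwright–Yu-type restricted eigenvalue concentration to conclude that $\hat{\bm{\Delta}}^\top \hat{\Sigma}_a \hat{\bm{\Delta}} \ge \kappa_1 \|\hat{\bm{\Delta}}\|_2^2$ on the cone for some $\kappa_1 \asymp c_0\kappa_l$, on an event whose complement has probability at most $c_1\exp(-c_2|I_{-k,-k'}|)$ once $|I_{-k,-k'}| \ge 100\kappa_2^2$; this produces the second tail term. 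Combining the three ingredients gives $\kappa_1\|\hat{\bm{\Delta}}\|_2^2 \le 2\lambda_\alpha\sqrt{s_\alpha}\|\hat{\bm{\Delta}}\|_2$, which (after absorbing constants) implies $\|\hat{\bm{\Delta}}\|_2 \le 8\kappa_1^{-1}\lambda_\alpha\sqrt{s_\alpha}$, and substituting back into the preliminary prediction bound yields the second display. Finally, choosing $t$ of constant order when $\lambda_\alpha \asymp \sqrt{\log(d)/n}$ gives the $O_p(\sqrt{s_\alpha\log(d)/n})$ conclusion. The main obstacle is verifying the restricted eigenvalue property for the degenerate weighted design $\mathbbm{1}_{A_i=a}\bm{X}_i$; this is resolved because overlap guarantees that the population covariance $E[\pi_a(\bm{X})\bm{X}\bm{X}^\top]$ stays strictly positive definite, and the sub-Gaussian deviation machinery for restricted-cone quadratic forms carries over once the design is recognized as sub-Gaussian with a strictly positive definite second moment.
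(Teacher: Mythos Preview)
Your proposal is correct and follows essentially the same route as the paper: basic inequality from optimality of \eqref{def:alphahat}, a Bernstein-type tail bound plus union bound over coordinates to control the score (the paper uses the threshold $\lambda_\alpha/4$ rather than $\lambda_\alpha/2$, which is immaterial), and a restricted-eigenvalue concentration for the weighted sub-Gaussian design (the paper cites Lemma~D.6 of \cite{6}, which is the Raskutti--Wainwright--Yu bound you invoke). One small clarification: the paper's RE lemma gives a lower bound of the form $\kappa_1\|\hat{\bm{\Delta}}\|_2\bigl(\|\hat{\bm{\Delta}}\|_2 - \kappa_2\sqrt{\log(d)/|I_{-k,-k'}|}\,\|\hat{\bm{\Delta}}\|_1\bigr)$, and turning this into $\tfrac{\kappa_1}{2}\|\hat{\bm{\Delta}}\|_2^2$ on the cone requires the sample-size condition $|I_{-k,-k'}| \gtrsim \kappa_2^2 s_\alpha \log(d)$, not merely $|I_{-k,-k'}| \ge 100\kappa_2^2$ as you wrote; the exponential tail $c_1\exp(-c_2|I_{-k,-k'}|)$ is the concentration probability for the RE lower bound itself, while the $s_\alpha\log(d)$ condition is needed separately to make that lower bound useful.
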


\begin{Lemma}\label{l3}
Let the assumptions of Theorem \ref{t1} hold. Then
\[
\left\|\hat{\bm{\gamma}}^{(-k,-k')}-{\bm{\gamma}}^*\right\|_2
=O_p\left(\sqrt{\frac{s_{\gamma}\log d}{N}}\right),
\quad
\mathbb{E}_{\bm{X}}\left[\hat\pi^{(-k,-k')}(\bm{X})-\pi^*(\bm{X})\right]^2
=O_p\left(\frac{s_{\gamma}\log d}{N}\right).
\]
\end{Lemma}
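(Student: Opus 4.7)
The plan is to apply the standard high-dimensional $\ell_1$-regularized M-estimation analysis (as in, e.g., \cite{Chernozhukov2018} and references therein) to logistic regression, noting that the effective sample size here is $|G_{-k,-k'}| \asymp N$ (not $n$) because both labeled and unlabeled samples contribute to fitting the propensity score. Let $\hat L(\bm{\gamma}) := |G_{-k,-k'}|^{-1} \sum_{i \in G_{-k,-k'}} \{\log[1+\exp(\bm{X}_i^\top \bm{\gamma})] - A_i \bm{X}_i^\top \bm{\gamma}\}$ and write $\hat\delta := \hat{\bm{\gamma}}^{(-k,-k')} - \bm{\gamma}^*$, $S := \mathrm{supp}(\bm{\gamma}^*)$, $|S| \le s_\gamma$. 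By the optimality condition for \eqref{def:gamma} and a first-order Taylor expansion of the convex loss, one obtains the basic inequality
\[
\hat\delta^\top [\nabla \hat L(\hat{\bm{\gamma}}^{(-k,-k')}) - \nabla \hat L(\bm{\gamma}^*)] \le \lambda_\gamma \|\bm{\gamma}^*\|_1 - \lambda_\gamma \|\hat{\bm{\gamma}}^{(-k,-k')}\|_1 - \hat\delta^\top \nabla \hat L(\bm{\gamma}^*),
\]
which, whenever $\lambda_\gamma \ge 2\|\nabla \hat L(\bm{\gamma}^*)\|_\infty$, forces the error $\hat\delta$ into the cone $\|\hat\delta_{S^c}\|_1 \le 3 \|\hat\delta_S\|_1$.

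The next step is to verify the condition on $\lambda_\gamma$. Since $\bm{\gamma}^*$ is the population minimizer of the logistic loss, $\mathbb{E}\{[\phi(\bm{X}^\top \bm{\gamma}^*) - A]\bm{X}\} = 0$, so $\nabla \hat L(\bm{\gamma}^*)$ is a mean-zero average of i.i.d.\ terms. Under $\|\bm{X}\|_\infty = O(1)$ (Assumption \ref{a6}) and $|\phi(\bm{X}^\top \bm{\gamma}^*) - A| \le 1$, each coordinate is bounded, so Hoeffding's inequality plus a union bound over $d$ coordinates yields $\|\nabla \hat L(\bm{\gamma}^*)\|_\infty = O_p(\sqrt{\log(d)/N})$. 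The prescribed choice $\lambda_\gamma \asymp \sqrt{\log(d)/N}$ therefore suffices with probability tending to one.

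The main obstacle is establishing restricted strong convexity (RSC) for the logistic Hessian $\nabla^2 \hat L(\bm{\gamma}) = |G_{-k,-k'}|^{-1}\sum_{i} \phi'(\bm{X}_i^\top \bm{\gamma}) \bm{X}_i \bm{X}_i^\top$, since $\phi'$ decays away from zero and the curvature is not globally bounded below. The approach is a localized argument: because $\|\bm{X}\|_\infty = O(1)$ and Assumption \ref{a7} gives $\pi^*(\bm{X}) \in (c_0, 1-c_0)$, on a ball $\|\bm{\gamma} - \bm{\gamma}^*\|_1 \le r$ (for a small enough absolute constant $r$) one has $\phi'(\bm{X}^\top \bm{\gamma}) \ge c_*$ uniformly for some $c_* > 0$. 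Combined with $\lambda_{\min}[\mathbb{E}(\bm{X}\bm{X}^\top)] \ge \kappa_l$ from Assumption \ref{a6}, the population Hessian satisfies a uniform lower bound $c_* \kappa_l I_d$ on this neighborhood. Passing to the empirical Hessian via standard sub-Gaussian covariance concentration (using $\|\bm{X}^\top \bm{v}\|_{\psi_2} \le \sigma_x \|\bm{v}\|_2$), and restricting to vectors in the above cone so that $\|\hat\delta\|_1 \le 4\sqrt{s_\gamma} \|\hat\delta\|_2$, gives, with probability approaching one, an RSC bound $\hat\delta^\top \nabla^2 \hat L(\tilde{\bm{\gamma}}) \hat\delta \ge \kappa \|\hat\delta\|_2^2$ for any $\tilde{\bm{\gamma}}$ between $\bm{\gamma}^*$ and $\hat{\bm{\gamma}}^{(-k,-k')}$, provided $N \gg s_\gamma \log d$ as assumed. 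A short peeling (or continuity) argument is needed to confirm that the error actually remains inside the localization radius; this closes the loop.

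Combining the basic inequality with the RSC bound and the gradient bound yields the $\ell_2$ rate $\|\hat\delta\|_2 = O_p(\sqrt{s_\gamma \log(d)/N})$. For the second conclusion, Lipschitz continuity of $\phi$ with constant $1/4$ gives $[\hat\pi^{(-k,-k')}(\bm{X}) - \pi^*(\bm{X})]^2 \le (1/16)[\bm{X}^\top \hat\delta]^2$. Taking expectation and using $\mathbb{E}_{\bm{X}}[\bm{X}^\top \hat\delta]^2 \le \sigma_x^2 \|\hat\delta\|_2^2$ (from the sub-Gaussian norm bound in Assumption \ref{a6} via Lemma \ref{l1}(A3)) yields $\mathbb{E}_{\bm{X}}[\hat\pi^{(-k,-k')}(\bm{X}) - \pi^*(\bm{X})]^2 = O_p(s_\gamma \log(d)/N)$, completing the proof.
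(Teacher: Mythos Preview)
Your proposal is correct and follows the same high-level template as the paper: bound $\|\nabla \hat L(\bm{\gamma}^*)\|_\infty$ via Hoeffding plus union bound to justify $\lambda_\gamma \asymp \sqrt{\log(d)/N}$, establish a restricted-curvature bound for the logistic loss, combine via the basic inequality and cone condition, and then propagate to $\hat\pi - \pi^*$ by smoothness of $\phi$. The technical choices differ in two places. For the curvature step, the paper does not argue directly from $\phi'(\bm{X}^\top\bm{\gamma})\ge c_*$ and covariance concentration as you do; instead it invokes an off-the-shelf RSC lemma (Lemma~D.6 of \cite{6}) yielding a bound of the form $\varepsilon_{L_N}(\bm{\Delta},\bm{\gamma}^*)\ge \kappa_1\|\bm{\Delta}\|_2[\|\bm{\Delta}\|_2-\kappa_2\sqrt{\log(d)/|G_{-k,-k'}|}\,\|\bm{\Delta}\|_1]$ valid on $\|\bm{\Delta}\|_2\le 1$, and then closes with Theorem~9.19 and Corollary~9.20 of \cite{7}; this packages the localization/peeling you sketch into cited results. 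For the second claim, the paper does a second-order Taylor expansion of $t\mapsto[\phi(\bm{v}^*+t\bm{\Delta})-\phi(\bm{v}^*)]^2$ and bounds by $2\,\mathbb{E}_{\bm{X}}[\bm{X}^\top(\hat{\bm{\gamma}}^{(-k,-k')}-\bm{\gamma}^*)]^2$, whereas your one-line Lipschitz bound $|\phi(a)-\phi(b)|\le\tfrac14|a-b|$ is more elementary and gives the sharper constant. Either set of choices suffices; your version is more self-contained, the paper's more modular.
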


\begin{Lemma}\label{l4}
Let the assumptions of Theorem \ref{t1} hold. Define 
$
B:= \{\|\hat{{\bm{\gamma}}}^{(-k,-k')}-{\bm{\gamma}}^*\|_{2}\le 1,\forall k,k'\leq K,k\neq k'\}
$
and consider for any constant \(r>2\). Under the event $B$,
\([\mathbb{E}_{\bm{X}}|\hat{\pi}^{(-k,-k')}(\bm{X})|^{-r}]^{1/r}<C\) and 
\([\mathbb{E}_{\bm{X}}|1-\hat{\pi}^{(-k,-k')}(\bm{X})|^{-r}]^{1/r}<C\) with some constant $C>0$. Moreover,
\begin{align*}
\left\{\mathbb{E}_{\bm{X}}\left|[\hat{\pi}^{(-k,-k')}(\bm{X})]^{-1}-[{\pi^*}(\bm{X})]^{-1}\right|^{r}\right\}^{1/r}
&=O_p\left(\sqrt{\frac{s_{{\bm{\gamma}}}\log(d)}{N}}\right),\\
\left\{\mathbb{E}_{\bm{X}}\left|\left[1-\hat{\pi}^{(-k,-k')}(\bm{X})\right]^{-1}-\left[1-{\pi^*}(\bm{X})\right]^{-1}\right|^{r}\right\}^{1/r}
&=O_p\left(\sqrt{\frac{s_{{\bm{\gamma}}}\log(d)}{N}}\right).
\end{align*}
\end{Lemma}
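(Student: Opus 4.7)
}
The plan is to exploit the explicit logistic form $\hat\pi^{(-k,-k')}(\bm{X})=\phi(\bm{X}^\top\hat{\bm{\gamma}}^{(-k,-k')})$, which yields the convenient identities
$[\hat\pi^{(-k,-k')}(\bm{X})]^{-1}=1+\exp(-\bm{X}^\top\hat{\bm{\gamma}}^{(-k,-k')})$ and $[1-\hat\pi^{(-k,-k')}(\bm{X})]^{-1}=1+\exp(\bm{X}^\top\hat{\bm{\gamma}}^{(-k,-k')})$, and to reduce everything to controlling exponential moments of $\bm{X}^\top(\hat{\bm{\gamma}}^{(-k,-k')}-\bm{\gamma}^*)$. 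Throughout I condition on the fold used to estimate $\hat{\bm{\gamma}}^{(-k,-k')}$ so that, under Assumption~\ref{a6}, $\bm{X}^\top(\hat{\bm{\gamma}}^{(-k,-k')}-\bm{\gamma}^*)$ is sub-Gaussian with parameter $\sigma_x\|\hat{\bm{\gamma}}^{(-k,-k')}-\bm{\gamma}^*\|_2\le\sigma_x$ on the event $B$.

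First I would establish the boundedness of the reciprocal moments. Writing $\exp(-\bm{X}^\top\hat{\bm{\gamma}}^{(-k,-k')})=\exp(-\bm{X}^\top\bm{\gamma}^*)\cdot\exp(-\bm{X}^\top(\hat{\bm{\gamma}}^{(-k,-k')}-\bm{\gamma}^*))$ and noting from Assumption~\ref{a7} that $\pi^*(\bm{X})\ge c_0$ gives $\exp(-\bm{X}^\top\bm{\gamma}^*)\le (1-c_0)/c_0$ almost surely, the inequality $(1+u)^r\le 2^{r-1}(1+u^r)$ and Cauchy--Schwarz reduce $\mathbb E_{\bm X}\{[\hat\pi^{(-k,-k')}(\bm{X})]^{-r}\}$ to bounding $\mathbb E_{\bm X}\exp(-2r\bm{X}^\top(\hat{\bm{\gamma}}^{(-k,-k')}-\bm{\gamma}^*))$, which by sub-Gaussianity on $B$ is bounded by a constant depending only on $r$ and $\sigma_x$. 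The same argument with $+\bm{X}^\top\hat{\bm{\gamma}}^{(-k,-k')}$ in place of $-\bm{X}^\top\hat{\bm{\gamma}}^{(-k,-k')}$ gives the bound on $\mathbb E_{\bm X}\{[1-\hat\pi^{(-k,-k')}(\bm{X})]^{-r}\}$; here the overlap $\pi^*(\bm{X})\le 1-c_0$ is used to control $\exp(\bm{X}^\top\bm{\gamma}^*)$.

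Next I would derive the rate. The algebraic identity
\begin{equation*}
[\hat\pi^{(-k,-k')}(\bm{X})]^{-1}-[\pi^*(\bm{X})]^{-1}=\exp(-\bm{X}^\top\bm{\gamma}^*)\bigl[\exp(-\bm{X}^\top(\hat{\bm{\gamma}}^{(-k,-k')}-\bm{\gamma}^*))-1\bigr]
\end{equation*}
combined with the elementary bound $|e^{u}-1|\le |u|\,e^{|u|}$ gives
\begin{equation*}
\bigl|[\hat\pi^{(-k,-k')}(\bm{X})]^{-1}-[\pi^*(\bm{X})]^{-1}\bigr|\le \tfrac{1-c_0}{c_0}\,|\bm{X}^\top\Delta|\,\exp(|\bm{X}^\top\Delta|),
\end{equation*}
where $\Delta:=\hat{\bm{\gamma}}^{(-k,-k')}-\bm{\gamma}^*$. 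Applying Cauchy--Schwarz to the $r$-th moment and using Lemma~\ref{l1}(A3) together with the sub-Gaussianity of $\bm{X}^\top\Delta$ (which gives $\mathbb E_{\bm X}|\bm X^\top\Delta|^{2r}\lesssim \|\Delta\|_2^{2r}$), and the bounded exponential moment of $\exp(2r|\bm X^\top\Delta|)$ on $B$, yields $\{\mathbb E_{\bm X}|[\hat\pi^{(-k,-k')}]^{-1}-[\pi^*]^{-1}|^r\}^{1/r}\lesssim \|\Delta\|_2$. Lemma~\ref{l3} then converts this into the claimed $O_p(\sqrt{s_\gamma\log(d)/N})$ rate. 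The argument for $[1-\hat\pi^{(-k,-k')}]^{-1}-[1-\pi^*]^{-1}$ is symmetric after replacing $-\bm{X}^\top\bm{\gamma}^*$ by $+\bm{X}^\top\bm{\gamma}^*$.

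The main technical obstacle is making sure the sub-Gaussian exponential moment bounds hold uniformly in the realization of $\hat{\bm{\gamma}}^{(-k,-k')}$; this is exactly why the event $B=\{\|\hat{\bm{\gamma}}^{(-k,-k')}-\bm{\gamma}^*\|_2\le 1\}$ is imposed, so that $\|\Delta\|_2\le 1$ controls the sub-Gaussian parameter and all exponential moments of $\exp(c|\bm{X}^\top\Delta|)$ are bounded by constants depending only on $c$ and $\sigma_x$. Since Lemma~\ref{l3} implies $P(B)\to 1$, the $O_p$ statements follow after discarding the complementary event of vanishing probability.
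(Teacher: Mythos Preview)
Your proposal is correct and follows essentially the same route as the paper: exploit the logistic identity $[\hat\pi]^{-1}=1+\exp(-\bm X^\top\hat{\bm\gamma})$, use the overlap condition in Assumption~\ref{a7} to control $\exp(\pm\bm X^\top\bm\gamma^*)$, and then invoke sub-Gaussianity of $\bm X^\top\Delta$ on the event $B$ to bound exponential moments, finishing with Lemma~\ref{l3}. The only cosmetic differences are that the paper uses Minkowski's inequality in place of your $(1+u)^r\le 2^{r-1}(1+u^r)$ and a first-order Taylor bound $|e^{-u}-1|\le |u|(1+e^{-u})$ followed by Minkowski and H\"older instead of your $|e^u-1|\le |u|e^{|u|}$ with Cauchy--Schwarz; these lead to the same conclusions.
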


\begin{Lemma}\label{l5}
Let the assumptions of Theorem \ref{t1} hold. Then 
\(\|\epsilon\|_{\psi_2}\le\sigma_\epsilon\) for some constant \(\sigma_\epsilon>0\), 
and \(\|{\bm{\beta}}^*\|_2=O(1)\).

\end{Lemma}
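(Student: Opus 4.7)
}

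The plan is to handle the two claims in sequence, treating the bound $\|\bm{\beta}^*\|_2=O(1)$ (together with the analogous bound on $\|\bm{\alpha}_a^*\|_2$) as a prerequisite for the sub-Gaussian norm bound on $\epsilon$.

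First, I would derive $\|\bm{\beta}^*\|_2 = O(1)$ and $\|\bm{\alpha}_a^*\|_2 = O(1)$ from the normal equations associated with their population-level definitions. Writing $\bm{\beta}^* = [E(\bm{W}\bm{W}^\top)]^{-1} E[\bm{W}\tau(\bm{X})]$, I would lower-bound $\lambda_{\min}(E[\bm{W}\bm{W}^\top])$ by $\kappa_l$ using Cauchy interlacing, since $\bm{W}$ is a coordinate sub-vector of $\bm{X}$ and so $E[\bm{W}\bm{W}^\top]$ is a principal submatrix of $E[\bm{X}\bm{X}^\top]$, which has smallest eigenvalue at least $\kappa_l$ by Assumption~\ref{a6}. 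For the numerator, I would apply Cauchy--Schwarz together with the $\psi_2$ bound on $\bm{W}^\top\bm{v}$ and $E[\tau(\bm{X})^2] \le 2 E[Y(1)^2] + 2 E[Y(0)^2] \le 4C$ (via Jensen and Assumption~\ref{a5}). The bound on $\|\bm{\alpha}_a^*\|_2$ is analogous, with $\bm{\alpha}_a^* = [E(\mathbbm{1}_{A=a}\bm{X}\bm{X}^\top)]^{-1}E[\mathbbm{1}_{A=a}Y\bm{X}]$; here the lower bound on $\lambda_{\min}(E[\mathbbm{1}_{A=a}\bm{X}\bm{X}^\top])\ge c_0\kappa_l$ follows from the overlap condition in Assumption~\ref{a1}(c) combined with Assumption~\ref{a6}, and the numerator bound follows from Cauchy--Schwarz together with $E[Y^2]\le E[Y(1)^2]+E[Y(0)^2]\le 2C$.

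Next, I would establish $\|\epsilon\|_{\psi_2}\le\sigma_\epsilon$ by decomposing
\[
\epsilon \;=\; \frac{A-\pi^*(\bm{X})}{\pi^*(\bm{X})[1-\pi^*(\bm{X})]}\bigl[\zeta_1+\zeta_0\bigr] \;+\; \bm{X}^\top(\bm{\alpha}_1^*-\bm{\alpha}_0^*) \;-\; \bm{W}^\top\bm{\beta}^*,
\]
where I used $Y-\mu_A^*(\bm{X}) = \zeta_1+\zeta_0$ with $\zeta_a$ as in Assumption~\ref{a5}. Under Assumption~\ref{a7}, the weight $[A-\pi^*(\bm{X})]/\{\pi^*(\bm{X})[1-\pi^*(\bm{X})]\}$ is uniformly bounded by $1/[c_0(1-c_0)]$, so by Lemma~\ref{l1}(A1)--(A2) the first term has $\psi_2$-norm controlled by $2\sigma_\zeta/[c_0(1-c_0)]$. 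For the remaining two terms, Assumption~\ref{a6} yields $\|\bm{X}^\top(\bm{\alpha}_1^*-\bm{\alpha}_0^*)\|_{\psi_2} \le \sigma_x(\|\bm{\alpha}_1^*\|_2+\|\bm{\alpha}_0^*\|_2)$ and $\|\bm{W}^\top\bm{\beta}^*\|_{\psi_2}\le \sigma_x\|\bm{\beta}^*\|_2$ (after zero-padding $\bm{\beta}^*$ to dimension $d$), both of which are $O(1)$ from the first step. Combining via the triangle inequality in Lemma~\ref{l1}(A1) gives the desired uniform bound $\sigma_\epsilon$.

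I expect no serious obstacle: the only subtlety is ensuring the relevant second-moment matrices are well-conditioned, which reduces to the Cauchy interlacing argument for $E[\bm{W}\bm{W}^\top]$ and the overlap-based argument for $E[\mathbbm{1}_{A=a}\bm{X}\bm{X}^\top]$. Everything else is a routine application of Cauchy--Schwarz together with the sub-Gaussian manipulations in Lemma~\ref{l1}.
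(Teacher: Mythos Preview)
Your proposal is correct and follows essentially the same route as the paper. The only cosmetic differences are that the paper bounds $\|\bm{\beta}^*\|_2$ and $\|\bm{\alpha}_a^*\|_2$ via the projection inequality $E[(\bm{W}^\top\bm{\beta}^*)^2]\le E[\tau(\bm{X})^2]$ (respectively $E[A(\bm{X}^\top\bm{\alpha}_1^*)^2]\le E[AY^2]$) rather than the explicit normal-equation inverse, and the paper first establishes $\|\varphi^*(Z)\|_{\psi_2}=O(1)$ and then writes $\epsilon=\varphi^*(Z)-\bm{W}^\top\bm{\beta}^*$, whereas you decompose $\epsilon$ directly; both variants rely on the same sub-Gaussian calculus and eigenvalue lower bounds.
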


\begin{Lemma}\label{l6}
Let the assumptions of Theorem \ref{t1} hold. Then
\[
\mathbb{E}_{\bm{X}}\left|\hat{\varphi}^{(-k)}(Z)-\varphi^*(Z)\right|^4 = o_p(1),
\quad
\hat\tau_{\mathrm{para}}-\tau = o_p(1).
\]
\end{Lemma}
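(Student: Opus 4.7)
The plan is to establish the two displayed conclusions separately, exploiting the convergence rates from Lemmas~\ref{l2}--\ref{l4}, the moment estimates in Lemma~\ref{l1}, and the sub-Gaussian structure in Assumption~\ref{a6}.

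For the fourth-moment bound, convexity of $|\cdot|^4$ applied to the definition $\hat\varphi^{(-k)} = (K-1)^{-1}\sum_{k'\neq k}\hat\varphi^{(-k,-k')}$ reduces the problem to showing $\mathbb{E}_{\bm{X}}|\hat\varphi^{(-k,-k')}(Z) - \varphi^*(Z)|^4 = o_p(1)$ for a single $(k,k')$ pair. I would then use the splitting $\frac{A-\pi}{\pi(1-\pi)}(Y-\mu_A^*) = \frac{A(Y-\mu_1^*)}{\pi} - \frac{(1-A)(Y-\mu_0^*)}{1-\pi}$ together with the algebraic identity
\[
\frac{A(Y-\hat\mu_1)}{\hat\pi} - \frac{A(Y-\mu_1^*)}{\pi^*} \;=\; \frac{\zeta_1(\pi^*-\hat\pi)}{\hat\pi\,\pi^*} - \frac{A(\hat\mu_1-\mu_1^*)}{\hat\pi}
\]
and its $a=0$ analogue, where $\zeta_a=\mathbbm{1}_{A=a}(Y-\mu_a^*)$. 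This writes $\hat\varphi^{(-k,-k')} - \varphi^*$ as a finite sum of products of a nuisance error ($\hat\mu_a-\mu_a^*$ or $\hat\pi-\pi^*$) with an inverse-propensity weight, possibly multiplied by $\zeta_a$. Two successive applications of the Cauchy--Schwarz inequality then bound the fourth moment of each product by a product of eighth moments. The eighth moments of $\zeta_a$ are controlled by Assumption~\ref{a5} through Lemma~\ref{l1}(A3); those of $1/\hat\pi$ and $1/(1-\hat\pi)$ are controlled, on the high-probability event $B$ of Lemma~\ref{l4}, by that lemma; and the eighth moments of the nuisance errors satisfy $\mathbb{E}_{\bm{X}}|\bm{X}^\top(\hat{\bm{\alpha}}_a-\bm{\alpha}_a^*)|^8 = O_p(s_{\alpha}^4\log^4(d)/n^4)$ and $\mathbb{E}_{\bm{X}}|\hat\pi-\pi^*|^8 = O_p(s_{\gamma}^4\log^4(d)/N^4)$, both $o_p(1)$ under Theorem~\ref{t1}'s sparsity conditions, obtained by combining Lemmas~\ref{l2}--\ref{l3} with the sub-Gaussian bound $\|\bm{X}^\top\bm{v}\|_{\psi_2}\le\sigma_x\|\bm{v}\|_2$ and the Lipschitzness of $\phi$.

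For the consistency of $\hat\tau_{\mathrm{para}}$, I note that under the assumptions of Theorem~\ref{t1} the identification $\tau = \mathbb{E}[\varphi^*(Z)]$ holds by the doubly robust property of $\varphi^*$. Writing
\begin{align*}
\hat\tau_{\mathrm{para}} - \tau
&= \Bigl\{\tfrac{1}{N}\sum_{k=1}^{K}\sum_{i\in G_k}\bm{W}_i^\top\hat{\bm{\beta}}^{(-k)} - \mathbb{E}[\bm{W}^\top\bm{\beta}^*]\Bigr\}
 - \Bigl\{\tfrac{1}{n}\sum_{k=1}^{K}\sum_{i\in I_k}\bm{W}_i^\top\hat{\bm{\beta}}^{(-k)} - \mathbb{E}[\bm{W}^\top\bm{\beta}^*]\Bigr\}\\
&\quad+ \Bigl\{\tfrac{1}{n}\sum_{k=1}^{K}\sum_{i\in I_k}[\hat\varphi^{(-k)}(Z_i) - \varphi^*(Z_i)]\Bigr\}
 + \Bigl\{\tfrac{1}{n}\sum_{k=1}^{K}\sum_{i\in I_k}\varphi^*(Z_i) - \mathbb{E}[\varphi^*(Z)]\Bigr\},
\end{align*}
each of the first two brackets splits, by adding and subtracting $\bm{W}_i^\top\bm{\beta}^*$, into an empirical centering of $\bm{W}^\top\bm{\beta}^*$ that is $O_p(N^{-1/2})$ or $O_p(n^{-1/2})$ by the weak law, plus a conditional-mean bias $\mathbb{E}(\bm{W})^\top(\hat{\bm{\beta}}^{(-k)} - \bm{\beta}^*)$ and a conditional-variance remainder whose standard deviation is $O_p(N^{-1/2})\|\hat{\bm{\beta}}^{(-k)}-\bm{\beta}^*\|_2$, both of which are $o_p(1)$ by Theorem~\ref{t1}. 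The third bracket is $o_p(1)$ by Markov's inequality applied to the $L^1$-bound $(\mathbb{E}_{\bm{X}}|\hat\varphi^{(-k)}-\varphi^*|^4)^{1/4}$ from the first claim, and the last bracket is $O_p(n^{-1/2})$ by the weak law applied to the bounded-moment variable $\varphi^*(Z)$.

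The main obstacle is the fourth-moment calculation: since Assumption~\ref{a5} only provides sub-Gaussian (not bounded) outcomes, and Assumption~\ref{a7} bounds only $\pi^*$ away from $0$ and $1$, controlling the denominator $\hat\pi(1-\hat\pi)$ in $L^8$ requires restricting to the event $B$ of Lemma~\ref{l4} and carefully balancing the $\psi_2$-norms of the nuisance errors against the inverse-propensity moments via H\"older's inequality; the remaining steps are routine.
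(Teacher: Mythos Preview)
Your proposal is correct and follows essentially the same approach as the paper. For the fourth-moment bound you use the identical algebraic splitting of $\hat\varphi^{(-k,-k')}-\varphi^*$ followed by H\"older/Cauchy--Schwarz to pass to eighth moments controlled via Lemmas~\ref{l1}, \ref{l2}, and \ref{l4}; for $\hat\tau_{\mathrm{para}}-\tau$ the paper works fold-by-fold and centers using $\bm{D}_i=\bm{W}_i-\mathbb{E}(\bm{W})$ directly rather than adding and subtracting $\bm{W}_i^\top\bm{\beta}^*$, but this is only a cosmetic reorganization of the same four terms you identify.
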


\begin{Lemma}\label{l7}
Let the assumptions of Theorem \ref{t1} hold. Then, for any constant $l>0$,
\begin{align*}
\frac{1}{\tilde n}\sum_{i\in I_k}\hat{{\bm{\beta}}}^{(-k)^\top }{\bm{D}_i}
\left[\hat{\varphi}^{(-k)}(Z_i)-\varphi^*(Z_i)\right]
&= O_p(R_n),\\
\frac{1}{\tilde n}\sum_{i\in I_k}\left({\bm{\beta}}^{*^\top}{\bm{D}_i}\right)^l
\left[\hat{\varphi}^{(-k)}(Z_i)-\varphi^*(Z_i)\right]
&= O_p(R_n),
\end{align*}
where 
$$R_n
:= \sqrt{\frac{s_{{\bm{\alpha}}}s_{{\bm{\gamma}}}\,\log^2(d)}{nN}}
 + \sqrt{\frac{s_{{\bm{\gamma}}}\,\log(d)}{N}}\,
 \left(\mathbbm{1}_{\mu_1^*(\cdot)\neq \mu_1(\cdot)}+\mathbbm{1}_{\mu_0^*(\cdot)\neq \mu_0(\cdot)}\right)
 + \sqrt{\frac{s_{{\bm{\alpha}}}\,\log(d)}{n}}\,
 \mathbbm{1}_{\pi^*(\cdot)\neq\pi(\cdot)}.$$
\end{Lemma}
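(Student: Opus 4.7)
The plan is to use cross-fitting together with the double-robust structure of $\hat\varphi^{(-k,-k')}$. Let $\mathcal{F}_{-k}$ denote the $\sigma$-field generated by all samples outside $I_k\cup J_k$; then $\hat\pi^{(-k,-k')}$, $\hat\mu_a^{(-k,-k')}$ and $\hat\beta^{(-k)}$ are all $\mathcal{F}_{-k}$-measurable while $\{Z_i\}_{i\in I_k}$ are i.i.d.\ copies of $Z$ given $\mathcal{F}_{-k}$. Writing $\hat\pi=\hat\pi^{(-k,-k')}$, $\hat\mu_a=\hat\mu_a^{(-k,-k')}$, and $V_i:=\hat{\bm{\beta}}^{(-k)\top}\bm{D}_i$, I would begin from the decomposition
\[
 \hat\varphi^{(-k,-k')}(Z)-\varphi^*(Z)=T_1+T_2+T_3,
\]
with $T_1:=\Delta w\,(Y-\mu_A^*)$, $T_2:=-\frac{A-\hat\pi}{\hat\pi(1-\hat\pi)}(\hat\mu_A-\mu_A^*)$, $T_3:=(\hat\mu_1-\mu_1^*)-(\hat\mu_0-\mu_0^*)$, and $|\Delta w|\le C|\hat\pi-\pi^*|$ by Assumption~\ref{a7} and Lemma~\ref{l4}. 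Averaging over $k'\ne k$ preserves this structure. The target average then splits as the conditional mean $E[V(\hat\varphi^{(-k)}-\varphi^*)\mid\mathcal{F}_{-k}]$ plus a mean-zero fluctuation, which I would handle separately.

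For the bias, a standard DR calculation gives $E[\hat\varphi^{(-k,-k')}-\varphi^*\mid\bm{X},\mathcal{F}_{-k}]=(\hat\mu_1-\mu_1)(\hat\pi-\pi)/\hat\pi+(\hat\mu_0-\mu_0)(\hat\pi-\pi)/(1-\hat\pi)$, because $E[\varphi^*\mid\bm{X}]=\tau(\bm{X})$ holds under either case (a) or (b) of Theorem~\ref{t1}. Decomposing $\hat\mu_a-\mu_a=(\hat\mu_a-\mu_a^*)+(\mu_a^*-\mu_a)$ and $\hat\pi-\pi=(\hat\pi-\pi^*)+(\pi^*-\pi)$, applying Cauchy-Schwarz on $E[V\cdot(\cdot)]$, and inserting the $L^2$ rates from Lemmas~\ref{l2}-\ref{l3} together with $\|V\|_{L^4}=O_p(1)$ (from Lemma~\ref{l5} and Theorem~\ref{t1}), the bias splits into three contributions matching $R_n$: the product-rate piece $\sqrt{s_\alpha s_\gamma\log^2(d)/(nN)}$; a piece $\sqrt{s_\gamma\log(d)/N}$ that activates only when some $\mu_a^*\ne\mu_a$ (using $\|\mu_a^*-\mu_a\|_{L^2}=O(1)$); and a piece $\sqrt{s_\alpha\log(d)/n}$ that activates only when $\pi^*\ne\pi$.

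For the mean-zero fluctuation, Chebyshev conditional on $\mathcal{F}_{-k}$ bounds its conditional standard deviation by $\tilde n^{-1/2}\{E[V^2(\hat\varphi^{(-k)}-\varphi^*)^2\mid\mathcal{F}_{-k}]\}^{1/2}$. I would bound $(\hat\varphi^{(-k)}-\varphi^*)^2\le 3\sum_j T_j^2$: for $T_2,T_3$ use $|T_j|\lesssim|\hat\mu_a-\mu_a^*|$ with $\hat\mu_a-\mu_a^*$ sub-Gaussian and Cauchy-Schwarz against $V$ to obtain $E[V^2 T_j^2\mid\mathcal{F}_{-k}]=O_p(s_\alpha\log(d)/n)$; for $T_1$ combine $\|\Delta w\|_{L^2}^2=O_p(s_\gamma\log(d)/N)$ with the pointwise bound $|\Delta w|\le C$ by moment interpolation (H\"older with large exponent on $V^2$ and exponent near $1$ on $(\Delta w)^2$) to obtain $E[V^2 T_1^2\mid\mathcal{F}_{-k}]=O_p(s_\gamma\log(d)/N)$ up to polylogarithmic factors. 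Under the sparsity conditions of Theorem~\ref{t1}, the resulting fluctuation is dominated by $R_n$. The second claim, in which $V_i=({\bm{\beta}}^{*\top}\bm{D}_i)^l$ is deterministic given $\mathcal{F}_{-k}$ and has bounded moments of all orders by Lemma~\ref{l5}, follows by the same argument.

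The main obstacle is the $T_1$ variance bound. A naive Cauchy-Schwarz yields only $E[V^2 T_1^2]\le\sqrt{E[V^4]\,E[(\Delta w)^4]}=O_p(\sqrt{s_\gamma\log(d)/N})$, which would make the fluctuation dominate $R_n$ in doubly-correct regimes with very large $N$. Recovering the correct order requires exploiting that $|\Delta w|$ is bounded pointwise while its $L^2$ norm is small, interpolating between the $L^\infty$ bound on $(\Delta w)^2$ and its $L^2$ rate from Lemma~\ref{l3}, and combining this with the all-order moment bounds on the sub-Gaussian variable $V$. Correctly tracking which pieces of $R_n$ activate in each case (Theorem~\ref{t1}(a) vs (b)) is the other bookkeeping subtlety.
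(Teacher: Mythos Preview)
Your plan is essentially the paper's argument, reorganized. The paper decomposes $\hat\varphi^{(-k,-k')}-\varphi^*$ into three pieces $B_1,B_2,B_3$ (matching your $T_1$ and $T_2+T_3$ after regrouping) and treats each separately: when $\mu_a^*=\mu_a$ the piece $B_1$ is conditionally mean-zero and Chebyshev gives $o_p(n^{-1/2})$; otherwise a direct Markov bound gives the $\sqrt{s_\gamma\log(d)/N}$ contribution. Symmetrically, $B_2$ is mean-zero when $\pi^*=\pi$ and otherwise contributes $\sqrt{s_\alpha\log(d)/n}$, and $B_3$ always gives the product rate. Your bias/fluctuation split is equivalent bookkeeping.

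The one place you overcomplicate things is the ``main obstacle'' for $T_1$. You invoke only Lemma~\ref{l3} for the $L^2$ rate of $\hat\pi-\pi^*$ and then propose $L^\infty$--$L^2$ interpolation. But in this parametric setting Lemma~\ref{l4} already gives $\{\mathbb{E}_{\bm X}|1/\hat\pi-1/\pi^*|^r\}^{1/r}=O_p(\sqrt{s_\gamma\log(d)/N})$ for every fixed $r>2$. The paper simply applies H\"older with exponents $(2,4,4)$ on the square to obtain
\[
\mathbb{E}_{\bm X}\bigl[V^2\,(Y-\mu_A^*)^2\,(\Delta w)^2\bigr]
\le \{\mathbb{E}_{\bm X}V^4\}^{1/2}\{\mathbb{E}_{\bm X}|Y-\mu_A^*|^8\}^{1/4}\{\mathbb{E}_{\bm X}|\Delta w|^8\}^{1/4}
=O_p\!\left(\frac{s_\gamma\log(d)}{N}\right),
\]
with no interpolation and no polylogarithmic loss. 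Your ``naive Cauchy--Schwarz'' calculation $\sqrt{\mathbb{E}[(\Delta w)^4]}=O_p(\sqrt{s_\gamma\log(d)/N})$ is already too pessimistic here: by Lemma~\ref{l4}, $\{\mathbb{E}[(\Delta w)^4]\}^{1/4}=O_p(\sqrt{s_\gamma\log(d)/N})$, hence $\sqrt{\mathbb{E}[(\Delta w)^4]}=O_p(s_\gamma\log(d)/N)$. So the obstacle dissolves once you use Lemma~\ref{l4} rather than Lemma~\ref{l3}.
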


\begin{Lemma}\label{l8}
Let the assumptions of Theorem \ref{t1} hold. For any nonnegative constants \(b,v,l,q,r\) with \(q+r>0\) and any \(a\in\{0,1,2\}\),
\begin{align*}
 &\frac{1}{\tilde n}
\sum_{i\in I_k}
\hat{\epsilon_i}^{(k)^a}\left(\hat{{\bm{\beta}}}^{(-k)^\top }\hat {\bm{D}}_i^{(k)}\right)^b\left({\bm{\beta}}^{*^\top}{\bm{D}_i}\right)^v
\left(\hat{{\bm{\beta}}}^{(-k)^\top }{\bm{D}_i}\right)^l \\
&\qquad\cdot\left[\hat{{\bm{\beta}}}^{(-k)^\top }\left(\mathbb{E}(\bm{W})-|G_k|^{-1}\sum_{i\in G_k}{\bm{W}_i}\right)\right]^q
\left[\left(\hat{{\bm{\beta}}}^{(-k)}-{\bm{\beta}}^*\right)^\top {\bm{D}_i}\right]^r
= o_p(1).
\end{align*}
\end{Lemma}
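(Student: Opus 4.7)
The plan is to isolate the two factors indexed by $q$ and $r$, which are the only sources of vanishing rates, and to bound all remaining factors as $O_p(1)$ via conditional moment inequalities. Since $q+r>0$, at least one of these two factors contributes an $o_p(1)$ term, which drives the whole sample average to zero.

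First, the factor $[\hat{\bm{\beta}}^{(-k)\top}(\mathbb{E}(\bm{W}) - |G_k|^{-1}\sum_{j\in G_k}\bm{W}_j)]^q$ does not depend on $i$ and factors out of the sum. By the cross-fitting design with $K\ge 3$, $\hat{\bm{\beta}}^{(-k)}$ is built entirely from samples outside $G_k=I_k\cup J_k$, and is therefore independent of $G_k$; conditionally on $\hat{\bm{\beta}}^{(-k)}$, the inner sample mean has variance $O(1/|G_k|)=O(1/N)$ under Assumption~\ref{a6}. Combined with $\|\hat{\bm{\beta}}^{(-k)}\|_2 = O_p(1)$ (from Lemma~\ref{l5} and Theorem~\ref{t1}), Chebyshev yields that this factor is $O_p(N^{-q/2})$. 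Second, for $[(\hat{\bm{\beta}}^{(-k)} - \bm{\beta}^*)^\top \bm{D}_i]^r$, conditional on $\hat{\bm{\beta}}^{(-k)}$ each summand is sub-Gaussian with Orlicz norm at most $\sigma_x\|\hat{\bm{\beta}}^{(-k)} - \bm{\beta}^*\|_2$ by Assumption~\ref{a6}, so by Lemma~\ref{l1} and Markov,
\[
\tilde n^{-1}\sum_{i\in I_k}\bigl|(\hat{\bm{\beta}}^{(-k)} - \bm{\beta}^*)^\top \bm{D}_i\bigr|^{2r} = O_p\bigl(\|\hat{\bm{\beta}}^{(-k)} - \bm{\beta}^*\|_2^{2r}\bigr) = o_p(1),
\]
using $\|\hat{\bm{\beta}}^{(-k)} - \bm{\beta}^*\|_2 = o_p(1)$ from Theorem~\ref{t1}.

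I would then apply Cauchy-Schwarz in $i$ to separate the $r$-factor from the remaining factors, reducing the task to showing
\[
\tilde n^{-1}\sum_{i\in I_k}\bigl(\hat\epsilon_i^{(k)}\bigr)^{2a}\bigl(\hat{\bm{\beta}}^{(-k)\top}\hat{\bm{D}}_i^{(k)}\bigr)^{2b}\bigl(\bm{\beta}^{*\top}\bm{D}_i\bigr)^{2v}\bigl(\hat{\bm{\beta}}^{(-k)\top}\bm{D}_i\bigr)^{2l} = O_p(1).
\]
Each inner-product factor involving $\bm{D}_i$ or $\hat{\bm{D}}_i^{(k)}$ is sub-Gaussian conditional on the nuisance estimates, since $\|\bm{\beta}^*\|_2=O(1)$ by Lemma~\ref{l5}, $\|\hat{\bm{\beta}}^{(-k)}\|_2=O_p(1)$, and $\bm{W}_i$ is sub-Gaussian by Assumption~\ref{a6}; the centered version $\hat{\bm{D}}_i^{(k)}$ differs from $\bm{D}_i$ only by an $O_p(N^{-1/2})$ mean shift. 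Lemma~\ref{l6} together with Assumptions~\ref{a5}-\ref{a7} controls the conditional moments of $\hat\varphi^{(-k)}(Z_i)$, and since $\hat\tau_\mathrm{para}=O_p(1)$ (Lemma~\ref{l6}) and $\hat{\bm{D}}_i^{(k)\top}\hat{\bm{\beta}}^{(-k)}$ is conditionally sub-Gaussian, the composite $\hat\epsilon_i^{(k)}$ has uniformly bounded conditional $L^p$ norms for every fixed $p$. A generalized Hölder inequality on the joint summand, followed by Markov, then delivers the desired $O_p(1)$ bound.

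Combining the three pieces yields $O_p(N^{-q/2})\cdot O_p(\|\hat{\bm{\beta}}^{(-k)} - \bm{\beta}^*\|_2^r)\cdot O_p(1)=o_p(1)$, with the convention that a factor equals $1$ whenever the corresponding exponent is $0$; the hypothesis $q+r>0$ guarantees that at least one genuine vanishing rate is present. The main technical obstacle I anticipate is the uniform moment control of $\hat\epsilon_i^{(k)}$ up to the relevant power (at most $4$ after the Cauchy-Schwarz step when $a=2$): although Lemma~\ref{l6} provides $L^4$-consistency of $\hat\varphi^{(-k)}$ to $\varphi^*$, one must carefully assemble this with the $O_p(1)$ bounds on $\hat\tau_\mathrm{para}$ and on $\hat{\bm{D}}_i^{(k)\top}\hat{\bm{\beta}}^{(-k)}$ to conclude uniform (in-probability) boundedness of the relevant conditional moments of $\hat\epsilon_i^{(k)}$ itself.
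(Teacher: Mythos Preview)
Your proposal is correct and follows essentially the same route as the paper: isolate the $q$-factor as $O_p(N^{-1/2})$ per power via Chebyshev (using independence of $\hat{\bm\beta}^{(-k)}$ from $G_k$), isolate the $r$-factor as $o_p(1)$ via $\|\hat{\bm\beta}^{(-k)}-\bm\beta^*\|_2=o_p(1)$ and sub-Gaussianity of $\bm D$, and control all remaining factors as $O_p(1)$ through moment bounds on $\hat\varphi^{(-k)}$, $\hat\tau_{\mathrm{para}}$, and the inner products. The only cosmetic difference is that the paper applies a single six-way H\"older with exponents $2,4,8,16,32,32$ directly to the original sum (so that $\hat\epsilon_i^{(k)}$ carries exponent exactly $2a\le 4$), whereas your Cauchy--Schwarz--then--H\"older scheme will in general push the $\hat\epsilon_i^{(k)}$ exponent strictly above $4$; this is harmless because the paper in fact establishes $\bigl[\mathbb{E}_{\bm X}|\hat\varphi^{(-k)}(Z)|^{4+c}\bigr]^{1/(4+c)}=O_p(1)$ for every fixed $c>0$ (via Lemma~\ref{l4} and sub-Gaussianity of $Y-\hat\mu_a^{(-k,-k')}(\bm X)$), so all finite moments of $\hat\epsilon_i^{(k)}$ are available, not just the fourth.
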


\begin{Lemma}\label{l9}
Let the assumptions of Theorem \ref{t3} hold. Then
\begin{align*}
 \hat{w}^{(k)}_U = w_U^* + o_p\left(\frac{1}{N}\right),\quad\hat{w}^{(k)}_L = w_L^* + o_p\left(\frac{m}{nN}\right).
\end{align*}
\end{Lemma}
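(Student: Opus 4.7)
The plan is to reduce the statement on $(\hat\omega_L^{(k)},\hat\omega_U^{(k)})$ to a consistency claim for the single ratio $\hat C^{(k)}/\hat B^{(k)}$, and then to establish this via separate consistency of $\hat B^{(k)}$ and $\hat C^{(k)}$, leveraging Lemma \ref{l8} to absorb every cross-term generated by plug-in estimation. A direct algebraic manipulation of \eqref{def:omega} and \eqref{def:omegahat} yields
\begin{align*}
\hat\omega_U^{(k)}-\omega_U^* &= \frac{1}{N}\left(\frac{\hat C^{(k)}}{\hat B^{(k)}}-\frac{C}{B}\right),\\
\hat\omega_L^{(k)}-\omega_L^* &= \frac{m}{nN}\left(\frac{C}{B}-\frac{\hat C^{(k)}}{\hat B^{(k)}}\right),
\end{align*}
so, since $B>c>0$ by assumption, the continuous mapping theorem reduces both claims to showing $\hat B^{(k)}-B=o_p(1)$ and $\hat C^{(k)}-C=o_p(1)$.

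For $\hat B^{(k)}$, I would decompose
\[
\hat{\bm{\beta}}^{(-k)^\top}\hat{\bm{D}}_i^{(k)} = \bm{\beta}^{*^\top}\bm{D}_i - (\hat{\bm{\beta}}^{(-k)}-\bm{\beta}^*)^\top\bm{D}_i + \hat{\bm{\beta}}^{(-k)^\top}\bigl(\mathbb{E}(\bm{W})-|G_k|^{-1}\textstyle\sum_{j\in G_k}\bm{W}_j\bigr),
\]
and expand the fourth power and squared average appearing in $\hat B^{(k)}=|G_k|^{-1}\sum_{i\in G_k}(\hat{\bm{\beta}}^{(-k)^\top}\hat{\bm{D}}_i^{(k)})^4-(\hat Q^{(k)})^2$ using the multinomial formula. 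The leading pair of terms is the sample analog of $\mathbb{E}[(\bm{\beta}^{*^\top}\bm{D})^4]-\{\mathbb{E}[(\bm{\beta}^{*^\top}\bm{D})^2]\}^2 = B$, which converges by the law of large numbers using the sub-Gaussian moment control of Assumption \ref{a6} together with $\|\bm{\beta}^*\|_2=O(1)$ from Lemma \ref{l5}. Every remaining monomial matches the template of Lemma \ref{l8} with $a=0$ and $q+r\ge 1$, and is therefore $o_p(1)$.

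For $\hat C^{(k)}$, starting from the identity $\mathbb{E}(\bm{W})^\top\bm{\beta}^* = \tau$, which follows from the first-order condition for $\bm{\beta}^*$ in \eqref{def:beta} and the intercept coordinate of $\bm{W}$, the residuals admit the expansion
\[
\hat\epsilon_i^{(k)} = \epsilon_i + [\hat\varphi^{(-k)}(Z_i)-\varphi^*(Z_i)] + (\tau-\hat\tau_{\mathrm{para}}) - (\hat{\bm{\beta}}^{(-k)}-\bm{\beta}^*)^\top\bm{D}_i - \hat{\bm{\beta}}^{(-k)^\top}\bigl(\mathbb{E}(\bm{W})-|G_k|^{-1}\textstyle\sum_{j\in G_k}\bm{W}_j\bigr).
\]
Substituting this into $\hat C^{(k)}=|I_k|^{-1}\sum_{i\in I_k}(2\hat\epsilon_i^{(k)}\hat{\bm{\beta}}^{(-k)^\top}\hat{\bm{D}}_i^{(k)})[(\hat{\bm{\beta}}^{(-k)^\top}\hat{\bm{D}}_i^{(k)})^2-\hat Q^{(k)}]$ and expanding each $\hat{\bm{\beta}}^{(-k)^\top}\hat{\bm{D}}_i^{(k)}$ factor as before produces a main term whose expectation equals $\mathbb{E}\{2\epsilon\bm{\beta}^{*^\top}\bm{D}[(\bm{\beta}^{*^\top}\bm{D})^2-\mathbb{E}((\bm{\beta}^{*^\top}\bm{D})^2)]\}=C$, with the sample average converging by LLN. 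Among the error terms, monomials containing the factor $\hat\varphi^{(-k)}(Z_i)-\varphi^*(Z_i)$ are handled by Lemma \ref{l7}; the scalar factor $\tau-\hat\tau_{\mathrm{para}}=o_p(1)$ follows from Lemma \ref{l6}; and all remaining monomials fall inside the multi-index range $(a,b,v,l,q,r)$ with $a\in\{1,2\}$ and $q+r\ge 1$ covered by Lemma \ref{l8}, hence are $o_p(1)$.

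The only non-routine aspect is the combinatorial bookkeeping: each of the two expansions above generates a large but finite list of monomials in the plug-in errors, and one must verify that every resulting exponent tuple lies inside the admissible range of Lemma \ref{l8}, in particular that at least one plug-in error factor is present so that $q+r\ge 1$. Once this verification is complete, combining $\hat B^{(k)}=B+o_p(1)$ and $\hat C^{(k)}=C+o_p(1)$ with the opening algebraic identities delivers the lemma.
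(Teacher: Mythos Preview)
Your proposal is correct and follows essentially the same route as the paper: reduce to consistency of $\hat B^{(k)}$ and $\hat C^{(k)}$, expand $\hat{\bm{\beta}}^{(-k)^\top}\hat{\bm D}_i^{(k)}$ and $\hat\epsilon_i^{(k)}$ into their population counterparts plus plug-in errors, pass the leading terms through the law of large numbers, and dispatch every cross-term via Lemma~\ref{l8} (together with Lemmas~\ref{l6}--\ref{l7} for the $\hat\varphi^{(-k)}-\varphi^*$ and $\tau-\hat\tau_{\mathrm{para}}$ pieces). One trivial slip: in your first displayed decomposition the $(\hat{\bm\beta}^{(-k)}-\bm\beta^*)^\top\bm D_i$ term should carry a plus sign, but this does not affect the argument.
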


\begin{Lemma}\label{l10}
Let the assumptions of Theorem \ref{t3} hold. Then
\begin{align*}
 \hat{\sigma}_{\mathrm{OW}}^2 = \sigma_{\mathrm{OW}}^2 + o_p(1).
\end{align*}
\end{Lemma}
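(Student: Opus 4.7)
The strategy is to show each of the three fold-level estimators $\hat{A}^{(k)}, \hat{B}^{(k)}, \hat{C}^{(k)}$ is consistent for its population target $A, B, C$, and then combine them via the continuous mapping theorem, using the assumption $B>c$ to handle the ratio $(\hat{C}^{(k)})^2/\hat{B}^{(k)}$.

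First I would show, for each fold $k$, that $\hat{A}^{(k)}=A+o_p(1)$, $\hat{B}^{(k)}=B+o_p(1)$, and $\hat{C}^{(k)}=C+o_p(1)$. The template is the same in all three cases. Take $\hat{A}^{(k)}$: writing $\bm{\delta}^{(k)}:=\mathbb{E}(\bm{W})-|G_k|^{-1}\sum_{j\in G_k}\bm{W}_j$, the expansion
\begin{align*}
(\hat{\bm{\beta}}^{(-k)^\top}\hat{\bm{D}}_i^{(k)})^2 = (\bm{\beta}^{*\top}\bm{D}_i)^2 + \mathrm{err}_i
\end{align*}
produces a finite sum $\mathrm{err}_i$ of polynomial terms, each containing at least one factor of $\hat{\bm{\beta}}^{(-k)^\top}\bm{\delta}^{(k)}$ or $(\hat{\bm{\beta}}^{(-k)}-\bm{\beta}^*)^\top\bm{D}_i$. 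Multiplying by $(\hat{\epsilon}_i^{(k)})^2$, averaging over $i\in I_k$, and invoking Lemma \ref{l8} with $a=2$ and $q+r>0$ shows that the contribution of $\mathrm{err}_i$ is $o_p(1)$. The question then reduces to
\begin{align*}
4|I_k|^{-1}\sum_{i\in I_k}(\hat{\epsilon}_i^{(k)})^2(\bm{\beta}^{*\top}\bm{D}_i)^2 \xrightarrow{p} 4\mathbb{E}\bigl[\epsilon^2(\bm{\beta}^{*\top}\bm{D})^2\bigr]=A,
\end{align*}
where the last equality uses $\mathbb{E}[\epsilon\,\bm{\beta}^{*\top}\bm{D}]=0$; this orthogonality follows from the first-order conditions of the least-squares problem defining $\bm{\beta}^*$ (noting that $\bm{W}$ has a constant first coordinate) together with the doubly robust identity $\mathbb{E}[\varphi^*(Z)\mid\bm{X}]=\tau(\bm{X})$ valid under $\pi^*=\pi$. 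To replace $\hat{\epsilon}_i^{(k)}$ by $\epsilon_i$ in the display I would decompose
\begin{align*}
\hat{\epsilon}_i^{(k)}-\epsilon_i = [\hat{\varphi}^{(-k)}(Z_i)-\varphi^*(Z_i)] - [\hat{\tau}_{\mathrm{para}}-\tau] - \hat{\bm{\beta}}^{(-k)^\top}\bm{\delta}^{(k)} - (\hat{\bm{\beta}}^{(-k)}-\bm{\beta}^*)^\top\bm{D}_i,
\end{align*}
using $\mathbb{E}(\bm{W})^\top\bm{\beta}^*=\tau$ under the same condition. The last two pieces are again directly absorbed by Lemma \ref{l8}, while the first two are controlled by Lemma \ref{l6}, which gives $\mathbb{E}_{\bm{X}}|\hat{\varphi}^{(-k)}(Z)-\varphi^*(Z)|^4=o_p(1)$ and $\hat{\tau}_{\mathrm{para}}-\tau=o_p(1)$, combined with Cauchy--Schwarz against $(\bm{\beta}^{*\top}\bm{D}_i)^2$.

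The same template covers $\hat{B}^{(k)}$ and $\hat{C}^{(k)}$. For $\hat{B}^{(k)}$ the oracle term is $|G_k|^{-1}\sum_{i\in G_k}(\bm{\beta}^{*\top}\bm{D}_i)^4 - \{|G_k|^{-1}\sum_{i\in G_k}(\bm{\beta}^{*\top}\bm{D}_i)^2\}^2$, which tends to $B$ by the law of large numbers. For $\hat{C}^{(k)}$, expanding the bracketed factor yields two averages whose oracle versions converge (again using $\mathbb{E}[\epsilon\,\bm{\beta}^{*\top}\bm{D}]=0$) to $\mathbb{E}[2\epsilon(\bm{\beta}^{*\top}\bm{D})^3]=C$. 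All expansion residuals fit the scope of Lemma \ref{l8}.

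Finally, because $B>c$, the event $\{\hat{B}^{(k)}\ge c/2\text{ for every }k\}$ has probability tending to one; on this event, the continuous mapping theorem gives $(\hat{C}^{(k)})^2/\hat{B}^{(k)}\xrightarrow{p} C^2/B$. Averaging over $k$ and using that $n/N$ and $m/N$ lie in $[0,1]$, Slutsky delivers $\hat{\sigma}_{\mathrm{OW}}^2-\sigma_{\mathrm{OW}}^2=o_p(1)$. The main obstacle is the careful bookkeeping of the expansion remainders -- in particular the $\hat{\epsilon}_i^{(k)}-\epsilon_i$ contributions -- to verify that every residual term either fits the template of Lemma \ref{l8} or can be absorbed via Cauchy--Schwarz combined with Lemma \ref{l6}.
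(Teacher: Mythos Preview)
Your proposal is correct and follows essentially the same approach as the paper. The paper establishes the consistency $\hat{A}^{(k)}=A+o_p(1)$, $\hat{B}^{(k)}=B+o_p(1)$, $\hat{C}^{(k)}=C+o_p(1)$ in its proof of Lemma~\ref{l9} (using precisely the expand-and-absorb-via-Lemma~\ref{l8} template you describe, together with Lemma~\ref{l6} for the $\hat\varphi^{(-k)}-\varphi^*$ and $\hat\tau_{\mathrm{para}}-\tau$ pieces), and then its proof of Lemma~\ref{l10} simply writes out the algebraic identity for $(\hat\sigma_{\mathrm{OW}}^{(k)})^2-\sigma_{\mathrm{OW}}^2$ and invokes those consistency results plus $B>c$ to conclude; your final continuous-mapping step is equivalent to that algebraic bookkeeping.
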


\section{Proof of Lemmas}
\begin{proof}[Proof of Lemma \ref{l11}]
For any $k\leq K$, we have
\begin{align}\label{152}
 \notag&\frac{1}{\tilde{n}}\sum_{i \in I_k}\left[\hat{\tau}^{(-k)}(\bm{X}_i)-\nu^{(-k)}\right] \left[\hat{\varphi}^{(-k)}(Z_i
 )-\varphi^*(Z_i)\right]\\
 \notag &\quad=
 \frac{1}{K-1}\sum_{k'\neq k}\frac{1}{\tilde{n}}\sum_{i \in I_k}\left\{\left(\hat{\tau}^{(-k)}(\bm{X}_i)-\nu^{(-k)}\right)\left[ \hat\varphi^{(-k,-k')}(Z_i)-\varphi^*(Z_i)\right]\right\}.
\end{align}
For any $k'\neq k$, consider the representation
\begin{align*}
 \frac{1}{\tilde{n}}\sum_{i \in I_k}\left[\hat{\tau}^{(-k)}(\bm{X}_i)-\nu^{(-k)}\right] \left[\hat{\varphi}^{(-k,-k')}(Z_i
 )-\varphi^*(Z_i)\right]
 =A_{1}+A_{2}+A_{3},
\end{align*}
where 
\begin{align*}
 A_{1}&=\frac{1}{\tilde{n}}\sum_{i \in I_k}\left[\hat{\tau}^{(-k)}(\bm{X}_i)-\nu^{(-k)}\right]
 \left\{\left[Y_i(1)-\mu_1^*(\bm{X}_i)\right]\left[\frac{A_i}{\hat{\pi}^{(-k,-k')}(\bm{X}_i)}-\frac{A_i}{\pi(\bm{X}_i)}\right]\right.\\
 &\left.\notag\qquad -
 \left[Y_i(0)-\mu_0^*(\bm{X}_i)\right]\left[\frac{1-A_i}{1-\hat{\pi}^{(-k,-k')}(\bm{X}_i)}-\frac{1-A_i}{1-\pi(\bm{X}_i)}\right]\right\},\\
 A_{2}&=\frac{1}{\tilde{n}}\sum_{i \in I_k}\left[\hat{\tau}^{(-k)}(\bm{X}_i)-\nu^{(-k)}\right]
 \left\{A_i\left[\mu_1^*(\bm{X}_i)-\hat{\mu}^{(-k,-k')}_1(\bm{X}_i)\right]\left[\frac{1}{\hat{\pi}^{(-k,-k')}(\bm{X}_i)}-\frac{1}{\pi(\bm{X}_i)}\right]\right.\\
 &\left.\qquad-(1-A_i)\left[\mu_0^*(\bm{X}_i)-\hat{\mu}^{(-k,-k')}_0(\bm{X}_i)\right]\left[\frac{1}{1-\hat{\pi}^{(-k,-k')}(\bm{X}_i)}-\frac{1}{1-\pi(\bm{X}_i)}\right]\right\},\\
 A_{3}&=\frac{1}{\tilde{n}}\sum_{i \in I_k}\left[\hat{\tau}^{(-k)}(\bm{X}_i)-\nu^{(-k)}\right]
 \left\{\left[\mu_1^*(\bm{X}_i)-\hat{\mu}^{(-k,-k')}_1(\bm{X}_i)\right]\left[\frac{A_i}{\pi(\bm{X}_i)}-1\right]\right.\\
 &\left.\qquad-\left[\mu_0^*(\bm{X}_i)-\hat{\mu}^{(-k,-k')}_0(\bm{X}_i)\right]\left[\frac{1-A_i}{1-\pi(\bm{X}_i)}-1\right]\right\}.
\end{align*}

When $\mu_a^*(\cdot)=\mu_a(\cdot)$ for each $a\in\{0,1\}$, by the law of iterated expectations, we have \(\mathbb{E}_{\bm{X}}(A_{1})=0\). Moreover, 
\begin{align*}
 &\mathbb{E}_{\bm{X}}\left\{\left[\hat{\tau}^{(-k)}(\bm{X}_i)-\nu^{(-k)}\right]
 \left[Y(1)-\mu_1^*(\bm{X})\right]\left[\frac{A}{\hat{\pi}^{(-k,-k')}(\bm{X})}-\frac{A}{\pi(\bm{X})}\right]\right\}^2\\
 &\qquad\overset{(i)}{=}O_p\left(\mathbb{E}_{\bm{X}}\left|\hat{\pi}^{(-k,-k')}(\bm{X})-\pi(\bm{X})\right|^2\right)\\
 &\qquad\overset{(ii)}{=}o_p(1),
\end{align*}
where (i) holds from Assumptions \ref{a2} and \ref{a4}, (ii) holds from Assumption \ref{a3}.
By Chebyshev's inequality, 
\begin{align*}
 &\frac{1}{\tilde{n}}\sum_{i \in I_k}\left[\hat{\tau}^{(-k)}(\bm{X}_i)-\nu^{(-k)}\right]
 \left\{\left[Y_i(1)-\mu_1^*(\bm{X}_i)\right]\left[\frac{A_i}{\hat{\pi}^{(-k,-k')}(\bm{X}_i)}-\frac{A_i}{\pi(\bm{X}_i)}\right]\right\}\\
 &\quad=o_p\left(n^{-\frac{1}{2}}\right).
\end{align*}
On the other hand, when $\mathbb{E}_{\bm{X}}|\hat{\pi}^{(-k,-k')}(\bm{X})-\pi(\bm{X})|^2=o_p(n^{-1})$ is assumed instead, by Markov's inequality,
\begin{align*}
 &\frac{1}{\tilde{n}}\sum_{i \in I_k}\left[\hat{\tau}^{(-k)}(\bm{X}_i)-\nu^{(-k)}\right]
 \left[Y_i(1)-\mu_1^*(\bm{X}_i)\right]\left[\frac{A_i}{\hat{\pi}^{(-k,-k')}(\bm{X}_i)}-\frac{A_i}{\pi(\bm{X}_i)}\right]\\
 &\quad=O_p\left\{\mathbb{E}_{\bm{X}}\left|\left[\hat{\tau}^{(-k)}(\bm{X})-\nu^{(-k)}\right]
 \left[Y(1)-\mu_1^*(\bm{X})\right]\left[\frac{A}{\hat{\pi}^{(-k,-k')}(\bm{X})}-\frac{A}{\pi(\bm{X})}\right]\right|\right\}\\
 &\quad=O_p\left([\mathbb{E}_{\bm{X}}|\hat{\pi}^{(-k,-k')}(\bm{X})-\pi(\bm{X})|^2]^{-1/2}\right)\\
 &\quad=o_p\left(n^{-\frac{1}{2}}\right).
\end{align*}
Similarly, we have
\begin{align*}
 &\frac{1}{\tilde{n}}\sum_{i \in I_k}\left[\hat{\tau}^{(-k)}(\bm{X}_i)-\nu^{(-k)}\right]
 \left[Y_i(0)-\mu_0^*(\bm{X}_i)\right]\left[\frac{1-A_i}{1-\hat{\pi}^{(-k,-k')}(\bm{X}_i)}-\frac{1-A_i}{1-\pi(\bm{X}_i)}\right]\\
 &\quad=o_p\left(n^{-\frac{1}{2}}\right).
\end{align*}
To sup up, we have $A_1=o_p(n^{-{1/2}})$.

Concerning the product error term \(A_{2}\), one obtains 
\begin{align*}
 &\frac{1}{\tilde{n}}\sum_{i \in I_k}\left|\left[\hat{\tau}^{(-k)}(\bm{X}_i)-\nu^{(-k)}\right]
 \left\{\left[\mu_1^*(\bm{X}_i)-\hat{\mu}^{(-k,-k')}_1(\bm{X}_i)\right]\left[\frac{A_i}{\hat{\pi}^{(-k,-k')}(\bm{X}_i)}-\frac{A_i}{\pi(\bm{X}_i)}\right]\right\}\right|\\
 &\quad=O_p\left(\mathbb{E}_{\bm{X}}\left|\left[\mu_1^*(\bm{X})-\hat{\mu}^{(-k,-k')}_1(\bm{X})\right]\left[\hat{\pi}^{(-k,-k')}(\bm{X})-\pi(\bm{X})\right]\right|\right)\\
 &\quad\overset{(i)}{=}o_p\left(n^{-\frac{1}{2}}\right),
\end{align*}
where (i) holds from Assumption \ref{a3} together with Cauchy–Schwarz inequality. Analogously,
\begin{align*}
 &\frac{1}{\tilde{n}}\sum_{i \in I_k}\left|\left[\hat{\tau}^{(-k)}(\bm{X}_i)-\nu^{(-k)}\right]
 \left[\mu_0^*(\bm{X}_i)-\hat{\mu}^{(-k,-k')}_0(\bm{X}_i)\right]\left[\frac{1-A_i}{1-\hat{\pi}^{(-k,-k')}(\bm{X}_i)}-\frac{1-A_i}{1-\pi(\bm{X}_i)}\right]\right|\\&\qquad
 =o_p(n^{-1/2}).
\end{align*}
Therefore, we have \(A_2=o_p(n^{-1/2})\). 

 Besides, by the law of iterated expectations, we have \(\mathbb{E}_{\bm{X}}(A_{3})=0\). Additionally,
\begin{align*}
 &\mathbb{E}_{\bm{X}}\left|\left[\hat{\tau}^{(-k)}(\bm{X})-\nu^{(-k)}\right]
 \left[\mu_1^*(\bm{X})-\hat{\mu}^{(-k,-k')}_1(\bm{X})\right]\left[\frac{A}{\pi(\bm{X})}-1\right]\right|^2\\
 &\quad=O_p\left(\mathbb{E}_{\bm{X}}\left|\mu_1^*(\bm{X})-\hat{\mu}^{(-k,-k')}_1(\bm{X})\right|^2\right)=o_p(1)
\end{align*}
and 
\begin{align*}
 &\mathbb{E}_{\bm{X}}\left|\left[\hat{\tau}^{(-k)}(\bm{X})-\nu^{(-k)}\right]\left[\mu_0^*(\bm{X})-\hat{\mu}^{(-k,-k')}_0(\bm{X})\right]\left[\frac{1-A}{1-\pi(\bm{X})}-1\right]\right|^2\\
 &\quad=O_p\left(\mathbb{E}_{\bm{X}}\left|\mu_0^*(\bm{X})-\hat{\mu}^{(-k,-k')}_0(\bm{X})\right|^2\right)=o_p(1).
\end{align*}
By Chebyshev's inequality, $A_3=o_p(n^{-\frac{1}{2}})$.

Combining the results above, we have
\begin{equation*}
 \frac{1}{\tilde{n}}\sum_{i \in I_k}\left[\hat{\tau}^{(-k)}(\bm{X}_i)-\nu^{(-k)}\right]\,
 \left[\hat{\varphi}^{(-k)}(Z_i)-\varphi^*(Z_i)\right]
 =o_p(n^{-1/2}).
\end{equation*}
Repeating the same procedure replacing \(\hat{\tau}^{(-k)}(\bm{X}_i)-\nu^{(-k)}\) with 1 yields
\begin{equation}\label{156}
 \frac{1}{\tilde{n}}\sum_{i \in I_k}\left[\hat{\varphi}^{(-k)}(Z_i)-\varphi^*(Z_i)\right]
 =o_p(n^{-1/2}).
\end{equation}
Under the assumptions of Lemma \ref{l11} and by Chebyshev’s inequality,
\begin{equation}\label{157}
 \frac{1}{\tilde{n}}\sum_{i \in I_k}\left[\varphi^*(Z_i)-\tau\right]
 =O_p(n^{-1/2}).
\end{equation}
Furthermore, since $\hat{\tau}^{(-k)}(\bm{X})$ is uniformly bounded, Chebyshev’s inequality implies that 
\begin{equation}\label{172}
 |G_k|^{-1}\sum_{i \in G_k}\left[\hat{\tau}^{(-k)}(\bm{X}_i)-\nu^{(-k)}\right]=O_p\left(N^{-\frac{1}{2}}\right).
\end{equation}

Therefore,
\begin{align*}
 &\hat{\tau}^{(-k)}-\tau
 = |G_k|^{-1}\sum_{i \in G_k}\hat{\tau}^{(-k)}(\bm{X}_i)
 + \tilde{n}^{-1}\sum_{i \in I_k}\left[\hat{\varphi}^{(-k)}(Z_i)-\tau-\hat{\tau}^{(-k)}(\bm{X}_i)\right]\\
 &\qquad= |G_k|^{-1}\sum_{i \in G_k}\left[\hat{\tau}^{(-k)}(\bm{X}_i)-\nu^{(-k)}\right]
 + \tilde{n}^{-1}\sum_{i \in I_k}\left[\hat{\varphi}^{(-k)}(Z_i)-\tau - \left(\hat{\tau}^{(-k)}(\bm{X}_i)-\nu^{(-k)}\right)\right]\\
 &\qquad= O_p(n^{-1/2}),
\end{align*}
where the last line follows from \eqref{156}, \eqref{157}, and \eqref{172}.
\end{proof}

\begin{proof}[Proof of Lemma \ref{l2}]
 From the definition of $\hat{{\bm{\alpha}}}^{(-k,-k')}_a$ in equation \eqref{def:alphahat}, it follows that
\begin{align}\label{15}
 &\frac{1}{\lvert I_{-k,-k'}\rvert}
 \sum_{i\in I_{-k,-k'}}
 \mathbbm{1}_{A_i=a}
 \left(Y_{i}-\bm{X}_i^\top \,\hat{\bm{\alpha}}_a^{(-k,-k')}\right)^2
 \;+\;\lambda_{{\bm{\alpha}}}\,\left\lVert \hat{\bm{\alpha}}_a^{(-k,-k')}\right\rVert_1
 \nonumber \\
 &\qquad\leq
 \frac{1}{\lvert I_{-k,-k'}\rvert}
 \sum_{i\in I_{-k,-k'}}
 \mathbbm{1}_{A_i=a}
 \left(Y_{i}-\bm{X}_i^\top \,{\bm{\alpha}}_a^*\right)^2
 \;+\;\lambda_{{\bm{\alpha}}}\,\left\lVert {\bm{\alpha}}_a^*\right\rVert_1,
 \nonumber \\ 
 \qquad\phantom{\leq} 
 &\;\frac{1}{\lvert I_{-k,-k'}\rvert}
 \sum_{i\in I_{-k,-k'}}
 \mathbbm{1}_{A_i=a}
 \left(\bm{X}_i^\top \,\Delta{\bm{\alpha}}_a\right)^2
 \;+\;\lambda_{{\bm{\alpha}}}\,\left\lVert \hat{\bm{\alpha}}_a^{(-k,-k')}\right\rVert_1
 \nonumber \\
 &\qquad\leq
 \frac{2}{\lvert I_{-k,-k'}\rvert}
 \sum_{i\in I_{-k,-k'}}
 \zeta_{a}\,\bm{X}_i^\top \,\Delta{\bm{\alpha}}_a
 \;+\;\lambda_{{\bm{\alpha}}}\,\left\lVert {\bm{\alpha}}_a^*\right\rVert_1,
\end{align}
where $\Delta{\bm{\alpha}}_a=\hat{{\bm{\alpha}}}^{(-k,-k')}_a-{\bm{\alpha}}_a^*$ and $\zeta_{a}=\mathbbm{1}_{A_i=a}(Y_{i}-\bm{X}_i^\top {\bm{\alpha}}_a^*)$. For any $t>0$, set 
\begin{equation*}
 \lambda_{{\bm{\alpha}}}
=16\sigma_{\zeta}\sigma_x\left(\sqrt{\frac{\log(d)}{\left|I_{-k,-k'}\right|}}+t\right).
\end{equation*}
 Define the event $A:=\{\max_{1\leq j\leq d}|{|I_{-k,-k'}|}^{-1}
\sum_{i\in I_{-k,-k'}}X_{ij}\zeta_{a}|
\leq {\lambda_{{\bm{\alpha}}}/4}\},$
where $X_{ij}$ denotes the $j$th component of $\bm{X}_i$. Since
\[
P(A^c)
\leq \sum_{j=1}^{d}
P\left(\left|\frac{1}{\left|I_{-k,-k'}\right|} 
\sum_{i\in I_{-k,-k'}}X_{ij}\zeta_{a}\right|
\geq \frac{\lambda_{{\bm{\alpha}}}}{4}\right),
\]
an application of Lemma~\ref{l1} yields the stated exponential probability bound. The remainder of the argument follows standard Lasso deviation and restricted‐eigenvalue arguments.

 Let \({\bm{e}_j}\in\mathbb{R}^{d}\) denote the standard basis vector with a one in position \(j\) and zeros elsewhere, for \(1\le j\le d\). By Assumptions \ref{a5} and \ref{a6} and Lemma \ref{l1},
\begin{equation*}
 \|{\bm{e}_j}^\top \bm{X}\,\zeta_a\|_{\psi_1}
\;\le\;\|{\bm{e}_j}^\top \bm{X}\|_{\psi_2}\,\|\zeta_a\|_{\psi_2}
\;\le\;\sigma_{\zeta}\sigma_x.
\end{equation*}
\qquad By the definition of \({\bm{\alpha}}_a^*\), we have \(\mathbb{E}[\bm{X}\,\zeta_a]=\bm{0}\).
Bernstein’s inequality yields, for each \(1\le j\le d\) and any \(\epsilon>0\),
\[
P\left(\left|\frac{1}{\left|I_{-k,-k'}\right|}
\sum_{i\in I_{-k,-k'}}X_{ij}\,\zeta_a\right|
\ge2\sigma\,\sigma_x\,\sigma_{\zeta}\,\epsilon
+\sigma\,\sigma_x\,\sigma_{\zeta}\,\epsilon^2\right)
\le2\exp\left(-\left|I_{-k,-k'}\right|\,\epsilon^2\right).
\]
Setting $\epsilon
=\sqrt{{\log(d)/|I_{-k,-k'}|}}
+(\sqrt t - 1),$
and noting \({\log(d)/|I_{-k,-k'}|}<1\), one obtains 
\begin{equation*}
 2\epsilon+\epsilon^2
\le4\sqrt{\frac{\log(d)}{\left|I_{-k,-k'}\right|}}+4t,
\end{equation*}
and hence 
\begin{equation*}
 2\sigma_{\zeta}\,\sigma_x\,\epsilon
+\sigma_{\zeta}\,\sigma_x\,\epsilon^2
\le4\sigma_x\,\sigma_{\zeta}
\left(\sqrt{\frac{\log(d)}{\left|I_{-k,-k'}\right|}}+t\right)
=\frac{\lambda_{{\bm{\alpha}}}}{4}.
\end{equation*}
Therefore, for each \(j\), 
\[
P\left(\left|\frac{1}{\left|I_{-k,-k'}\right|}
\sum_{i\in I_{-k,-k'}}X_{ij}\,\zeta_a\right|
\ge\frac{\lambda_{{\bm{\alpha}}}}{4}\right)
\le2\exp\left(-\left|I_{-k,-k'}\right|\,\epsilon^2\right)
\le\frac{2}{d}\exp(-t+2\sqrt t-1).
\]
A union bound then gives $P(A)\ge1-2\exp[-|I_{-k,-k'}|\,(t-2\sqrt t+1)].$
On the event \(A\), we have
\[
\left|\frac{2}{\left|I_{-k,-k'}\right|}
\sum_{i\in I_{-k,-k'}}\zeta_a\,\bm{X}_i^\top \Delta{\bm{\alpha}}_a\right|
\le\frac{\lambda_{{\bm{\alpha}}}}{2}\|\Delta{\bm{\alpha}}_a\|_1.
\]

Multiplying both sides of \eqref{15} by 2 yields
\[
\frac{2}{\left|I_{-k,-k'}\right|}\sum\mathbf1_{A_i=a}\left(\bm{X}_i^\top \Delta{\bm{\alpha}}_a\right)^2
+2\lambda_{{\bm{\alpha}}}\left\|\hat{\bm{\alpha}}_a^{(-k,-k')}\right\|_1
\le\lambda_{{\bm{\alpha}}}\|\Delta{\bm{\alpha}}_a\|_1
+2\lambda_{{\bm{\alpha}}}\|{\bm{\alpha}}_a^*\|_1.
\]
Let $S$ be the support set of $\hat{{\bm{\alpha}}}^{(-k,-k')}_a$. By the triangle inequality, 
\[
\|\hat{{\bm{\alpha}}}^{(-k,-k')}_a\|_{1}
= \|\hat{{\bm{\alpha}}}^{(-k,-k')}_{a,S}\|_{1} + \|\hat{{\bm{\alpha}}}^{(-k,-k')}_{a,S^c}\|_{1}
\ge \|{\bm{\alpha}}^*_{a,S}\|_{1}
 - \|\Delta{\bm{\alpha}}_{a,S}\|_{1}
 + \|\hat{{\bm{\alpha}}}^{(-k,-k')}_{a,S^c}\|_{1},
\]
and 
\begin{equation}\label{3}
\frac{2}{\left|I_{-k,-k'}\right|}
\sum_{i\in I_{-k,-k'}}\mathbbm{1}_{A_i=a}\left(\bm{X}_i^\top \Delta{\bm{\alpha}}_a\right)^2
+\lambda_{{\bm{\alpha}}}\|\hat{{\bm{\alpha}}}^{(-k,-k')}_{a,S^c}\|_1
\;\le\;3\lambda_{{\bm{\alpha}}}\|\Delta{\bm{\alpha}}_{a,S}\|_{1}.
\end{equation}
Lemma D.6 of \cite{6} provides constants \(\kappa_1,\kappa_2>0\) such that 
\[
\frac{1}{\left|I_{-k,-k'}\right|}
\sum_{i\in I_{-k,-k'}}\mathbbm{1}_{A_i=a}(\bm{X}_i^\top \bm{a})^2
\ge \kappa_1\|\bm{a}\|_{2}\left[\|\bm{a}\|_{2}
 -\kappa_2\sqrt{\frac{\log(d)}{\left|I_{-k,-k'}\right|}}\|\bm{a}\|_{1}\right],
\]
for all \(\|\bm{a}\|_{2}\le1\), with probability at least \(1 - c_1\exp(-c_2\left|I_{-k,-k'}\right|)\). Define the event 
\[
A_1
:= \left\{\frac{1}{\left|I_{-k,-k'}\right|}
\sum_{i\in I_{-k,-k'}}\mathbbm{1}_{A_i=a}\left(\bm{X}_i^\top \Delta{\bm{\alpha}}_a\right)^2
\ge \kappa_1\|\Delta{\bm{\alpha}}_a\|_{2}^2
 -\kappa_1\kappa_2\sqrt{\frac{\log(d)}{\left|I_{-k,-k'}\right|}}
 \|\Delta{\bm{\alpha}}_a\|_{1}\|\Delta{\bm{\alpha}}_a\|_{2}
\right\},
\]
then \(P(A_1)\ge1 - c_1\exp(-c_2\left|I_{-k,-k'}\right|)\). On \(A_1\), together with \eqref{3}, we have 
\[
\frac{2}{\left|I_{-k,-k'}\right|}
\sum_{i\in I_{-k,-k'}}\mathbbm{1}_{A_i=a}\left(\bm{X}_i^\top \Delta{\bm{\alpha}}_a\right)^2
+\lambda_{{\bm{\alpha}}}\|\Delta{\bm{\alpha}}_{a,S^c}\|_1
\le3\lambda_{{\bm{\alpha}}}\|\Delta{\bm{\alpha}}_{a,S}\|_{1}.
\]
Hence, \(\|\Delta{\bm{\alpha}}_{a,S^c}\|_1\le3\|\Delta{\bm{\alpha}}_{a,S}\|_{1}\). Noting that \(\|\Delta{\bm{\alpha}}_{a,S}\|_{1}\le\sqrt{s_{{\bm{\alpha}}}}\|\Delta{\bm{\alpha}}_{a,S}\|_{2}\) gives 
\[
\left\|\Delta{\bm{\alpha}}_a\right\|_{1}
= \|\Delta{\bm{\alpha}}_{a,S}\|_{1} + \|\Delta{\bm{\alpha}}_{a,S^c}\|_{1}
\le4\sqrt{s_{{\bm{\alpha}}}}\|\Delta{\bm{\alpha}}_a\|_{2}.
\]
If \(\left|I_{-k,-k'}\right|>100\kappa_2^2\,s_{{\bm{\alpha}}}\log(d)\), then 
\begin{align*}
\notag
\frac{1}{\left|I_{-k,-k'}\right|}
\sum_{i\in I_{-k,-k'}}\mathbbm{1}_{A_i=a}\left(\bm{X}_i^\top \Delta{\bm{\alpha}}_a\right)^2
&\ge \kappa_1\|\Delta{\bm{\alpha}}_a\|_{2}^2
 -4\kappa_1\kappa_2\sqrt{\frac{s_{{\bm{\alpha}}}\log(d)}{\left|I_{-k,-k'}\right|}}
 \|\Delta{\bm{\alpha}}_a\|_{2}^2\\
&\ge \frac{\kappa_1}{2}\|\Delta{\bm{\alpha}}_a\|_{2}^2
\ge \frac{\kappa_1}{2s_{{\bm{\alpha}}}}\|\Delta{\bm{\alpha}}_{a,S}\|_{1}^2.
\end{align*}
Combination with the inequality above on \(A\cap A_1\) leads to $\kappa_1\|\Delta{\bm{\alpha}}_{a,S}\|_{1}^2/(2s_{{\bm{\alpha}}})
\le3\lambda_{{\bm{\alpha}}}\|\Delta{\bm{\alpha}}_{a,S}\|_{1},$
hence $\|\Delta{\bm{\alpha}}_{a,S}\|_{1}
\le8\,\kappa_1^{-1}\,s_{{\bm{\alpha}}}\,\lambda_{{\bm{\alpha}}},$ $
\|\Delta{\bm{\alpha}}_a\|_{1}
\le32\,\kappa_1^{-1}\,s_{{\bm{\alpha}}}\,\lambda_{{\bm{\alpha}}},$
and
\[
\frac{1}{\left|I_{-k,-k'}\right|}
\sum_{i\in I_{-k,-k'}}\mathbbm{1}_{A_i=a}\left(\bm{X}_i^\top \Delta{\bm{\alpha}}_a\right)^2
\le24\,\kappa_1^{-1}\,s_{{\bm{\alpha}}}\,\lambda_{{\bm{\alpha}}}^2.
\]
Since \(\left|I_{-k,-k'}\right|\asymp n\), it follows that 
\[
\|\Delta{\bm{\alpha}}_a\|_{2}
= O_p\left(\sqrt{\frac{s_{\alpha}\log d}{n}}\right).
\]
\end{proof}

\begin{proof}[Proof of Lemma \ref{l3}]
Here, convergence rates for \(\hat{{\bm{\gamma}}}^{(-k,-k')}\) and \(\hat{\pi}^{(-k,-k')}\) are established without requiring \(\pi^*(\bm{X})=\pi(\bm{X})\). Set 
$\rho(x)=\log[1+\exp(x)],$ $\rho'(x)={\exp(x)/[1+\exp(x)]}=\phi(x),$
 and
 $\rho''(x)={\exp(x)/[1+\exp(x)]^2}=\pi^*(x)[1-\pi^*(x)].$
 Define the empirical loss
\[
L_N({\bm{\gamma}})
:=\frac{1}{|G_{-k,-k'}|}
\sum_{i\in G_{-k,-k'}}
\left[\rho(\bm{X}_i^\top {\bm{\gamma}})-A_i\,\bm{X}_i^\top {\bm{\gamma}}\right],
\]
and its Taylor remainder $\varepsilon_{L_N}(\bm{\Delta},{\bm{\gamma}}^*)
:=L_N({\bm{\gamma}}^*+\bm{\Delta})-L_N({\bm{\gamma}}^*)-\bm{\Delta}^\top \nabla L_N({\bm{\gamma}}^*)$.
Lemma D.6 of \cite{6}  yields, for $\|\bm{\Delta}\|_2\le1$ and with probability at least $1-c_1\exp(-c_2|G_{-k,-k'}|)$,
\begin{align*}
\varepsilon_{L_N}(\bm{\Delta},{\bm{\gamma}}^*)
&\ge \kappa_1\|\bm{\Delta}\|_2
 \left[\|\bm{\Delta}\|_2
 -\kappa_2\sqrt{\frac{\log(d)}{|G_{-k,-k'}|}}\,\|\bm{\Delta}\|_1\right]\\
&\ge \frac{\kappa_1}{2}\|\bm{\Delta}\|_2^2
 -\frac{\kappa_1\kappa_2^2\log(d)}{2\,|G_{-k,-k'}|}\,\|\bm{\Delta}\|_1^2.
\end{align*}

Since \({\bm{\gamma}}^*\) minimizes \(\mathbb{E}[L({\bm{\gamma}})]\), the first‐order condition holds that
\begin{equation*}
\nabla \mathbb{E}[L({\bm{\gamma}}^*)]
=\mathbb{E}\left[\phi\left({\bm{X}}^\top {\bm{\gamma}}^*\right)X - A\,\bm{X}\right]
=0.
\end{equation*}
A union‐bound argument then shows
\[
P\left[\|\nabla L_N({\bm{\gamma}}^*)\|_\infty\ge\frac{\lambda_{\gamma}}{2}\right]
\le \sum_{j=1}^{d}
P\left(\left|\frac{1}{|G_{-k,-k'}|}
\sum_{i\in G_{-k,-k'}}
\left[\phi(\bm{X}_i^\top {\bm{\gamma}}^*)-A_i\right]X_{ij}\right|
\ge\frac{\lambda_{\gamma}}{2}\right).
\]
Since \(|\phi({\bm{X}}^\top {\bm{\gamma}}^*)-A|\le 1\), Lemma~\ref{l1} together with Assumption~\ref{a6} implies that, for any \(i\in G_{-k,-k'}\) and \(1\le j\le d\),
\begin{equation*}
 \left\|\phi(\bm{X}_i^\top {\bm{\gamma}}^*)X_{ij}-A_iX_{ij}\right\|_{\psi_2}\leq \left\|{X}_{ij}\right\|_{\psi_2}\leq \sigma_x.
\end{equation*}
Thus each $(\phi(\bm{X}_i^\top {\bm{\gamma}}^*)-A_i)X_{ij}$ is zero-mean sub-Gaussian. By Hoeffding’s inequality, for each $j\le d$,
\begin{align*}
 &P\left(\left|\frac{1}{|G_{-k,-k'}|}
 \sum_{i \in G_{-k,-k'}}\left(\phi\left(\bm{X}_i^\top {\bm{\gamma}}^*\right)-A_i\right)X_{ij}\right|
 \ge \frac{\lambda_{{\bm{\gamma}}}}{2}\right)\\
 &\quad\le 2\exp\left(-\frac{|G_{-k,-k'}|\lambda_{{\bm{\gamma}}}^2}{32\sigma_{x}^2}\right)\le \frac{2}{d}\exp\left(-|G_{-k,-k'}|t^2\right),
\end{align*}
where for $t>0$, set $\lambda_{\gamma}=4\sqrt{2}\,\sigma_x(\sqrt{{\log(d)/|G_{-k,-k'}|}}+t).$
It follows that
\[
P\left(\left\|\nabla L_N({\bm{\gamma}}^*)\right\|_{\infty}\le \frac{\lambda_{{\bm{\gamma}}}}{2}\right)
 \ge 1 - 2\exp\left(-\left|G_{-k,-k'}\right|\,t^2\right).
\]
When $G_{-k,-k'} \ge 64\,\kappa_2^2\,s_{{\bm{\gamma}}}\log(d)$, $|G_{-k,-k'}|\asymp N$, and $18\,s_{{\bm{\gamma}}}\,\lambda_{{\bm{\gamma}}}^2 \le \kappa_1^2$, Theorem 9.19 and Corollary 9.20 of \cite{7} yield
\begin{equation*}
 \left\|\hat{{\bm{\gamma}}}^{(-k,-k')}-{\bm{\gamma}}^*\right\|_{2}
 \le \frac{3\sqrt{s_{{\bm{\gamma}}}}\,\lambda_{{\bm{\gamma}}}}{\kappa_1},
 \qquad
 \left\|\hat{{\bm{\gamma}}}^{(-k,-k')}-{\bm{\gamma}}^*\right\|_{1}
 \le \frac{6\,s_{{\bm{\gamma}}}\,\lambda_{{\bm{\gamma}}}}{\kappa_1},
\end{equation*}
with probability at least $1-2\exp\left[-(\tilde n+\tilde m)t^2\right]-c_1\exp\left[-c_2(\tilde n+\tilde m)\right]$. If $N\gg s_{{\bm{\gamma}}}\log d$ and $\lambda_{{\bm{\gamma}}}\asymp\sqrt{\frac{\log d}{N}}$, then
\begin{equation*}
 \left\|\hat{{\bm{\gamma}}}^{(-k,-k')}-{\bm{\gamma}}^*\right\|_{2}^2
 = O_p\left(\frac{s_{{\bm{\gamma}}}\log(d)}{N}\right).
\end{equation*}

Next, we derive the convergence rate of \(\hat{\pi}^{(-k,-k')}(\bm{X})\). Since \(\rho'(x)=\phi(x)\), Taylor’s theorem implies, for some \(\hat t\in(0,1)\),
\begin{align*}
&\mathbb{E}_{\bm{X}}\left[\phi\left({\bm{X}}^\top \hat{\bm{\gamma}}\right)-\phi\left({\bm{X}}^\top {\bm{\gamma}}^*\right)\right]^2
=\mathbb{E}_{\bm{X}}\left[\phi\left(\bm{v}^*+\bm{\Delta}\right)-\phi(\bm{v}^*)\right]^2\\
&\qquad=\mathbb{E}_{\bm{X}}\left[\phi(\bm{v}^*)-\phi(\bm{v}^*)\right]^2
 +\frac{d}{dt}\mathbb{E}_{\bm{X}}\left[\phi(\bm{v}^*+t\bm{\Delta})-\phi(\bm{v}^*)\right]^2\Big|_{t=0}\\
 &\qquad\qquad+\frac{1}{2}\,\frac{d^2}{dt^2}\mathbb{E}_{\bm{X}}\left[\phi(\bm{v}^*+t\bm{\Delta})-\phi(\bm{v}^*)\right]^2\Big|_{t=\hat t},
\end{align*}
where \(\bm{v}^*={\bm{X}}^\top {\bm{\gamma}}^*\) and \(\bm{\Delta}={\bm{X}}^\top (\hat{\bm{\gamma}}^{(-k,-k')}-{\bm{\gamma}}^*)\). Noting that
\begin{align*}
\frac{d}{dt}\left[\phi(\bm{v}^*+t\bm{\Delta})-\phi(\bm{v}^*)\right]^2
&=2\left[\phi(\bm{v}^*+t\bm{\Delta})-\phi(\bm{v}^*)\right]\,\phi'(\bm{v}^*+t\bm{\Delta})\,\bm{\Delta},\\
\frac{d^2}{dt^2}\left[\phi(\bm{v}^*+t\bm{\Delta})-\phi(\bm{v}^*)\right]^2
&=2\left[\phi'(\bm{v}^*+t\bm{\Delta})\right]^2\bm{\Delta}^2
+2\left[\phi(\bm{v}^*+t\bm{\Delta})-\phi(\bm{v}^*)\right]\,\phi''(\bm{v}^*+t\bm{\Delta})\,\bm{\Delta}^2,
\end{align*}
and since \(\phi'(x)\in(0,1)\), \(\phi''(x)\in(-1,1)\), it follows that
\[
\mathbb{E}_{\bm{X}}\left[\phi\left({\bm{X}}^\top \hat{\bm{\gamma}}\right)-\phi\left({\bm{X}}^\top {\bm{\gamma}}^*\right)\right]^2
\le2\,\mathbb{E}[\bm{\Delta}^2]
=2\,\mathbb{E}_{\bm{X}}\left[{\bm{X}}^\top \left(\hat{\bm{\gamma}}^{(-k,-k')}-{\bm{\gamma}}^*\right)\right]^2.
\]
Hence
\[
\mathbb{E}_{\bm{X}}\left[\hat\pi^{(-k,-k')}(\bm{X})-\pi^*(\bm{X})\right]^2
=\mathbb{E}_{\bm{X}}\left[\phi\left({\bm{X}}^\top \hat{\bm{\gamma}}^{(-k,-k')}\right)-\phi\left({\bm{X}}^\top {\bm{\gamma}}^*\right)\right]^2
\;=\;O_p\left(\frac{s_{\gamma}\log(d)}{N}\right).
\]
\end{proof}
\begin{proof}[Proof of Lemma \ref{l4}]
Minkowski’s inequality implies
\begin{align*}
 \left[\mathbb{E}_{\bm{X}}\left|\hat{\pi}^{(-k,-k')}(\bm{X})\right|^{-r}\right]^{\frac{1}{r}}
 &= \left[\mathbb{E}_{\bm{X}}\left|1+\exp\left(-{\bm{X}}^\top \hat{{\bm{\gamma}}}^{(-k,-k')}\right)\right|^r\right]^{\frac{1}{r}}\\
 &
 \le 1 + \left[\mathbb{E}_{\bm{X}}\left|\exp\left(-{\bm{X}}^\top \hat{{\bm{\gamma}}}^{(-k,-k')}\right)\right|^r\right]^{\frac{1}{r}}
\end{align*}
and
\begin{align*}
 \left[\mathbb{E}_{\bm{X}}\left|1-\hat{\pi}^{(-k,-k')}(\bm{X})\right|^{-r}\right]^{\frac{1}{r}}
 &= \left[\mathbb{E}_{\bm{X}}\left|1+\exp\left({\bm{X}}^\top \hat{{\bm{\gamma}}}^{(-k,-k')}\right)\right|^r\right]^{\frac{1}{r}}\\
 &
 \le 1 + \left[\mathbb{E}_{\bm{X}}\left|\exp\left({\bm{X}}^\top \hat{{\bm{\gamma}}}^{(-k,-k')}\right)\right|^r\right]^{\frac{1}{r}}.
\end{align*}
Assumption \ref{a7} ensures
\begin{equation*}
 P\left(\frac{c_0}{1-c_0}\le \exp(-{\bm{X}}^\top {\bm{\gamma}}^*)\le\frac{1-c_0}{c_0}\right)=1,
\quad
 P\left(\frac{c_0}{1-c_0}\le \exp\left({\bm{X}}^\top {\bm{\gamma}}^*\right)\le\frac{1-c_0}{c_0}\right)=1.
\end{equation*}
Hence
\begin{align*}
 \left[\mathbb{E}_{\bm{X}}\left|\exp\left(-{\bm{X}}^\top \hat{{\bm{\gamma}}}^{(-k,-k')}\right)\right|^r\right]^{\frac{1}{r}}
 &\le \frac{1-c_0}{c_0}\,\left\{\mathbb{E}_{\bm{X}}\left|\exp\left[-{\bm{X}}^\top \left(\hat{{\bm{\gamma}}}^{(-k,-k')}-{\bm{\gamma}}^*\right)\right]\right|^r\right\}^{\frac{1}{r}},\\
 \left[\mathbb{E}_{\bm{X}}\left|\exp\left({\bm{X}}^\top \hat{{\bm{\gamma}}}^{(-k,-k')}\right)\right|^r\right]^{\frac{1}{r}}
 &\le \frac{1-c_0}{c_0}\,\left\{\mathbb{E}_{\bm{X}}\left|\exp\left[{\bm{X}}^\top \left(\hat{{\bm{\gamma}}}^{(-k,-k')}-{\bm{\gamma}}^*\right)\right]\right|^r\right\}^{\frac{1}{r}}.
\end{align*}
By Assumption \ref{a6}, 
\begin{equation*}
 \left\|\bm{X}^\top \left(\hat{{\bm{\gamma}}}^{(-k,-k')}-{\bm{\gamma}}^*\right)\right\|_{\psi_2}
 \le 2\,\sigma_x\,\left\|\hat{{\bm{\gamma}}}^{(-k,-k')}-{\bm{\gamma}}^*\right\|_2.
\end{equation*}
Let \(\delta = \mathbb{E}_{\bm{X}}|\bm{X}^\top (\hat{{\bm{\gamma}}}^{(-k,-k')}-{\bm{\gamma}}^*)|\). On the event $B$ defined in Lemma~\ref{l4}, 
\begin{equation*}
 \delta \le \sqrt{\pi}\,\sigma_x,
 \qquad
 \left\|\delta\right\|_{\psi_2} \le (\log 2)^{-1/2}\sqrt{\pi}\,\sigma_x,
\end{equation*}
and Lemma \ref{l1} gives
\begin{equation*}
 \left\|\left|\bm{X}^\top \left(\hat{{\bm{\gamma}}}^{(-k,-k')}-{\bm{\gamma}}^*\right)\right|-\delta\right\|_{\psi_2}
 \le \left\|\bm{X}^\top \left(\hat{{\bm{\gamma}}}^{(-k,-k')}-{\bm{\gamma}}^*\right)\right\|_{\psi_2}
 + \|\delta\|_{\psi_2}
 \le \left[1+(\log2)^{-1/2}\sqrt\pi\right]\,\sigma_x.
\end{equation*}
Finally, the moment‐generating bound
\[
\mathbb{E}_{\bm{X}}\left[\exp\left[\lambda\left(\left|\bm{X}^\top \left(\hat{{\bm{\gamma}}}^{(-k,-k')}-{\bm{\gamma}}^*\right)\right|-\delta\right)\right]\right]
\le \exp\left\{2\,\lambda^2\,\left[1+(\log2)^{-1/2}\sqrt\pi\right]^2\,\sigma_{\bm{X}}^2\right\}
\]
shows that \([\mathbb{E}_{\bm{X}}|\exp[{\bm{X}}^\top (\hat{{\bm{\gamma}}}^{(-k,-k')}-{\bm{\gamma}}^*)]|^r]^{1/r}\) remains bounded. Then,
\begin{align*}
 &\left\{\mathbb{E}_{\bm{X}}\left|\exp\left[-{\bm{X}}^\top \left(\hat{{\bm{\gamma}}}^{(-k,-k')}-{\bm{\gamma}}^*\right)\right]\right|^r\right\}^{\frac{1}{r}}\\
 &\quad=\left\{\mathbb{E}_{\bm{X}}\left|\exp\left[-r{\bm{X}}^\top \left(\hat{{\bm{\gamma}}}^{(-k,-k')}-{\bm{\gamma}}^*\right)\right]\right|\right\}^{\frac{1}{r}}\\
 &\quad\leq \left\{\mathbb{E}\exp\left[r\left|\bm{X}^\top \left(\hat{{\bm{\gamma}}}^{(-k,-k')}-{\bm{\gamma}}^*\right)\right|\right]\right\}^{\frac{1}{r}}\\
 &\quad
 \leq \exp\left\{\sqrt{\pi}\sigma_x+2r\left[1+(\log(2))^{-\frac{1}{2}}\sqrt{\pi}\right]^2\sigma_{x}^2\right\}.
\end{align*}
Hence \([\mathbb{E}_{\bm{X}}|\hat{\pi}^{(-k,-k')}(\bm{X})|^{-r}]^{1/r}\) is bounded; replacing \(r\) by \(2r\) yields a similar bound for
\([\mathbb{E}_{\bm{X}}|\hat{\pi}^{(-k,-k')}(\bm{X})|^{-2r}]^{1/(2r)}\). Similarly, \([\mathbb{E}_{\bm{X}}|1-\hat{\pi}^{(-k,-k')}(\bm{X})|^{-r}]^{1/r}\) is uniformly bounded.

Next, Taylor’s theorem gives
\begin{align*}
 &\left[\mathbb{E}_{\bm{X}}\left|\frac1{\hat{\pi}^{(-k,-k')}(\bm{X})}-\frac1{\pi^*(\bm{X})}\right|^{r}\right]^{1/r}\\
 &\quad
 =\left\{\mathbb{E}_{\bm{X}}\left|\exp(-{\bm{X}}^\top {\bm{\gamma}}^*)\left[\exp[-{\bm{X}}^\top \left(\hat{\bm{\gamma}}^{(-k,-k')}-{\bm{\gamma}}^*\right)]-1\right]\right|^r\right\}^{1/r}\\
 &\quad\le\frac{1-c_0}{c_0}\left\{\mathbb{E}_{\bm{X}}\left|\exp\left[-{\bm{X}}^\top \left(\hat{\bm{\gamma}}^{(-k,-k')}-{\bm{\gamma}}^*\right)\right]-1\right|^r\right\}^{1/r}\\
 &\quad\overset{(i)}{\le}\frac{1-c_0}{c_0}\left\{\mathbb{E}_{\bm{X}}\left|\left[1+\exp[-{\bm{X}}^\top \left(\hat{\bm{\gamma}}^{(-k,-k')}-{\bm{\gamma}}^*\right)]\right]\,{\bm{X}}^\top \left(\hat{\bm{\gamma}}^{(-k,-k')}-{\bm{\gamma}}^*\right)\right|^r\right\}^{1/r}\\
 &\quad\overset{(ii)}{\le}\frac{1-c_0}{c_0}\left[\mathbb{E}_{\bm{X}}\left|\bm{X}^\top \left(\hat{\bm{\gamma}}^{(-k,-k')}-{\bm{\gamma}}^*\right)\right|^r\right]^{1/r}\\
 &\qquad+\frac{1-c_0}{c_0}\left\{\mathbb{E}_{\bm{X}}\left|\exp\left[-{\bm{X}}^\top \left(\hat{\bm{\gamma}}^{(-k,-k')}-{\bm{\gamma}}^*\right)\right]\right|^{2r}\right\}^{1/(2r)}
 \left[\mathbb{E}_{\bm{X}}\left|\bm{X}^\top \left(\hat{\bm{\gamma}}^{(-k,-k')}-{\bm{\gamma}}^*\right)\right|^{2r}\right]^{1/(2r)},
\end{align*}
where (i) follows from a Taylor expansion, and (ii) from Minkowski’s and H\"older’s inequalities.
 Since \(|G_{-k,-k'}|\asymp N\), Lemma \ref{l1} yields
\[
\left[\mathbb{E}_{\bm{X}}\left|\bm{X}^\top \left(\hat{\bm{\gamma}}^{(-k,-k')}-{\bm{\gamma}}^*\right)\right|^r\right]^{1/r}
=O\left(\left\|\hat{\bm{\gamma}}^{(-k,-k')}-{\bm{\gamma}}^*\right\|_2\right)
=O_p\left(\sqrt{\frac{s_{\gamma}\log(d)}{N}}\right).
\]
The same argument applies to $\{\mathbb{E}_{\bm{X}}|[{1-\hat{\pi}^{(-k,-k')}(\bm{X})}]^{-1}-[{1-\pi^*(\bm{X})}]^{-1}|^r\}^{1/r},$
so that
\begin{equation*}
 \left[\mathbb{E}_{\bm{X}}\left|\frac1{\hat{\pi}^{(-k,-k')}(\bm{X})}-\frac1{\pi^*(\bm{X})}\right|^r\right]^{1/r}
 =O_p\left(\sqrt{\frac{s_{\gamma}\log(d)}{N}}\right),
\end{equation*}
\begin{equation*}
 \left[\mathbb{E}_{\bm{X}}\left|\frac1{1-\hat{\pi}^{(-k,-k')}(\bm{X})}-\frac1{1-\pi^*(\bm{X})}\right|^r\right]^{1/r}
 =O_p\left(\sqrt{\frac{s_{\gamma}\log(d)}{N}}\right).
\end{equation*}
\end{proof}
\begin{proof}[Proof of Lemma \ref{l5}]
First, establish \(\|\varphi^*(Z)\|_{\psi_2}=O(1)\). By definition,
\[
\varphi^*(Z)
=\frac{A}{\pi^*(\bm{X})}\left[Y(1)-\mu_1^*(\bm{X})\right]
-\frac{1-A}{1-\pi^*(\bm{X})}\left[Y(0)-\mu_0^*(\bm{X})\right]
+\mu_1^*(\bm{X})-\mu_0^*(\bm{X}).
\]
Assumption \ref{a5} implies each \(\mathbbm{1}_{A=a}[Y-\mu_a^*(\bm{X})]\) is sub-Gaussian, and Assumption \ref{a7} ensures \(\pi^*(\bm{X})^{-1}\) and \([1-\pi^*(\bm{X})]^{-1}\) are bounded. Hence
\[
\left\|\frac{A}{\pi^*(\bm{X})}\left[Y(1)-\mu_1^*(\bm{X})\right]
 -\frac{1-A}{1-\pi^*(\bm{X})}\left[Y(0)-\mu_0^*(\bm{X})\right]\right\|_{\psi_2}
=O(1).
\]
Next, \(\mu_1^*(\bm{X})={\bm{X}}^\top {\bm{\alpha}}_1^*\) where \({\bm{\alpha}}_1^*\) minimizes 
\(\mathbb{E}[A(Y-{\bm{X}}^\top {\bm{\alpha}})^2]\). Since
\[
\mathbb{E}Y^2 \ge \mathbb{E}\left(A\,Y^2\right)
\ge \mathbb{E}\left[\pi(\bm{X})\,\left({\bm{X}}^\top {\bm{\alpha}}_1^*\right)^2\right]
\ge c_0\,\mathbb{E}\,\left({\bm{X}}^\top {\bm{\alpha}}_1^*\right)^2
\ge c_0\kappa_l\|{\bm{\alpha}}_1^*\|_2^2,
\]
and \(\mathbb{E}Y^2=O(1)\) by Assumption \ref{a5}, it follows \(\|{\bm{\alpha}}_1^*\|_2=O(1)\). Then Assumption \ref{a6} gives
\[
\|\mu_1^*(\bm{X})\|_{\psi_2}
=\|\bm{X}^\top {\bm{\alpha}}_1^*\|_{\psi_2}
\le\sigma_x\,\|{\bm{\alpha}}_1^*\|_2
=O(1).
\]
Thus \(\|\varphi^*(Z)\|_{\psi_2}=O(1)\).
Now consider that \({\bm{\beta}}^*=\arg\min_{\bm{\beta}} \mathbb{E}[\tau(\bm{X})-\bm{W}^\top {\bm{\beta}}]^2\). Observe
\[
\mathbb{E}[\tau({\bm{X}})]^2 \ge \mathbb{E}[(\bm{W}^\top {\bm{\beta}}^*)^2]
= {\bm{\beta}}^{*^\top}\mathbb{E}[\bm{W}\bm{W}^\top ]{\bm{\beta}}^*
\ge \kappa_l\|{\bm{\beta}}^*\|_2^2.
\]
Meanwhile, $\mathbb{E}[\tau({\bm{X}})]^2=\mathbb{E}\{\mathbb{E}[Y(1)-Y(0)\mid\bm{X}]^2\}=O(1)$ under Assumption \ref{a5}. Hence, \(\|{\bm{\beta}}^*\|_2=O(1)\). Finally,
\[
\|\epsilon\|_{\psi_2}
=\|\varphi^*(Z)-\bm{W}^\top {\bm{\beta}}^*\|_{\psi_2}
\le\|\varphi^*(Z)\|_{\psi_2} + \|\bm{W}^\top {\bm{\beta}}^*\|_{\psi_2}
\overset{(i)}{\le}O(1) + \sigma_x\,\|{\bm{\beta}}^*\|_2
=O(1),
\]
where (i) holds from \(\|\varphi^*(Z)\|_{\psi_2}=O(1)\) and Assumption \ref{a6}.
\end{proof}

\begin{proof}[Proof of Lemma \ref{l6}]
By Jensen’s inequality, $\sup_{\|{\bm{\alpha}}\|_2\le1}|\mathbb{E}[\bm{W}^\top {\bm{\alpha}}]|
\le\sup_{\|{\bm{\alpha}}\|_2=1}\mathbb{E}|\bm{W}^\top {\bm{\alpha}}|
=O(1),$
where the last step holds by Assumption~\ref{a6} and Lemma~\ref{l1}. Since any bounded variable is sub‐Gaussian, \(\mathbb{E}(\bm{W})^\top {\bm{\alpha}}\) is sub‐Gaussian for all \(\|{\bm{\alpha}}\|_2\le1\). Hence the triangle inequality gives
\begin{equation}\label{40}
\|\bm{D}^\top {\bm{\alpha}}\|_{\psi_2}
\le\|\bm{W}^\top {\bm{\alpha}}\|_{\psi_2}+\left\|\mathbb{E}(\bm{W})^\top {\bm{\alpha}}\right\|_{\psi_2}
=O(1).
\end{equation}
 Next, recall that
\begin{equation}\label{160}
 \hat{\varphi}^{(-k)}(Z_i)
=\frac{1}{K-1}
\sum_{k'\neq k}
\hat{\varphi}^{(-k,-k')}(Z_i),
\end{equation}
with each
\begin{align*}
&\hat{\varphi}^{(-k,-k')}(Z_i)\\&\qquad
=\frac{A_i}{\hat\pi^{(-k,-k')}(\bm{X}_i)}
\left[Y_i(1)-\hat\mu_1^{(-k,-k')}(\bm{X}_i)\right]
-\frac{1-A_i}{1-\hat\pi^{(-k,-k')}(\bm{X}_i)}
\left[Y_i(0)-\hat\mu_0^{(-k,-k')}(\bm{X}_i)\right]\\
&\qquad\qquad+\hat\mu_1^{(-k,-k')}(\bm{X}_i)-\hat\mu_0^{(-k,-k')}(\bm{X}_i).
\end{align*}
By Lemma~\ref{l2} and Assumption~\ref{a6}, if \(s_{{\bm{\alpha}}}\log(d)=o( n)\), then
\[
\left\|\bm{X}^\top \left(\hat{\bm{\alpha}}_a^{(-k,-k')}-{\bm{\alpha}}_a^*\right)\right\|_{\psi_2}
\le\sigma_x\left\|\hat{\bm{\alpha}}_a^{(-k,-k')}-{\bm{\alpha}}_a^*\right\|_2
= o_p(1).
\]
Application of Lemma~\ref{l1} yields, for any \(l>0\),
\begin{equation}\label{138}
\mathbb{E}_{\bm{X}}\left|\bm{X}^\top \left(\hat{\bm{\alpha}}_a^{(-k,-k')}-{\bm{\alpha}}_a^*\right)\right|^l
=o_p(1).
\end{equation}
Furthermore,
\begin{align}\label{139}
 \notag&\left\{\mathbb{E}_{\bm{X}}\left|\frac{A}{\hat{\pi}^{(-k,-k')}(\bm{X})}\left[Y(1)-\hat{\mu}^{(-k,-k')}_1(\bm{X})\right]-\frac{A}{\pi^*(\bm{X})}\left[Y(1)-\mu^*_1(\bm{X})\right]\right|^4\right\}^{\frac{1}{4}}\\
 \notag&\qquad=\left\{\mathbb{E}_{\bm{X}}\left|\frac{A}{\hat{\pi}^{(-k,-k')}(\bm{X})}\left[Y(1)-\mu_1^*(\bm{X})+\mu_1^*(\bm{X})-\hat{\mu}^{(-k,-k')}_1(\bm{X})\right]\right.\right.\\
 &\notag\left.\left.\qquad\qquad-\frac{A}{\pi^*(\bm{X})}\left[Y(1)-\mu^*_1(\bm{X})\right]\right|^4\right\}^{\frac{1}{4}}\\
 \notag&\qquad=\left\{\mathbb{E}_{\bm{X}}\left|A\left[Y(1)-\mu^*_1(\bm{X})\right]\left[\frac{1}{\hat{\pi}^{(-k,-k')}(\bm{X})}-\frac{1}{\pi^*(\bm{X})}\right]\right.\right.\\
 &\notag\left.\left.\qquad\qquad+\frac{A}{\hat{\pi}^{(-k,-k')}(\bm{X})}\left[\mu_1^*(\bm{X})-\hat{\mu}^{(-k,-k')}_1(\bm{X})\right]\right|^4\right\}^{\frac{1}{4}}\\
 \notag&\qquad\overset{(i)}{\leq} \left\{\mathbb{E}_{\bm{X}}\left|A\left[Y(1)-\mu^*_1(\bm{X})\right]\left[\frac{1}{\hat{\pi}^{(-k,-k')}(\bm{X})}-\frac{1}{\pi^*(\bm{X})}\right]\right|^4\right\}^{\frac{1}{4}}\\
 \notag&\qquad\qquad+
 \left\{\mathbb{E}_{\bm{X}}\left|\frac{A}{\hat{\pi}^{(-k,-k')}(\bm{X})}\left[\mu_1^*(\bm{X})-\hat{\mu}^{(-k,-k')}_1(\bm{X})\right]\right|^4\right\}^{\frac{1}{4}}\\
 \notag&\qquad\leq \left\{\mathbb{E}_{\bm{X}}\left|A\left[Y(1)-\mu^*_1(\bm{X})\right]\right|^8\right\}^{\frac{1}{8}} \left\{\mathbb{E}_{\bm{X}}\left|\left[\frac{1}{\hat{\pi}^{(-k,-k')}(\bm{X})}-\frac{1}{\pi^*(\bm{X})}\right]\right|^8\right\}^{\frac{1}{8}}\\
 \notag&\qquad\qquad+\left\{\mathbb{E}_{\bm{X}}\left|\frac{1}{\hat{\pi}^{(-k,-k')}(\bm{X})}\right|^8\right\}^{\frac{1}{8}}
 \left\{\mathbb{E}_{\bm{X}}\left|\left[\mu_1^*(\bm{X})-\hat{\mu}^{(-k,-k')}_1(\bm{X})\right]\right|^8\right\}^{\frac{1}{8}}\\
 &\qquad\overset{(ii)}{=}O(1)o_p(1)+O_p(1)o_p(1)=o_p(1),
\end{align}
where (i) holds from Minkovski's inequality, (ii) holds since Assumption \ref{a5}, Lemma \ref{l4}, and \eqref{138}. Similarly 
\begin{align}\label{140}
 \notag&\left\{\mathbb{E}_{\bm{X}}\left|\frac{1-A}{1-\hat{\pi}^{(-k,-k')}(\bm{X})}\left[Y(0)-\hat{\mu}^{(-k,-k')}_0(\bm{X})\right]-\frac{1-A}{1-\pi^*(\bm{X})}\left[Y(0)-\mu^*_0(\bm{X})\right]\right|^4\right\}^{\frac{1}{4}}\\
 \notag&\qquad=\left\{\mathbb{E}_{\bm{X}}\left|\frac{1-A}{1-\hat{\pi}^{(-k,-k')}(\bm{X})}\left[Y(0)-\mu_0^*(\bm{X})+\mu_0^*(\bm{X})-\hat{\mu}^{(-k,-k')}_0(\bm{X})\right]\right.\right.\\
 &\left.\left.\qquad\qquad\notag-\frac{1-A}{1-\pi^*(\bm{X})}\left[Y(0)-\mu^*_0(\bm{X})\right]\right|^4\right\}^{\frac{1}{4}}\\
 \notag&\qquad=\left\{\mathbb{E}_{\bm{X}}\left|(1-A)\left[Y(0)-\mu^*_0(\bm{X})\right]\left[\frac{1}{1-\hat{\pi}^{(-k,-k')}(\bm{X})}-\frac{1}{1-\pi^*(\bm{X})}\right]\right.\right.\\
 &\left.\left.\notag\qquad\qquad+\frac{1-A}{1-\hat{\pi}^{(-k,-k')}(\bm{X})}\left[\mu_0^*(\bm{X})-\hat{\mu}^{(-k,-k')}_0(\bm{X})\right]\right|^4\right\}^{\frac{1}{4}}\\
 \notag&\qquad\leq \left\{\mathbb{E}_{\bm{X}}\left|(1-A)\left[Y(0)-\mu^*_0(\bm{X})\right]\left[\frac{1}{1-\hat{\pi}^{(-k,-k')}(\bm{X})}-\frac{1}{1-\pi^*(\bm{X})}\right]\right|^4\right\}^{\frac{1}{4}}\\
 &\notag\qquad\qquad\notag +
 \left\{\mathbb{E}_{\bm{X}}\left|\frac{1-A}{1-\hat{\pi}^{(-k,-k')}(\bm{X})}\left[\mu_0^*(\bm{X})-\hat{\mu}^{(-k,-k')}_0(\bm{X})\right]\right|^4\right\}^{\frac{1}{4}}\\
 \notag&\qquad\leq \left\{\mathbb{E}_{\bm{X}}\left|(1-A)\left[Y(0)-\mu^*_0(\bm{X})\right]\right|^8\right\}^{\frac{1}{8}} \left\{\mathbb{E}_{\bm{X}}\left|\left[\frac{1}{1-\hat{\pi}^{(-k,-k')}(\bm{X})}-\frac{1}{1-\pi^*(\bm{X})}\right]\right|^8\right\}^{\frac{1}{8}}\\
 \notag&\qquad\qquad+\left\{\mathbb{E}_{\bm{X}}\left|\frac{1}{1-\hat{\pi}^{(-k,-k')}(\bm{X})}\right|^8\right\}^{\frac{1}{8}}
 \left\{\mathbb{E}_{\bm{X}}\left|\left[\mu_0^*(\bm{X})-\hat{\mu}^{(-k,-k')}_0(\bm{X})\right]\right|^8\right\}^{\frac{1}{8}}\\
 &\qquad=O(1)o_p(1)+O_p(1)o_p(1)=o_p(1).
\end{align}
Combining \eqref{160}, \eqref{138}, \eqref{139}, and \eqref{140} yields
$\{\mathbb{E}_{\bm{X}}[|\hat{\varphi}^{(-k)}(Z)-\varphi^*(Z)]|^4\}^{1/4}
= o_p(1).$
Moreover, Markov’s inequality gives
\begin{equation}\label{35}
\frac{1}{\tilde n}\sum_{i\in I_k}\left[\hat{\varphi}^{(-k)}(Z_i)-\varphi^*(Z_i)\right]
= o_p(1).
\end{equation}

To show that \(\hat{\tau}_{\mathrm{para}}-\tau=o_p(1)\), note from the definition of \(\hat{\tau}_{\mathrm{para}}^{(k)}\),
\begin{align*}
&\hat{\tau}_{\mathrm{para}}^{(k)}-\tau
=|G_k|^{-1}\sum_{i \in G_k}{\bm{W}_i}^\top \hat{\bm{\beta}}^{(-k)}
 +\tilde n^{-1}\sum_{i\in I_k}\left[\hat\varphi^{(-k)}(Z_i)-\tau-{\bm{W}_i}^\top \hat{\bm{\beta}}^{(-k)}\right]\\
&\qquad=|G_k|^{-1}\sum_{i\in G_k}{\bm{D}_i}^\top \hat{\bm{\beta}}^{(-k)}
 +\tilde n^{-1}\sum_{i\in I_k}\left[\hat\varphi^{(-k)}(Z_i)-\varphi^*(Z_i)
 +\varphi^*(Z_i)-\tau-{\bm{D}_i}^\top \hat{\bm{\beta}}^{(-k)}\right].
\end{align*}
Since $\|\hat{\bm{\beta}}^{(-k)}\|_2
\le\|\hat{\bm{\beta}}^{(-k)}-{\bm{\beta}}^*\|_2+\|{\bm{\beta}}^*\|_2
=O_p(1),$
and \(\bm{D}\) is zero-mean sub‐Gaussian by \eqref{40}. Chebyshev’s inequality implies
\begin{align}\label{11}
|G_k|^{-1}\sum_{i\in G_k}{\bm{D}_i}^\top \hat{\bm{\beta}}^{(-k)}
=O_p\left(N^{-1/2}\right),
\qquad
\frac{1}{\tilde n}\sum_{i\in I_k}{\bm{D}_i}^\top \hat{\bm{\beta}}^{(-k)}
=O_p(n^{-1/2}).
\end{align}
Finally, Chebyshev’s inequality together with \(\|\varphi^*(Z)\|_{\psi_2}=O(1)\) gives
\[
\tilde n^{-1}\sum_{i\in I_k}\left(\varphi^*(Z_i)-\tau\right)
=O_p(n^{-1/2}),
\]
and \eqref{35} controls the remaining term. Averaging over \(k\in\{1,\dots,K\}\) yields \(\hat\tau_{\mathrm{para}}-\tau=o_p(1)\) for any finite \(K\).
\end{proof}

\begin{proof}[Proof of Lemma \ref{l7}]
 Starting from the definition of $\hat{\varphi}^{(-k)}(\bm{X})$ and $\varphi^*(\bm{X})$,
\begin{align}\label{149}
 \frac{1}{\tilde{n}}\sum_{i \in I_k}\hat{{\bm{\beta}}}^{(-k)^\top }{\bm{D}_i} \left[\hat{\varphi}^{(-k,-k')}(Z_i
 )-\varphi^*(Z_i
 )\right]
 =B_{1}+B_{2}+B_{3},
\end{align}
where 
\begin{align*}
 B_{1}&=\frac{1}{\tilde{n}}\sum_{i \in I_k}\hat{{\bm{\beta}}}^{(-k)^\top }{\bm{D}_i}
 \left\{\left[Y_i(1)-\mu_1^*(\bm{X}_i)\right]\left[\frac{A_i}{\hat{\pi}^{(-k,-k')}(\bm{X}_i)}-\frac{A_i}{\pi^*(\bm{X}_i)}\right]\right.\\
 &\left.\qquad-\left[Y_i(0)-\mu_0^*(\bm{X}_i)\right]\left[\frac{1-A_i}{1-\hat{\pi}^{(-k,-k')}(\bm{X}_i)}-\frac{1-A_i}{1-\pi^*(\bm{X}_i)}\right]\right\},\\
 B_{2}&=\frac{1}{\tilde{n}}\sum_{i \in I_k}\hat{{\bm{\beta}}}^{(-k)^\top }{\bm{D}_i}
 \left\{\left[\mu_1^*(\bm{X}_i)-\hat{\mu}^{(-k,-k')}_1(\bm{X}_i)\right]\left[\frac{A_i}{\pi^*(\bm{X}_i)}-1\right]\right.\\
 &\left.\qquad-\left[\mu_0^*(\bm{X}_i)-\hat{\mu}^{(-k,-k')}_0(\bm{X}_i)\right]\left[\frac{1-A_i}{1-\pi^*(\bm{X}_i)}-1\right]\right\},\\
 B_{3}&=\frac{1}{\tilde{n}}\sum_{i \in I_k}\hat{{\bm{\beta}}}^{(-k)^\top }{\bm{D}_i}
 \left\{A_i\left[\mu_1^*(\bm{X}_i)-\hat{\mu}^{(-k,-k')}_1(\bm{X}_i)\right]\left[\frac{1}{\hat{\pi}^{(-k,-k')}(\bm{X}_i)}-\frac{1}{\pi^*(\bm{X}_i)}\right]\right.\\
 &\left.\qquad-(1-A_i)\left[\mu_0^*(\bm{X}_i)-\hat{\mu}^{(-k,-k')}_0(\bm{X}_i)\right]\left[\frac{1}{1-\hat{\pi}^{(-k,-k')}(\bm{X}_i)}-\frac{1}{1-\pi^*(\bm{X}_i)}\right]\right\}.
\end{align*}

When $\mu_a^*(\cdot)=\mu_a(\cdot)$ for each $a\in\{0,1\}$, by iterative expectation formula, one can derive $\mathbb{E}_{\bm{X}}(B_{1})=0$. Further 
\begin{align*}
 &\mathbb{E}_{\bm{X}}\left\{\hat{{\bm{\beta}}}^{(-k)^\top }\bm{D}
 \left[Y(1)-\mu_1^*(\bm{X})\right]\left[\frac{A}{\hat{\pi}^{(-k,-k')}(\bm{X})}-\frac{A}{\pi^*(\bm{X})}\right]\right\}^2\\
 &\quad\leq \sqrt{\mathbb{E}_{\bm{X}}\left[\hat{{\bm{\beta}}}^{(-k)^\top }\bm{D}\right]^4}\left\{\mathbb{E}_{\bm{X}}\left\{A\left[Y(1)-\mu_1^*(\bm{X})\right]\right\}^8\right\}^{\frac{1}{4}}\left\{\mathbb{E}_{\bm{X}}\left[\frac{1}{\hat{\pi}^{(-k,-k')}(\bm{X})}-\frac{1}{\pi^*(\bm{X})}\right]^8\right\}^{\frac{1}{4}}\\
 &\quad\overset{(i)}{=}O_p\left(\frac{s_{{\bm{\gamma}}}\log(d)}{N}\right)=o_p(1),
\end{align*}
where (i) follows from \eqref{40}, Assumption \ref{a5}, and Lemma \ref{l4}.
Then according to Chebyshev's inequality,
\begin{equation*}
 \frac{1}{\tilde{n}}\sum_{i \in I_k}\hat{{\bm{\beta}}}^{(-k)^\top }{\bm{D}_i}
 \left\{\left[Y_i(1)-\mu_1^*(\bm{X}_i)\right]\left[\frac{A_i}{\hat{\pi}^{(-k,-k')}(\bm{X}_i)}-\frac{A_i}{\pi^*(\bm{X}_i)}\right]\right\}=o_p\left(n^{-\frac{1}{2}}\right).
\end{equation*}
Similarly,
\begin{align*}
 &\frac{1}{\tilde{n}}\sum_{i \in I_k}\hat{{\bm{\beta}}}^{(-k)^\top }{\bm{D}_i}\left\{(1-A_i)\left[Y_i(0)-\mu_0^*(\bm{X}_i)\right]\left[\frac{1}{1-\hat{\pi}^{(-k,-k')}(\bm{X}_i)}-\frac{1}{1-\pi^*(\bm{X}_i)}\right]\right\}\\\
 &\qquad=o_p\left(n^{-\frac{1}{2}}\right),
\end{align*}
which means that $B_{1}=o_p(n^{-\frac{1}{2}})$ if $\mu_a^*(\cdot)=\mu_a(\cdot)$ for each $a\in\{0,1\}$. 

When $\mu_a^*(\cdot)\neq\mu_a(\cdot)$ with some $a\in\{0,1\}$, by H\"older's inequality
\begin{align*}
 &\mathbb{E}_{\bm{X}}\left|\hat{{\bm{\beta}}}^{(-k)^\top }\bm{D}
 \left[Y(1)-\mu_1^*(\bm{X})\right]\left[\frac{A}{\hat{\pi}^{(-k,-k')}(\bm{X})}-\frac{A}{\pi^*(\bm{X})}\right]\right|\\
 &\quad\leq \sqrt{\mathbb{E}_{\bm{X}}\left(\hat{{\bm{\beta}}}^{(-k)^\top }\bm{D}\right)^2}\left\{\mathbb{E}_{\bm{X}}\left[A\left(Y(1)-\mu_1^*(\bm{X})\right)\right]^4\right\}^{\frac{1}{4}}\left\{\mathbb{E}_{\bm{X}}\left[\frac{1}{\hat{\pi}^{(-k,-k')}(\bm{X})}-\frac{1}{\pi^*(\bm{X})}\right]^4\right\}^{\frac{1}{4}}\\
 &\quad\overset{(i)}{=}O_p\left(\sqrt{\frac{s_{{\bm{\gamma}}}\log(d)}{N}}\right),
\end{align*}
where (i) holds from \eqref{40}, Lemma \ref{l4}, and Assumption \ref{a5}. Besides,
\begin{align*}
 &\mathbb{E}_{\bm{X}}\left|\hat{{\bm{\beta}}}^{(-k)^\top }\bm{D}
 \left[Y(0)-\mu_0^*(\bm{X})\right]\left[\frac{1-A}{1-\hat{\pi}^{(-k,-k')}(\bm{X})}-\frac{1-A}{1-\pi^*(\bm{X})}\right]\right|\\
 &\quad\leq \sqrt{\mathbb{E}_{\bm{X}}\left(\hat{{\bm{\beta}}}^{(-k)^\top }\bm{D}\right)^2}\left\{\mathbb{E}_{\bm{X}}\left[Y(0)-\mu_0^*(\bm{X})\right]^4\right\}^{\frac{1}{4}}\left\{\mathbb{E}_{\bm{X}}\left[\frac{1-A}{1-\hat{\pi}^{(-k,-k')}(\bm{X})}-\frac{1-A}{1-\pi^*(\bm{X})}\right]^4\right\}^{\frac{1}{4}}\\
 &\quad=O_p\left(\sqrt{\frac{s_{{\bm{\gamma}}}\log(d)}{N}}\right).
\end{align*}
By Markov's inequality, $B_1=O_p(\sqrt{{s_{{\bm{\gamma}}}\log(d)/N}})$ if $\mu_a^*(\cdot)\neq\mu_a(\cdot)$ with some $a\in\{0,1\}$. 

To sum up,
$$B_1=o_p\left(n^{-\frac{1}{2}}\right)+O_p\left(\sqrt{\frac{s_{{\bm{\gamma}}}\log(d)}{N}}\left(\mathbbm{1}_{\mu_1^*(\cdot)\neq\mu_1(\cdot)}+\mathbbm{1}_{\mu_0^*(\cdot)\neq\mu_0(\cdot)}\right)\right).$$

Now turn to prove the convergence rate of \(B_{2}\). When $\pi^*(\cdot)=\pi(\cdot)$, by iterative expectation formula, $\mathbb{E}_{\bm{X}}(B_{2})=0$. Observe that
\begin{align*}
 &\mathbb{E}_{\bm{X}}\left\{\hat{{\bm{\beta}}}^{(-k)^\top }\bm{D}
 \left[\mu_1^*(\bm{X})-\hat{\mu}^{(-k,-k')}_1(\bm{X})\right]\left[\frac{A}{\pi^*(\bm{X})}-1\right]\right\}^2\\
 &\quad\leq \sqrt{\mathbb{E}_{\bm{X}}\left[\hat{{\bm{\beta}}}^{(-k)^\top }\bm{D}\right]^4}\left\{\mathbb{E}_{\bm{X}}\left[\mu_1^*(\bm{X})-\hat{\mu}^{(-k,-k')}_1(\bm{X})\right]^8\right\}^{\frac{1}{4}}\left\{\mathbb{E}_{\bm{X}}\left[\frac{A}{\pi^*(\bm{X})}-1\right]^8\right\}^{\frac{1}{4}}\\
 &\quad=O_p\left(\frac{s_{{\bm{\alpha}}
 }\log(d)}{n}\right)=o_p(1)
\end{align*}
and
\begin{align*}
 &\mathbb{E}_{\bm{X}}\left\{\hat{{\bm{\beta}}}^{(-k)^\top }\bm{D}
 \left[\mu_0^*(\bm{X})-\hat{\mu}^{(-k,-k')}_0(\bm{X})\right]\left[\frac{1-A}{1-\pi^*(\bm{X})}\right]\right\}^2\\
 &\quad\leq \sqrt{\mathbb{E}_{\bm{X}}\left[\hat{{\bm{\beta}}}^{(-k)^\top }\bm{D}\right]^4}\left\{\mathbb{E}_{\bm{X}}\left[\mu_0^*(\bm{X})-\hat{\mu}^{(-k,-k')}_0(\bm{X})\right]^8\right\}^{\frac{1}{4}}\left\{\mathbb{E}_{\bm{X}}\left[\frac{1-A}{1-\pi^*(\bm{X})}\right]^8\right\}^{\frac{1}{4}}\\
 &\quad=o_p(1).
\end{align*}
By Chebyshev's inequality, \(B_2 = o_p( n^{-1/2})\) if $\pi^*(\cdot)=\pi(\cdot)$. 

When $\pi^*(\cdot)\neq\pi(\cdot)$, observe that
\begin{align*}
 &\mathbb{E}_{\bm{X}}\left|\hat{{\bm{\beta}}}^{(-k)^\top }\bm{D}
 \left[\mu_1^*(\bm{X})-\hat{\mu}^{(-k,-k')}_1(\bm{X})\right]\left[\frac{A}{\pi^*(\bm{X})}-1\right]\right|\\
 &\quad\leq \sqrt{\mathbb{E}_{\bm{X}}\left(\hat{{\bm{\beta}}}^{(-k)^\top }\bm{D}\right)^2}\left\{\mathbb{E}_{\bm{X}}\left[\mu_1^*(\bm{X})-\hat{\mu}^{(-k,-k')}_1(\bm{X})\right]^4\right\}^{\frac{1}{4}}\left\{\mathbb{E}_{\bm{X}}\left[\frac{A}{\pi^*(\bm{X})}-1\right]^4\right\}^{\frac{1}{4}}\\
 &\quad\overset{(i)}{=}O_p\left(\sqrt{\frac{s_{{\bm{\alpha}}}\log(d)}{n}}\right),
\end{align*}
where (i) holds from \eqref{40}, Assumption \ref{a7}, and Lemma \ref{l2}. Besides,
\begin{align*}
 &\mathbb{E}_{\bm{X}}\left|\hat{{\bm{\beta}}}^{(-k)^\top }\bm{D}
 \left[\mu_0^*(\bm{X})-\hat{\mu}^{(-k,-k')}_0(\bm{X})\right]\left[\frac{1-A}{1-\pi^*(\bm{X})}-1\right]\right|\\
 &\quad\leq \sqrt{\mathbb{E}_{\bm{X}}\left(\hat{{\bm{\beta}}}^{(-k)^\top }\bm{D}\right)^2}\left\{\mathbb{E}_{\bm{X}}\left[\mu_0^*(\bm{X})-\hat{\mu}^{(-k,-k')}_0(\bm{X})\right]^4\right\}^{\frac{1}{4}}\left\{\mathbb{E}_{\bm{X}}\left[\frac{1-A}{1-\pi^*(\bm{X})}-1\right]^4\right\}^{\frac{1}{4}}\\
 &\quad\overset{(i)}{=}O_p\left(\sqrt{\frac{s_{{\bm{\alpha}}}\log(d)}{n}}\right).
\end{align*}
By Markov’s inequality, $B_{2}
=O_p(\sqrt{s_{{\bm{\alpha}}}\log(d)/n})$ if $\pi^*(\cdot)\neq\pi(\cdot)$.

To sum up,
$$B_{2}
=o_p\left(n^{-\frac{1}{2}}\right)+O_p\left(\sqrt{\frac{s_{{\bm{\alpha}}}\log(d)}{n}}\,
 \mathbbm{1}_{\pi^*(\cdot)\neq\pi(\cdot)}\right).$$

To bound the product error term \(B_3\), H\"older’s inequality implies
\begin{align*}
 &\frac{1}{\tilde{n}}\sum_{i \in I_k}\left|\hat{{\bm{\beta}}}^{(-k)^\top }{\bm{D}_i}
 \left\{A_i\left[\mu_1^*(\bm{X}_i)-\hat{\mu}^{(-k,-k')}_1(\bm{X}_i)\right]\left[\frac{1}{\hat{\pi}^{(-k,-k')}(\bm{X}_i)}-\frac{1}{\pi^*(\bm{X}_i)}\right]\right\}\right|\\
 &\quad\leq \sqrt{\frac{1}{\tilde{n}}\sum_{i \in I_k}\left(\hat{{\bm{\beta}}}^{(-k)^\top }{\bm{D}_i}\right)^2}
 \sqrt{\frac{1}{\tilde{n}}\sum_{i \in I_k}\left[\mu_1^*(\bm{X}_i)-\hat{\mu}^{(-k,-k')}_1(\bm{X}_i)\right]^2\left[\frac{1}{\hat{\pi}^{(-k,-k')}(\bm{X}_i)}-\frac{1}{\pi^*(\bm{X}_i)}\right]^2}\\
 &\quad\leq \sqrt{\frac{1}{\tilde{n}}\sum_{i \in I_k}\left(\hat{{\bm{\beta}}}^{(-k)^\top }{\bm{D}_i}\right)^2}
 \left\{\frac{1}{\tilde{n}}\sum_{i \in I_k}\left[\mu_1^*(\bm{X}_i)-\hat{\mu}^{(-k,-k')}_1(\bm{X}_i)\right]^4\right\}^{\frac{1}{4}}\\
 &\qquad\cdot
 \left\{\frac{1}{\tilde{n}}\sum_{i \in I_k}\left[\frac{1}{\hat{\pi}^{(-k,-k')}(\bm{X}_i)}-\frac{1}{\pi^*(\bm{X}_i)}\right]^4\right\}^{\frac{1}{4}}\\
 &\quad\overset{(i)}{=}O_p\left(\sqrt{\frac{s_{{\bm{\alpha}}}\log(d)}{{n}}}\sqrt{\frac{s_{{\bm{\gamma}}}\log(d)}{N}}\right),
\end{align*}
where (i) holds from \eqref{40}, Lemma \ref{l2}, Lemma \ref{l4}, and Markov's inequality. Then by H\"older’s inequality,
\begin{align*}
 &\frac{1}{\tilde{n}}\sum_{i \in I_k}\left|\hat{{\bm{\beta}}}^{(-k)^\top }{\bm{D}_i}
 \left\{(1-A_i)\left[\mu_0^*(\bm{X}_i)-\hat{\mu}^{(-k,-k')}_0(\bm{X}_i)\right]\left[\frac{1}{1-\hat{\pi}^{(-k,-k')}(\bm{X}_i)}-\frac{1}{1-\pi^*(\bm{X}_i)}\right]\right\}\right|\\
 &\quad=O_p\left(\sqrt{\frac{s_{{\bm{\alpha}}}\log(d)}{{n}}}\sqrt{\frac{s_{{\bm{\gamma}}}\log(d)}{N}}\right).
\end{align*}

Consequently, the overall convergence rate is
\begin{align*}
 &\frac{1}{\tilde{n}}\sum_{i \in I_k}\hat{{\bm{\beta}}}^{(-k)^\top }{\bm{D}_i} \left[\hat{\varphi}^{(-k,-k')}(Z_i
 )-\varphi^*(Z_i
 )\right]=O_p(R_n).
\end{align*}
Since K is finite, 
\begin{align*}
 &\frac{1}{n}\sum_{k=1}^K\sum_{i \in I_k}\hat{{\bm{\beta}}}^{(-k)^\top }{\bm{D}_i} \left[\hat{\varphi}^{(-k)}(Z_i
 )-\varphi^*(Z_i
 )\right]=O_p(R_n).
\end{align*}

Since $\|{\bm{\beta}}^{*^\top }\bm{D}\|_{\psi_2}=O(\|{\bm{\beta}}^*\|_{2})=O(1)$.
Lemma \ref{l1} implies that for any constant $l\geq 0$,
 \begin{equation}\label{109}
 \mathbb{E}_{\bm{X}}\left|{\bm{\beta}}^{*^\top }\bm{D}\right|^{l}=O(1).
 \end{equation}
Consequently, replacing ${\bm{\beta}}^{*\top}\bm{D}_i$ by $\hat{\bm{\beta}}^{(-k)\top}\bm{D}_i$ in the corresponding moment calculation still yields the same upper bound.
\end{proof}

\begin{proof}[Proof of Lemma \ref{l8}]
Throughout the sequel, the following auxiliary bounds will be employed. Since $\bm{D}$ is sub-Gaussian and $\hat{{\bm{\beta}}}^{(-k)}$ constitutes a consistent estimator of ${\bm{\beta}}^*$, one has
 \begin{equation}\label{102}
 \left\|\left(\hat{{\bm{\beta}}}^{(-k)}-{\bm{\beta}}^*\right)^\top \bm{D}\right\|_{\psi_2}=O\left(\left\|\hat{{\bm{\beta}}}^{(-k)}-{\bm{\beta}}^*\right\|_{2}\right)=o_p(1).
 \end{equation}
By Lemma~\ref{l1}, it follows that for any constant $l>0$,
\begin{equation}\label{103}
 \mathbb{E}_{\bm{X}}\left|\left(\hat{\bm{\beta}}^{(-k)}-{\bm{\beta}}^*\right)^\top \bm{D}\right|^l = o_p(1).
\end{equation}
Furthermore, we have \eqref{109} holds and
\begin{equation}\label{92}
 \mathbb{E}_{\bm{X}}\left|\hat{{\bm{\beta}}}^{(-k)^\top }\bm{D}\right|^l=O_p(1).
\end{equation}

For $a \in [0,1]$,
\begin{align}
 &\left\|A(Y-\hat{\mu}_a^{(-k,-k')}(\bm{X}))\right\|_{\psi_2}\leq \|A\left[Y-\mu_a^*(\bm{X})\right]\|_{\psi_2}+\left\|\mu_a^*(\bm{X})-\hat{\mu}_a^{(-k,-k')}(\bm{X})\right\|_{\psi_2}\nonumber\\
 &\qquad=\|A\left[Y-\mu_a^*(\bm{X})\right]\|_{\psi_2}+\left\|\bm{X}^\top ({\bm{\alpha}}_a^*-\hat{{\bm{\alpha}}}^{(-k,-k')}_a)\right\|_{\psi_2}=O_p(1).\label{bound:psi2-residual}
\end{align}
Then for any constant $c>0$,
\begin{align*}
 &\left[\mathbb{E}_{\bm{X}}\left|\hat{\varphi}^{(-k,-k')}(Z)\right|^{4+c}\right]^{\frac{1}{4+c}}\\
 &\quad\overset{(i)}{\leq} \left(\mathbb{E}_{\bm{X}}\left|\frac{A}{\hat{\pi}^{(-k,-k')}(\bm{X})}\left[Y(1)-\hat{\mu}^{(-k,-k')}_1(\bm{X})\right]\right|^{4+c}\right)^{\frac{1}{4+c}}\\
 &\quad\quad+\left(\mathbb{E}_{\bm{X}}\left|\frac{1-A}{1-\hat{\pi}^{(-k,-k')}(\bm{X})}\left[Y(0)-\hat{\mu}^{(-k,-k')}_0(\bm{X})\right]\right|^{4+c}\right)^{\frac{1}{4+c}}\\
 &\quad\quad+\left(\mathbb{E}_{\bm{X}}\left|\hat{\mu}^{(-k,-k')}_1(\bm{X})\right|^{4+c}\right)^{\frac{1}{4+c}}+\left(\mathbb{E}_{\bm{X}}\left|\hat{\mu}^{(-k,-k')}_0(\bm{X})\right|^{4+c}\right)^{\frac{1}{4+c}}\\
 &\quad\leq \left(\mathbb{E}_{\bm{X}}\left|\frac{1}{\hat{\pi}^{(-k,-k')}(\bm{X})}\right|^{8+2c}\right)^{\frac{1}{8+2c}}\left(\mathbb{E}_{\bm{X}}\left|A\left[Y(1)-\hat{\mu}^{(-k,-k')}_1(\bm{X})\right]\right|^{8+2c}\right)^{\frac{1}{8+2c}}\\
 &\quad\quad+\left(\mathbb{E}_{\bm{X}}\left|\frac{1}{1-\hat{\pi}^{(-k,-k')}(\bm{X})}\right|^{8+2c}\right)^{\frac{1}{8+2c}}\left(\mathbb{E}_{\bm{X}}\left|(1-A)\left[Y(0)-\hat{\mu}^{(-k,-k')}_0(\bm{X})\right]\right|^{8+2c}\right)^{\frac{1}{8+2c}}\\
 &\quad\quad+\left(\mathbb{E}_{\bm{X}}\left|\hat{\mu}^{(-k,-k')}_0(\bm{X})\right|^{4+c}\right)^{\frac{1}{4+c}}+
 \left(\mathbb{E}_{\bm{X}}\left|\hat{\mu}^{(-k,-k')}_1(\bm{X})\right|^{4+c}\right)^{\frac{1}{4+c}}\\
 &\quad\leq \left(\mathbb{E}_{\bm{X}}\left|\frac{1}{\hat{\pi}^{(-k,-k')}(\bm{X})}\right|^{8+2c}\right)^{\frac{1}{8+2c}}\left(\mathbb{E}_{\bm{X}}\left|A\left[Y(1)-\hat{\mu}^{(-k,-k')}_1(\bm{X})\right]\right|^{8+2c}\right)^{\frac{1}{8+2c}}\\
 &\notag\quad\quad+\left(\mathbb{E}_{\bm{X}}\left|\bm{X}^\top \hat{{\bm{\alpha}}}^{(-k,-k')}_1\right|^{4+c}\right)^{\frac{1}{4+c}}\left(\mathbb{E}_{\bm{X}}\left|\frac{1}{1-\hat{\pi}^{(-k,-k')}(\bm{X})}\right|^{8+2c}\right)^{\frac{1}{8+2c}}
 \\&\quad\quad\cdot\left(\mathbb{E}_{\bm{X}}\left|(1-A)\left[Y(0)-\hat{\mu}^{(-k,-k')}_0(\bm{X})\right]\right|^{8+2c}\right)^{\frac{1}{8+2c}}
 +\left(\mathbb{E}_{\bm{X}}\left|\bm{X}^\top \hat{{\bm{\alpha}}}^{(-k,-k')}_0\right|^{4+c}\right)^{\frac{1}{4+c}}\\
 &\quad\overset{(ii)}{=}O_p(1),
\end{align*}
where (i) holds by Minkowski's inequality, (ii) holds by Lemma \ref{l4} and \eqref{bound:psi2-residual}. Therefore, for any constant $K$, we have
\begin{align}\label{93}
\left[\mathbb{E}_{\bm{X}}\left|\hat{\varphi}^{(-k)}(Z)\right|^{4+c}\right]^{\frac{1}{4+c}}=O_p(1).
\end{align}
Applying H\"older's inequality to \eqref{93} yields,
 \begin{equation}\label{122}
 \mathbb{E}_{\bm{X}}\left|\hat{\varphi}^{(-k)}(Z)\right|^2\leq \left(\mathbb{E}_{\bm{X}}\left|\hat{\varphi}^{(-k)}(Z)\right|^{4+c}\right)^{\frac{2}{4+c}}\left(\mathbb{E}_{\bm{X}}\left|1\right|^{\frac{4+c}{2+c}}\right)^{\frac{2+c}{4+c}}=O_p(1).
 \end{equation}
 
 For any constants $v,l\geq0, q>0$,
 \begin{align}\label{104}
 \notag&\mathbb{E}_{\bm{X}}\left|\hat{\varphi}^{(-k)}(Z)\left(\hat{{\bm{\beta}}}^{(-k)^\top }\bm{D}\right)^l\left({\bm{\beta}}^{*^\top }\bm{D}\right)^v\left[\left(\hat{{\bm{\beta}}}^{(-k)}-{\bm{\beta}}^*\right)^\top \bm{D}\right]^q\right|\\
 \notag&\quad\leq \sqrt{\mathbb{E}_{\bm{X}}\left|\hat{\varphi}^{(-k)}(Z)\right|^2}\left\{\mathbb{E}_{\bm{X}}\left|\hat{{\bm{\beta}}}^{(-k)^\top }\bm{D}\right|^{4l}\right\}^{\frac{1}{4}}
 \left\{\mathbb{E}_{\bm{X}}\left|{\bm{\beta}}^{*^\top }\bm{D}\right|^{8v}\right\}^{\frac{1}{8}}
 \left\{\mathbb{E}_{\bm{X}}\left|\left(\hat{{\bm{\beta}}}^{(-k)}-{\bm{\beta}}^*\right)^\top \bm{D}\right|^{8q}\right\}^{\frac{1}{8}}\\
 \notag&\quad\overset{(i)}{=} O_p(1)O_p(1)O(1)o_p(1)\\
 &\quad=o_p(1),
 \end{align}
 where (i) holds from  \eqref{109}, \eqref{103}, \eqref{92}, and 
 \eqref{122}. It follows that
 \begin{equation}\label{105}
 \mathbb{E}_{\bm{X}}\left|\left(\hat{{\bm{\beta}}}^{(-k)^\top }\bm{D}\right)^l\left({\bm{\beta}}^{*^\top }\bm{D}\right)^v\left[\left(\hat{{\bm{\beta}}}^{(-k)}-{\bm{\beta}}^*\right)^\top \bm{D}\right]^q\right|=o_p(1).
 \end{equation}

Chebyshev’s inequality implies that 
 \begin{equation}\label{75}
\hat{{\bm{\beta}}}^{(-k)^\top }\left[\mathbb{E}(\bm{W})-|G_k|^{-1}\sum_{i \in G_k}{\bm{W}_i}\right]=O_p\left(N^{-\frac{1}{2}}\right).
 \end{equation}
Furthermore,
\begin{equation}\label{95}
 \tau = \mathbb{E}\left[\varphi^*(Z)\right]=O(1),
\end{equation}
since \(\|\varphi^*(Z)\|_{\psi_2}=O(1)\). Combined with Lemma \ref{l6}, it follows that $\hat{\tau}_{\mathrm{para}}=\tau+\hat{\tau}_{\mathrm{para}}-\tau=O_p(1).$
Next, for any constants \(b,l,v,q,r\ge 0\) with \(q+r>0\) and \(a\in\{0,1,2\}\), we show that
\begin{align*}
 &\frac{1}{\tilde{n}}\sum_{i \in I_k}(\hat{\epsilon}_i^{(k)})^a\left[\hat{{\bm{\beta}}}^{(-k)^\top }\hat{\bm{D}}_i^{(k)}\right]^b\left({\bm{\beta}}^{*^\top }{\bm{D}_i}\right)^v
 \left[\hat{{\bm{\beta}}}^{(-k)^\top }{\bm{D}_i}\right]^l\left\{\hat{{\bm{\beta}}}^{(-k)^\top }\left[\mathbb{E}(\bm{W})-|G_k|^{-1}\sum_{i \in G_k}{\bm{W}_i}\right]\right\}^q \\ &\quad\cdot\left[\left(\hat{{\bm{\beta}}}^{(-k)}-{\bm{\beta}}^*\right)^\top {\bm{D}_i}\right]^r=o_p(1).
\end{align*}
To this end, first demonstrate that ${\tilde{n}^{-1}}\sum_{i \in I_k}(\hat{\epsilon}_i^{(k)})^{2a}=O_p(1).$
By Minkowski’s inequality, for any constant $b\geq 0$,
\begin{align}\label{141}
 &\notag\left\{\frac{1}{\tilde{n}}\sum_{i \in I_k}\left[\hat{{\bm{\beta}}}^{(-k)^\top }\hat{\bm{D}}_i^{(k)}\right]^b\right\}^{\frac{1}{b}}=\left\{\frac{1}{\tilde{n}}\sum_{i \in I_k}\left[\hat{{\bm{\beta}}}^{(-k)^\top }\left({\bm{D}_i}+\mathbb{E}(\bm{W})-|G_k|^{-1}\sum_{i \in G_k}{\bm{W}_i}\right)\right]^b\right\}^{\frac{1}{b}}\notag\\
 &\quad\leq \left\{\frac{1}{\tilde{n}}\sum_{i \in I_k}\left(\hat{{\bm{\beta}}}^{(-k)^\top }{\bm{D}_i}\right)^b\right\}^{\frac{1}{b}}
 +\hat{{\bm{\beta}}}^{(-k)^\top }\left[\mathbb{E}(\bm{W})-|G_k|^{-1}\sum_{i \in G_k}{\bm{W}_i}\right]\overset{(i)}{=}O_p(1),
\end{align}
where (i) follows from \eqref{92}, \eqref{75}, and Markov’s inequality. By the definition of $\hat{\epsilon}_i^{(k)}$,
\begin{align}\label{117}
 \notag&\left\{\frac{1}{\tilde{n}}\sum_{i \in I_k}(\hat{\epsilon}_i^{(k)})^{2a}\right\}^{\frac{1}{2a}}=\left\{\frac{1}{\tilde{n}}\sum_{i \in I_k}\left[\hat{\varphi}^{(-k)}(Z_i)-\hat{\tau}_{\mathrm{para}}-\hat{{\bm{\beta}}}^{(-k)^\top }\hat{\bm{D}}_i^{(k)}\right]^{2a}\right\}^{\frac{1}{2a}}\notag\\
 &\quad\leq \left\{\frac{1}{\tilde{n}}\sum_{i \in I_k}\left[\hat{\varphi}^{(-k)}(Z_i)\right]^{2a}\right\}^{\frac{1}{2a}}
 +\left\{\frac{1}{\tilde{n}}\sum_{i \in I_k}\left[\hat{{\bm{\beta}}}^{(-k)^\top }\hat{\bm{D}}_i^{(k)}\right]^{2a}\right\}^{\frac{1}{2a}}
 +\hat{\tau}_{\mathrm{para}}\overset{(i)}{=}O_p(1),
\end{align}
where (i) follows from  \eqref{93} and \eqref{141}.
 Then, 
 \begin{align}\label{116}
 \notag&\frac{1}{\tilde{n}}\sum_{i \in I_k}(\hat{\epsilon}_i^{(k)})^a\left(\hat{{\bm{\beta}}}^{(-k)^\top }\hat{\bm{D}}_i^{(k)}\right)^b\left({\bm{\beta}}^{*^\top }{\bm{D}_i}\right)^v
 \left(\hat{{\bm{\beta}}}^{(-k)^\top }{\bm{D}_i}\right)^l\left\{\hat{{\bm{\beta}}}^{(-k)^\top }\left[\mathbb{E}(\bm{W})-|G_k|^{-1}\sum_{i \in G_k}{\bm{W}_i}\right]\right\}^q\\
 &\notag\qquad\cdot\left[\left(\hat{{\bm{\beta}}}^{(-k)}-{\bm{\beta}}^*\right)^\top {\bm{D}_i}\right]^r\\
 \notag&\quad\leq \sqrt{\frac{1}{\tilde{n}}\sum_{i \in I_k}(\hat{\epsilon_i}^{(k)})^{2a}}\left\{\frac{1}{\tilde{n}}\sum_{i \in I_k}\left[\hat{{\bm{\beta}}}^{(-k)^\top }\hat{\bm{D}}_i^{(k)}\right]^{4b}\right\}^{\frac{1}{4}}
 \\
 &\notag\qquad\cdot\left\{\frac{1}{\tilde{n}}\sum_{i \in I_k}[\hat{{\bm{\beta}}}^{(-k)^\top }{\bm{D}_i}]^{8l}\right\}^{\frac{1}{8}}
 \left\{\frac{1}{\tilde{n}}\sum_{i \in I_k}\left(\hat{{\bm{\beta}}}^{(-k)^\top }\left[\mathbb{E}(\bm{W})-|G_k|^{-1}\sum_{i \in G_k}{\bm{W}_i}\right]\right)^{16q}\right\}^{\frac{1}{16}}\\
 \notag&\qquad\cdot \left\{\frac{1}{\tilde{n}}\sum_{i \in I_k}\left[\left(\hat{{\bm{\beta}}}^{(-k)}-{\bm{\beta}}^*\right)^\top {\bm{D}_i}\right]^{32r}\right\}^{\frac{1}{32}} \left\{\frac{1}{\tilde{n}}\sum_{i \in I_k}\left({\bm{\beta}}^{*^\top }{\bm{D}_i}\right)^{32v}\right\}^{\frac{1}{32}}\\
 &\quad\overset{(i)}{=}O_p(1)O_p(1)O_p(1)o_p(1)o_p(1)O_p(1)=o_p(1),
 \end{align}
 where (i) follows from  \eqref{109}, \eqref{103}, \eqref{92}, \eqref{75}, \eqref{141}, and \eqref{117}.
 \end{proof}

\begin{proof}[Proof of Lemma \ref{l9}]
 First, we show that $B$ and $C$ are bounded. From \eqref{109} one obtains
\begin{equation*}
 0 \leq B = \mathbb{E}\left({\bm{\beta}}^{*^\top }\bm{D}\right)^4 - \left[\mathbb{E}\left({\bm{\beta}}^{*^\top }\bm{D}\right)^2\right]^2\leq \mathbb{E}\left({\bm{\beta}}^{*^\top }\bm{D}\right)^4 = O(1).
\end{equation*}
Moreover, by Lemma~\ref{l5} $\mathbb{E}|\epsilon|^{4+2c}=O(1)$ for any constant $c>0$. Hence, by \eqref{109} and H\"older’s inequality, 
\begin{equation}\label{18}
 \mathbb{E}|\epsilon\,{\bm{\beta}}^{*^\top}\bm{D}|^{2+c}
 \le \sqrt{\mathbb{E}|\epsilon|^{4+2c}\;\mathbb{E}\left|{\bm{\beta}}^{*^\top}\bm{D}\right|^{4+2c}}
 =O(1).
\end{equation}
Consequently,
\begin{align*}
 \mathbb{E}\left|2\epsilon\,{\bm{\beta}}^{*^\top }\bm{D}\left[\left({\bm{\beta}}^{*^\top }\bm{D}\right)^2 -\theta_{\mathrm{ETH}}\right]\right|
 &\leq \sqrt{\mathbb{E}\left|2\epsilon\,{\bm{\beta}}^{*^\top }\bm{D}\right|^2}\;\sqrt{\mathbb{E}\left|\left({\bm{\beta}}^{*^\top }\bm{D}\right)^2 -\theta_{\mathrm{ETH}}\right|^2}
 \overset{(i)}{=}O(1),
\end{align*}
where (i) follows from \eqref{18} together with the fact that $B=O(1)$. Hence
\begin{equation*}
 C = \mathbb{E}\left\{2\epsilon\,{\bm{\beta}}^{*^\top }\bm{D}\left[\left({\bm{\beta}}^{*^\top }\bm{D}\right)^2 -\theta_{\mathrm{ETH}}\right]\right\} = O(1).
\end{equation*}

Next,  we establish that $\hat{B}^{(k)} = B + o_p(1)$. By definition of $\hat{B}^{(k)}$,
\begin{align*}
 \hat{B}^{(k)} - B
 &= |G_k|^{-1}\sum_{i \in G_k}\left(\hat{{\bm{\beta}}}^{(-k)^\top }\hat{\bm{D}}_i^{(k)}\right)^4
 - \left(\hat{Q}^{(k)}\right)^2
 - \mathbb{E}\left({\bm{\beta}}^{*^\top }\bm{D}\right)^4 + \theta_{\mathrm{ETH}}^2.
\end{align*}

Since \((a+b)^4 = a^4 + 4a^3b + 6a^2b^2 + 4ab^3 + b^4\), we have
 \begin{align}\label{134}
 \notag&|G_k|^{-1}\sum_{i \in G_k}\left(\hat{{\bm{\beta}}}^{(-k)^\top }\hat{\bm{D}}_i^{(k)}\right)^4=|G_k|^{-1}\sum_{i \in G_k}\left\{\hat{{\bm{\beta}}}^{(-k)^\top }\left[{\bm{D}_i}+\mathbb{E}(\bm{W})-|G_k|^{-1}\sum_{i \in G_k}{\bm{W}_i}\right]\right\}^4\\
 \notag&\qquad=|G_k|^{-1}\sum_{i \in G_k}\left(\hat{{\bm{\beta}}}^{(-k)^\top }{\bm{D}_i}\right)^4+\left\{\hat{{\bm{\beta}}}^{(-k)^\top }\left[\mathbb{E}(\bm{W})-|G_k|^{-1}\sum_{i \in G_k}{\bm{W}_i}\right]\right\}^4\\
 &\notag\qquad\qquad+\left\{\hat{{\bm{\beta}}}^{(-k)^\top }\left[\mathbb{E}(\bm{W})-|G_k|^{-1}\sum_{i \in G_k}{\bm{W}_i}\right]\right\}4|G_k|^{-1}\sum_{i \in G_k}\left(\hat{{\bm{\beta}}}^{(-k)^\top }{\bm{D}_i}\right)^3\\
 &\notag \qquad\qquad+4|G_k|^{-1}\sum_{i \in G_k}\left(\hat{{\bm{\beta}}}^{(-k)^\top }{\bm{D}_i}\right)\left\{\hat{{\bm{\beta}}}^{(-k)^\top }\left[\mathbb{E}(\bm{W})-|G_k|^{-1}\sum_{i \in G_k}{\bm{W}_i}\right]\right\}^3\\
 \notag&\qquad\qquad+6\left\{\hat{{\bm{\beta}}}^{(-k)^\top }\left[\mathbb{E}(\bm{W})-|G_k|^{-1}\sum_{i \in G_k}{\bm{W}_i}\right]\right\}^2|G_k|^{-1}\sum_{i \in G_k}\left(\hat{{\bm{\beta}}}^{(-k)^\top }{\bm{D}_i}\right)^2\\
 &\qquad\overset{(i)}{=}|G_k|^{-1}\sum_{i \in G_k}\left(\hat{{\bm{\beta}}}^{(-k)^\top }{\bm{D}_i}\right)^4+o_p(1),
 \end{align}
 where (i) holds from Lemma \ref{l8}.
 Besides,
 \begin{align}
 \notag&|G_k|^{-1}\sum_{i \in G_k}\left(\hat{{\bm{\beta}}}^{(-k)^\top }{\bm{D}_i}\right)^4=|G_k|^{-1}\sum_{i \in G_k}\left[{\bm{\beta}}^{*^\top }{\bm{D}_i}+\left(\hat{{\bm{\beta}}}^{(-k)}-{\bm{\beta}}^*\right)^\top {\bm{D}_i}\right]^4\\
 \notag&\qquad=|G_k|^{-1}\sum_{i \in G_k}\left({\bm{\beta}}^{*^\top }{\bm{D}_i}\right)^4+4|G_k|^{-1}\sum_{i \in G_k}\left({\bm{\beta}}^{*^\top }{\bm{D}_i}\right)^3\left[\left(\hat{{\bm{\beta}}}^{(-k)}-{\bm{\beta}}^*\right)^\top {\bm{D}_i}\right]\\
 &\notag\qquad\qquad+6|G_k|^{-1}\sum_{i \in G_k}\left({\bm{\beta}}^{*^\top }{\bm{D}_i}\right)^2\left[\left(\hat{{\bm{\beta}}}^{(-k)}-{\bm{\beta}}^*\right)^\top {\bm{D}_i}\right]^2\\
 &\notag\qquad\qquad+4|G_k|^{-1}\sum_{i \in G_k}\left({\bm{\beta}}^{*^\top }{\bm{D}_i}\right)\left[\left(\hat{{\bm{\beta}}}^{(-k)}-{\bm{\beta}}^*\right)^\top {\bm{D}_i}\right]^3+|G_k|^{-1}\sum_{i \in G_k}\left[\hat{{\bm{\beta}}}^{(-k)}-{\bm{\beta}}^*)^\top {\bm{D}_i}\right]^4\\
 &\qquad\overset{(i)}{=}|G_k|^{-1}\sum_{i \in G_k}\left({\bm{\beta}}^{*^\top }{\bm{D}_i}\right)^4+o_p(1)\overset{(ii)}{=}\mathbb{E}\left({\bm{\beta}}^{*^\top }\bm{D}\right)^4+o_p(1),\label{90}
 \end{align}
 where (i) holds from Lemma \ref{l8} and (ii) holds from Markov's inequality. 
 
 Moreover, by \eqref{11} and \eqref{75},
\begin{align}\label{44}
 &\notag|G_k|^{-1}\sum_{i \in G_k}\left(\hat{{\bm{\beta}}}^{(-k)^\top }\hat {\bm{D}}_i^{(k)}\right)^2\\
 &\notag\qquad=|G_k|^{-1}\sum_{i \in G_k}\left(\hat{{\bm{\beta}}}^{(-k)^\top }{\bm{D}_i}\right)^2+\left(\hat{{\bm{\beta}}}^{(-k)^\top }\left[|G_k|^{-1}\sum_{i \in G_k}{\bm{W}_i}-\mathbb{E}(\bm{W})\right]\right)^2\\
 &\notag\qquad\qquad-2\hat{{\bm{\beta}}}^{(-k)^\top }\left[|G_k|^{-1}\sum_{i \in G_k}{\bm{W}_i}-\mathbb{E}(\bm{W})\right]|G_k|^{-1}\sum_{i \in G_k}\hat{{\bm{\beta}}}^{(-k)^\top }{\bm{D}_i}\\
 &\qquad=|G_k|^{-1}\sum_{i \in G_k}\left(\hat{{\bm{\beta}}}^{(-k)^\top }{\bm{D}_i}\right)^2+O_p\left(N^{-1}\right).
\end{align}
Meanwhile, by Markov's inequality,
\begin{align}
 &|G_k|^{-1}\sum_{i \in G_k}\left(\left(\hat{{\bm{\beta}}}^{(-k)}-{\bm{\beta}}^*\right)^\top {\bm{D}_i}\right)^2=O_p\left(\mathbb{E}_{\bm{X}}\left[\left(\hat{{\bm{\beta}}}^{(-k)}-{\bm{\beta}}^*\right)^\top \bm{D}\right]^2\right)\nonumber\\
 &\qquad=O_p\left(\left\|\hat{{\bm{\beta}}}^{(-k)}-{\bm{\beta}}^*\right\|_{2}^2\right).\label{27}
\end{align}
By Chebyshev's inequality,
\begin{align}
 &\notag |G_k|^{-1}\sum_{i \in G_k}\left(\hat{{\bm{\beta}}}^{(-k)}-{\bm{\beta}}^*\right)^\top {\bm{D}_i}{\bm{D}_i}^\top {\bm{\beta}}^*\\
 &\quad=\mathbb{E}_{\bm{X}} \left[\left(\hat{{\bm{\beta}}}^{(-k)}-{\bm{\beta}}^*\right)^\top \bm{D}\bm{D}^\top {\bm{\beta}}^*\right]+O_p\left(\left\|\hat{{\bm{\beta}}}^{(-k)}-{\bm{\beta}}^*\right\|_{2}N^{-\frac{1}{2}}\right).\label{19}
\end{align}
Therefore, 
\begin{align}\label{52}
 \notag&|G_k|^{-1}\sum_{i \in G_k}\left(\hat{{\bm{\beta}}}^{(-k)^\top }{\bm{D}_i}\right)^2\\
 &\quad\notag=|G_k|^{-1}\sum_{i \in G_k}\left({\bm{\beta}}^{*^\top }{\bm{D}_i}\right)^2+|G_k|^{-1}\sum_{i \in G_k}\left(\left(\hat{{\bm{\beta}}}^{(-k)}-{\bm{\beta}}^*\right)^\top {\bm{D}_i}\right)^2\\
 &\notag\qquad+2|G_k|^{-1}\sum_{i \in G_k}\left(\hat{{\bm{\beta}}}^{(-k)}-{\bm{\beta}}^*\right)^\top {\bm{D}_i}{\bm{D}_i}^\top {\bm{\beta}}^*\\
 \notag&\quad=|G_k|^{-1}\sum_{i \in G_k}\left({\bm{\beta}}^{*^\top }{\bm{D}_i}\right)^2+2\mathbb{E}_{\bm{X}}\left[\left(\hat{{\bm{\beta}}}^{(-k)}-{\bm{\beta}}^*\right)^\top \bm{D}\bm{D}^\top {\bm{\beta}}^*\right]\\
 &\notag\qquad+O_p\left(\left\|\hat{{\bm{\beta}}}^{(-k)}-{\bm{\beta}}^*\right\|_{2}^2+\left\|\hat{{\bm{\beta}}}^{(-k)}-{\bm{\beta}}^*\right\|_{2}N^{-\frac{1}{2}}\right)\\
 &\quad\overset{(i)}{=}|G_k|^{-1}\sum_{i \in G_k}\left({\bm{\beta}}^{*^\top }{\bm{D}_i}\right)^2+2\mathbb{E}_{\bm{X}}\left[\left(\hat{{\bm{\beta}}}^{(-k)}-{\bm{\beta}}^*\right)^\top \bm{D}\bm{D}^\top {\bm{\beta}}^*\right]+o_p({n}^{-\frac{1}{2}}),
\end{align}
where (i) holds when 
\begin{equation}\label{39}
 \left\|\hat{{\bm{\beta}}}^{(-k)}-{\bm{\beta}}^*\right\|_{2}=o_p({n}^{-\frac{1}{4}}).
\end{equation}

Hence,
\begin{align*}
 &|G_k|^{-1}\sum_{i \in G_k}\left(\hat{{\bm{\beta}}}^{(-k)^\top }\hat{\bm{D}}_i^{(k)}\right)^2
=
 |G_k|^{-1}\sum_{i \in G_k}\left(\hat{{\bm{\beta}}}^{(-k)^\top }{\bm{D}_i}\right)^2 + o_p(1)\\
 &\qquad\overset{(i)}{=}
 |G_k|^{-1}\sum_{i \in G_k}\left({\bm{\beta}}^{*^\top }{\bm{D}_i}\right)^2
 + 2\mathbb{E}_{\bm{X}}\left[\left(\hat{{\bm{\beta}}}^{(-k)}-{\bm{\beta}}^*\right)^\top \bm{D}\bm{D}^\top {\bm{\beta}}^*\right] + o_p(1)\\
 &\qquad\overset{(ii)}{=}
 \theta_{\mathrm{ETH}} + 2\mathbb{E}_{\bm{X}}\left[\left(\hat{{\bm{\beta}}}^{(-k)}-{\bm{\beta}}^*\right)^\top \bm{D} \bm{D}^\top {\bm{\beta}}^*\right] + o_p(1)\overset{(iii)}{=}
 \theta_{\mathrm{ETH}} + o_p(1),
\end{align*}
where (i) holds since \eqref{52}, (ii) holds by Chebyshev's inequality, and (iii) holds since 
\[
\mathbb{E}_{\bm{X}}|\left(\hat{{\bm{\beta}}}^{(-k)}-{\bm{\beta}}^*\right)^\top \bm{D} \bm{D}^\top {\bm{\beta}}^*|
\;\le\;\sqrt{\mathbb{E}_{\bm{X}}\left|\left(\hat{{\bm{\beta}}}^{(-k)}-{\bm{\beta}}^*\right)^\top \bm{D}\right|^2}
 \,\sqrt{\mathbb{E}_{\bm{X}}\left|\bm{D}^\top {\bm{\beta}}^*\right|^2}
\;\overset{(iv)}{=}o_p(1),
\]
with (iv) by \eqref{103} and \eqref{109}. Hence
\begin{equation}\label{173}
\hat{Q}^{(k)}=\theta_{\mathrm{ETH}}+o_p(1).
\end{equation} 
 
Furthermore, by \eqref{109} and Lemma \ref{l5},
\begin{equation}\label{57}
 \theta_{\mathrm{ETH}}=\mathbb{E}\left({\bm{\beta}}^{*^\top}\bm{D}\right)^2
=O(\|{\bm{\beta}}^*\|_2^2)
 =O(1).
\end{equation}
It follows that
 \begin{equation}\label{78}
 \left(\hat{Q}^{(k)}\right)^2-\theta_{\mathrm{ETH}}^2=(\hat{Q}^{(k)}+\theta_{\mathrm{ETH}})(\hat{Q}^{(k)}-\theta_{\mathrm{ETH}})\overset{(i)}{=}O_p(1)o_p(1)=o_p(1),
 \end{equation}
 where (i) holds from \eqref{173} and \eqref{57}.
 According to \eqref{134}, \eqref{90}, and \eqref{78},
 \begin{equation}\label{84}
 \hat{B}^{(k)}=B+o_p(1).
 \end{equation}

Now, we establish the consistency of $\hat{C}^{(k)}$. By Lemma \ref{l8},
 \begin{equation}\label{124}
 \hat{{\bm{\beta}}}^{(-k)^\top }\left[\mathbb{E}(\bm{W})-|G_k|^{-1}\sum_{i \in G_k}{\bm{W}_i}\right]\frac{1}{\tilde{n}}\sum_{i \in I_k}2\hat{\epsilon_i}^{(k)}\left[\left({\bm{\beta}}^{*^\top }{\bm{D}_i}\right)^2-\theta_{\mathrm{ETH}}\right]=o_p(1).
 \end{equation}
Hence,
\begin{align}\label{83}
 \notag\hat{C}^{(k)}&=\frac{1}{\tilde{n}}\sum_{i \in I_k}\left(2\hat{\epsilon_i}^{(k)}\hat{{\bm{\beta}}}^{(-k)^\top }\hat{\bm{D}}_i^{(k)}\right)\left\{\left[\hat{{\bm{\beta}}}^{(-k)^\top }\hat{\bm{D}}_i^{(k)}\right]^2-\theta_{\mathrm{ETH}}+\theta_{\mathrm{ETH}}-\hat{Q}^{(k)}\right\}\\
 \notag&\overset{(i)}{=}\frac{1}{\tilde{n}}\sum_{i \in I_k}\left(2\hat{\epsilon_i}^{(k)}\hat{{\bm{\beta}}}^{(-k)^\top }\hat{\bm{D}}_i^{(k)}\right)\left\{\left[\hat{{\bm{\beta}}}^{(-k)^\top }\hat{\bm{D}}_i^{(k)}\right]^2-\theta_{\mathrm{ETH}}\right\}+o_p(1)\\
 &\notag=\frac{1}{\tilde{n}}\sum_{i \in I_k}\left(2\hat{\epsilon_i}^{(k)}\hat{{\bm{\beta}}}^{(-k)^\top }\hat{\bm{D}}_i^{(k)}\right)\left\{2\hat{{\bm{\beta}}}^{(-k)^\top }{\bm{D}_i}\hat{{\bm{\beta}}}^{(-k)^\top }\left[\mathbb{E}(\bm{W})-|G_k|^{-1}\sum_{i \in G_k}{\bm{W}_i}\right]\right.\\
 \notag&\left.\qquad+\left(\hat{{\bm{\beta}}}^{(-k)^\top }{\bm{D}_i}\right)^2+\left[\hat{{\bm{\beta}}}^{(-k)^\top }\left(\mathbb{E}(\bm{W})-|G_k|^{-1}\sum_{i \in G_k}{\bm{W}_i}\right)\right]^2-\theta_{\mathrm{ETH}}\right\}+o_p(1)\\
 \notag&=\frac{1}{\tilde{n}}\sum_{i \in I_k}\left(2\hat{\epsilon_i}^{(k)}\hat{{\bm{\beta}}}^{(-k)^\top }\hat{\bm{D}}_i^{(k)}\right)\Bigl[2\hat{{\bm{\beta}}}^{(-k)^\top }{\bm{D}_i}\hat{{\bm{\beta}}}^{(-k)^\top }\left[\mathbb{E}(\bm{W})-|G_k|^{-1}\sum_{i \in G_k}{\bm{W}_i}\right]\\
 &\notag\qquad+\left\{\hat{{\bm{\beta}}}^{(-k)^\top }\left[\mathbb{E}(\bm{W})-|G_k|^{-1}\sum_{i \in G_k}{\bm{W}_i}\right]\right\}^2\\
 \notag&\qquad+\left({\bm{\beta}}^{*^\top }{\bm{D}_i}\right)^2+2{\bm{\beta}}^{*^\top }{\bm{D}_i}\left(\hat{{\bm{\beta}}}^{(-k)}-{\bm{\beta}}^*\right)^\top {\bm{D}_i}+\left[\left(\hat{{\bm{\beta}}}^{(-k)}-{\bm{\beta}}^*\right)^\top {\bm{D}_i}\right]^2-\theta_{\mathrm{ETH}}\Bigr]+o_p(1)\\
 \notag&\overset{(ii)}{=}\frac{1}{\tilde{n}}\sum_{i \in I_k}\left[2\hat{\epsilon_i}^{(k)}\hat{{\bm{\beta}}}^{(-k)^\top }\left({\bm{D}_i}+\mathbb{E}(\bm{W})-|G_k|^{-1}\sum_{i \in G_k}{\bm{W}_i}\right)\right]
 \left[\left({\bm{\beta}}^{*^\top }{\bm{D}_i}\right)^2-\theta_{\mathrm{ETH}}\right]+o_p(1)\\
 &\overset{(iii)}{=}\frac{1}{\tilde{n}}\sum_{i \in I_k}\left(2\hat{\epsilon_i}^{(k)}\hat{{\bm{\beta}}}^{(-k)^\top }{\bm{D}_i}\right)
 \left[\left({\bm{\beta}}^{*^\top }{\bm{D}_i}\right)^2-\theta_{\mathrm{ETH}}\right]+o_p(1),
\end{align}
where (i) holds from \eqref{173} and Lemma \ref{l8}, (ii) holds by Lemma \ref{l8},
(iii) holds from \eqref{124}. 
In addition, 
\begin{align}\label{111}
 \notag&\frac{1}{\tilde{n}}\sum_{i \in I_k}\left(\hat{{\bm{\beta}}}^{(-k)^\top }\hat{\bm{D}}_i^{(k)}-{\bm{\beta}}^{*^\top }{\bm{D}_i}\right){\bm{\beta}}^{*^\top }{\bm{D}_i} \left[\left({\bm{\beta}}^{*^\top }{\bm{D}_i}\right)^2-\theta_{\mathrm{ETH}}\right]\\
 \notag&\quad=\frac{1}{\tilde{n}}\sum_{i \in I_k}\left(\hat{{\bm{\beta}}}^{(-k)^\top }\hat{\bm{D}}_i^{(k)}-\hat{{\bm{\beta}}}^{(-k)^\top }{\bm{D}_i}+\hat{{\bm{\beta}}}^{(-k)^\top }{\bm{D}_i}-{\bm{\beta}}^{*^\top }{\bm{D}_i}\right){\bm{\beta}}^{*^\top }{\bm{D}_i} \left[\left({\bm{\beta}}^{*^\top }{\bm{D}_i}\right)^2-\theta_{\mathrm{ETH}}\right]\\
 \notag&\quad=\hat{{\bm{\beta}}}^{(-k)^\top }\left[\mathbb{E}(\bm{W})-|G_k|^{-1}\sum_{i \in G_k}{\bm{W}_i}\right]\frac{1}{\tilde{n}}\sum_{i \in I_k}{\bm{\beta}}^{*^\top }{\bm{D}_i}\left[\left({\bm{\beta}}^{*^\top }{\bm{D}_i}\right)^2-\theta_{\mathrm{ETH}}\right]\\
 \notag&\qquad+\frac{1}{\tilde{n}}\sum_{i \in I_k}\left[\left(\hat{{\bm{\beta}}}^{(-k)}-{\bm{\beta}}^{*}\right)^\top {\bm{D}_i}\right]{\bm{\beta}}^{*^\top }{\bm{D}_i} \left[\left({\bm{\beta}}^{*^\top }{\bm{D}_i}\right)^2-\theta_{\mathrm{ETH}}\right]\\
 &\quad\overset{(i)}{=}o_p(1)O_p(1)+o_p(1)=o_p(1),
\end{align}
where (i) holds from Lemma \ref{l8}.
Thus 
\begin{align}\label{82}
 \notag&\frac{1}{\tilde{n}}\sum_{i \in I_k}\left(2\hat{\epsilon_i}^{(k)}\hat{{\bm{\beta}}}^{(-k)^\top }{\bm{D}_i}\right)
 \left[\left({\bm{\beta}}^{*^\top }{\bm{D}_i}\right)^2-\theta_{\mathrm{ETH}}\right]\\
 \notag&\qquad=\frac{1}{\tilde{n}}\sum_{i \in I_k}\left[2\hat{\epsilon_i}^{(k)}\left({\bm{\beta}}^*+\hat{{\bm{\beta}}}^{(-k)}-{\bm{\beta}}^*\right)^\top {\bm{D}_i}\right]
 \left[\left({\bm{\beta}}^{*^\top }{\bm{D}_i}\right)^2-\theta_{\mathrm{ETH}}\right]\\
 \notag&\qquad=\frac{1}{\tilde{n}}\sum_{i \in I_k}\left(2\hat{\epsilon_i}^{(k)}{\bm{\beta}}^{*^\top }{\bm{D}_i}\right)\left[\left({\bm{\beta}}^{*^\top }{\bm{D}_i}\right)^2-\theta_{\mathrm{ETH}}\right]\\
 &\notag\qquad\qquad+\frac{1}{\tilde{n}}\sum_{i \in I_k}
 \left[2\hat{\epsilon_i}^{(k)}\left(\hat{{\bm{\beta}}}^{(-k)}-{\bm{\beta}}^{*}\right)\right]^\top {\bm{D}_i}\left[\left({\bm{\beta}}^{*^\top }{\bm{D}_i}\right)^2-\theta_{\mathrm{ETH}}\right]\\
 \notag&\qquad\overset{(i)}{=}\frac{1}{\tilde{n}}\sum_{i \in I_k}2\epsilon_i\left({\bm{\beta}}^{*^\top }{\bm{D}_i}\right)\left[\left({\bm{\beta}}^{*^\top }{\bm{D}_i}\right)^2-\theta_{\mathrm{ETH}}\right]+o_p(1)\\
 \notag&\qquad\qquad+\frac{2}{\tilde{n}}\sum_{i \in I_k}
 \left[\hat{\varphi}^{(-k)}(Z_i)-\varphi^*(Z_i)+\theta_{\mathrm{ETH}}-\hat{\tau}_{\mathrm{para}}-\hat{{\bm{\beta}}}^{(-k)^\top }\hat {\bm{D}}_i^{(k)}+{\bm{\beta}}^{*^\top }{\bm{D}_i}\right]{\bm{\beta}}^{*^\top }{\bm{D}_i} \\
 &\notag\qquad\qquad\qquad\cdot\left[\left({\bm{\beta}}^{*^\top }{\bm{D}_i}\right)^2-\theta_{\mathrm{ETH}}\right]\\
 \notag&\qquad\overset{(ii)}{=}\frac{1}{\tilde{n}}\sum_{i \in I_k}2\epsilon_i\left({\bm{\beta}}^{*^\top }{\bm{D}_i}\right)\left[\left({\bm{\beta}}^{*^\top }{\bm{D}_i}\right)^2-\theta_{\mathrm{ETH}}\right]+o_p(1)\\
&\qquad=\mathbb{E}\left[2\epsilon{\bm{\beta}}^{*^\top }\bm{D}\left({\bm{\beta}}^{*^\top }\bm{D}\right)^2-\theta_{\mathrm{ETH}}\right]+o_p(1),
\end{align}
where (i) holds from Lemma \ref{l8}, (ii) holds since \eqref{111}, Lemma \ref{l6}, and Lemma \ref{l7}. Together with \eqref{83}, we have 
$$\hat{C}^{(k)}-C=o_p(1).$$

Next, the convergence of \(\hat{A}^{(k)}\) is established. For any constant \(v\ge0\),
\begin{align}\label{143}
 \notag&\frac{1}{\tilde{n}}\sum_{i \in I_k}\left(\hat{{\bm{\beta}}}^{(-k)^\top }\hat{\bm{D}}_i^{(k)}-{\bm{\beta}}^{*^\top }{\bm{D}_i}\right)^2\left({\bm{\beta}}^{*^\top }{\bm{D}_i}\right)^v\\
 \notag &\quad=\frac{1}{\tilde{n}}\sum_{i \in I_k}(\hat{{\bm{\beta}}}^{(-k)^\top }\hat{\bm{D}}_i^{(k)}-\hat{{\bm{\beta}}}^{(-k)^\top }{\bm{D}_i}+\hat{{\bm{\beta}}}^{(-k)^\top }{\bm{D}_i}-{\bm{\beta}}^{*^\top }{\bm{D}_i})^2\left({\bm{\beta}}^{*^\top }{\bm{D}_i}\right)^v\\
 \notag&\quad=\frac{1}{\tilde{n}}\sum_{i \in I_k}\left\{\left(\hat{{\bm{\beta}}}^{(-k)^\top }\left[\mathbb{E}(\bm{W})-|G_k|^{-1}\sum_{i \in G_k}{\bm{W}_i}\right]\right)^2+
\left[\left(\hat{{\bm{\beta}}}^{(-k)}-{\bm{\beta}}^{*}\right)^\top {\bm{D}_i}\right]^2\right.\\
&\left.\notag\qquad+2\hat{{\bm{\beta}}}^{(-k)^\top }\left[\mathbb{E}(\bm{W})-|G_k|^{-1}\sum_{i \in G_k}{\bm{W}_i}\right]\left(\hat{{\bm{\beta}}}^{(-k)}-{\bm{\beta}}^{*}\right)^\top {\bm{D}_i}\right\}\left({\bm{\beta}}^{*^\top }{\bm{D}_i}\right)^v\\
&\quad\overset{(i)}{=}o_p(1),
\end{align}
where (i) holds from Lemma \ref{l8}.
Additionally,
\begin{equation}\label{144}
 \frac{1}{\tilde{n}}\sum_{i \in I_k}\left[\hat{\varphi}^{(-k)}(Z_i)-\varphi^*(Z_i)\right]^4=o_p(1)
\end{equation}
holds from Lemma \ref{l6} and Markov's inequality. Therefore,
\begin{align}\label{145}
 \notag&\frac{1}{\tilde{n}}\sum_{i \in I_k}\left[2\left(\hat{\epsilon}_i^{(k)}-\epsilon_i\right){\bm{\beta}}^{*^\top }{\bm{D}_i}\right]^2\\
 \notag& =\frac{4}{\tilde{n}}\sum_{i \in I_k}\left[\hat{\varphi}^{(-k)}(Z_i)-\varphi^*(Z_i)+\tau-\hat{\tau}_{\mathrm{para}}-\hat{{\bm{\beta}}}^{(-k)^\top }\hat {\bm{D}}_i^{(k)}+{\bm{\beta}}^{*^\top }{\bm{D}_i}\right]^2\left({\bm{\beta}}^{*^\top }{\bm{D}_i}\right)^2\\
 \notag&=O\left(\frac{4}{\tilde{n}}\sum_{i \in I_k}\left\{\left[\hat{\varphi}^{(-k)}(Z_i)-\varphi^*(Z_i)\right]^2+(\tau-\hat{\tau}_{\mathrm{para}})^2+\left({\bm{\beta}}^{*^\top }{\bm{D}_i}-\hat{{\bm{\beta}}}^{(-k)^\top }\hat {\bm{D}}_i^{(k)}\right)^2\right\}\left({\bm{\beta}}^{*^\top }{\bm{D}_i}\right)^2\right)\\
 &\overset{(i)}{=}O\left(\left\{\frac{4}{\tilde{n}}\sum_{i \in I_k}\left[\hat{\varphi}^{(-k)}(Z_i)-\varphi^*(Z_i)\right]^4\right\}^{\frac{1}{2}} \left\{\frac{4}{\tilde{n}}\sum_{i \in I_k}\left({\bm{\beta}}^{*^\top }{\bm{D}_i}\right)^4\right\}^{\frac{1}{2}}\right)+o_p(1)\overset{(ii)}{=}o_p(1),
\end{align}
where (i) follows from Lemma~\ref{l6}, \eqref{109}, \eqref{143}, and Markov’s inequality, (ii) follows from \eqref{109}, \eqref{144}, and Markov’s inequality. Furthermore,
\begin{align}\label{146}
 &\notag\frac{2}{\tilde{n}}\sum_{i \in I_k}\left(2{\bm{\beta}}^{*^\top }{\bm{D}_i}\right)^2\epsilon_i\left(\hat{\epsilon}_i^{(k)}-\epsilon_i\right)\overset{(i)}{\leq} \left\{\frac{2}{\tilde{n}}\sum_{i \in I_k}\left(2{\bm{\beta}}^{*^\top }{\bm{D}_i}\right)^8\right\}^{\frac{1}{4}}\left(\frac{2}{\tilde{n}}\sum_{i \in I_k}\epsilon_i^4\right)^{\frac{1}{4}}
 \left\{\frac{2}{\tilde{n}}\sum_{i \in I_k}\left(\hat{\epsilon}_i^{(k)}-\epsilon_i\right)^2\right\}^{\frac{1}{2}}\\
 \notag &\quad\overset{(i)}{=}O\Bigl(\left\{\frac{2}{\tilde{n}}\sum_{i \in I_k}\left(2{\bm{\beta}}^{*^\top }{\bm{D}_i}\right)^8\right\}^{\frac{1}{4}}\left(\frac{2}{\tilde{n}}\sum_{i \in I_k}\epsilon_i^4\right)^{\frac{1}{4}}
 \left\{\frac{2}{\tilde{n}}\sum_{i \in I_k}\left[\hat{\varphi}^{(-k)}(Z_i)-\varphi^*(Z_i)\right]^2\right.\\
 &\notag\left.\qquad+(\tau-\hat{\tau}_{\mathrm{para}})^2+\left({\bm{\beta}}^{*^\top }{\bm{D}_i}-\hat{{\bm{\beta}}}^{(-k)^\top }\hat {\bm{D}}_i^{(k)}\right)^2\right\}^{\frac{1}{2}}\Bigr)\\
 &\quad\overset{(ii)}{=}O(1)O(1)o_p(1)=o_p(1),
\end{align}
where (i) holds from H\"older's inequality, (ii) holds from \eqref{109}, Lemma \ref{l5}, Lemma \ref{l6}, \eqref{143}, and Markov's inequality.
Thus,
\begin{align}
 \notag\hat{A}^{(k)}&=\frac{1}{\tilde{n}}\sum_{i \in I_k}\left(2\hat{\epsilon_i}^{(k)}\hat{{\bm{\beta}}}^{(-k)^\top }\hat{\bm{D}}_i^{(k)}\right)^2\\
 \notag&=\frac{1}{\tilde{n}}\sum_{i \in I_k}\left[2\hat{\epsilon_i}^{(k)}\hat{{\bm{\beta}}}^{(-k)^\top }\left({\bm{D}_i}+\hat{\bm{D}}_i^{(k)}-{\bm{D}_i}\right)\right]^2\\
 \notag&=\frac{1}{\tilde{n}}\sum_{i \in I_k}\left(2\hat{\epsilon_i}^{(k)}\hat{{\bm{\beta}}}^{(-k)^\top }{\bm{D}_i}\right)^2+\left\{\hat{{\bm{\beta}}}^{(-k)^\top }\left[\mathbb{E}(\bm{W})-|G_k|^{-1}\sum_{i \in G_k}{\bm{W}_i}\right]\right\}^2\frac{1}{\tilde{n}}\sum_{i \in I_k}\left(2\hat{\epsilon_i}^{(k)}\right)^2\\
 &\notag\qquad+2\hat{{\bm{\beta}}}^{(-k)^\top }\left[\mathbb{E}(\bm{W})-|G_k|^{-1}\sum_{i \in G_k}{\bm{W}_i}\right]\frac{1}{\tilde{n}}\sum_{i \in I_k}\left(2\hat{\epsilon_i}^{(k)}\right)^2\hat{{\bm{\beta}}}^{(-k)^\top }{\bm{D}_i}\\
 \notag &\overset{(i)}{=}\frac{1}{\tilde{n}}\sum_{i \in I_k}\left(2\hat{\epsilon_i}^{(k)}\hat{{\bm{\beta}}}^{(-k)^\top }{\bm{D}_i}\right)^2+o_p(1)\\
 \notag&=\frac{1}{\tilde{n}}\sum_{i \in I_k}\left[2\hat{\epsilon_i}^{(k)}\left({\bm{\beta}}^*+\hat{{\bm{\beta}}}^{(-k)}-{\bm{\beta}}^*\right)^\top {\bm{D}_i}\right]^2+o_p(1)\\
 \notag&=\frac{1}{\tilde{n}}\sum_{i \in I_k}\left(2\hat{\epsilon_i}^{(k)}{\bm{\beta}}^{*^\top }{\bm{D}_i}\right)^2
 +\frac{1}{\tilde{n}}\sum_{i \in I_k}\left[2\hat{\epsilon_i}^{(k)}\left(\hat{{\bm{\beta}}}^{(-k)}-{\bm{\beta}}^{*}\right)^\top {\bm{D}_i}\right]^2\\
 &\notag\qquad+\frac{2}{\tilde{n}}\sum_{i \in I_k}\left(2\hat{\epsilon_i}^{(k)}\right)^2{\bm{\beta}}^{*^\top }{\bm{D}_i}
 \left(\hat{{\bm{\beta}}}^{(-k)}-{\bm{\beta}}^{*}\right)^\top {\bm{D}_i}+o_p(1)\\
 \notag&\overset{(ii)}{=}\frac{1}{\tilde{n}}\sum_{i \in I_k}\left[2\left(\epsilon_i+\hat{\epsilon_i}^{(k)}-\epsilon_i\right){\bm{\beta}}^{*^\top }{\bm{D}_i}\right]^2+o_p(1)\\
 \notag&=\frac{1}{\tilde{n}}\sum_{i \in I_k}\left(2\epsilon_i{\bm{\beta}}^{*^\top }{\bm{D}_i}\right)^2
 +\frac{1}{\tilde{n}}\sum_{i \in I_k}\left[2\left(\hat{\epsilon}_i^{(k)}-\epsilon_i\right){\bm{\beta}}^{*^\top }{\bm{D}_i}\right]^2
 +\frac{2}{\tilde{n}}\sum_{i \in I_k}\left(2{\bm{\beta}}^{*^\top }{\bm{D}_i}\right)^2\epsilon_i\left(\hat{\epsilon}_i^{(k)}-\epsilon_i\right)\\
 &\overset{(iii)}{=}\frac{1}{\tilde{n}}\sum_{i \in I_k}\left(2\epsilon_i{\bm{\beta}}^{*^\top }{\bm{D}_i}\right)^2+o_p(1),
\end{align}
where (i) and (ii) follow from Lemma \ref{l8}, and (iii) is ensured by \eqref{145} and \eqref{146}. Therefore, we have 
$$\hat{A}^{(k)} = A + o_p(1).$$

Combining all the results above, we obtain
\begin{align*}
 &\hat{w}^{(k)}_U-w_U^*=\left(1+\frac{\hat{C}^{(k)}}{\hat{B}^{(k)}}\right)\frac{1}{N}-\left(1+\frac{C}{B}\right)\frac{1}{N}=\left(\frac{\hat{C}^{(k)}}{\hat{B}^{(k)}}
 -\frac{C}{B}\right)\frac{1}{N}\\
 &\qquad=O\left(\frac{B\left(\hat{C}^{(k)}-C\right)+C\left(B-\hat{B}^{(k)}\right)}{\hat{B}^{(k)}B}\frac{1}{N}\right)\overset{(i)}{=}o_p(N^{-1}),
\end{align*}
where (i) holds since \(\hat{B}^{(k)}-B=o_p(1)\) and \(B>c\) for some constant \(c>0\). Moreover, 
\begin{align*}
 \hat{w}^{(k)}_L-w_L^*=\frac{1}{n}-\frac{1}{n}+\frac{m}{n}w_U^*-\frac{m}{n}\hat{w}^{(k)}_U=o_p\left(\frac{m}{nN}\right).
\end{align*}
\end{proof}

\begin{proof}[Proof of Lemma \ref{l10}]
For each $k\leq K$, define
$$(\hat{\sigma}_{\mathrm{OW}}^{(k)})^2:=\hat{A}^{(k)} 
 + \frac{n\hat{B}^{(k)}}{N} +\frac{2n\hat{C}^{(k)} }{N} - \frac{m(\hat{C}^{(k)})^2}{N\hat{B}^{(k)}}.$$
 Then
\begin{align}\label{147}
 \notag(\hat{\sigma}_{\mathrm{OW}}^{(k)})^2 - \sigma_{\mathrm{OW}}^2
 &= \hat{A}^{(k)} - A
 + \frac{1}{N}\left(\hat{B}^{(k)}n - nB + 2n\hat{C}^{(k)} - 2nC
 - m\frac{(\hat{C}^{(k)})^2}{\hat{B}^{(k)}} + m\frac{C^2}{B}\right)\\
 \notag&= \hat{A}^{(k)} - A
 + \frac{n}{N}(\hat{B}^{(k)} - B)
 + \frac{2n}{N}(\hat{C}^{(k)} - C)\\
 \notag&\qquad
 + \frac{m}{N}
 \frac{C^2(\hat{B}^{(k)} - B)
 + B\,(C + \hat{C}^{(k)})(C - \hat{C}^{(k)})}
 {B\,\hat{B}^{(k)}}\\
 &\overset{(i)}{=} o_p(1) + o_p\left(\frac{n}{N} + \frac{m}{N}\right)=o_p(1),
\end{align}
where (i) follows from Lemma \ref{l9}. For any finite \(K\), it holds that $ \hat{\sigma}_{\mathrm{OW}}^2 = \sigma_{\mathrm{OW}}^2 + o_p(1).$
\end{proof}

\section{Proof of the main results}
\begin{proof}{Proof of Theorem \ref{t4}}.
Define 
 \begin{align*}
 \hat{\theta}^{(k)}_{\mathrm{TTH}} &:= |G_k|^{-1} \sum_{i \in G_k} \left[ \hat{h}^{(-k)}(\bm{X}_i) \right]^2 + 2\tilde{n}^{-1} \sum_{i \in I_k} \hat{h}^{(-k)}(\bm{X}_i) \left[\hat{\varphi}^{(-k)}(Z_i) - \hat{\tau} - \hat{h}^{(-k)}(\bm{X}_i) \right],\\
 \tilde{\theta}_{\mathrm{TTH}}^{(k)}&:=|G_k|^{-1} \sum_{i \in G_k} \left[ \hat{\tau}^{(-k)}(\bm{X}_i)-\nu^{(-k)} \right]^2 \\
 &\quad+ 2\tilde{n}^{-1} \sum_{i \in I_k} \left[\hat{\tau}^{(-k)}(\bm{X}_i)-\nu^{(-k)}\right] \left[\varphi^*(Z_i) - \tau - \hat{\tau}^{(-k)}(\bm{X}_i)+\nu^{(-k)} \right],\\
 \check{\theta}_{\mathrm{TTH}}^{(k)}&:=|G_k|^{-1} \sum_{i \in G_k} \left[ h(\bm{X}_i) \right]^2 + 2\tilde{n}^{-1} \sum_{i \in I_k} h(\bm{X}_i)\xi_i,
 \end{align*}
 where $\hat{h}^{(-k)}(\bm{X}_i) = \hat{\tau}^{(-k)}(\bm{X}_i) - |G_k|^{-1} \sum_{i \in G_k} \hat{\tau}^{(-k)}(\bm{X}_i)$, $\nu^{(-k)}=\mathbb{E}_{\bm{X}}[\hat{\tau}^{(-k)}(\bm{X})]$, and $h(\bm{X}_i)=\tau(\bm{X}_i)-\tau$.

 Step 1. We first show that $\hat{\theta}^{(k)}_{\mathrm{TTH}}=\tilde{\theta}_{\mathrm{TTH}}^{(k)}+o_p(n^{-\frac{1}{2}}).$ 
 
 Let $\Delta_1=|G_k|^{-1} \sum_{i \in G_k} [ \hat{\tau}^{(-k)}(\bm{X}_i)-\nu^{(-k)} ]$ and $\delta_i=\hat{\tau}^{(-k)}(\bm{X}_i)-\nu^{(-k)}-h(\bm{X}_i)$. Then 
 \begin{equation}\label{150}
 \Delta_1\overset{(i)}{=}O_p\left(N^{-\frac{1}{2}}\right),
 \end{equation}
 where (i) holds from \eqref{172}.
 Consider the decomposition
 \begin{align*}
 &\hat{\theta}^{(k)}_{\mathrm{TTH}} = |G_k|^{-1} \sum_{i \in G_k} \left[ \hat{h}^{(-k)}(\bm{X}_i) \right]^2 + 2\tilde{n}^{-1} \sum_{i \in I_k} \hat{h}^{(-k)}(\bm{X}_i) \left[\hat{\varphi}^{(-k)}(Z_i) - \hat{\tau} - \hat{h}^{(-k)}(\bm{X}_i) \right]\\
 &\qquad =\underbrace{ |G_k|^{-1} \sum_{i \in G_k} \left\{ \hat{\tau}^{(-k)}(\bm{X}_i)-\nu^{(-k)} - \Delta_1 \right\}^2}_{M_1} \\
 &\qquad\qquad+ \underbrace{2\tilde{n}^{-1} \sum_{i \in I_k} \left\{\hat{\tau}^{(-k)}(\bm{X}_i)-\nu^{(-k)} - \Delta_1 \right\} \left[\hat{\varphi}^{(-k)}(Z_i) - \hat{\tau} - \hat{h}^{(-k)}(\bm{X}_i) \right]}_{M_2}.
 \end{align*}
 Then 
 \begin{align*}
 M_1&=|G_k|^{-1} \sum_{i \in G_k} \left[ \hat{\tau}^{(-k)}(\bm{X}_i)-\nu^{(-k)}\right]^2-\Delta_1^2=|G_k|^{-1} \sum_{i \in G_k} \left[ \hat{\tau}^{(-k)}(\bm{X}_i)-\nu^{(-k)}\right]^2+O_p(N^{-1}),\\
 M_2&=2\tilde{n}^{-1} \sum_{i \in I_k}\left[\hat{\tau}^{(-k)}(\bm{X}_i)-\nu^{(-k)}\right]\left[\varphi^*(Z_i)-\tau-\hat{\tau}^{(-k)}(\bm{X}_i)+\nu^{(-k)}\right]\\
 &\qquad+2\tilde{n}^{-1} \sum_{i \in I_k}\left[\hat{\tau}^{(-k)}(\bm{X}_i)-\nu^{(-k)}\right]\left[\hat{\varphi}^{(-k)}(Z_i)-\varphi^*(Z_i)+\tau-\hat{\tau}+\Delta_1\right]\\
 &\qquad-2\Delta_1\tilde{n}^{-1}\sum_{i \in I_k}\left[\varphi^*(Z_i)-\tau-\hat{\tau}^{(-k)}(\bm{X}_i)+\nu^{(-k)}\right]\\
 &\qquad-2\Delta_1\tilde{n}^{-1}\sum_{i \in I_k}\left[\hat{\varphi}^{(-k)}(Z_i)-\varphi^*(Z_i)+\tau-\hat{\tau}+\Delta_1\right]\\
 &\overset{(i)}{=}2\tilde{n}^{-1}\sum_{i \in I_k}\left[\hat{\tau}^{(-k)}(\bm{X}_i)-\nu^{(-k)}\right]\left[\varphi^*(Z_i)-\tau-\hat{\tau}^{(-k)}(\bm{X}_i)+\nu^{(-k)}\right]+o_p\left(n^{-\frac{1}{2}}\right),
 \end{align*}
where (i) follows from Lemma~\ref{l11}, \eqref{150}, and the assumptions of Theorem \ref{t4}.
 Thus $\hat{\theta}^{(k)}_{\mathrm{TTH}}=\tilde{\theta}_{\mathrm{TTH}}^{(k)}+o_p(n^{-\frac{1}{2}}).$
 
 Step 2. Now we show that $\tilde{\theta}_{\mathrm{TTH}}^{(k)}=\check{\theta}_{\mathrm{TTH}}^{(k)}+o_p(n^{-\frac{1}{2}}).$
 
 Observe that 
 \begin{equation*}
 \tilde{\theta}_{\mathrm{TTH}}^{(k)}=\check{\theta}_{\mathrm{TTH}}^{(k)}
 +|G_k|^{-1} \sum_{i \in G_k}\delta\left[\delta_i+2h(\bm{X}_i)\right]
 +2\tilde{n}^{-1} \sum_{i \in I_k}\delta_i\left[\xi_i-h(\bm{X}_i)-\delta_i\right].
 \end{equation*}
Under Assumption \ref{a4}, $|G_k|^{-1} \sum_{i \in G_k}\delta_i^2=O_p(\mathbb{E}_{\bm{X}}(\delta^2))=o_p(n^{-{1/2}})$.
 Besides, $\mathbb{E}_{\bm{X}}\left[\delta h(\bm{X})\right]^2=O(\mathbb{E}_{\bm{X}}(\delta^2))=o_p(1)$. Together with Chebyshev's inequality, we have
 \begin{equation}
 |G_k|^{-1} \sum_{i \in G_k}\delta_ih(\bm{X}_i)=\mathbb{E}_{\bm{X}}[\delta h(\bm{X})]+o_p\left(n^{-\frac{1}{2}}\right).
 \end{equation}
 Moreover, because $\mathbb{E}(\xi \mid \bm{X})=0$, the law of iterated expectations ensures that $\mathbb{E}_{\bm{X}}(\xi\delta)=0$, and $\mathbb{E}_{\bm{X}}(\delta\xi)^2=O(\mathbb{E}_{\bm{X}}(\delta^2))=o_p(1)$, then $ \tilde{n}^{-1} \sum_{i \in I_k}\delta_i\xi_i=o_p(n^{-{1/2}}).$
 Thus,
 \begin{align*}
 &|G_k|^{-1} \sum_{i \in G_k}\delta_i\left[\delta_i+2h(\bm{X}_i)\right]
 =\mathbb{E}_{\bm{X}}(\delta^2)+2\mathbb{E}_{\bm{X}}[\delta h(\bm{X})]+o_p\left(n^{-\frac{1}{2}}\right),\\
 &\tilde{n}^{-1} \sum_{i \in I_k}\delta_i\left[\xi_i-h(\bm{X}_i)-\delta_i\right]
 =-\mathbb{E}_{\bm{X}}\left[\delta h(\bm{X})\right]-\mathbb{E}_{\bm{X}}(\delta^2)+o_p\left(n^{-\frac{1}{2}}\right),
 \end{align*}
 which means that $ \tilde{\theta}_{\mathrm{TTH}}^{(k)}=\check{\theta}_{\mathrm{TTH}}^{(k)}
 +o_p(n^{-\frac{1}{2}}).$
 
 Step 3. We lastly show that ${\sqrt{n}(\check{\theta}_{\mathrm{TTH}}-\theta_{\mathrm{TTH}})/\sigma_{\mathrm{TTH}}}\rightarrow N(0,1).$
 
 Since 
 \begin{equation*}
\check{\theta}_{\mathrm{TTH}}^{(k)}=|G_k|^{-1} \sum_{i \in G_k} \left[ h(\bm{X}_i) \right]^2 + 2\tilde{n}^{-1} \sum_{i \in I_k} h(\bm{X}_i)\xi_i,
 \end{equation*}
 we have
 \begin{align*}
 \check{\theta}_{\mathrm{TTH}}-\theta_{\mathrm{TTH}}&=N^{-1} \sum_{i=1}^{N} \left[ h(\bm{X}_i) \right]^2 + 2n^{-1} \sum_{i=1}^{n} h(\bm{X}_i)\xi_i-\theta_{\mathrm{TTH}}\\
 &=\frac{1}{n}\sum_{i=1}^{N}\left\{2\mathbbm{1}_{i\leq n}\xi_ih(\bm{X}_i)+\frac{n}{N}\left[h(\bm{X}_i)\right]^2-\frac{n}{N}\theta_{\mathrm{TTH}}\right\}.
 \end{align*}
 Besides,
 \begin{equation}\label{155}
 \|\xi\|_{\psi_2}=\|\varphi^*(Z)-\tau(\bm{X})\|_{\psi_2}\leq \|\varphi^*(Z)\|_{\psi_2}+\|\tau(\bm{X})\|_{\psi_2}=O(1).
 \end{equation}
Then $\mathbb{E}_{\bm{X}}|\xi|^c=O(1)$ for any constant $c>0$.
 By Minkovski's inequality, for any $\delta>0$,
\begin{align}\label{151}
 \notag&\left(\mathbb{E}_{\bm{X}}\left|2\xi h(\bm{X})+\left[\sqrt{\frac{n}{N}}h(\bm{X})\right]^2-\frac{n}{N}\theta_{\mathrm{TTH}}\right|^{2+\delta}\right)^{\frac{1}{2+\delta}}\\
 \notag&\quad\leq \left(\mathbb{E}_{\bm{X}}\left|2\xi h(\bm{X})\right|^{2+\delta}\right)^{\frac{1}{2+\delta}}+\left(\mathbb{E}_{\bm{X}}\left|\left[\sqrt{\frac{n}{N}}h(\bm{X})\right]^2-\frac{n}{N}\theta_{\mathrm{TTH}}\right|^{2+\delta}\right)^{\frac{1}{2+\delta}}\\
 \notag&\quad\leq \left(\mathbb{E}_{\bm{X}}\left|2h(\bm{X})\right|^{4+\delta}\right)^{\frac{1}{4+\delta}}
 \left(\mathbb{E}_{\bm{X}}|\xi|^{4+\delta}\right)^{\frac{1}{4+\delta}}
 +\left(\mathbb{E}_{\bm{X}}\left|\left[\sqrt{\frac{n}{N}}h(\bm{X})\right]^2-\frac{n}{N}\theta_{\mathrm{TTH}}\right|^{2+\delta}\right)^{\frac{1}{2+\delta}}\\
 &\quad\overset{(i)}{=}O(1),
\end{align}
where (i) holds by the assumption of Theorem \ref{t4} and \eqref{155}.
Therefore,
\begin{align}\label{54}
 \notag&\sum_{i=1}^{N}\mathbb{E}\left\{\left|2\mathbbm{1}_{i\leq n}\xi_ih(\bm{X}_i)+\left[\sqrt{\frac{n}{N}}h(\bm{X}_i)\right]^2-\frac{n}{N}\theta_{\mathrm{TTH}}\right|\right\}^{2+\delta}\\
 \notag&\qquad=\sum_{i=1}^{n}\mathbb{E}\left\{\left|2\xi_ih(\bm{X}_i)+\left[\sqrt{\frac{n}{N}}h(\bm{X}_i)\right]^2-\frac{n}{N}\theta_{\mathrm{TTH}}\right|\right\}^{2+\delta}\\
 &\notag\qquad\qquad+\sum_{i=n+1}^{N}\mathbb{E}\left\{\left|\left[\sqrt{\frac{n}{N}}h(\bm{X}_i)\right]^2-\frac{n}{N}\theta_{\mathrm{TTH}}\right|\right\}^{2+\delta}\\
 &\qquad\overset{(i)}{=}O\left\{n+\left[m\left(\frac{n}{N}\right)\right]^{2+\delta}\right\}\overset{(ii)}{=}O(n),
\end{align}
where (i) holds from \eqref{151}, (ii) holds since $m({n/N})^{2+\delta}\leq m({n/N})^{2}\leq n$.
Define
\begin{align*}
 D_{N}^2
 &:=\sum_{i=1}^{N}\Var\left\{2\mathbbm{1}_{i\leq n}\xi_ih(\bm{X}_i)+\left[\sqrt{\frac{n}{N}}h(\bm{X}_i)\right]^2-\frac{n}{N}\theta_{\mathrm{TTH}}\right\},\\
 &=\sum_{i=1}^{n}\Var\left\{2\xi_ih(\bm{X}_i)+\left[\sqrt{\frac{n}{N}}h(\bm{X}_i)\right]^2-\frac{n}{N}\theta_{\mathrm{TTH}}\right\}+\sum_{i=n+1}^{N}\Var\left\{\left[\sqrt{\frac{n}{N}}h(\bm{X}_i)\right]^2-\frac{n}{N}\theta_{\mathrm{TTH}}\right\}\\
 &\geq \sum_{i=1}^{n}\Var\left\{2\xi_ih(\bm{X}_i)+\left[\sqrt{\frac{n}{N}}h(\bm{X}_i)\right]^2-\frac{n}{N}\theta_{\mathrm{TTH}}\right\}=n\sigma_{\mathrm{TTH}}^2.
\end{align*}
Since $\sigma_{\mathrm{TTH}}^2>c$ for some constant $c>0$, for any $\delta>0$,
\begin{equation*}
\frac{\sum_{i=1}^{N}\mathbb{E} |2\mathbbm{1}_{i\leq n}\xi_ih(\bm{X}_i)+\frac{n}{N}\left[h(\bm{X}_i)\right]^2-\frac{n}{N}\theta_{\mathrm{TTH}}|^{2+\delta}}{(D_{N}^2)^{1+\frac{\delta}{2}}}
 =O\left(\frac{n}{(n\sigma_{\mathrm{TTH}}^2)^{1+{\delta/2}}}\right)=o(1).
\end{equation*}
By Lindeberg-Feller central limit theorem and Slutsky's Theorem we have
\begin{equation}
 \frac{\sqrt{n}(\hat{\theta}_{\mathrm{TTH}}-\theta_{\mathrm{TTH}})}{\sigma_{\mathrm{TTH}}}\rightarrow N(0, 1),
\end{equation}
provided that K is finite.
\end{proof}

\begin{proof}[Proof of Theorem \ref{t1}]
Consider the estimator \(\hat{{\bm{\beta}}}^{(-k)}\) for any $k\leq K$. In the following analysis, \(\hat{\mu}_a^{(-k,-k')}(\cdot)\) and \(\hat{\pi}^{(-k,-k')}(\cdot)\) are treated as fixed (i.e., conditioned upon) for all $k'\neq k$. By definition,
\begin{align*}
&\frac{1}{\sum_{k'\neq k}|I_{k'}|}
\sum_{k'\neq k}\sum_{i\in I_{k'}}(\hat{\varphi}^{(-k,-k')}(Z_i)-{\bm{W}_i}^\top \hat{{\bm{\beta}}}^{(-k)})^2+\lambda_{\beta}\left\|\hat{{\bm{\beta}}}^{(-k)}\right\|_1\\
&\qquad\leq \frac{1}{\sum_{k'\neq k}|I_{k'}|}
\sum_{k'\neq k}\sum_{i\in I_{k'}}(\hat{\varphi}^{(-k,-k')}(Z_i)-{\bm{W}_i}^\top {\bm{\beta}}^*)^2+\lambda_{\beta}\|{\bm{\beta}}^*\|_1.
\end{align*}
Denote ${\bm{\Delta}}^{\beta}=\hat{{\bm{\beta}}}^{(-k)}-{\bm{\beta}}^*$. It follows that
\begin{align}
&\notag\frac{1}{\sum_{k'\neq k}|I_{k'}|}
\sum_{k'\neq k}\sum_{i\in I_{k'}}\left({\bm{W}_i}^\top {\bm{\Delta}}^{\beta}\right)^2+\lambda_{\beta}\left\|\hat{{\bm{\beta}}}^{(-k)}\right\|_1\\
&\notag\qquad\leq \frac{2}{\sum_{k'\neq k}|I_{k'}|}
\sum_{k'\neq k}\sum_{i\in I_{k'}}(\hat{\varphi}^{(-k,-k')}(Z_i)-{\bm{W}_i}^\top {\bm{\beta}}^*){\bm{W}_i}^\top {\bm{\Delta}}^{\beta}+\lambda_{\beta}\|{\bm{\beta}}^*\|_1,\\
&\qquad=\frac{2}{\sum_{k'\neq k}|I_{k'}|}
\sum_{k'\neq k}\sum_{i\in I_{k'}}(\Delta_{1i}+\Delta_{2i}+\Delta_{3i}+\Delta_{4i}+\Delta_{5i}+
\Delta_{6i}){\bm{W}_i}^\top {\bm{\Delta}}^{\beta}+\lambda_{\beta}\|{\bm{\beta}}^*\|_1,\label{bound:basic}
\end{align}
where
\begin{align*}
 \Delta_{1i}&:=\frac{A_i-\pi^*(\bm{X}_i)}{\pi^*(\bm{X}_i)[1-\pi^*(\bm{X}_i)]}\left[Y_i-\mu_{A_i}^*(\bm{X}_i)\right]+\mu^*_1(\bm{X}_i)-\mu^*_0(\bm{X}_i)-{\bm{W}_i}^\top {\bm{\beta}}^*,\\
 \Delta_{2i}&:=
 \left\{\left[Y_i(1)-\mu_1^*(\bm{X}_i)\right]\left[\frac{A_i}{\hat{\pi}^{(-k,-k')}(\bm{X}_i)}-\frac{A_i}{\pi^*(\bm{X}_i)}\right]\right.\\
 &\left.\qquad-\left[Y_i(0)-\mu_0^*(\bm{X}_i)\right]\left[\frac{1-A_i}{1-\hat{\pi}^{(-k,-k')}(\bm{X}_i)}-\frac{1-A_i}{1-\pi^*(\bm{X}_i)}\right]\right\}\mathbbm{1}_{\mu_1^*(\cdot)=\mu_1(\cdot),\;\mu_0^*(\cdot)=\mu_0(\cdot)},\\
 \Delta_{3i}&:=
 \left\{\left(\mu_1^*(\bm{X}_i)-\hat{\mu}^{(-k,-k')}_1(\bm{X}_i)\right)\left[\frac{A_i}{\pi^*(\bm{X}_i)}-1\right]\right.\\
 &\left.\qquad-\left(\mu_0^*(\bm{X}_i)-\hat{\mu}^{(-k,-k')}_0(\bm{X}_i)\right)\left[\frac{1-A_i}{1-\pi^*(\bm{X}_i)}-1\right]\right\}\mathbbm{1}_{\pi^*(\cdot)=\pi(\cdot)},\\
 \Delta_{4i}&:=
 \left\{A_i\left(\mu_1^*(\bm{X}_i)-\hat{\mu}^{(-k,-k')}_1(\bm{X}_i)\right)\left[\frac{1}{\hat{\pi}^{(-k,-k')}(\bm{X}_i)}-\frac{1}{\pi^*(\bm{X}_i)}\right]\right.\\
 &\left.\qquad-(1-A_i)\left(\mu_0^*(\bm{X}_i)-\hat{\mu}^{(-k,-k')}_0(\bm{X}_i)\right)\left[\frac{1}{1-\hat{\pi}^{(-k,-k')}(\bm{X}_i)}-\frac{1}{1-\pi^*(\bm{X}_i)}\right]\right\},\\
 \Delta_{5i}&:=
 \left\{\left[Y_i(1)-\mu_1^*(\bm{X}_i)\right]\left[\frac{A_i}{\hat{\pi}^{(-k,-k')}(\bm{X}_i)}-\frac{A_i}{\pi^*(\bm{X}_i)}\right]\right.\\
 &\left.\qquad-\left[Y_i(0)-\mu_0^*(\bm{X}_i)\right]\left[\frac{1-A_i}{1-\hat{\pi}^{(-k,-k')}(\bm{X}_i)}-\frac{1-A_i}{1-\pi^*(\bm{X}_i)}\right]\right\}\mathbbm{1}_{\mu_1^*(\cdot)\neq\mu_1(\cdot)\;\text{or}\;\mu_0^*(\cdot)\neq\mu_0(\cdot)},\\
 \Delta_{6i}&:=
 \left\{\left(\mu_1^*(\bm{X}_i)-\hat{\mu}^{(-k,-k')}_1(\bm{X}_i)\right)\left[\frac{A_i}{\pi^*(\bm{X}_i)}-1\right]\right.\\
 &\left.\qquad-\left(\mu_0^*(\bm{X}_i)-\hat{\mu}^{(-k,-k')}_0(\bm{X}_i)\right)\left[\frac{1-A_i}{1-\pi^*(\bm{X}_i)}-1\right]\right\}\mathbbm{1}_{\pi^*(\cdot)\neq\pi(\cdot)}.
\end{align*}

Let \(\Delta_l\) denote an independent copy of \(\Delta_{li}\) for \(1\le l\le6\). The KKT condition characterizing \({\bm{\beta}}^*\) implies $\mathbb{E}_{\bm{X}}[\Delta_1 \bm{W}]=\mathbf{0}.$ In addition, by the law of iterated expectations, $\mathbb{E}_{\bm{X}}[\Delta_2 \bm{W}]=\mathbf{0}$ and $\mathbb{E}_{\bm{X}}[\Delta_3 \bm{W}]=\mathbf{0}$.

Since $|\Delta_1\,\bm{W}^\top {\bm{e}_j}| = |\epsilon\,\bm{W}^\top {\bm{e}_j}|,$
under Assumption \ref{a6}, by Lemmas~\ref{l1} and \ref{l5}, $\|\Delta_1 \bm{W}^\top {\bm{e}_j}\|_{\psi_1}
 \le \|\bm{W}^\top {\bm{e}_j}\|_{\psi_2}\,\|\epsilon\|_{\psi_2}
\le C_1,$ with some constant $C_1>0$. By Lemma D.4 of \cite{8}, for each $1\leq j\leq p$ and any
$t>0$,
\begin{equation*}
 P\left( \left| \frac{1}{\sum_{k'\neq k}|I_{k'}|}
\sum_{k'\neq k}\sum_{i\in I_{k'}}\Delta_{1i}{\bm{W}_i}^\top {\bm{e}_j}\right|\geq h(t) \right)\leq2\exp\left(-t-\log(p)\right),
\end{equation*}
where $h(t)=C_1\{\sqrt{{[t+\log(p)]/{\sum_{k'\neq k}|I_{k'}}|}}+{[t+\log(p)]/\sum_{k'\neq k}|I_{k'}|}\}$. It follows that
\begin{align*}
 &P\left(\left\|\frac{1}{\sum_{k'\neq k}|I_{k'}|}
\sum_{k'\neq k}\sum_{i\in I_{k'}}\Delta_{1i}{\bm{W}_i}\right\|_{\infty}\geq h(t)\right)\\
&\qquad\leq
 \sum_{j=1}^{p}P\left( \left|\frac{1}{\sum_{k'\neq k}|I_{k'}|}
\sum_{k'\neq k}\sum_{i\in I_{k'}}\Delta_{1i}{\bm{W}_i}^\top {\bm{e}_j}\right|\geq h(t) \right)\\
 &\qquad\leq p \exp\left(-t-\log(p)\right)=\exp(-t).
\end{align*}
Since $\sum_{k'\neq k}|I_{k'}|\asymp n$, it follows that 
\begin{equation*}
 \left\|\frac{1}{\sum_{k'\neq k}|I_{k'}|}
\sum_{k'\neq k}\sum_{i\in I_{k'}}\Delta_{1i}{\bm{W}_i}\right\|_{\infty}=O_p\left(\sqrt{\frac{\log(p)}{{n}}}\right).
\end{equation*}

By Corollary of 2.3 of \cite{5} and $\|\bm{X}\|_{\infty}=O(1)$,
\begin{align*}
 &\mathbb{E}_{\bm{X}}\left[\left\|\frac{1}{\sum_{k'\neq k}|I_{k'}|}
\sum_{k'\neq k}\sum_{i\in I_{k'}}\Delta_{2i}{\bm{W}_i}\right\|_{\infty}^2\right]=O_p\left( \frac{[2e\log(p)-e]\mathbb{E}_{\bm{X}}\left(\|\Delta_{2}\bm{W}\|_{\infty}^2\right)}{\sum_{k'\neq k}|I_{k'}|}\right)\\
&\qquad=O_p\left(\frac{\mathbb{E}_{\bm{X}}(\Delta_2^2)\log(p)}{{n}}\right).
\end{align*}
Since $\mathbb{E}(a-b)^2\leq 2 \mathbb{E}(a^2+b^2)$, we have
\begin{align*}
 \mathbb{E}_{\bm{X}}(\Delta_2^2)
 &\leq 2\mathbb{E}_{\bm{X}}\left|\left[Y(1)-\mu_1^*(\bm{X})\right]\left[\frac{A}{\hat{\pi}^{(-k,-k')}(\bm{X})}-\frac{A}{\pi^*(\bm{X})}\right]\right|^2\\
 &\qquad+2\mathbb{E}_{\bm{X}}\left|\left[Y(0)-\mu_0^*(\bm{X})\right]\left[\frac{1-A}{1-\hat{\pi}^{(-k,-k')}(\bm{X})}-\frac{1-A}{1-\pi^*(\bm{X})}\right]\right|^2\\
 &\leq 2\sqrt{\mathbb{E}_{\bm{X}}|A\left[Y(1)-\mu_1^*(\bm{X})\right]|^4}\sqrt{\mathbb{E}_{\bm{X}}\left|\left[\frac{1}{\hat{\pi}^{(-k,-k')}(\bm{X})}-\frac{1}{\pi^*(\bm{X})}\right]\right|^4}\\
 &\qquad+2\sqrt{\mathbb{E}_{\bm{X}}|(1-A)\left[Y(0)-\mu_0^*(\bm{X})\right]|^4}\sqrt{\mathbb{E}_{\bm{X}}\left|\left[\frac{1}{1-\hat{\pi}^{(-k,-k')}(\bm{X})}-\frac{1}{1-\pi^*(\bm{X})}\right]\right|^4}\\
 &\overset{(i)}{=}O_p\left(\frac{s_{{\bm{\gamma}}}\log(d)}{N}\right),
\end{align*}
where (i) holds since Assumption \ref{a5} and Lemma \ref{l4}.
Hence,
\begin{equation*}
 \mathbb{E}_{\bm{X}}\left[\left\|\frac{1}{\sum_{k'\neq k}|I_{k'}|}
\sum_{k'\neq k}\sum_{i\in I_{k'}}\mathbbm{1}_{A_i=a}\Delta_{2i}{\bm{W}_i}\right\|_{\infty}^2\right]=O_p\left(\frac{s_{{\bm{\gamma}}}\log(d)\log(p)}{nN}\right)=o_p\left(\frac{\log(p)}{n}\right),
\end{equation*}
since $s_{{\bm{\gamma}}}\log(d)=o(N)$. 
By Markov's inequality, one obtains
\begin{equation*}
 \left\|\frac{1}{\sum_{k'\neq k}|I_{k'}|}
\sum_{k'\neq k}\sum_{i\in I_{k'}}\Delta_{2i}{\bm{W}_i}\right\|_{\infty}=o_p\left(\sqrt{\frac{\log(p)}{{n}}}\right).
\end{equation*}

Additionally, observe that
\begin{align*}
 \mathbb{E}_{\bm{X}}(\Delta_3^2)
 &\leq 2\mathbb{E}_{\bm{X}}\left|\left(\mu_1^*(\bm{X})-\hat{\mu}^{(-k,-k')}_1(\bm{X})\right)\left[\frac{A}{\pi^*(\bm{X})}-1\right]\right|^2\\
 &\quad+2\mathbb{E}_{\bm{X}}\left|\left(\mu_0^*(\bm{X})-\hat{\mu}^{(-k,-k')}_0(\bm{X})\right)\left[\frac{1-A}{1-\pi^*(\bm{X})}-1\right]\right|^2\\
 &\leq 2\sqrt{\mathbb{E}_{\bm{X}}\left|\left(\mu_1^*(\bm{X})-\hat{\mu}^{(-k,-k')}_1(\bm{X})\right)\right|^4}\sqrt{\mathbb{E}_{\bm{X}}\left|\left[\frac{A}{\pi^*(\bm{X})}-1\right]\right|^4}\\
 &\quad+2\sqrt{\mathbb{E}_{\bm{X}}\left|\left(\mu_0^*(\bm{X})-\hat{\mu}^{(-k,-k')}_0(\bm{X})\right)\right|^4}\sqrt{\mathbb{E}_{\bm{X}}\left|\left[\frac{1-A}{1-\pi^*(\bm{X})}-1\right]\right|^4}\\
 &\overset{(i)}{=}O_p\left(\frac{s_{{\bm{\alpha}}}\log(d)}{{n}}\right),
\end{align*}
where (i) follows from Lemma \ref{l1}, Lemma \ref{l2}, and Assumption \ref{a7}. Since 
$
\|\Delta_{3i}\,{\bm{W}_i}^\top {\bm{e}_j}\|_{\psi_1}
\le \|{\bm{W}_i}^\top {\bm{e}_j}\|_{\psi_2}\,\|\Delta_{3i}\|_{\psi_2}
=O(1),
$
 an application of Bernstein’s inequality then yields
\begin{equation*}
 \mathbb{E}_{\bm{X}}\left[\left\|\frac{1}{\sum_{k'\neq k}|I_{k'}|}
\sum_{k'\neq k}\sum_{i\in I_{k'}}\Delta_{3i}{\bm{W}_i}\right\|_{\infty}^2\right]=O_p\left(\frac{s_{{\bm{\alpha}}}\log(d)\log(p)}{{{n}}^2}\right)\overset{(i)}{=}o_p\left(\frac{\log(p)}{{n}}\right),
\end{equation*}
where (i) holds when $s_{{\bm{\alpha}}}\log(d)=o({n})$. By Markov's inequality,
\begin{equation*}
 \left\|\frac{1}{\sum_{k'\neq k}|I_{k'}|}
\sum_{k'\neq k}\sum_{i\in I_{k'}}\Delta_{3i}{\bm{W}_i}\right\|_{\infty}=o_p\left(\sqrt{\frac{\log(p)}{{n}}}\right).
\end{equation*}

To sum up, 
\[
\sum_{l=1}^{3}\left\|\frac{1}{\sum_{k'\neq k}|I_{k'}|}
\sum_{k'\neq k}\sum_{i\in I_{k'}}\Delta_{li}{\bm{W}_i}\right\|_{\infty}
=O_p\left(\sqrt{\frac{\log(p)}{{n}}}\right).
\]
Then for any \(t>0\), there exists some 
\(\lambda_{\beta}\asymp \sqrt{{\log(p)/{n}}}\) such that
\[
A_2 := \left\{\sum_{l=1}^{3}\left\|\frac{1}{\sum_{k'\neq k}|I_{k'}|}
\sum_{k'\neq k}\sum_{i\in I_{k'}}\Delta_{li}{\bm{W}_i}\right\|_{\infty}
\le \frac{\lambda_{\beta}}{4}\right\}
\]
satisfies \(P(A_2)\geq1-t\). On the event \(A_2\), 
\begin{align*}
 &\frac{1}{\sum_{k'\neq k}|I_{k'}|}
\sum_{k'\neq k}\sum_{i\in I_{k'}}\left({\bm{W}_i}^\top {\bm{\Delta}}^{\beta}\right)^2+\lambda_{\beta}\left\|\hat{{\bm{\beta}}}^{(-k)}\right\|_1
\\
&\quad\leq \frac{\lambda_{\beta}}{2}\|\bm{\Delta}_{\beta}\|_{1} +\frac{2\sum_{k'\neq k}\sum_{i\in I_{k'}} \left(\sum_{l=4}^{6}\Delta_{li}\right)^2}{\sum_{k'\neq k}|I_{k'}|}
+
\frac{\sum_{k'\neq k}\sum_{i\in I_{k'}}\left({\bm{W}_i}^\top {\bm{\Delta}}^{\beta}\right)^2}{2\sum_{k'\neq k}|I_{k'}|}
+\lambda_{\beta}\|{\bm{\beta}}^*\|_1.
\end{align*}
Since $(\sum_{l=4}^{6}\Delta_{li})^2\leq 4\sum_{l=4}^{6}
\Delta_{li}^2$, which yields
\begin{align*}
 &\frac{1}{\sum_{k'\neq k}|I_{k'}|}
\sum_{k'\neq k}\sum_{i\in I_{k'}}\left({\bm{W}_i}^\top {\bm{\Delta}}^{\beta}\right)^2+2\lambda_{\beta}\left\|\hat{{\bm{\beta}}}^{(-k)}\right\|_1\\
&\quad\leq \lambda_{\beta}\|\bm{\Delta}_{\beta}\|_{1} +\frac{16}{\sum_{k'\neq k}|I_{k'}|}
\sum_{k'\neq k}\sum_{i\in I_{k'}}\left(\sum_{l=4}^{6}\Delta_{li}^2\right)+2\lambda_{\beta}\|{\bm{\beta}}^*\|_1.
\end{align*}
Then,
\begin{align*}
&\frac{1}{\sum_{k'\neq k}|I_{k'}|}
\sum_{k'\neq k}\sum_{i\in I_{k'}}\left({\bm{W}_i}^\top {\bm{\Delta}}^{\beta}\right)^2+2\lambda_{\beta}\left\|\hat{{\bm{\beta}}}^{(-k)}\right\|_1\\
&\quad\leq \lambda_{\beta}\left\|\hat{{\bm{\beta}}}^{(-k)}_S-{\bm{\beta}}^*_S\right\|_{1} +\frac{16}{\sum_{k'\neq k}|I_{k'}|}
\sum_{k'\neq k}\sum_{i\in I_{k'}}\left(\sum_{l=4}^{6}\Delta_{li}^2\right)+2\lambda_{\beta}\left\|{\bm{\beta}}^*_S\right\|_1+\lambda_{\beta}\left\|\hat{{\bm{\beta}}}^{(-k)}_{S^c}\right\|_1,
\end{align*}
where with a slight abuse of notation, consider $S=\{j\leq p:{\beta}^*_j\neq 0 \}$ and 
\begin{equation*}
 \left\|\hat{{\bm{\beta}}}^{(-k)}-{\bm{\beta}}^*\right\|_{1}=\left\|\hat{{\bm{\beta}}}^{(-k)}_S-{\bm{\beta}}^*_S\right\|_{1}+\left\|\hat{{\bm{\beta}}}^{(-k)}_{S^c}-{\bm{\beta}}^*_{S^c}\right\|_{1}=\left\|\hat{{\bm{\beta}}}^{(-k)}_S-{\bm{\beta}}^*_S\right\|_{1}+\left\|\hat{{\bm{\beta}}}^{(-k)}_{S^c}\right\|_{1}.
\end{equation*}
According to the triangle inequality 
\begin{equation*}
 2\left\|\hat{{\bm{\beta}}}^{(-k)}\right\|_{1}=2\| \hat{{\bm{\beta}}}^{(-k)}_S\|_{1}+2\left\|\hat{{\bm{\beta}}}^{(-k)}_{S^c}\right\|_{1}\geq 2\left\|{\bm{\beta}}^*_S\right\|_{1}-\left\|\hat{{\bm{\beta}}}^{(-k)}_S-{\bm{\beta}}^*_S\right\|_{1}+2\left\|\hat{{\bm{\beta}}}^{(-k)}_{S^c}\right\|_{1}.
\end{equation*}
Hence,
\begin{align}\label{7}
 \notag&\frac{1}{\sum_{k'\neq k}|I_{k'}|}
\sum_{k'\neq k}\sum_{i\in I_{k'}}\left[{\bm{W}_i}^\top {\bm{\Delta}}^{\beta}\right]^2+\lambda_{\beta}\left\|\hat{{\bm{\beta}}}^{(-k)}_{S^c}\right\|_1\\
&\quad\leq 3\lambda_{\beta}\left\|\hat{{\bm{\beta}}}^{(-k)}_S-{\bm{\beta}}^*_S\right\|_{1} +\frac{16}{\sum_{k'\neq k}|I_{k'}|}
\sum_{k'\neq k}\sum_{i\in I_{k'}}\left(\sum_{l=4}^{6}\Delta_{li}^2\right).
\end{align}
Using the Lemma D.6 of \cite{6}, there exist constants $\kappa_1,\kappa_2>0$ such that for all $\|\bm{a}\|_{2}\leq 1$,
\begin{equation}\label{61}
 \frac{1}{\sum_{k'\neq k}|I_{k'}|}
\sum_{k'\neq k}\sum_{i\in I_{k'}}({\bm{W}_i}^\top \bm{a})^2\geq \kappa_1\|\bm{a}\|_{2}\left[\|\bm{a}\|_{2}-\kappa_2\sqrt{\frac{\log(p)}{\sum_{k'\neq k}|I_{k'}|}}\|\bm{a}\|_{1}\right],
\end{equation}
 with probability at least $1 - c_1\exp(-c_2\sum_{k'\neq k}|I_{k'}|),$
for some constants \(c_1,c_2>0\). Although this original lemma is stated for the logistic loss, an analogous bound for the least squares loss follows by retracing the argument of Lemma~4.5 of \cite{6}. Next, define the event 
\begin{equation*}
 A_3:=\left\{\frac{1}{\sum_{k'\neq k}|I_{k'}|}
\sum_{k'\neq k}\sum_{i\in I_{k'}}\left({\bm{W}_i}^\top {\bm{\Delta}}^{\beta}\right)^2\geq \kappa_1\left\|{\bm{\Delta}}^{\beta}\right\|_{2}^2-\kappa_1 \kappa_2\sqrt{\frac{\log(p)}{\sum_{k'\neq k}|I_{k'}|}}\left\|{\bm{\Delta}}^{\beta}\right\|_{1}\left\|{\bm{\Delta}}^{\beta}\right\|_{2}\right\}.
\end{equation*}
By \eqref{61}, it follows that $P(A_3)\ge 1 - c_1\exp(-c_2\sum_{k'\neq k}|I_{k'}|).$
Proceeding by distinguishing two cases yields the desired bound.

Case 1. If $\left\|{\bm{\Delta}}^{\beta}_S\right\|_{1}\leq \lambda_{\beta}^{-1}{16(\sum_{k'\neq k}|I_{k'}|)^{-1}}
\sum_{k'\neq k}\sum_{i\in I_{k'}}\left(\sum_{l=4}^{6}\Delta_{li}^2\right)$, then by \eqref{7}, 
\begin{equation*}
 \left\|{\bm{\Delta}}^{\beta}_{S^c}\right\|_{1}\leq 3\left\|{\bm{\Delta}}^{\beta}_S\right\|_{1}+\lambda_{\beta}^{-1}\frac{16}{\sum_{k'\neq k}|I_{k'}|}
\sum_{k'\neq k}\sum_{i\in I_{k'}}\left(\sum_{l=4}^{6}\Delta_{li}^2\right)\leq \lambda_{\beta}^{-1}\frac{64}{\sum_{k'\neq k}|I_{k'}|}
\sum_{k'\neq k}\sum_{i\in I_{k'}}\left(\sum_{l=4}^{6}\Delta_{li}^2\right).
\end{equation*}
Hence,
\begin{equation*}
 \left\|{\bm{\Delta}}^{\beta}\right\|_{1}=\left\|{\bm{\Delta}}^{\beta}_S\right\|_{1}+\left\|{\bm{\Delta}}^{\beta}_{S^c}\right\|_{1}\leq \lambda_{\beta}^{-1}\frac{80}{\sum_{k'\neq k}|I_{k'}|}
\sum_{k'\neq k}\sum_{i\in I_{k'}}\left(\sum_{l=4}^{6}\Delta_{li}^2\right)
\end{equation*}
and
\begin{align*}
 &\frac{1}{\sum_{k'\neq k}|I_{k'}|}
\sum_{k'\neq k}\sum_{i\in I_{k'}}\left({\bm{W}_i}^\top {\bm{\Delta}}^{\beta}\right)^2\leq 3\lambda_{\beta}\left\|{\bm{\Delta}}^{\beta}_S\right\|_{1}+\frac{16}{\sum_{k'\neq k}|I_{k'}|}
\sum_{k'\neq k}\sum_{i\in I_{k'}}\left(\sum_{l=4}^{6}\Delta_{li}^2\right)\\
&\qquad\leq \frac{64}{\sum_{k'\neq k}|I_{k'}|}
\sum_{k'\neq k}\sum_{i\in I_{k'}}\left(\sum_{l=4}^{6}\Delta_{li}^2\right).
\end{align*}
In addition, conditioning on the event $A_3$,
\begin{align*}
 &\kappa_1\left\|{\bm{\Delta}}^{\beta}\right\|_{2}^2-\kappa_1 \kappa_2\sqrt{\frac{\log(p)}{\sum_{k'\neq k}|I_{k'}|}}\left\|{\bm{\Delta}}^{\beta}\right\|_{1}\left\| {\bm{\Delta}}^{\beta}\right\|_{2}\leq \frac{1}{\sum_{k'\neq k}|I_{k'}|}
\sum_{k'\neq k}\sum_{i\in I_{k'}}\left({\bm{W}_i}^\top {\bm{\Delta}}^{\beta}\right)^2\\
&\qquad\leq \frac{64}{\sum_{k'\neq k}|I_{k'}|}
\sum_{k'\neq k}\sum_{i\in I_{k'}}\left(\sum_{l=4}^{6}\Delta_{li}^2\right).
\end{align*}
It follows that,
\begin{align*}
 &\left\|{\bm{\Delta}}^{\beta}\right\|_{2}\\
 &\quad\leq \frac{\kappa_1\kappa_2\sqrt{\frac{\log(p)}{\sum_{k'\neq k}|I_{k'}|}}\left\|{\bm{\Delta}}^{\beta}\right\|_{1}+\sqrt{\frac{\kappa_1^2\kappa_2^2\log(p)}{\sum_{k'\neq k}|I_{k'}|}\left\|{\bm{\Delta}}^{\beta}\right\|_{1}^2+\frac{256\kappa_1}{\sum_{k'\neq k}|I_{k'}|}
\sum_{k'\neq k}\sum_{i\in I_{k'}}\left(\sum_{l=4}^{6}\Delta_{li}^2\right)}}{2\kappa_1}\\
 &\quad\overset{(i)}{\leq} \kappa_2\sqrt{\frac{\log(p)}{\sum_{k'\neq k}|I_{k'}|}}\left\|{\bm{\Delta}}^{\beta}\right\|_{1}+8\kappa_1^{-\frac{1}{2}}\left(\frac{1}{\sum_{k'\neq k}|I_{k'}|}
\sum_{k'\neq k}\sum_{i\in I_{k'}}\left(\sum_{l=4}^{6}\Delta_{li}^2\right)\right)^{\frac{1}{2}}\\
 &\quad\leq \kappa_2\sqrt{\frac{\log(p)}{\sum_{k'\neq k}|I_{k'}|}}\lambda_{\beta}^{-1}\frac{80}{\sum_{k'\neq k}|I_{k'}|}
\sum_{k'\neq k}\sum_{i\in I_{k'}}\left(\sum_{l=4}^{6}\Delta_{li}^2\right)\\
&\qquad+8\kappa_1^{-\frac{1}{2}}\left(\frac{1}{\sum_{k'\neq k}|I_{k'}|}
\sum_{k'\neq k}\sum_{i\in I_{k'}}\left(\sum_{l=4}^{6}\Delta_{li}^2\right)\right)^{\frac{1}{2}}\\
 &\quad\leq \kappa_2\frac{80}{\sum_{k'\neq k}|I_{k'}|}
\sum_{k'\neq k}\sum_{i\in I_{k'}}\left(\sum_{l=4}^{6}\Delta_{li}^2\right)+8\kappa_1^{-\frac{1}{2}}\left(\frac{1}{\sum_{k'\neq k}|I_{k'}|}
\sum_{k'\neq k}\sum_{i\in I_{k'}}\left(\sum_{l=4}^{6}\Delta_{li}^2\right)\right)^{\frac{1}{2}}\\
 &\quad=O_p\left(\left(\frac{1}{\sum_{k'\neq k}|I_{k'}|}
\sum_{k'\neq k}\sum_{i\in I_{k'}}\left(\sum_{l=4}^{6}\Delta_{li}^2\right)\right)^{\frac{1}{2}}\right),
\end{align*}
where (i) follows from the fact that $|a|+|b|\geq \sqrt{a^2+b^2}$.

Case 2. If $\|{\bm{\Delta}}^{\beta}_S\|_{1}\geq \lambda_{\beta}^{-1}16(\sum_{k'\neq k}|I_{k'}|)^{-1}
\sum_{k'\neq k}\sum_{i\in I_{k'}}(\sum_{l=4}^{6}\Delta_{li}^2)$, then by \eqref{7},
\begin{align}\label{8}
 &\frac{1}{\sum_{k'\neq k}|I_{k'}|}
\sum_{k'\neq k}\sum_{i\in I_{k'}}\left[{\bm{W}_i}^\top {\bm{\Delta}}^{\beta}\right]^2+\lambda_{\beta}\left\|{\bm{\Delta}}^{\beta}_{S^c}\right\|_1\\
&\quad\notag\leq \lambda_{\beta}\left(3\left\|{\bm{\Delta}}^{\beta}_S\right\|_{1} +\lambda_{\beta}^{-1}\frac{16}{\sum_{k'\neq k}|I_{k'}|}
\sum_{k'\neq k}\sum_{i\in I_{k'}}\left(\sum_{l=4}^{6}\Delta_{li}^2\right)\right)\leq 4\lambda_{\beta}\left\|{\bm{\Delta}}^{\beta}_S\right\|_{1},
\end{align}
thus $\|{\bm{\Delta}}^{\beta}_{S^c}\|_{1}\leq 4\|{\bm{\Delta}}^{\beta}_{s}\|_{1}.$
Since $\|{\bm{\Delta}}^{\beta}_S\|_{1}\leq \sqrt{s}\|\bm{\Delta}_S^{{{\beta}}}\|_{2}$, we have
\begin{equation*}
 \left\|{\bm{\Delta}}^{\beta}\right\|_{1}=\left\|{\bm{\Delta}}^{\beta}_S\right\|_{1}+\left\|{\bm{\Delta}}^{\beta}_{S^c}\right\|\leq 5\left\|{\bm{\Delta}}^{\beta}_S\right\|_{1}\leq 5\sqrt{s}\left\|\bm{\Delta}_S^{{{\beta}}}\right\|_{2}.
\end{equation*}
When $\sum_{k'\neq k}|I_{k'}|>100\kappa_2^2s_{\beta}\log(p)$, under the event $A_3$,
\begin{align*}
 \frac{1}{\sum_{k'\neq k}|I_{k'}|}
\sum_{k'\neq k}\sum_{i\in I_{k'}}\left({\bm{W}_i}^\top {\bm{\Delta}}^{\beta}\right)^2&\geq \kappa_1\left\|{\bm{\Delta}}^{\beta}\right\|_{2}^2-5\kappa_1 \kappa_2\sqrt{\frac{s_{\beta}\log(p)}{\sum_{k'\neq k}|I_{k'}|}}\left\|{\bm{\Delta}}^{\beta}\right\|_{2}^2\\
 &\geq \frac{\kappa_1}{2}\left\|{\bm{\Delta}}^{\beta}\right\|_{2}^2\geq \frac{\kappa_1}{2s}\left\|{\bm{\Delta}}^{\beta}_S\right\|_{1}^2.
\end{align*}
Together with \eqref{8}, one obtains
\begin{equation}\label{62}
 \frac{\kappa_1}{2s}\left\|{\bm{\Delta}}^{\beta}_S\right\|_{1}^2\leq
 \frac{1}{\sum_{k'\neq k}|I_{k'}|}
\sum_{k'\neq k}\sum_{i\in I_{k'}}\left({\bm{W}_i}^\top {\bm{\Delta}}^{\beta}\right)^2\leq 4\lambda_{\beta}\left\|{\bm{\Delta}}^{\beta}_S\right\|_{1}.
\end{equation}
Hence, $ \|{\bm{\Delta}}^{\beta}_S\|_{1}\leq 8\kappa_1^{-1}s\lambda_{\beta} $, $ \|{\bm{\Delta}}^{\beta}\|_{1}\leq 5\|{\bm{\Delta}}^{\beta}_S\|_{1}\leq 40\kappa_1^{-1}s\lambda_{\beta}$
and 
\begin{equation*}
 \frac{1}{\sum_{k'\neq k}|I_{k'}|}
\sum_{k'\neq k}\sum_{i\in I_{k'}}\left({\bm{W}_i}^\top {\bm{\Delta}}^{\beta}\right)^2\leq 4\lambda_{\beta}\left\|{\bm{\Delta}}^{\beta}_S\right\|_{1} \leq 32\kappa_1^{-1}\sqrt{s}\lambda_{\beta}.
\end{equation*}
When \(\sum_{k'\neq k}|I_{k'}| > 100\,\kappa_2^2 s_{\beta}\log(p)\), \eqref{62} implies that 
\begin{equation*}
 \left\|{\bm{\Delta}}^{\beta}\right\|_{2}\leq \sqrt{\frac{2}{\kappa_1\sum_{k'\neq k}|I_{k'}|}
\sum_{k'\neq k}\sum_{i\in I_{k'}}\left({\bm{W}_i}^\top {\bm{\Delta}}^{\beta}\right)^2}\leq 8\kappa^{-1}\sqrt{s}\lambda_{\beta}.
\end{equation*}
 To sum up ,when $\sum_{k'\neq k}|I_{k'}|\gg \max [s_{\beta}\log(p),s_{{\bm{\gamma}}}\log(d),s_{{\bm{\alpha}}}\log(d)]$, with $\lambda_{\beta}\asymp \sqrt{{\log(p)/{n}}}$, we have
\begin{equation}\label{rate:betahat-inter}
 \left\|\hat{{\bm{\beta}}}^{(-k)}-{\bm{\beta}}^*\right\|_{2}=\|\bm{\Delta}_{\beta}\|_{2}=O_p\left(\sqrt{\frac{s_{\beta}\log(p)}{n}}+\left(\frac{1}{\sum_{k'\neq k}|I_{k'}|}
\sum_{k'\neq k}\sum_{i\in I_{k'}}\sum_{l=4}^{6}\Delta_{li}^2\right)^{\frac{1}{2}}\right).
\end{equation}

In the following, we further control the error term $(\sum_{k'\neq k}|I_{k'}|)^{-1}
\sum_{k'\neq k}\sum_{i\in I_{k'}}\Delta_{li}^2$ for each $l \in \{4,5,6\}$. We first observe that
\begin{align}
 &\notag\mathbb{E}_{\bm{X}}\left|\frac{1}{\sum_{k'\neq k}|I_{k'}|}
\sum_{k'\neq k}\sum_{i\in I_{k'}}\Delta_{4i}^2\right|=\mathbb{E}_{\bm{X}}(\Delta_4^2)\\
 &\notag\qquad\leq
 2\mathbb{E}_{\bm{X}}\left|(1-A)\left(\mu_0^*(\bm{X})-\hat{\mu}^{(-k,-k')}_0(\bm{X})\right)\left[\frac{1}{1-\hat{\pi}^{(-k,-k')}(\bm{X})}-\frac{1}{1-\pi^*(\bm{X})}\right]\right|^2\\
 &\qquad\qquad+2\mathbb{E}_{\bm{X}}\left|A\left(\mu_1^*(\bm{X})-\hat{\mu}^{(-k,-k')}_1(\bm{X})\right)\left[\frac{1}{\hat{\pi}^{(-k,-k')}(\bm{X})}-\frac{1}{\pi^*(\bm{X})}\right]\right|^2\label{bound:Delta4}\\
 &\notag\qquad\leq 2\sqrt{\mathbb{E}_{\bm{X}}\left|\left(\mu_1^*(\bm{X})-\hat{\mu}^{(-k,-k')}_1(\bm{X})\right)\right|^4}\sqrt{\mathbb{E}_{\bm{X}}\left|\left[\frac{1}{\hat{\pi}^{(-k,-k')}(\bm{X})}-\frac{1}{\pi^*(\bm{X})}\right]\right|^4}\\
 &\notag\qquad\qquad+2\sqrt{\mathbb{E}_{\bm{X}}\left|\left(\mu_0^*(\bm{X})-\hat{\mu}^{(-k,-k')}_0(\bm{X})\right)\right|^4}\sqrt{\mathbb{E}_{\bm{X}}\left|\left[\frac{1}{1-\hat{\pi}^{(-k,-k')}(\bm{X})}-\frac{1}{1-\pi^*(\bm{X})}\right]\right|^4}\\
 &\notag\qquad\overset{(i)}{=}O_p\left(\frac{s_{{\bm{\alpha}}}\log(d)}{{n}}\frac{s_{{\bm{\gamma}}}\log(d)}{N}\right),
\end{align}
where (i) holds from Lemma \ref{l2} and Lemma \ref{l4}. Furthermore,
\begin{align}
 &\notag\mathbb{E}_{\bm{X}}\left|\frac{1}{\sum_{k'\neq k}|I_{k'}|}
\sum_{k'\neq k}\sum_{i\in I_{k'}}\Delta_{5i}^2\right|=\mathbb{E}_{\bm{X}}(\Delta_5^2)\\
 &\notag\qquad\leq
 2\mathbb{E}_{\bm{X}}\left|A\left[Y(1)-\mu_1^*(\bm{X})\right]\left[\frac{1}{\hat{\pi}^{(-k,-k')}(\bm{X})}-\frac{1}{\pi^*(\bm{X})}\right]\right|^2\\
 &\qquad\qquad+2\mathbb{E}_{\bm{X}}\left|(1-A)\left[Y(0)-\mu_0^*(\bm{X})\right]\left[\frac{1}{1-\hat{\pi}^{(-k,-k')}(\bm{X})}-\frac{1}{1-\pi^*(\bm{X})}\right]\right|^2\label{bound:Delta5}\\
 &\notag\qquad\leq 2\sqrt{\mathbb{E}_{\bm{X}}\left|A\left[Y(1)-\mu_1^*(\bm{X})\right]\right|^4}\sqrt{\mathbb{E}_{\bm{X}}\left|\left[\frac{1}{\hat{\pi}^{(-k,-k')}(\bm{X})}-\frac{1}{\pi^*(\bm{X})}\right]\right|^4}\\
 &\notag\qquad\qquad+2\sqrt{\mathbb{E}_{\bm{X}}|(1-A)\left[Y(0)-\mu_0^*(\bm{X})\right]|^4}\sqrt{\mathbb{E}_{\bm{X}}\left|\left[\frac{1}{1-\hat{\pi}^{(-k,-k')}(\bm{X})}-\frac{1}{1-\pi^*(\bm{X})}\right]\right|^4}\\
 &\notag\qquad\overset{(i)}{=}O_p\left(\frac{s_{{\bm{\gamma}}}\log(d)}{N}\right),
\end{align}
where (i) holds from Assumption \ref{a2} and Lemma \ref{l4}. Besides,
\begin{align*}
 &\mathbb{E}_{\bm{X}}\left|\frac{1}{\sum_{k'\neq k}|I_{k'}|}
\sum_{k'\neq k}\sum_{i\in I_{k'}}\Delta_{6i}^2\right|=\mathbb{E}_{\bm{X}}(\Delta_6^2)\\
 &\qquad\leq
 2\mathbb{E}_{\bm{X}}\left|\left(\mu_1^*(\bm{X})-\hat{\mu}^{(-k,-k')}_1(\bm{X})\right)\left[\frac{A}{\pi^*(\bm{X})}-1\right]\right|^2\\
 &\qquad\qquad
 +2\mathbb{E}_{\bm{X}}\left|\left(\mu_0^*(\bm{X})-\hat{\mu}^{(-k,-k')}_0(\bm{X})\right)\left[\frac{1-A}{1-\pi^*(\bm{X})}-1\right]\right|^2\\
 &\qquad\leq 2\sqrt{\mathbb{E}_{\bm{X}}\left|\left(\mu_1^*(\bm{X})-\hat{\mu}^{(-k,-k')}_1(\bm{X})\right)\right|^4}\sqrt{\mathbb{E}_{\bm{X}}\left|\left[\frac{A}{\pi^*(\bm{X})}-1\right]\right|^4}\\
 &\qquad\qquad
 +2\sqrt{\mathbb{E}_{\bm{X}}\left|\left(\mu_0^*(\bm{X})-\hat{\mu}^{(-k,-k')}_0(\bm{X})\right)\right|^4}\sqrt{\mathbb{E}_{\bm{X}}\left|\left[\frac{1-A}{1-\pi^*(\bm{X})}-1\right]\right|^4}\\
 &\qquad\overset{(i)}{=}O_p\left(\frac{s_{{\bm{\alpha}}}\log(d)}{{n}}\right),
\end{align*}
where (i) holds by Assumption \ref{a4} and Lemma \ref{l2}.

By Markov's inequality,
\begin{align*}
 &\frac{1}{\sum_{k'\neq k}|I_{k'}|}
\sum_{k'\neq k}\sum_{i\in I_{k'}}\sum_{l=4}^{6}\Delta_{li}^2\\
&\qquad=O_p\left(\frac{s_{{\bm{\alpha}}}s_{{\bm{\gamma}}}\log^2(d)}{{{n}}^2}+\frac{s_{{\bm{\gamma}}}\log(d)}{N}\left(\mathbbm{1}_{\mu_1^*(\cdot)\neq \mu_1(\cdot)}+\mathbbm{1}_{\mu_0^*(\cdot)\neq \mu_0(\cdot)}\right)+\frac{s_{{\bm{\alpha}}}\log(d)}{{n}}\mathbbm{1}_{\pi^*(\cdot)\neq \pi(\cdot)}\right).
\end{align*} 
Consequently, when $K$ is finite
\begin{equation*}
 \left\|\hat{{\bm{\beta}}}^{(-k)}-{\bm{\beta}}^*\right\|_{2}=O_p\left(R_n+\sqrt{\frac{s_{\beta}\log(p)}{n}}\right).
\end{equation*}
\end{proof}

\begin{proof}[Proof of Theorem \ref{t2}]
 Define 
\begin{equation*}
 \hat{\theta}^{(k)}_{\mathrm{ETH}}:=\frac{1}{|G_k|}\sum_{i \in G_k}\left(\hat{{\bm{\beta}}}^{(-k)^\top }\hat {\bm{D}}_i^{(k)}\right)^2+\frac{2}{\tilde{n}}\sum_{i\in I_k}\hat{{\bm{\beta}}}^{(-k)^\top }\hat {\bm{D}}_i^{(k)}\left(\hat{\varphi}^{(-k)}(Z_i
 )-\hat{\tau}_{\mathrm{para}}-\hat{{\bm{\beta}}}^{(-k)^\top }\hat {\bm{D}}_i^{(k)}\right).
\end{equation*} 
For any constant $c>0$, by \eqref{93} and H\"older's inequality,
 \begin{align}\label{26}
 &\notag\mathbb{E}_{\bm{X}}\left|\hat{{\bm{\beta}}}^{(-k)^\top }\bm{D}(\hat{\varphi}^{(-k)}(Z)-\tau)\right|^{2+\frac{c}{2}}\\
 &\quad\leq\sqrt{\mathbb{E}_{\bm{X}}\left|\left(\hat{{\bm{\beta}}}^{(-k)^\top }\bm{D}\right)\right|^{4+c}}\sqrt{\mathbb{E}_{\bm{X}}\left|(\hat{\varphi}^{(-k)}(Z)-\tau)\right|^{4+c}}\overset{(i)}{=}O_p(1),
 \end{align}
 where (i) holds by \eqref{92}, \eqref{93}, and \eqref{95}.
Chebyshev’s inequality implies that 
\begin{align}
 &|G_k|^{-1}\sum_{i \in G_k}\left(\hat{{\bm{\beta}}}^{(-k)^\top }
 {{\bm{D}_i}}\right)^2=\mathbb{E}_{\bm{X}}\left(\hat{{\bm{\beta}}}^{(-k)^\top } \bm{D}\right)^2+O_p\left(N^{-\frac{1}{2}}\right),\label{45}\\
 &\tilde{n}^{-1}\sum_{i \in I_k}(\hat{{\bm{\beta}}}^{(-k)^\top }
 {{\bm{D}_i}})^2=\mathbb{E}_{\bm{X}}\left(\hat{{\bm{\beta}}}^{(-k)^\top } \bm{D}\right)^2+O_p\left({{n}}^{-\frac{1}{2}}\right),\label{47}\\
 &\sum_{i \in J_k}(\hat{{\bm{\beta}}}^{(-k)^\top }
 {{\bm{D}_i}})^2=\tilde{m}\mathbb{E}_{\bm{X}}\left(\hat{{\bm{\beta}}}^{(-k)^\top } \bm{D}\right)^2+O_p\left({{m}}^{\frac{1}{2}}\right)\label{126}.
\end{align}
We now demonstrate that 
\begin{equation}\label{29}
 \frac{1}{\tilde{n}}\sum_{i \in I_k}\hat{{\bm{\beta}}}^{(-k)^\top }{\bm{D}_i}(\varphi^*(Z_i
 )-\tau)
 =\mathbb{E}_{\bm{X}}\left(\hat{{\bm{\beta}}}^{(-k)^\top }\bm{D}\bm{D}^\top {\bm{\beta}}^*\right)+O_p\left({{n}}^{-\frac{1}{2}}\right).
\end{equation}
Since 
\begin{equation}\label{30}
 \mathbb{E}_{\bm{X}}\left(\hat{{\bm{\beta}}}^{(-k)^\top }
 \bm{D}\bm{D}^\top {\bm{\beta}}^*\right)^2\overset{(i)}{\leq} \sqrt{\mathbb{E}_{\bm{X}}\left(\hat{{\bm{\beta}}}^{(-k)^\top }
 \bm{D}\right)^4\mathbb{E}\left(\bm{D}^\top {\bm{\beta}}^*\right)^4}\overset{(ii)}{=}O_p(1),
\end{equation}
where (i) holds by Cauchy-Schwarz inequality, (ii) holds since \eqref{92} and \eqref{109}. Besides, according to Cauchy-Schwarz inequality,
\begin{equation}\label{37}
 \mathbb{E}_{\bm{X}}\left(\hat{{\bm{\beta}}}^{(-k)^\top }\bm{D}\epsilon\right)^2\leq \sqrt{\mathbb{E}_{\bm{X}}\left(\hat{{\bm{\beta}}}^{(-k)^\top }\bm{D}\right)^4}\sqrt{\mathbb{E}(\epsilon^4)}\overset{(i)}{=}O_p(1),
\end{equation}
where (i) follows from \eqref{92} and the sub-Gaussianity of \(\epsilon\) in Lemma \ref{l5}, which guarantees bounded moments. By combining \eqref{30} and \eqref{37} with Chebyshev’s inequality, one obtains 
\begin{align} 
 \frac{1}{\tilde{n}}\sum_{i \in I_k}\hat{{\bm{\beta}}}^{(-k)^\top }{\bm{D}_i}{\bm{D}_i}^\top {\bm{\beta}}^*&=\mathbb{E}_{\bm{X}}\left(\hat{{\bm{\beta}}}^{(-k)^\top }\bm{D}\bm{D}^\top {\bm{\beta}}^*\right)+O_p\left({{n}}^{-\frac{1}{2}}\right),\label{31}\\
 \frac{1}{\tilde{n}}\sum_{i \in I_k}\hat{{\bm{\beta}}}^{(-k)^\top }{\bm{D}_i}\epsilon_i&=O_p\left({{n}}^{-\frac{1}{2}}\right).\label{32}
\end{align}
We can derive that 
\begin{align}\label{46}
 &\notag\frac{1}{\tilde{n}}\sum_{i \in I_k}\hat{{\bm{\beta}}}^{(-k)^\top }{\bm{D}_i}(\varphi^*(Z_i
 )-\tau)=\frac{1}{\tilde{n}}\sum_{i \in I_k}\hat{{\bm{\beta}}}^{(-k)^\top }{\bm{D}_i}\left({\bm{W}_i}^\top {\bm{\beta}}^*+\epsilon_i-\mathbb{E}(\bm{W})^\top {\bm{\beta}}^*\right)\\
 &\quad=\frac{1}{\tilde{n}}\sum_{i \in I_k}\hat{{\bm{\beta}}}^{(-k)^\top }{\bm{D}_i}\left({\bm{D}_i}^\top {\bm{\beta}}^*+\epsilon_i\right)\overset{(i)}{=}\mathbb{E}_{\bm{X}}\left(\hat{{\bm{\beta}}}^{(-k)^\top }\bm{D}\bm{D}^\top {\bm{\beta}}^*\right)+O_p\left({{n}}^{-\frac{1}{2}}\right),
\end{align}
where (i) holds by \eqref{31} and \eqref{32}. By Lemma \ref{l5}, \(\|\varphi^*(Z)\|_{\psi_2}=O(1)\). An application of Lemma \ref{l1} together with Chebyshev’s inequality then implies 
\begin{equation}\label{48}
 \frac{1}{\tilde{n}}\sum_{i \in I_k}(\varphi^*(Z_i
 )-\tau)=O_p\left(n^{-\frac{1}{2}}\right).
\end{equation}

Besides, consider the decomposition $ab = cd + (a-c)(b-d) + (a-c)d + (b-d)c$,
\begin{align}\label{43}
 &\notag\frac{1}{\tilde{n}}\sum_{i \in I_k}\hat{{\bm{\beta}}}^{(-k)^\top }\hat {\bm{D}}_i^{(k)}\left[\hat{\varphi}^{(-k)}(Z_i)-\hat{\tau}_{\mathrm{para}}\right]\\
 &\notag\qquad=\frac{1}{\tilde{n}}\sum_{i \in I_k}\hat{{\bm{\beta}}}^{(-k)^\top }{\bm{D}_i}[\varphi^*(Z_i
 )-\tau]+\frac{1}{\tilde{n}}\sum_{i \in I_k}\hat{{\bm{\beta}}}^{(-k)^\top }{\bm{D}_i} \left[\tau-\hat{\tau}_{\mathrm{para}} +\hat{\varphi}^{(-k)}(Z_i)-\varphi^*(Z_i)\right]\\
 &\notag\qquad\qquad+\hat{{\bm{\beta}}}^{(-k)^\top }\left[\mathbb{E}(\bm{W})-|G_k|^{-1}\sum_{i \in G_k}{\bm{W}_i}\right]\\
 &\notag\qquad\qquad\qquad\cdot\frac{1}{\tilde{n}}\sum_{i \in I_k}\left[\tau-\hat{\tau}_{\mathrm{para}} +\hat{\varphi}^{(-k)}(Z_i)-\varphi^*(Z_i)+\varphi^*(Z_i)-\tau\right]\\
 \notag&\qquad\overset{(i)}{=}\frac{1}{\tilde{n}}\sum_{i \in I_k}\hat{{\bm{\beta}}}^{(-k)^\top }{\bm{D}_i}(\varphi^*(Z_i)-\tau)+O_p(n^{-1/2})o_p(1)+O_p(R_n)\\
 &\notag\qquad\qquad+O_p(N^{-1/2})[o_p(1)+o_p(1)+O_p(n^{-1/2})]\\
 &\qquad=\frac{1}{\tilde{n}}\sum_{i \in I_k}\hat{{\bm{\beta}}}^{(-k)^\top }{\bm{D}_i}(\varphi^*(Z_i
 )-\tau)+o_p\left({{n}}^{-{1/2}}\right)+O_p(R_n),
\end{align}
where (i) holds by \eqref{11}, \eqref{75}, \eqref{48}, and the results from Lemmas \ref{l6} and \ref{l7}. Analogously as in \eqref{44}, we have 
\begin{align}\label{bound:ave-square}
 &\notag|I_k|^{-1}\sum_{i \in I_k}\left(\hat{{\bm{\beta}}}^{(-k)^\top }\hat {\bm{D}}_i^{(k)}\right)^2\\
 &\notag\qquad=|I_k|^{-1}\sum_{i \in I_k}\left(\hat{{\bm{\beta}}}^{(-k)^\top }{\bm{D}_i}\right)^2+\left(\hat{{\bm{\beta}}}^{(-k)^\top }\left[|G_k|^{-1}\sum_{i \in G_k}{\bm{W}_i}-\mathbb{E}(\bm{W})\right]\right)^2\\
 &\notag\qquad\qquad-2\hat{{\bm{\beta}}}^{(-k)^\top }\left[|G_k|^{-1}\sum_{i \in G_k}{\bm{W}_i}-\mathbb{E}(\bm{W})\right]|I_k|^{-1}\sum_{i \in I_k}\hat{{\bm{\beta}}}^{(-k)^\top }{\bm{D}_i}\\
 &\qquad=|I_k|^{-1}\sum_{i \in I_k}\left(\hat{{\bm{\beta}}}^{(-k)^\top }{\bm{D}_i}\right)^2+O_p\left(N^{-1/2}n^{-1/2}\right).
\end{align}
Therefore, 
 \begin{align}\label{50}
 \notag\hat{\theta}^{(k)}_{\mathrm{ETH}}&=\frac{1}{|G_k|}\sum_{i \in G_k}\left(\hat{{\bm{\beta}}}^{(-k)^\top }\hat {\bm{D}}_i^{(k)}\right)^2+\frac{2}{\tilde{n}}\sum_{i\in I_k}\hat{{\bm{\beta}}}^{(-k)^\top }\hat {\bm{D}}_i^{(k)}\left(\hat{\varphi}^{(-k)}(Z_i)-\hat{\tau}_{\mathrm{para}}-\hat{{\bm{\beta}}}^{(-k)^\top }\hat {\bm{D}}_i^{(k)}\right)\\
 \notag&\overset{(i)}{=}|G_k|^{-1}\sum_{i \in G_k}\left(\hat{{\bm{\beta}}}^{(-k)^\top }{\bm{D}_i}\right)^2+O_p(N^{-1})-\frac{2}{\tilde{n}}\sum_{i \in I_k}\left(\hat{{\bm{\beta}}}^{(-k)^\top }{\bm{D}_i}\right)^2+O_p\left(N^{-\frac{1}{2}}{{n}}^{-\frac{1}{2}}\right)\\
 &\qquad+\frac{2}{\tilde{n}}\sum_{i \in I_k}\hat{{\bm{\beta}}}^{(-k)^\top }{\bm{D}_i}(\varphi^*(Z_i
 )-\tau)+o_p\left({{n}}^{-\frac{1}{2}}\right)+O_p(R_n)\\
 &\notag\overset{(ii)}{=}\mathbb{E}_{\bm{X}}\left(\hat{{\bm{\beta}}}^{(-k)^\top } \bm{D}\right)^2+O_p\left({N}^{-\frac{1}{2}}\right)-2\mathbb{E}_{\bm{X}}\left(\hat{{\bm{\beta}}}^{(-k)^\top } \bm{D}\right)^2+O_p\left({{n}}^{-\frac{1}{2}}\right)\\
 &\notag\qquad+2\mathbb{E}_{\bm{X}}\left(\hat{{\bm{\beta}}}^{(-k)^\top }\bm{D}\bm{D}^\top {\bm{\beta}}^*\right) +O_p\left({{n}}^{-\frac{1}{2}}\right)+O_p(R_n),
\end{align} 
 where (i) holds by \eqref{44}, \eqref{43}, and \eqref{bound:ave-square}, (ii) holds by \eqref{45}, \eqref{47}, and \eqref{46}. Then,
\begin{align*}
\hat{\theta}^{(k)}_{\mathrm{ETH}}&=\theta_{\mathrm{ETH}}-\mathbb{E}_{\bm{X}}\left[\left(\hat{{\bm{\beta}}}^{(-k)}-{\bm{\beta}}^*\right)^\top \bm{D}\right]^2 +O_p\left({{n}}^{-\frac{1}{2}}+R_n\right)\\
 &\overset{(i)}{=}\theta_{\mathrm{ETH}}+O_p\left(\left\|\hat{{\bm{\beta}}}^{(-k)}-{\bm{\beta}}^*\right\|_{2}^2+{{n}}^{-\frac{1}{2}}+R_n\right)\\
 &\overset{(ii)}{=}\theta_{\mathrm{ETH}}+O_p\left(\left(R_n+\sqrt{\frac{s_{\beta}\log(p)}{{n}}}\right)^2+{{n}}^{-\frac{1}{2}}+R_n\right)\\
 &=\theta_{\mathrm{ETH}}+O_p\left(R_n+\frac{s_{\beta}\log(p)}{{n}}+{{n}}^{-\frac{1}{2}}\right),
\end{align*}
where (i) holds by \eqref{40}, (ii) holds by Theorem \ref{t1}.
When $K$ is finite,
$$
 \hat{\theta}_{\mathrm{ETH}}=\theta_{\mathrm{ETH}}+O_p\left(R_n+\frac{s_{\beta}\log(p)}{n}+n^{-\frac{1}{2}}\right).
$$

Lastly, we establish the asymptotic normality of $\hat{\theta}_{\mathrm{ETH}}$. By Chebyshev's inequality,
\begin{align}
 |I_k|^{-1}\sum_{i \in I_k}\left(\hat{{\bm{\beta}}}^{(-k)}-{\bm{\beta}}^*\right)^\top {\bm{D}_i}{\bm{D}_i}^\top {\bm{\beta}}^*&=\mathbb{E}_{\bm{X}} \left[\left(\hat{{\bm{\beta}}}^{(-k)}-{\bm{\beta}}^*\right)^\top \bm{D}\bm{D}^\top {\bm{\beta}}^*\right]+o_p\left(n^{-\frac{1}{2}}\right),\label{19'}\\
 \frac{1}{\tilde{n}}\sum_{i \in I_k}\left(\hat{{\bm{\beta}}}^{(-k)}-{\bm{\beta}}^*\right)^\top {\bm{D}_i}\epsilon_i&=o_p\left({{n}}^{-\frac{1}{2}}\right).\label{49}
\end{align}
Thus
\begin{align}\label{51}
 &\notag\frac{1}{\tilde{n}}\sum_{i \in I_k}\hat{{\bm{\beta}}}^{(-k)^\top }{\bm{D}_i}(\varphi^*(Z_i
 )-\tau)\\
 &\quad\notag=\frac{1}{\tilde{n}}\sum_{i \in I_k}\left(\hat{{\bm{\beta}}}^{(-k)}-{\bm{\beta}}^*\right)^\top {\bm{D}_i}(\varphi^*(Z_i
 )-\tau)+\frac{1}{\tilde{n}}\sum_{i \in I_k}{\bm{\beta}}^{*^\top }{\bm{D}_i}(\varphi^*(Z_i
 )-\tau),\\
 \notag&\quad=\frac{1}{\tilde{n}}\sum_{i \in I_k}\left(\hat{{\bm{\beta}}}^{(-k)}-{\bm{\beta}}^*\right)^\top {\bm{D}_i}\left({\bm{D}_i}^\top {\bm{\beta}}^*+\epsilon_i\right)+\frac{1}{\tilde{n}}\sum_{i \in I_k}{\bm{\beta}}^{*^\top }{\bm{D}_i}(\varphi^*(Z_i
 )-\tau),\\
 &\quad\overset{(i)}{=}\frac{1}{\tilde{n}}\sum_{i \in I_k}{\bm{\beta}}^{*^\top }{\bm{D}_i}(\varphi^*(Z_i
 )-\tau)+\mathbb{E}_{\bm{X}}\left[\left(\hat{{\bm{\beta}}}^{(-k)}-{\bm{\beta}}^*\right)^\top \bm{D}\bm{D}^\top {\bm{\beta}}^*\right]+o_p\left({{n}}^{-\frac{1}{2}}\right),
\end{align}
where (i) holds from \eqref{19'} and \eqref{49}. Furthermore,
\begin{align}\label{52'}
 \notag&|I_k|^{-1}\sum_{i \in I_k}\left(\hat{{\bm{\beta}}}^{(-k)^\top }{\bm{D}_i}\right)^2\\
 &\quad\notag=|I_k|^{-1}\sum_{i \in I_k}\left({\bm{\beta}}^{*^\top }{\bm{D}_i}\right)^2+|I_k|^{-1}\sum_{i \in I_k}\left(\left(\hat{{\bm{\beta}}}^{(-k)}-{\bm{\beta}}^*\right)^\top {\bm{D}_i}\right)^2\\
 &\notag\qquad+2|I_k|^{-1}\sum_{i \in I_k}\left(\hat{{\bm{\beta}}}^{(-k)}-{\bm{\beta}}^*\right)^\top {\bm{D}_i}{\bm{D}_i}^\top {\bm{\beta}}^*\\
 \notag&\quad=|I_k|^{-1}\sum_{i \in I_k}\left({\bm{\beta}}^{*^\top }{\bm{D}_i}\right)^2+2\mathbb{E}_{\bm{X}}\left[\left(\hat{{\bm{\beta}}}^{(-k)}-{\bm{\beta}}^*\right)^\top \bm{D}\bm{D}^\top {\bm{\beta}}^*\right]\\
 &\notag\qquad+O_p\left(\left\|\hat{{\bm{\beta}}}^{(-k)}-{\bm{\beta}}^*\right\|_{2}^2+\left\|\hat{{\bm{\beta}}}^{(-k)}-{\bm{\beta}}^*\right\|_{2}n^{-\frac{1}{2}}\right)\\
 &\quad\overset{(i)}{=}|I_k|^{-1}\sum_{i \in I_k}\left({\bm{\beta}}^{*^\top }{\bm{D}_i}\right)^2+2\mathbb{E}_{\bm{X}}\left[\left(\hat{{\bm{\beta}}}^{(-k)}-{\bm{\beta}}^*\right)^\top \bm{D}\bm{D}^\top {\bm{\beta}}^*\right]+o_p({n}^{-\frac{1}{2}}),
\end{align}
where (i) holds under \eqref{39}. Hence,
\begin{align}
 \notag\hat{\theta}^{(k)}_{\mathrm{ETH}}&\overset{(i)}{=}|G_k|^{-1}\sum_{i \in G_k}\left(\hat{{\bm{\beta}}}^{(-k)^\top }{\bm{D}_i}\right)^2+O_p(N^{-1})-\frac{2}{\tilde{n}}\sum_{i \in I_k}\left(\hat{{\bm{\beta}}}^{(-k)^\top }{\bm{D}_i}\right)^2+O_p\left(N^{-\frac{1}{2}}{{n}}^{-\frac{1}{2}}\right)\\
 \notag&\qquad+\frac{2}{\tilde{n}}\sum_{i \in I_k}\hat{{\bm{\beta}}}^{(-k)^\top }{\bm{D}_i}(\varphi^*(Z_i
 )-\tau)+o_p\left({{n}}^{-\frac{1}{2}}\right)+O_p(R_n)\\
 &\notag \overset{(ii)}{=}|G_k|^{-1}\sum_{i \in G_k}\left({\bm{\beta}}^{*^\top }{\bm{D}_i}\right)^2+2\mathbb{E}_{\bm{X}}\left[\left(\hat{{\bm{\beta}}}^{(-k)}-{\bm{\beta}}^*\right)^\top \bm{D}\bm{D}^\top {\bm{\beta}}^*\right]+o_p({n}^{-\frac{1}{2}})\\
 &\notag \qquad-\frac{2}{\tilde{n}}\sum_{i \in I_k}\left({\bm{\beta}}^{*^\top }{\bm{D}_i}\right)^2-4\mathbb{E}_{\bm{X}}\left[\left(\hat{{\bm{\beta}}}^{(-k)}-{\bm{\beta}}^*\right)^\top \bm{D}\bm{D}^\top {\bm{\beta}}^*\right]+o_p\left({{n}}^{-\frac{1}{2}}\right)\\
 &\notag \qquad+\frac{2}{\tilde{n}}\sum_{i \in {I_k}}(\varphi^*(Z_i
 )-\tau){\bm{\beta}}^{*^\top }{\bm{D}_i}+2\mathbb{E}_{\bm{X}}\left[\left(\hat{{\bm{\beta}}}^{(-k)}-{\bm{\beta}}^*\right)^\top \bm{D}\bm{D}^\top {\bm{\beta}}^*\right]+o_p\left({{n}}^{-\frac{1}{2}}\right)\\
 &=\frac{2}{\tilde{n}}\sum_{i \in {I_k}}\epsilon_i{\bm{\beta}}^{*^\top }{\bm{D}_i}+|G_k|^{-1}\sum_{i \in G_k}\left({\bm{\beta}}^{*^\top }{\bm{D}_i}\right)^2 +o_p\left({{n}}^{-\frac{1}{2}}\right), \label{128}
\end{align}
where (i) follows from \eqref{50}, (ii) from \eqref{52} and \eqref{51}.

Next we can decompose $\hat{\theta}_{\mathrm{ETH}}-\theta_{\mathrm{ETH}}$ as
\begin{align}\label{20}
 \hat{\theta}_{\mathrm{ETH}}-\theta_{\mathrm{ETH}}&=\frac{2}{n}\sum_{i=1}^{n}\epsilon_i{\bm{\beta}}^{*^\top }{\bm{D}_i}+\frac{1}{N}\sum_{i=1}^{N}\left({\bm{\beta}}^{*^\top }{\bm{D}_i}\right)^2 -\theta_{\mathrm{ETH}}+o_p\left(n^{-\frac{1}{2}}\right)\\
 \notag&=\frac{1}{n}\sum_{i=1}^{N}\left[2\mathbbm{1}_{i\leq n}\epsilon_i{\bm{\beta}}^{*^\top }{\bm{D}_i}+\left(\sqrt{\frac{n}{N}}{\bm{\beta}}^{*^\top }{\bm{D}_i}\right)^2-\frac{n}{N}\theta_{\mathrm{ETH}}\right]+o_p\left(n^{-\frac{1}{2}}\right).
\end{align}

By Minkovski's inequality, for any constant $\delta>0$,
\begin{align}\label{53}
 \notag&\left(\mathbb{E}_{\bm{X}}\left|2\epsilon{\bm{\beta}}^{*^\top }\bm{D}+\left(\sqrt{\frac{n}{N}}{\bm{\beta}}^{*^\top }\bm{D}\right)^2-\frac{n}{N}\theta_{\mathrm{ETH}}\right|^{2+\delta}\right)^{\frac{1}{2+\delta}}\\
 &\qquad\leq \left(\mathbb{E}_{\bm{X}}\left|2\epsilon{\bm{\beta}}^{*^\top }\bm{D}\right|^{2+\delta}\right)^{\frac{1}{2+\delta}}+\left(\mathbb{E}_{\bm{X}}\left|\left(\sqrt{\frac{n}{N}}{\bm{\beta}}^{*^\top }\bm{D}\right)^2-\frac{n}{N}\theta_{\mathrm{ETH}}\right|^{2+\delta}\right)^{\frac{1}{2+\delta}}\overset{(i)}{=}O(1),
\end{align}
where (i) holds by \eqref{109}, \eqref{18}, and \eqref{57}.
Then
\begin{align}\label{54}
 \notag&\sum_{i=1}^{N}\mathbb{E}_{\bm{X}}\left[\left|2\mathbbm{1}_{i\leq n}\epsilon_i{\bm{\beta}}^{*^\top }{\bm{D}_i}+\left(\sqrt{\frac{n}{N}}{\bm{\beta}}^{*^\top }{\bm{D}_i}\right)^2-\frac{n}{N}\theta_{\mathrm{ETH}}\right|\right]^{2+\delta}\\
 \notag&\qquad=\sum_{i=1}^{n}\mathbb{E}_{\bm{X}}\left[\left|2\epsilon_i{\bm{\beta}}^{*^\top }{\bm{D}_i}+\left(\sqrt{\frac{n}{N}}{\bm{\beta}}^{*^\top }{\bm{D}_i}\right)^2-\frac{n}{N}\theta_{\mathrm{ETH}}\right|\right]^{2+\delta}\\
 &\notag\qquad\qquad+\sum_{i=n+1}^{N}\mathbb{E}_{\bm{X}}\left[\left|\left(\sqrt{\frac{n}{N}}{\bm{\beta}}^{*^\top }{\bm{D}_i}\right)^2-\frac{n}{N}\theta_{\mathrm{ETH}}\right|\right]^{2+\delta}\\
 \notag&\qquad\overset{(i)}{=}O\left(n+m\left(\frac{n}{N}\right)^{2+\delta}\right)\overset{(ii)}{=}O(n),
\end{align}
where (i) holds from \eqref{53} and the sub-Gaussian property of $\bm{D}$, (ii) holds since $m({n/N})^{2+\delta}\leq m({n/N})^{2}\leq n$.
Define
\begin{align*}
 B_{N}^2
 &:=\sum_{i=1}^{N}\Var\left[2\mathbbm{1}_{i\leq n}\epsilon_i{\bm{\beta}}^{*^\top }{\bm{D}_i}+\left(\sqrt{\frac{n}{N}}{\bm{\beta}}^{*^\top }{\bm{D}_i}\right)^2-\frac{n}{N}\theta_{\mathrm{ETH}}\right],\\
 &=\sum_{i=1}^{n}\Var\left[2\epsilon_i{\bm{\beta}}^{*^\top }{\bm{D}_i}+\left(\sqrt{\frac{n}{N}}{\bm{\beta}}^{*^\top }{\bm{D}_i}\right)^2-\frac{n}{N}\theta_{\mathrm{ETH}}\right]\\
 &\qquad+\sum_{i=n+1}^{N}\Var\left[\left(\sqrt{\frac{n}{N}}{\bm{\beta}}^{*^\top }{\bm{D}_i}\right)^2-\frac{n}{N}\theta_{\mathrm{ETH}}\right]\\
 &\geq \sum_{i=1}^{n}\Var\left[2\epsilon_i{\bm{\beta}}^{*^\top }{\bm{D}_i}+\left(\sqrt{\frac{n}{N}}{\bm{\beta}}^{*^\top }{\bm{D}_i}\right)^2-\frac{n}{N}\theta_{\mathrm{ETH}}\right]=n\sigma_{\mathrm{para}}^2.
\end{align*}
Under the assumptions of Theorem \ref{t2}, we have $\sigma_{\mathrm{para}}^2>c$ with some constant $c>0$. Hence,
\begin{equation*}
\frac{\sum_{i=1}^{N}\mathbb{E} \left|2\mathbbm{1}_{i\leq n}E\epsilon_i{\bm{\beta}}^{*^\top }{\bm{D}_i}+\left(\sqrt{\frac{n}{N}}{\bm{\beta}}^{*^\top }{\bm{D}_i}\right)^2-\frac{n}{N}\theta_{\mathrm{ETH}}\right|^{2+\delta}}{\left(B_{N}^2\right)^{1+\frac{\delta}{2}}}
 =O\left(\frac{n}{(n\sigma_{\mathrm{para}}^2)^{1+{\delta/2}}}\right)
 =o(1).
\end{equation*}
By Lindeberg-Feller central limit theorem and Slutsky's Theorem,
\begin{equation}
 \frac{\sqrt{n}(\hat{\theta}_{\mathrm{ETH}}-\theta_{\mathrm{ETH}})}{\sigma_{\mathrm{para}}}\rightarrow N(0, 1).
\end{equation}
\end{proof}

\begin{proof}[Proof of Theorem \ref{t3}]
Define
\begin{align}\label{133}
 \notag\hat{\theta}^{(k)}_{\mathrm{OW}}
 &:=\hat{w}^{(k)}_U\sum_{i \in J_k}\left(\hat{{\bm{\beta}}}^{(-k)^\top }\hat {\bm{D}}_i^{(k)}\right)^2+\hat{w}^{(k)}_L\sum_{i \in I_k}\left(\hat{{\bm{\beta}}}^{(-k)^\top }\hat {\bm{D}}_i^{(k)}\right)^2\\
 &\qquad+\frac{2}{{n}}\sum_{i\in I_k}\hat{{\bm{\beta}}}^{(-k)^\top }\hat {\bm{D}}_i^{(k)}\left(\hat{\varphi}^{(-k)}(Z_i
 )-\hat{\tau}_{\mathrm{para}}-\hat{{\bm{\beta}}}^{(-k)^\top }\hat {\bm{D}}_i^{(k)}\right).
\end{align}
Then
\begin{align}
 &\notag\frac{2}{\tilde{n}}\sum_{i\in I_k}\hat{{\bm{\beta}}}^{(-k)^\top }\hat {\bm{D}}_i^{(k)}\left[\hat{\varphi}^{(-k)}(Z_i)-\hat{\tau}_{\mathrm{para}}-\hat{{\bm{\beta}}}^{(-k)^\top }\hat {\bm{D}}_i^{(k)}\right]\\
 &\notag\qquad\overset{(i)}{=}\frac{2}{\tilde{n}}\sum_{i \in I_k}\hat{{\bm{\beta}}}^{(-k)^\top }{\bm{D}_i}[\varphi^*(Z_i
 )-\tau]-\frac{2}{\tilde{n}}\sum_{i \in I_k}\left(\hat{{\bm{\beta}}}^{(-k)^\top }{\bm{D}_i}\right)^2+o_p({{n}}^{-{1/2}})+O_p(R_n)\\
 &\notag\qquad\overset{(ii)}{=}\frac{2}{\tilde{n}}\sum_{i \in I_k}{\bm{\beta}}^{*^\top }{\bm{D}_i}[\varphi^*(Z_i
 )-\tau]-\frac{2}{\tilde{n}}\sum_{i \in I_k}\left({\bm{\beta}}^{*^\top }{\bm{D}_i}\right)^2\\
 &\qquad\qquad-2\mathbb{E}_{\bm{X}}\left[\left(\hat{{\bm{\beta}}}^{(-k)}-{\bm{\beta}}^*\right)^\top \bm{D}\bm{D}^\top {\bm{\beta}}^*\right]+o_p({{n}}^{-{1/2}}),\label{rate:debias}
\end{align}
where (i) holds by \eqref{43} and \eqref{bound:ave-square}, (ii) holds by \eqref{51} and \eqref{52'} when $R_n=o(n^{-1/2})$. Here, $R_n=o(n^{-1/2})$ occurs since
\begin{align*}
R_n&= \sqrt{\frac{s_{\alpha}s_{\gamma}\log^2(d)}{nN}}
 + \sqrt{\frac{s_{\gamma}\log(d)}{N}}\Bigl(\mathbbm{1}_{\mu_1^*(\cdot)\neq \mu_1(\cdot)}+\mathbbm{1}_{\mu_0^*(\cdot)\neq \mu_0(\cdot)}\Bigr)
 + \sqrt{\frac{s_{\alpha}\log(d)}{n}}\mathbbm{1}_{\pi^*(\cdot)\neq \pi(\cdot)}\\
 &=o\left(\sqrt\frac{N}{nN}+\sqrt\frac{N/n}{N}+0\right)=o(n^{-1/2})
\end{align*}
under the assumed conditions. Moreover,
\begin{align}\label{125}
 &\notag|J_k|\sum_{i \in J_k}\left(\hat{{\bm{\beta}}}^{(-k)^\top }\hat {\bm{D}}_i^{(k)}\right)^2\\
 &\notag\qquad=|J_k|\sum_{i \in J_k}\left(\hat{{\bm{\beta}}}^{(-k)^\top }{\bm{D}_i}\right)^2+\left(\hat{{\bm{\beta}}}^{(-k)^\top }\left[|G_k|^{-1}\sum_{i \in G_k}{\bm{W}_i}-\mathbb{E}(\bm{W})\right]\right)^2\\
 &\notag\qquad\qquad-2\hat{{\bm{\beta}}}^{(-k)^\top }\left[|G_k|^{-1}\sum_{i \in G_k}{\bm{W}_i}-\mathbb{E}(\bm{W})\right]|J_k|\sum_{i\in J_k}\hat{{\bm{\beta}}}^{(-k)^\top }{\bm{D}_i}\\
 &\qquad=|J_k|\sum_{i \in J_k}\left(\hat{{\bm{\beta}}}^{(-k)^\top }{\bm{D}_i}\right)^2+O_p\left(N^{-\frac{1}{2}}{{m}}^{-\frac{1}{2}}\right).
\end{align}
Analogous to \eqref{52} and \eqref{52'},
\begin{align}\label{52''}
 \notag&|J_k|^{-1}\sum_{i \in J_k}\left(\hat{{\bm{\beta}}}^{(-k)^\top }{\bm{D}_i}\right)^2\\
 &\quad\notag=|J_k|^{-1}\sum_{i \in J_k}\left({\bm{\beta}}^{*^\top }{\bm{D}_i}\right)^2+|J_k|^{-1}\sum_{i \in J_k}\left(\left(\hat{{\bm{\beta}}}^{(-k)}-{\bm{\beta}}^*\right)^\top {\bm{D}_i}\right)^2\\
 &\notag\qquad+2|J_k|^{-1}\sum_{i \in J_k}\left(\hat{{\bm{\beta}}}^{(-k)}-{\bm{\beta}}^*\right)^\top {\bm{D}_i}{\bm{D}_i}^\top {\bm{\beta}}^*\\
 \notag&\quad=|J_k|^{-1}\sum_{i \in J_k}\left({\bm{\beta}}^{*^\top }{\bm{D}_i}\right)^2+2\mathbb{E}_{\bm{X}}\left[\left(\hat{{\bm{\beta}}}^{(-k)}-{\bm{\beta}}^*\right)^\top \bm{D}\bm{D}^\top {\bm{\beta}}^*\right]\\
 &\notag\qquad+O_p\left(\left\|\hat{{\bm{\beta}}}^{(-k)}-{\bm{\beta}}^*\right\|_{2}^2+\left\|\hat{{\bm{\beta}}}^{(-k)}-{\bm{\beta}}^*\right\|_{2}n^{-\frac{1}{2}}\right)\\
 &\quad\overset{(i)}{=}|J_k|^{-1}\sum_{i \in J_k}\left({\bm{\beta}}^{*^\top }{\bm{D}_i}\right)^2+2\mathbb{E}_{\bm{X}}\left[\left(\hat{{\bm{\beta}}}^{(-k)}-{\bm{\beta}}^*\right)^\top \bm{D}\bm{D}^\top {\bm{\beta}}^*\right]+o_p({m}^{-\frac{1}{2}}).
\end{align}

Based on the definitions in \eqref{def:omega},
\begin{align}\label{bound:weights}
|\omega_L^*|\leq\frac{1}{N}+\frac{m|C|}{nNB}=O\left(\frac{1}{N}+\frac{m}{nN}\right)=O(n^{-1}),\quad|\omega_U^*|\leq\frac{1}{N}+\frac{|C|}{NB}=O(N^{-1}).
\end{align}
Therefore,
\begin{align}
 \notag&\hat{w}^{(k)}_U\sum_{i \in J_k}\left(\hat{{\bm{\beta}}}^{(-k)^\top }\hat {\bm{D}}_i^{(k)}\right)^2+\hat{w}^{(k)}_L\sum_{i \in I_k}\left(\hat{{\bm{\beta}}}^{(-k)^\top }\hat {\bm{D}}_i^{(k)}\right)^2
 \\
 &\qquad\notag=w_U^*\sum_{i \in J_k}\left(\hat{{\bm{\beta}}}^{(-k)^\top }\hat {\bm{D}}_i^{(k)}\right)^2+\left(\hat{w}^{(k)}_U-w_U^*\right)\sum_{i \in J_k}\left(\hat{{\bm{\beta}}}^{(-k)^\top }\hat {\bm{D}}_i^{(k)}\right)^2\\
 &\qquad\qquad\notag+w_L^*\sum_{i \in I_k}\left(\hat{{\bm{\beta}}}^{(-k)^\top }\hat {\bm{D}}_i^{(k)}\right)^2+\left(\hat{w}^{(k)}_L-w_L^*\right)\sum_{i \in I_k}\left(\hat{{\bm{\beta}}}^{(-k)^\top }\hat {\bm{D}}_i^{(k)}\right)^2,\\
 &\notag\qquad\overset{(i)}{=}w_U^*\sum_{i \in J_k}\left(\hat{{\bm{\beta}}}^{(-k)^\top }{\bm{D}_i}\right)^2+O_p\left(\sqrt\frac{m}{N}N^{-1}\right)+w_L^*\sum_{i \in I_k}\left(\hat{{\bm{\beta}}}^{(-k)^\top }{\bm{D}_i}\right)^2+O_p\left(\sqrt\frac{n}{N}n^{-1}\right)\\
 &\notag\qquad\qquad+\left(\hat{w}^{(k)}_U-w_U^*\right)\sum_{i \in J_k}\left(\hat{{\bm{\beta}}}^{(-k)^\top }\hat {\bm{D}}_i^{(k)}\right)^2+\left(\hat{w}^{(k)}_L-w_L^*\right)\sum_{i \in I_k}\left(\hat{{\bm{\beta}}}^{(-k)^\top }\hat {\bm{D}}_i^{(k)}\right)^2,\\
 &\notag\qquad\overset{(ii)}{=}w_U^*\sum_{i \in J_k}\left({\bm{\beta}}^{*^\top }{\bm{D}_i}\right)^2+2w_U^*\tilde{m}\mathbb{E}_{\bm{X}}\left[\left(\hat{{\bm{\beta}}}^{(-k)}-{\bm{\beta}}^*\right)^\top \bm{D}\bm{D}^\top {\bm{\beta}}^*\right]+o_p(w_U^*{m}^{\frac{1}{2}})\\
 &\notag\qquad\qquad+w_L^*\sum_{i \in I_k}\left({\bm{\beta}}^{*^\top }{\bm{D}_i}\right)^2+2w_L^*\tilde{n}\mathbb{E}_{\bm{X}}\left[\left(\hat{{\bm{\beta}}}^{(-k)}-{\bm{\beta}}^*\right)^\top \bm{D}\bm{D}^\top {\bm{\beta}}^*\right]+o_p(w_L^*{n}^{\frac{1}{2}})\\
 &\qquad\qquad+\left(\hat{w}^{(k)}_U-w_U^*\right)\sum_{i \in J_k}\left(\hat{{\bm{\beta}}}^{(-k)^\top }\hat {\bm{D}}_i^{(k)}\right)^2+\left(\hat{w}^{(k)}_L-w_L^*\right)\sum_{i \in I_k}\left(\hat{{\bm{\beta}}}^{(-k)^\top }\hat {\bm{D}}_i^{(k)}\right)^2+o_p(n^{-1/2}),\label{66}
\end{align}
where (i) holds from \eqref{bound:ave-square}, \eqref{125}, and \eqref{bound:weights}, (ii) holds since \eqref{52'} and \eqref{52''}. Furthermore,
\begin{align}\label{63}
 &\notag\left(\hat{w}^{(k)}_U-w_U^*\right)\sum_{i \in J_k}\left(\hat{{\bm{\beta}}}^{(-k)^\top }\hat {\bm{D}}_i^{(k)}\right)^2\\
 &\qquad\notag\overset{(i)}{=}\left(\hat{w}^{(k)}_U-w_U^*\right)\left[\tilde{m}\mathbb{E}_{\bm{X}}\left(\hat{{\bm{\beta}}}^{(-k)^\top }\bm{D}\right)^2+O_p(N^{-1/2}m^{1/2}+m^{1/2})\right]\\
 &\qquad\overset{(ii)}{=}\tilde{m}\left(\hat{w}^{(k)}_U-w_U^*\right)\mathbb{E}_{\bm{X}}\left(\hat{{\bm{\beta}}}^{(-k)^\top }\bm{D}\right)^2+o_p(N^{-1/2}),
\end{align}
where (i) holds from \eqref{125} and \eqref{126}, (ii) holds by Lemma \ref{l9}. Similarly,
\begin{align}\label{65}
 &\notag\left(\hat{w}^{(k)}_L-w_L^*\right)\sum_{i \in I_k}\left(\hat{{\bm{\beta}}}^{(-k)^\top }\hat {\bm{D}}_i^{(k)}\right)^2\\
 &\qquad\notag\overset{(i)}{=}\left(\hat{w}^{(k)}_L-w_L^*\right)\left[\tilde{n}\mathbb{E}_{\bm{X}}\left(\hat{{\bm{\beta}}}^{(-k)^\top }\bm{D}\right)^2+O_p(N^{-1/2}n^{1/2}+n^{1/2})\right]\\
 &\qquad\overset{(ii)}{=}\tilde{n}\left(\hat{w}^{(k)}_L-w_L^*\right)\mathbb{E}_{\bm{X}}\left(\hat{{\bm{\beta}}}^{(-k)^\top }\bm{D}\right)^2+o_p(n^{-1/2}),
\end{align}
where (i) holds from \eqref{bound:ave-square} and \eqref{47}, (ii) holds by Lemma \ref{l9}. By construction, we also have
\begin{equation}\label{64}
 mw_U^*+nw_L^*=1, \ m\hat{w}^{(k)}_U+n\hat{w}^{(k)}_L=1. 
\end{equation}

Substituting \eqref{63}-\eqref{64} into \eqref{66} and taking the summation over $k\in\{1,\dots,K\}$,
\begin{align}\label{112}
 \notag&\hat{w}^{(k)}_U\sum_{k=1}^{K}\sum_{i \in J_k}\left(\hat{{\bm{\beta}}}^{(-k)^\top }\hat {\bm{D}}_i^{(k)}\right)^2+\hat{w}^{(k)}_L\sum_{k=1}^{K}\sum_{i \in I_k}\left(\hat{{\bm{\beta}}}^{(-k)^\top }\hat {\bm{D}}_i^{(k)}\right)^2
 \\
 \notag&\quad=w_U^*\sum_{i=n+1}^N\left({\bm{\beta}}^{*^\top }{\bm{D}_i}\right)^2+2Kw_U^*\tilde{m}\mathbb{E}_{\bm{X}}\left[\left(\hat{{\bm{\beta}}}^{(-k)}-{\bm{\beta}}^*\right)^\top \bm{D}\bm{D}^\top {\bm{\beta}}^*\right]+o_p(w_U^*{m}^{\frac{1}{2}})\\
 &\notag\qquad+w_L^*\sum_{i=1}^n\left({\bm{\beta}}^{*^\top }{\bm{D}_i}\right)^2+2Kw_L^*\tilde{n}\mathbb{E}_{\bm{X}}\left[\left(\hat{{\bm{\beta}}}^{(-k)}-{\bm{\beta}}^*\right)^\top \bm{D}\bm{D}^\top {\bm{\beta}}^*\right]+o_p(w_L^*{n}^{\frac{1}{2}})\\
 \notag&\qquad+K\left[\tilde{m}\left(\hat{w}^{(k)}_U-w_U^*\right)+\tilde{n}\left(\hat{w}^{(k)}_L-w_L^*\right)\right]\mathbb{E}_{\bm{X}}\left(\hat{{\bm{\beta}}}^{(-k)^\top }\bm{D}\right)^2+o_p(n^{-1/2})\\
 &\quad=w_U^*\sum_{i=n+1}^N\left({\bm{\beta}}^{*^\top }{\bm{D}_i}\right)^2+w_L^*\sum_{i=1}^n\left({\bm{\beta}}^{*^\top }{\bm{D}_i}\right)^2+2\mathbb{E}_{\bm{X}}\left[\left(\hat{{\bm{\beta}}}^{(-k)}-{\bm{\beta}}^*\right)^\top \bm{D}\bm{D}^\top {\bm{\beta}}^*\right]+o_p(n^{-1/2}).
\end{align}
Combining the results in \eqref{133}, \eqref{rate:debias}, and \eqref{112}, we have
\begin{align*}
 \notag\hat{\theta}_{\mathrm{OW}}&=\sum_{k=1}^K\notag\hat{\theta}^{(k)}_{\mathrm{OW}} \\
 &=\frac{2}{{n}}\sum_{i=1}^n\epsilon_i{\bm{\beta}}^{*^\top }{\bm{D}_i}[\varphi^*(Z_i)-\tau]+w_L^*\sum_{i=1}^n\left({\bm{\beta}}^{*^\top }{\bm{D}_i}\right)^2+w_U^*\sum_{i=n+1}^N\left({\bm{\beta}}^{*^\top }{\bm{D}_i}\right)^2+o_p(n^{-1/2}).
\end{align*}
Hence,
 \begin{align*}
 &\notag\hat{\theta}_{\mathrm{OW}}-\theta_{\mathrm{ETH}}\\
 &\quad\notag=\frac{1}{n}\sum_{i=1}^{N}\left[\mathbbm{1}_{i\leq n}\left[2\epsilon_i{\bm{\beta}}^{*^\top }{\bm{D}_i}+nw_L^*\left(\left({\bm{\beta}}^{*^\top }{\bm{D}_i}\right)^2-\theta_{\mathrm{ETH}}\right)\right]+\mathbbm{1}_{i>n}nw_U^*\left(\left({\bm{\beta}}^{*^\top }{\bm{D}_i}\right)^2-\theta_{\mathrm{ETH}}\right)\right]\\
 &\qquad+o_p\left(n^{-\frac{1}{2}}\right).
 \end{align*}
 By Minkovski's inequality, for any $\delta>0$,
\begin{align}\label{135}
 \notag&\left(\mathbb{E}_{\bm{X}}\left|\left[2\epsilon{\bm{\beta}}^{*^\top }\bm{D}+{n}w_L^*\left(\left({\bm{\beta}}^{*^\top }\bm{D}\right)^2-\theta_{\mathrm{ETH}}\right)\right]\right|^{2+\delta}\right)^{\frac{1}{2+\delta}}\\
 \notag&\quad\leq \left(\mathbb{E}_{\bm{X}}\left|2\epsilon{\bm{\beta}}^{*^\top }\bm{D}\right|^{2+\delta}\right)^{\frac{1}{2+\delta}}+\left(\mathbb{E}_{\bm{X}}\left|
 {n}w_L^*\left[\left({\bm{\beta}}^{*^\top }\bm{D}\right)^2-\theta_{\mathrm{ETH}}\right]\right|^{2+\delta}\right)^{\frac{1}{2+\delta}}\overset{(i)}{=}O(1),
\end{align}
where (i) holds by \eqref{109}, \eqref{18}, \eqref{57}, and Lemma \ref{l9}.
Moreover,
\begin{equation*}
 \mathbb{E}_{\bm{X}}\left|{n}w_U^*\left(\left({\bm{\beta}}^{*^\top }{\bm{D}_i}\right)^2-\theta_{\mathrm{ETH}}\right)\right|^{2+\delta}\overset{(i)}{=}O(1),
\end{equation*}
where (i) holds from \eqref{109}, \eqref{57}, and Lemma \eqref{l9}. Define
\begin{align*}
 C_{N}^2
 &:=\sum_{i=1}^{N}\Var\left[\mathbbm{1}_{i\leq n}\left[2\epsilon_i{\bm{\beta}}^{*^\top }{\bm{D}_i}+nw_L^*\left(\left({\bm{\beta}}^{*^\top }{\bm{D}_i}\right)^2-\theta_{\mathrm{ETH}}\right)\right]+\mathbbm{1}_{i>n}nw_U^*\left(\left({\bm{\beta}}^{*^\top }{\bm{D}_i}\right)^2-\theta_{\mathrm{ETH}}\right)\right]\\
 &=\sum_{i=1}^{n}\Var\left[2\epsilon_i{\bm{\beta}}^{*^\top }{\bm{D}_i}+nw_L^*\left(\left({\bm{\beta}}^{*^\top }{\bm{D}_i}\right)^2-\theta_{\mathrm{ETH}}\right)\right]+\sum_{i=n+1}^{N}\Var\left[nw_U^*\left(\left({\bm{\beta}}^{*^\top }{\bm{D}_i}\right)^2-\theta_{\mathrm{ETH}}\right)\right]\\
 &\geq \sum_{i=1}^{n}\Var\left[2\epsilon_i{\bm{\beta}}^{*^\top }{\bm{D}_i}+nw_L^*\left(\left({\bm{\beta}}^{*^\top }{\bm{D}_i}\right)^2-\theta_{\mathrm{ETH}}\right)\right]=n\sigma_{\mathrm{OW}}^2.
\end{align*}
When $\sigma_{\mathrm{OW}}^2>c$ with some constant $c>0$, we have
\begin{align*}
 &\frac{\sum_{i=1}^{N}\mathbb{E}_{\bm{X}}\left|\mathbbm{1}_{i\leq n}\left[2\epsilon_i{\bm{\beta}}^{*^\top }{\bm{D}_i}+nw_L^*\left(\left({\bm{\beta}}^{*^\top }{\bm{D}_i}\right)^2-\theta_{\mathrm{ETH}}\right)\right]+\mathbbm{1}_{i>n}nw_U^*\left(\left({\bm{\beta}}^{*^\top }{\bm{D}_i}\right)^2-\theta_{\mathrm{ETH}}\right)\right|^{2+\delta}}{(C_{N}^2)^{1+{\delta/2}}}\\
 &\qquad=O\left(\frac{n+m\left(\frac{n}{m+n}\right)^{2+\delta}}{(n\sigma_{\mathrm{OW}})^{1+{\delta/2}}}\right)\overset{(i)}{=}o(1),
\end{align*}
where (i) holds since $m({n/N})^{2+\delta}\leq m({n/N})^{2}\leq n$.

By Lindeberg-Feller central limit theorem and Slutsky's Theorem,
\begin{equation*}
 \frac{\sqrt{n}(\hat{\theta}_{\mathrm{OW}}-\theta_{\mathrm{ETH}})}{\sigma_{\mathrm{OW}}}\rightarrow N(0, 1).
\end{equation*}
Moreover, by Lemma \ref{l10}, $\hat{\sigma}_{\mathrm{OW}}^2 = \sigma_{\mathrm{OW}}^2 + o_p(1).$
\end{proof}

\begin{proof}[Proof of Theorem \ref{t5}]

We first characterize the convergence rate of the CATE estimate \(\hat{{\bm{\beta}}}^{(-k)}\) for any $k\leq K$. Consider the basic inequality \eqref{bound:basic}. Under Assumptions \ref{a2} and \ref{a7},
\begin{align*}
 \mathbb{E}_{\bm{X}}(\Delta_2^2)
 &\leq 2\mathbb{E}_{\bm{X}}\left|\left[Y(1)-\mu_1^*(\bm{X})\right]\left[\frac{A}{\hat{\pi}^{(-k,-k')}(\bm{X})}-\frac{A}{\pi^*(\bm{X})}\right]\right|^2\\
 &\qquad+2\mathbb{E}_{\bm{X}}\left|\left[Y(0)-\mu_0^*(\bm{X})\right]\left[\frac{1-A}{1-\hat{\pi}^{(-k,-k')}(\bm{X})}-\frac{1-A}{1-\pi^*(\bm{X})}\right]\right|^2\\
 &=O_p\left(\mathbb{E}_{\bm{X}}|\hat\pi^{(-k,-k')}(\bm{X}) - \pi(\bm{X})|^2\right)
\end{align*}
and
\begin{align*}
 \mathbb{E}_{\bm{X}}(\Delta_3^2)
 &\leq 2\mathbb{E}_{\bm{X}}\left|\left(\mu_1^*(\bm{X})-\hat{\mu}^{(-k,-k')}_1(\bm{X})\right)\left[\frac{A}{\pi^*(\bm{X})}-1\right]\right|^2\\
 &\quad+2\mathbb{E}_{\bm{X}}\left|\left(\mu_0^*(\bm{X})-\hat{\mu}^{(-k,-k')}_0(\bm{X})\right)\left[\frac{1-A}{1-\pi^*(\bm{X})}-1\right]\right|^2\\
 &=O_p\left(\max_{a\in\{0,1\}}\mathbb{E}_{\bm{X}}|\hat\mu_a^{(-k,-k')}(\bm{X}) - \mu_a^*(\bm{X})|^2\right).
\end{align*}
Repeating the steps in the proof of Theorem \ref{t1}, under Assumption \ref{a3},
\begin{align*}
&\sum_{l=1}^{3}\left\|\frac{1}{\sum_{k'\neq k}|I_{k'}|}
\sum_{k'\neq k}\sum_{i\in I_{k'}}\Delta_{li}{\bm{W}_i}\right\|_{\infty}\\
&\qquad=O_p\left(\sqrt{\frac{\log(p)}{{n}}}+\sqrt\frac{\mathbb{E}_{\bm{X}}(\Delta_2^2)\log(p)}{n}+\sqrt\frac{\mathbb{E}_{\bm{X}}(\Delta_3^2)\log(p)}{n}\right)=O_p\left(\sqrt{\frac{\log(p)}{{n}}}\right).
\end{align*}
This ensures the upper bound \eqref{rate:betahat-inter}.

Moreover, by \eqref{bound:Delta4} and \eqref{bound:Delta5}, we also have
\begin{align*}
\mathbb{E}_{\bm{X}}(\Delta_4^2)&\overset{(i)}{=}o_p(n^{-1}),\\
\mathbb{E}_{\bm{X}}(\Delta_5^2)&\overset{(ii)}{=}o_p(n^{-1}),\\
\Delta_6&\overset{(iii)}{=}0,
\end{align*}
where (i) holds when $\mathbb{E}_{\bm{X}}|\hat\pi^{(-k,-k')}(\bm{X}) - \pi(\bm{X})|^2|\hat\mu_a^{(-k,-k')}({\bm{X}}) - \mu_a({\bm{X}})|^2 = o_p(n^{-1})$ for each $a\in\{0,1\}$, (ii) holds since $\Delta_5=0$ when $\mu_a^*(\cdot)=\mu_a(\cdot)$ for each $a\in\{0,1\}$ and otherwise we have $\mathbb{E}_{\bm{X}}|\hat{\pi}^{(-k,-k')}(\bm{X})-\pi(\bm{X})|^2=o_p(n^{-1})$ under Assumption \ref{a3}, (iii) holds since $\pi^*(\cdot)=\pi(\cdot)$. Hence, repeating the remaining steps in the proof of Theorem \ref{t1} leads to
\begin{align*}
 &\frac{1}{\sum_{k'\neq k}|I_{k'}|}
\sum_{k'\neq k}\sum_{i\in I_{k'}}\sum_{l=4}^{6}\Delta_{li}^2=o_p\left(n^{-1/4}\right),
\end{align*} 
and consequently, when $s_{\beta}\log(p)=o(\sqrt{n})$, we have \eqref{39} holds. Lastly, the remaining results hold by repeating the proof of Theorem \ref{t3}.

\end{proof}

\end{document}